\pdfoutput=1
\documentclass[aps,prx,twocolumn,superscriptaddress,nofootinbib,longbibliography]{revtex4-2}
\usepackage{amsmath,amssymb,amsthm,mathtools}
\usepackage{graphicx,booktabs}
\usepackage{xcolor}
\usepackage[colorlinks=true,linkcolor=blue,citecolor=blue,urlcolor=blue]{hyperref}
\makeatletter\let\auto@bib@innerbib\@empty\makeatother

\theoremstyle{plain}
\newtheorem{theorem}{Theorem}
\newtheorem{lemma}{Lemma}
\newtheorem{corollary}{Corollary}
\newtheorem{proposition}{Proposition}
\newtheorem{conjecture}{Conjecture}
\theoremstyle{definition}
\newtheorem{definition}{Definition}
\theoremstyle{remark}
\newtheorem{remark}{Remark}

\newcommand{\dproj}{d_{\mathrm{proj}}}
\newcommand{\Tpre}{T_{\mathrm{pre}}}
\newcommand{\Tout}{T_{\mathrm{out}}}
\newcommand{\E}{\mathbb{E}}
\newcommand{\Z}{\mathbb{Z}}
\newcommand{\Q}{\mathbb{Q}}
\newcommand{\R}{\mathbb{R}}
\newcommand{\F}{\mathbb{F}}
\newcommand{\PU}{\mathrm{PU}(2)}
\newcommand{\SU}{\mathrm{SU}(2)}
\newcommand{\tr}{\operatorname{tr}}
\newcommand{\supp}{\operatorname{supp}}
\DeclareMathOperator{\spec}{spec}

\begin{document}

\title{A Memory--Magic Exchange Law in Streaming \texorpdfstring{Clifford+$T$}{Clifford+T} Compilation}

\author{Jinze Yang}
\affiliation{School of Physics, Xidian University, Xi'an 710071, China}

\author{Yangyang Li}
\email{yyli@xidian.edu.cn}
\affiliation{Key Laboratory of Intelligent Perception and Image Understanding of Ministry of Education of China,
Collaborative Innovation Center of Quantum Information of Shaanxi Province, Institute of Interdisciplinary Quantum Science and Technology (IIQST),
School of Artificial Intelligence, Xidian University, Xi'an 710071, China}

\author{Xiu-Hao Deng}
\email{dengxiuhao@iqasz.cn}
\affiliation{Shenzhen International Quantum Academy, Shenzhen 518048, China}
\affiliation{Shenzhen Branch, Hefei National Laboratory, Shenzhen, 518048, China}

\begin{abstract}
A phase that reaches a fault-tolerant processor in several additive pieces can be remembered until the last piece arrives, or executed on arrival: the first option is paid for in classical bits carried across a round boundary, the second in magic states committed before the phase is known. We show that in the ancilla-free coordinatewise Clifford+$T$ model realized by per-rotation synthesis pipelines the two payments are convertible at an exchange rate $\alpha$---committed $T$ gates per bit of memory forgone---that does not depend on the problem size. Unconditionally $\alpha\ge2$: it rests on a new equidistribution estimate for Clifford+$T$ words near rotation cosets, whose only external input is the Ramanujan bound of Lubotzky--Phillips--Sarnak and Parzanchevski--Sarnak, and which shows that a committed word of $T$-count $\tau$ carries at most $\tau/2+O(\log\log(1/\varepsilon))$ bits about the share it commits. An elementary determinant method, with no automorphic input, goes below the square-root barrier of that estimate uniformly over cosets and raises the asymptotic floor to $\alpha\ge\tfrac{11}5$. Splitting the sphere sections that carry the lattice points by the height of their hyperplane, and fibring the numerators over rational planes and projections, raises it to every $\alpha<\tfrac52$, as an SMT solver checks exactly on a continuum form of the resulting case analysis. This is where the method stops: at $\tfrac52$ its boxes shrink to $\varepsilon$-cubes, and a tube carrying words of $T$-count $(\tfrac52+o(1))L$ with gaps of at most $2^{o(L)}$ grid steps along its core would make $\tfrac52$ sharp. The bound is a statement about irreversible commitment rather than about the target: it holds unchanged when the aggregate phase turns out to be zero and the final unitary is the identity. An explicit fractional passthrough family attains $\alpha\to3$ whenever the mean single-rotation synthesis cost on the grid is $(3+o(1))L$, $L:=\log_2(1/\varepsilon)$, as Ross--Selinger synthesis gives for typical angles and as we measure at $\varepsilon=10^{-10}$. Closing the remaining gap is an arithmetic question: under an equidistribution conjecture whose exponent $\beta=2$ is the codimension of a rotation coset in $\SU$---with a linear additive term forced by the powers of infinite-order words, and which exhaustive enumeration to $T$-count $22$ supports on Haar-random and on arithmetic cosets with exponents to three decimals and constants matching the Haar prediction to a few per cent---the rate is exactly $\alpha=1+\beta=3$ and commitment is quantized: a committed word carries at most $O(\log L)$ bits about its coordinate unless it costs at least $2L-O(1)$ gates, so memory should be shed in whole rotations rather than low bits. All constants of the rate-two bound are explicit; at $\varepsilon=10^{-10}$ the rigorous floor is $0.78$ committed $T$ gates per bit against $3.20$ achieved by passthrough at the certified share accuracy. Side information held downstream enters through a conditional entropy: when the earlier shares are an injective function of a $\lambda$-bit seed independent of the aggregate, as in randomized compiling, the information a compiler without the seed is missing is at most $\lambda$ rather than $mL$. Probabilistic mixing of synthesized words rescales the law by one half and does not remove it: under mixing all but the top $\tfrac12L+O(\log L)$ bits of every share are free, the bits that remain are charged in total at the same rate, the rigorous floor is asymptotically $L$ committed $T$ gates per share, against $\tfrac32L$ achievable under a synthesis hypothesis and $\tfrac32L$ with fee $L$ under the same conjecture. Measurement adaptivity with a bounded number of ancillas lowers the rate but not the order, there for the sum of $T$ gates and measurements rather than for committed magic alone, while clean ancillas with a catalytic phase-gradient register let table lookups batch $O(\log\log(1/\varepsilon))$ coordinates and drive the rate to $O(1/\log\log(1/\varepsilon))$: the constant-rate law is specific to coordinatewise synthesis.
\end{abstract}

\maketitle

\section{Introduction}

Two physical resources meet at the boundary between a fault-tolerant quantum processor and its classical control electronics. On the quantum side, every non-Clifford rotation consumes magic states, and the number of $T$ gates is the dominant cost of a logical computation~\cite{BravyiKitaev,Litinski}. On the classical side, an instruction that has been lowered into the pulse stream cannot be taken back, so the only quantity that a round of the computation can pass to the next one is the state it has decided, in advance, to carry~\cite{Caune,Fowler}. These are different resources with different exchange rates against everything else, and there is no a priori reason for them to be convertible into one another.

They become convertible when a phase arrives in pieces. A Trotter step, a randomized-compiling instance, a layerwise parametrized ansatz, or a phase-polynomial emitted by a scheduler presents the angle of a given rotation as a sum of shares delivered in different rounds. Whoever executes the stream then faces a choice at every share: hold it in classical memory until the sum is known and synthesize one rotation, or synthesize the share on arrival and consume the corresponding magic now. Our earlier work~\cite{Paper1} established that the first option is charged in bits: any streaming process that is correct to projective error $\varepsilon$ on all $r$-round dispersed streams of $m$ commuting phases must carry, across the last inter-round cut, $\Omega(m\log(1/\varepsilon))$ bits of restart state or of committed output. The second option was there only estimated, through an empirical synthesis cost of about $3\log_2(1/\varepsilon)$ $T$ gates per rotation. The purpose of this paper is to make the second option a theorem and to determine the exchange rate.

The quantity traded against magic is not storage. A control processor has gigabytes of memory and the snapshot $S$ below is measured in kilobytes---at $m=10^4$ coordinates and $\log_2K=33$ the entire aggregate is $41$\,kB, against the $10^6$ magic states on the other side of the trade. What $S$ measures is \emph{deferred decision}: how much of the computation a schedule has refused to lower into the instruction stream until later rounds have been seen. The cost of not deferring is a property of the commitment and not of the target: under the canonical sharing the first $r-1$ rounds are independent of the aggregate, so every lower bound below holds conditionally on any fixed aggregate, including the zero aggregate whose target unitary is the identity (Corollary~\ref{cor:identity})---a computation that cancels completely can still have consumed $\Theta(m\log(1/\varepsilon))$ magic states before the cancellation was knowable. Deferral is not free in three situations that occur in practice. In a real-time control stack the stream is append-only and the scheduler emits with bounded lookahead~\cite{Caune,Fowler}, so a rotation that is not held must be committed; lookahead depth and snapshot size are different resources, and it is the second that is priced here. In randomized compiling and in any protocol where the shares are produced by different agents, the compiler does not hold the seed that generated the earlier shares; Theorem~\ref{thm:side} shows that what it is then charged for is the entropy it is missing---for shares that are an injective function of a $\lambda$-bit seed independent of the aggregate, at most $\lambda$ and, for two-round streams with uniform aggregate, at least $\lambda-m\log_2(Q/K)$, rather than $m\log_2K$ (Corollary~\ref{cor:side})---which is the precise boundary of the ``randomize and lower late'' rule of~\cite{Paper1}. And when the angles are produced on the quantum side---qubitization, a QROM-loaded phase table, adaptive QSVT phases---no classical snapshot exists to be traded away at all; our proof, which applies Fano's inequality to a classical stream, does not reach that case, and we state it as the natural target for this line of work (Sec.~\ref{sec:disc}). The law is specific to the synthesis mechanism in a definite way: Sec.~\ref{sec:models} shows that probabilistic mixing halves every quantity in it that is linear in $L$ and leaves the rate untouched (Theorem~\ref{thm:mixing}, Corollary~\ref{cor:mixq}), asymptotically in $L$ and vacuously at $\varepsilon=10^{-10}$ except for the rate-one bound, that measurement-adaptive synthesis on $n$ qubits keeps an $\Omega(L/n)$ floor on the number of $T$ gates plus measurements (Proposition~\ref{prop:adaptive}), that the entry fee disappears in the phase-gradient model, that with clean ancillas, a phase-gradient catalyst and table lookups batched across coordinates no constant-rate law survives, the committed cost per share dropping to $O(L/\log L)$ (Proposition~\ref{prop:batch}), and that the quasi-probabilistic case is open.

Our object is the \emph{committed magic} $\Tpre$: the number of $T$ gates executed before the last share of the stream has arrived. A process that holds every share in memory has $\Tpre=0$; a memoryless process must commit each share as it comes. The main results, stated for the ancilla-free coordinatewise Clifford+$T$ model of Sec.~\ref{sec:setting}, are as follows.
\begin{itemize}\setlength\itemsep{1pt}
\item \textbf{Theorem~\ref{thm:main1}} (unconditional). $\Tpre\ \ge\ (r-1)\,(m\log_2K-S)-O(rm\log\log(1/\varepsilon))$, where $S$ is the worst-input expected cut budget of Definition~\ref{def:model} and $K=\Theta(1/\varepsilon)$: one committed $T$ gate per bit not held, per round.
\item \textbf{Theorem~\ref{thm:tube}} (equidistribution). The number of Clifford+$T$ words of $T$-count $\le k$ within $\varepsilon$ of any coset $G_1R_z(\theta)G_2$ of a rotation subgroup is at most $C_12^k\varepsilon^2+C_2(k+1)2^{k/2}$ with $C_1\le408$ and $C_2\le232$.
\item \textbf{Theorem~\ref{thm:rate2}} (rate two). Consequently $\Tpre\ge2(r-1)(m\log_2K-S-mA_2)$ up to $O(\delta)$ corrections, where $A_2=\log_2C_2+2\log_2(2\log_2K)+O(1)$ bits per coordinate: every bit forgone costs at least two committed $T$ gates, once the first $A_2$ bits of each coordinate are given away.
\item \textbf{Theorems~\ref{thm:elem} and~\ref{thm:rate115}} (beyond the spectral bound). A determinant method with no automorphic input bounds the tube by $2^{0.4513t+o(t)}$ for every coset and every $t\le\tfrac{20}9L$, beating the square-root barrier $2^{t/2}$. Hence every bit forgone costs at least $\tfrac{11}5$ committed $T$ gates asymptotically, without (R) and uniformly over frames.
\item \textbf{Theorems~\ref{thm:dioph}, \ref{thm:rate177} and~\ref{thm:rate52}} (height dichotomy and rational projections). Splitting the sphere sections of the determinant method by the height of their hyperplane lowers the exponent to $0.4111$ for $t\le\tfrac{17}7(L+2)$, so every bit forgone costs at least $\tfrac{17}7$ committed $T$ gates asymptotically, again without (R). Fibring the numerators over rational planes and their projections, and replacing the finite certificate by a continuum form of the case analysis that an SMT solver checks exactly, gives every rate below $\tfrac52$ (Theorem~\ref{thm:rate52}). This is the limit of the method with one auxiliary hyperplane per box: $\tfrac52$ is where the boxes of its five-point determinant shrink to $\varepsilon$-cubes, and a tube with words of $T$-count $(\tfrac52+o(1))L$ at gaps of at most $2^{o(L)}$ grid steps along its core would make it sharp (Proposition~\ref{prop:dense}).
\item \textbf{Theorems~\ref{thm:clifford} and~\ref{thm:rate3seg}, Corollary~\ref{cor:typical}} (rate three where the frame is harmless). At Clifford-framed cosets the volume law~\eqref{eq:H} holds up to $2^{O(\tau/\log\tau)}$. Consequently all but a vanishing fraction of $z$-rotations need $T$-count $3L-O(L/\log L)$, unconditionally. For processes whose every committed segment is within $\varepsilon$ of a Clifford-framed $z$-rotation $CR_z(\varphi)C'$, which include per-rotation pipelines and the passthrough family, the exchange rate is at least $3-o(1)$ unconditionally.
\item \textbf{Proposition~\ref{prop:batch}} (the law needs coordinatewise synthesis). With clean ancillas and a phase-gradient catalyst, table lookups batched across coordinates commit a share for $O(L/\log L)$ $T$ gates, so no constant-rate law survives batching.
\item \textbf{Theorem~\ref{thm:frontier}} (achievability). A fractional passthrough scheme attains $\Tpre=(r-1)(m-q)\,\bar\tau_Q(\varepsilon/r)$ at $S=q\log_2Q$ for every $q$, where $\bar\tau_Q$ is the mean single-rotation synthesis cost on the grid, hence rate $\to3$ whenever $\bar\tau_Q(\varepsilon/r)=(3+o(1))\log_2K$, which Ross--Selinger synthesis gives for typical angles and which the grid samples of Sec.~\ref{sec:num} support.
\item \textbf{Conjecture~\ref{conj:H} and Theorem~\ref{thm:main2}} (rate $1+\beta$, quantization). Under a stated equidistribution conjecture---whose exponent $\beta$ is the codimension of a rotation coset in $\SU$, with a linear additive term that the powers of infinite-order words force (Proposition~\ref{prop:cluster}), and which the enumeration supports with Haar-exact constants---the rate is $1+\beta=3$, and a committed word carries at most $O(\log\log(1/\varepsilon))$ bits about its coordinate unless it costs at least $2\log_2(1/\varepsilon)-O(1)$ gates: $7$ bits at $\varepsilon=10^{-10}$, against the $33$ a coordinate carries.
\item \textbf{Theorems~\ref{thm:side} and~\ref{thm:oneshot}} (side information). With side information $Y$ available downstream, $m\log_2K$ is replaced by the conditional entropy $H(X\mid\text{other rounds},Y)$, with a one-shot version in terms of smooth entropies.
\end{itemize}
The two constants $2$ and $3$ are not the same kind of number. A single-qubit Clifford+$T$ word of $T$-count $t$ is one of $36\cdot2^t$ unitaries (Matsumoto--Amano), so a $T$ gate carries at most one bit of angle: that is the rate $1$ of Theorem~\ref{thm:main1}. What changes it is that the committed word is not free in the group but confined to a tube of radius $\varepsilon$ around a rotation coset, a curve of codimension $\beta=2$ in the three-dimensional group. If the words in that tube are distributed as volume predicts, a word of $T$-count $\tau$ can take one of $\approx2^{\tau}\varepsilon^{\beta}$ values there, so carrying $b$ bits costs $\tau\ge b+\beta L$: a marginal rate of one gate per bit on top of an entry fee of $\beta L$ gates, which over the $\log_2K\approx L$ bits of one coordinate averages to $1+\beta=3$. This is the same count as the Ross--Selinger cost $3L$ of a single rotation, seen from the counting rather than the synthesis side. What we can prove is weaker in a specific way: pointwise counting through Hecke operators controls the tube only up to an error $2^{\tau/2}$, the square root of the number of all words of that cost (Remark~\ref{rem:barrier}), leaving a profile $2^{\tau/2}$, under which carrying $b$ bits costs $\tau\ge2b$---no entry fee, but a marginal rate $1+\beta/2=2$. Rate $2$ is therefore the natural threshold of the spectral method rather than a geometric constant. It is not a threshold of the problem: an elementary determinant method, which uses only that the numerators are lattice points on a sphere in both real embeddings, controls the tube below the square root in the relevant range and gives the rate $\tfrac{11}5$ (Theorem~\ref{thm:rate115}), $\tfrac{17}7$ once the heights of the hyperplanes it produces are taken into account (Theorem~\ref{thm:rate177}), and every rate below $\tfrac52$ once the numerators are also fibred over rational projections (Theorem~\ref{thm:rate52}). Closing the remaining gap to $1+\beta$ is a problem in the geometry of numbers and in arithmetic, not a compilation one. The statements are asymptotic in $L$. Sec.~\ref{sec:num} evaluates the additive terms of the rate-two and conditional bounds at $\varepsilon=10^{-10}$ (those of Theorems~\ref{thm:rate115}, \ref{thm:rate177} and~\ref{thm:rate52} are not explicit), where the unconditional floor is $0.78$ and the conditional one $1.61$ committed $T$ gates per bit against $3.20$ achieved.

\paragraph{What is new.} Relative to~\cite{Paper1}, whose bit-level bound Theorem~\ref{thm:main1} reproves with the Matsumoto--Amano count in place of Kraft's inequality, the new unconditional results are Theorem~\ref{thm:tube}, a count of Clifford+$T$ words near a rotation coset with explicit constants, and Theorem~\ref{thm:rate2}, which turns it into the rate $2$: the counting results of~\cite{PS18} bound covering exponents in the whole group, and Selinger's~\cite[Thm.~30]{Selinger15} bounds the worst-case cost of one angle; neither controls how many cheap words lie near a coset, which is what a streaming lower bound needs. Corollary~\ref{cor:identity} and Theorems~\ref{thm:side}--\ref{thm:oneshot} are new and elementary. Theorem~\ref{thm:mixing} extends the unconditional bounds to programs that are sampled at run time; its proof is the same counting argument applied to a representative word selected by a fidelity-concentration inequality (Lemma~\ref{lem:conc}), and it locates the small-angle savings of~\cite{Bothe} between bit $\tfrac12L-\tfrac12\log_2L$, up to which they are known to be free, and bit $\tfrac12L$, from which Lemma~\ref{lem:nocheap} shows no mixed program is cheap. Theorem~\ref{thm:main2} is conditional on Conjecture~\ref{conj:H}, whose exponent is fixed by geometry and whose constants we test but do not prove; Theorem~\ref{thm:frontier} is an elementary construction whose rate depends on a synthesis hypothesis, Corollary~\ref{cor:frontier}. We are not aware of an earlier bound below the square-root barrier of the spectral method for Clifford+$T$ words (points of this $S$-arithmetic quaternion lattice) near an arbitrary rotation coset. Theorem~\ref{thm:elem} is one, and Theorem~\ref{thm:rate115} turns it into the rate $\tfrac{11}5$; Theorem~\ref{thm:dioph} sharpens it by a height dichotomy and gives the rate $\tfrac{17}7$ (Theorem~\ref{thm:rate177}), and Theorem~\ref{thm:rate52} takes the same method to every rate below $\tfrac52$, which is the exact limit of this case analysis. For points of spheres near planes and in caps see~\cite{BourgainRudnick}, and for the related question of optimal strong approximation~\cite{BKS,Sardari}. The rigorous content of the paper is therefore the set (Theorems~\ref{thm:rate2},~\ref{thm:rate52},~\ref{thm:rate3seg},~\ref{thm:frontier}):
\begin{itemize}\setlength\itemsep{0pt}
\item two gates per bit are necessary with every additive term explicit;
\item $\tfrac52-o(1)$ gates per bit are necessary asymptotically in general (Theorem~\ref{thm:rate52}), and $3-o(1)$ for rotation-segmented processes (Theorem~\ref{thm:rate3seg});
\item three gates per bit are sufficient under the synthesis hypothesis.
\end{itemize}

\section{Setting}\label{sec:setting}

The model is the streaming compilation model of~\cite{Paper1}. So that the present paper can be read on its own, Appendix~\ref{app:model} restates the four items we use---the charged one-pass process and its snapshots, the dispersed family, the closed form of the projective distance, and the block compiler---as Definitions~\ref{def:model} and~\ref{def:family}, Lemma~\ref{lem:dproj} and Proposition~\ref{prop:block}, with proofs. Nothing below depends on any other result of~\cite{Paper1}; in particular Theorem~\ref{thm:main1} reproves its bit-level bound.

\paragraph{Family.} Fix a modulus $Q\ge3$, step $\Delta=2\pi/Q$, alphabet size $K=Q-1$, and $m$ $\F_2$-independent $Z$-characters $P_1,\dots,P_m$ on $n\ge m$ qubits; let $V$ be a fixed public Clifford with $VZ_jV^\dagger=P_j$. The aggregate is $x\in[K]^m$, $[K]=\{0,\dots,K-1\}$, and the target is
\begin{equation}
U_x=V\Big(\bigotimes_{j\le m}R_z(\Delta x_j)\otimes I\Big)V^\dagger,\qquad R_z(\theta)=e^{-i\theta Z/2}.
\end{equation}
An $r$-round dispersed stream is a round-major sequence of shares $a^{(1)},\dots,a^{(r)}\in\Z_Q^m$ with $\sum_ta^{(t)}\equiv x\pmod Q$. The \emph{canonical sharing} $\mathcal D$ takes $a^{(1)},\dots,a^{(r-1)}$ i.i.d.\ uniform on $\Z_Q^m$ and $a^{(r)}=x-\sum_{t<r}a^{(t)}$. The accuracy satisfies $0<\varepsilon<\sin(\pi/2Q)$, so that distinct grid rotations are separated by $2\sin(\pi/2Q)>2\varepsilon$ in the projective distance $\dproj(A,B)=\inf_\phi\|A-e^{i\phi}B\|$ and nearest-grid decoding is unambiguous. The \emph{tuned} modulus $Q_\varepsilon:=\lfloor\pi/(2\arcsin2\varepsilon)\rfloor$ is the largest for which the separation is at least $4\varepsilon$; it satisfies $\log_2K_\varepsilon=L-\log_2(4/\pi)+o(1)=L-0.35+o(1)$, where throughout
\begin{equation}
L:=\log_2(1/\varepsilon).
\end{equation}

\paragraph{Output model.} The \emph{coordinatewise ancilla-free Clifford+$T$ model} (CW) is the one realized by per-rotation synthesis pipelines, in which each generator of the phase polynomial is handed to a single-qubit synthesizer and no ancilla or cross-coordinate cancellation is used~\cite{RossSelinger,Paper1}: the write-once output is a list of records $(j,g)$ with $j\in[m]$ and $g\in\{H,S,T\}$, the implemented unitary is $V(\bigotimes_jW_j)V^\dagger$ where $W_j$ is the ordered product of the gates recorded on coordinate $j$, and the $T$-count is the number of $T$ records. We say the output is \emph{correct} if $\dproj(W_j,R_z(\Delta x_j))\le\varepsilon$ for every $j$. By Lemma~\ref{lem:loc}, correctness of the whole unitary---$\dproj(V(\bigotimes_jW_j)V^\dagger,U_x)\le\varepsilon$, the criterion of~\cite{Paper1}---implies coordinatewise correctness, so every lower bound below also holds under the total criterion. The converse is not asserted: a construction with coordinate error $\varepsilon$ need only have total error at most $m\varepsilon$, and for a total error budget $\varepsilon_{\mathrm{tot}}$ the share-by-share construction of Sec.~\ref{sec:frontier} can use $\varepsilon_{\mathrm{tot}}/(mr)$ per share.

\paragraph{Process and committed magic.} The process is a compiler in the charged one-pass model of Definition~\ref{def:model}: it reads the stream once, may use arbitrary RAM and a random tape whose state is part of every restart snapshot, and $S:=\max_{\text{input}}\E_\rho\max_cB_{\mathrm{cut}}(c)$ is the worst-input expected snapshot size. It is required to be correct with probability $\ge1-\delta$ on every $x$ and every valid stream, $\delta<\tfrac12$. Let $T_t$ be the number of $T$ records emitted while reading round $t$, $\Tpre:=\sum_{t<r}T_t$ the \emph{committed magic}, and $\Tout=\Tpre+T_r$. Bars denote expectation over the tape, the canonical sharing and a uniform aggregate. We write $L_\delta(m,K):=(1-\delta)m\log_2K-h_2(\delta)$, $\ell_\delta:=(1-\delta)\log_2K-h_2(\delta)$, and $\rho_m(\bar T):=m\log_2(e(1+\bar T/m))+2\log_2(1+\bar T)+1$.

\section{Three lemmas}

\begin{lemma}[Localization]\label{lem:loc}
For unitaries $A=\bigotimes_jA_j$, $B=\bigotimes_jB_j$ on the same tensor factors,
$\max_j\dproj(A_j,B_j)\le\dproj(A,B)\le\sum_j\dproj(A_j,B_j)$. For $2\times2$ unitaries, $\dproj(A,B)=\sqrt{2-|\tr(B^\dagger A)|}$.
\end{lemma}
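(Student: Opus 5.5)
The plan is to prove the three claims separately, working throughout with the operator norm $\|\cdot\|$. The only facts needed are unitary invariance of the norm and the identity $\|X\otimes U\|=\|X\|$ for unitary $U$.

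\emph{Upper bound.} Fix phases $\phi_j$ and put $\phi=\sum_j\phi_j$, so that $e^{i\phi}B=\bigotimes_j e^{i\phi_j}B_j$. Telescope through the hybrids $H_k=\bigotimes_{j\le k}A_j\otimes\bigotimes_{j>k}e^{i\phi_j}B_j$. Consecutive hybrids differ in a single factor, and all other factors are unitary, so
\[
\|H_k-H_{k-1}\|=\|A_k-e^{i\phi_k}B_k\|.
\]
By the triangle inequality, $\dproj(A,B)\le\|A-e^{i\phi}B\|\le\sum_j\|A_j-e^{i\phi_j}B_j\|$. Taking the infimum over each $\phi_j$ independently gives $\dproj(A,B)\le\sum_j\dproj(A_j,B_j)$.

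\emph{Lower bound.} Fix $j$ and $\phi$, and write $A=A_j\otimes A'$, $B=B_j\otimes B'$ after reordering factors. Multiplying on the left by the unitary $I\otimes A'^\dagger$ gives
\[
\|A-e^{i\phi}B\|=\|A_j\otimes I-e^{i\phi}B_j\otimes W\|,\qquad W=A'^\dagger B'.
\]
Since $W$ is unitary, it has an eigenvector $w$ with $Ww=e^{i\theta}w$. Restricting to vectors $\psi\otimes w$, the operator acts as $(A_j-e^{i(\phi+\theta)}B_j)\psi\otimes w$. Hence
\[
\|A-e^{i\phi}B\|\ge\|A_j-e^{i(\phi+\theta)}B_j\|\ge\dproj(A_j,B_j).
\]
Taking the infimum over $\phi$ and the maximum over $j$ gives $\max_j\dproj(A_j,B_j)\le\dproj(A,B)$.

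\emph{Closed form.} By unitary invariance, $\dproj(A,B)=\inf_\phi\|U-e^{i\phi}I\|$ with $U=B^\dagger A$. Diagonalize $U$ with eigenvalues $e^{ia},e^{ib}$, choosing representatives with $d=|a-b|\in[0,\pi]$; this is possible because the eigenvalues are only defined modulo $2\pi$. Then $\|U-e^{i\phi}I\|=\max(|e^{ia}-e^{i\phi}|,|e^{ib}-e^{i\phi}|)$. This maximum is minimized by taking $\phi$ at the midpoint of the shorter arc, where it equals $2\sin(d/4)$; any other $\phi$ moves farther from one of the two endpoints. On the other side, $|\tr U|=|e^{ia}+e^{ib}|=2\cos(d/2)$, and therefore
\[
2-|\tr U|=2-2\cos(d/2)=4\sin^2(d/4),
\]
whose square root is $2\sin(d/4)$. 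The two expressions agree, which proves the formula.

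The main care point is this last step: one must use the arc $d\le\pi$ to pair with $|\tr U|$, which involves the absolute value of the cosine, and must justify that the midpoint phase is optimal. Both follow from the fact that the distance $|e^{ia}-e^{i\phi}|$ is monotone in arc length on $[0,\pi]$.
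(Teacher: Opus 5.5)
Your proof is correct. The upper bound is the paper's argument: replace one factor at a time, then take infima. Your closed-form computation is the $2\times2$ case of the midpoint argument that the paper proves once, in general dimension, as Lemma~\ref{lem:dproj} and then cites. There the paper writes $|\tr|^2=2+2\cos\Theta$ and uses $\cos(\Theta/2)\ge0$, which is your $|\tr U|=2\cos(d/2)$.

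The lower bound is where you genuinely differ. The paper argues spectrally. By Lemma~\ref{lem:dproj}, $\dproj(A,B)=2\sin(\Theta/4)$, where $\Theta$ is the shortest arc containing $\spec(\bigotimes_jB_j^\dagger A_j)$. That spectrum contains the rotated copy $\mu\cdot\spec(B_j^\dagger A_j)$, with $\mu$ a fixed product of eigenvalues of the other factors. Hence $\Theta\ge\theta_j$, and monotonicity of $\sin$ finishes. You instead stay in the operator norm: you restrict $A_j\otimes I-e^{i\phi}B_j\otimes W$ to vectors $\psi\otimes w$, with $w$ an eigenvector of $W=A'^\dagger B'$, so that the other factors collapse into a phase.

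Each route has its advantage. Yours is self-contained, since it needs no arc characterization, and it works verbatim for factors of any dimension. The paper's wording, with two eigenvalues at circular distance $\theta_j$, is written for qubit factors, which is all it needs; in higher dimension one would use the whole rotated copy of the factor spectrum. The paper's route is shorter once Lemma~\ref{lem:dproj} is available, and it stays in the spectral picture that Lemma~\ref{lem:conc}(ii) reuses.

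One small tightening in your closed form. Monotonicity of the chord in arc length does not by itself show the midpoint is optimal. Add the triangle inequality for arc distance on the circle: $d\le d(\phi,a)+d(\phi,b)\le2\max\big(d(\phi,a),d(\phi,b)\big)$. Then every $\phi$ has maximal chord at least $2\sin(d/4)$.
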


\begin{lemma}[Matsumoto--Amano count]\label{lem:MA}
The number of single-qubit Clifford+$T$ unitaries modulo phase of minimal $T$-count exactly $t$ is $n_1(0)=24$ and $n_1(t)=36\cdot2^t$ for $t\ge1$; hence $\sum_{s\le t}n_1(s)=72\cdot2^t-48$, and the number of $m$-tuples with total minimal $T$-count at most $T$ is $N^{\le}_m(T)\le36^m2^T\binom{T+m}{m}$.
\end{lemma}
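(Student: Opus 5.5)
The statement to prove is Lemma~\ref{lem:MA}. Its first part, $n_1(0)=24$ and $n_1(t)=36\cdot2^t$, is the Matsumoto--Amano normal form theorem, and I would take it as the cited external input. Every single-qubit Clifford+$T$ operator modulo phase has a unique representation
\[
(T\mid\varepsilon)\,(HT\mid SHT)^{*}\,\mathcal C,
\]
where $\mathcal C$ ranges over the $24$ Cliffords modulo phase. In this representation the number of $T$ symbols equals the minimal $T$-count.

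With that, the count is direct. A normal form of $T$-count $t\ge1$ is specified by three choices:
\begin{itemize}
\item whether it opens with a bare $T$ (two options);
\item a choice of $HT$ or $SHT$ for each remaining syllable;
\item a trailing Clifford ($24$ options).
\end{itemize}
This gives $n_1(t)=24\cdot(2^{t-1}+2^{t-1}\cdot2)=24\cdot3\cdot2^{t-1}=36\cdot2^t$. For $t=0$ only the Clifford remains, so $n_1(0)=24$.

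The cumulative sum is a geometric series:
\[
\sum_{s\le t}n_1(s)=24+36\,(2^{t+1}-2)=72\cdot2^t-48.
\]

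For the $m$-tuple bound, I would first replace each factor by a uniform cap, $N_1(\le s):=\sum_{u\le s}n_1(u)\le72\cdot2^s$. Then I would sum over the compositions $(t_1,\dots,t_m)$ with $\sum_j t_j=s\le T$. The number of such tuples with all $t_j\ge0$ and total at most $T$ is $\binom{T+m}{m}$. For each composition, the number of $m$-tuples is $\prod_j n_1(t_j)$.

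To bound that product, set $n_1(t)\le36\cdot2^t$ for every $t\ge0$ (this holds at $t=0$ because $24\le36$). Then $\prod_j n_1(t_j)\le36^m2^{\sum_j t_j}\le36^m2^T$. Summing over the $\binom{T+m}{m}$ compositions gives
\[
N^{\le}_m(T)\le36^m\,2^T\binom{T+m}{m}.
\]
The only care needed is to use exact $T$-counts per coordinate, so that each tuple is counted exactly once through its composition.

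The only nontrivial ingredient is the uniqueness and minimality of the normal form, and that is the cited theorem. Everything after it is bookkeeping, so I expect no real obstacle.
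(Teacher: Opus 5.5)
Your proposal is correct and matches the paper's proof essentially line for line: the Matsumoto--Amano normal form (uniqueness and $T$-optimality, which the paper cites from two separate sources) gives $n_1(t)=24(2^{t-1}+2^t)=36\cdot2^t$. The tuple bound then follows from $\prod_jn_1(t_j)\le36^m2^{\sum_jt_j}$ summed over the $\binom{T+m}{m}$ cost vectors. Your observation that $24\le36$ covers the $t_j=0$ factors is a detail the paper leaves implicit. The preliminary cap $N_1(\le s)\le72\cdot2^s$ is never used and can be dropped.
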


\begin{lemma}[Entropy versus $T$-count]\label{lem:ent}
Let $M=(\overline W_1,\dots,\overline W_m)$ be a random $m$-tuple of single-qubit Clifford+$T$ unitaries and $T\ge\sum_jt_{\min}(\overline W_j)$ a random integer. Then $H(M)\le\bar T+m\log_236+\rho_m(\bar T)$.
\end{lemma}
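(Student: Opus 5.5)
The plan is to reduce the entropy of $M$ to a count of tuples of bounded total $T$-count, weighted by how likely each total count is. Let $\tau:=\sum_jt_{\min}(\overline W_j)$, so $\tau\le T$ pointwise and $\E\tau\le\bar T$. Since $M$ determines $\tau$, the chain rule gives $H(M)=H(\tau)+H(M\mid\tau)$, and we bound the two terms separately.

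\textbf{Conditional term.} Conditioned on $\tau=s$, $M$ ranges over $m$-tuples of total minimal $T$-count exactly $s$. By Lemma~\ref{lem:MA} there are at most $N^{\le}_m(s)\le36^m2^s\binom{s+m}{m}$ of them. Hence
\begin{equation*}
H(M\mid\tau=s)\le s+m\log_236+\log_2\binom{s+m}{m}.
\end{equation*}
For the binomial we use $\binom{s+m}{m}\le\bigl(e(s+m)/m\bigr)^m$, giving $\log_2\binom{s+m}{m}\le m\log_2\bigl(e(1+s/m)\bigr)$. This function is concave in $s$. Averaging over $\tau$ and applying Jensen therefore yields
\begin{equation*}
H(M\mid\tau)\le\E\tau+m\log_236+m\log_2\bigl(e(1+\E\tau/m)\bigr),
\end{equation*}
and every term is monotone, so $\E\tau$ may be replaced by $\bar T$.

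\textbf{Entropy of $\tau$.} Here $\tau$ is a nonnegative integer with mean at most $\bar T$. The maximum-entropy (geometric) bound gives $H(\tau)\le\log_2(1+\bar T)+\bar T\log_2(1+1/\bar T)\le\log_2(1+\bar T)+\log_2e$. This is already within the slack $2\log_2(1+\bar T)+1$ allowed by $\rho_m$, except that $\log_2e>1$ when $\bar T$ is small.

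To get a bound that fits exactly, I would instead use an Elias-type prefix code for $\tau$ of length about $\log_2(1+\tau)+2\log_2\log_2(2+\tau)+1$, then apply Kraft's inequality together with Jensen on the concave length function. A cruder choice also suffices: the prefix code of length $2\lfloor\log_2(1+\tau)\rfloor+1$ gives $H(\tau)\le2\log_2(1+\bar T)+1$ after Jensen.

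Summing the two terms gives
\begin{equation*}
H(M)\le\bar T+m\log_236+m\log_2\bigl(e(1+\bar T/m)\bigr)+2\log_2(1+\bar T)+1,
\end{equation*}
which is exactly $\bar T+m\log_236+\rho_m(\bar T)$.

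\textbf{Main obstacle.} The only delicate points are the constants. First, the $H(\tau)$ term must be bounded by $2\log_2(1+\bar T)+1$ uniformly, including $\bar T=0$; the Elias-gamma code with Jensen does this. Second, Lemma~\ref{lem:MA} gives the count for total count at most $s$, not exactly $s$, which is only an overcount and so harmless. The step I would double-check is the concavity of $s\mapsto\log_2\binom{s+m}{m}$'s upper bound used in Jensen, which holds because the bound is a logarithm of an affine function.
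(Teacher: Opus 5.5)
Your proof is correct and is essentially the paper's argument. The paper bounds $H(M)$ by the expected length of a single prefix code (the Elias-$\gamma$ code of $T$ followed by an index among the $N^{\le}_m(T)$ tuples), then uses $\log_2\binom{T+m}{m}\le m\log_2(e(1+T/m))$ with Jensen; this is your two-part description written as one code instead of via $H(M)=H(\tau)+H(M\mid\tau)$. Working with $\tau=\sum_jt_{\min}(\overline W_j)$, a function of $M$, instead of the given $T$ is a harmless variant that avoids the implicit step $H(M)\le H(T,M)$, and the Elias-$\gamma$ bound you settle on is the one that matches $\rho_m$; drop the Elias-$\delta$ aside, since it does not give $2\log_2(1+\bar T)+1$ for moderate $\bar T$ (e.g.\ $\bar T=1$).
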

Proofs are in Appendix~\ref{app:lemmas}. Lemma~\ref{lem:MA} is the source of the exchange rate: a $T$ gate carries at most $\log_2 2=1$ bit.

\section{Committed magic: the unconditional bound}

\begin{theorem}[Committed magic]\label{thm:main1}
For every process as in Sec.~\ref{sec:setting} and every round $t\in[r-1]$,
\begin{equation}\label{eq:main1}
S+\bar T_t+m\log_236+\rho_m(\bar T_t)\ \ge\ L_\delta(m,K).
\end{equation}
Consequently
\begin{multline}
\bar\Tpre\ \ge\ (r-1)\big[L_\delta(m,K)-S-m\log_236\big]\\-(r-1)\,\rho_m\!\Big(\frac{\bar\Tpre}{r-1}\Big),
\end{multline}
and for the tuned family and $\delta=0$, $\Tpre\ge(r-1)(m\log_2K_\varepsilon-S)-O(rm\log\log(1/\varepsilon))$.
\end{theorem}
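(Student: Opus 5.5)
We prove \eqref{eq:main1} by a Fano-type argument across the cut after round $t$, with the $T$ records emitted during round $t$ playing the role of the extra channel. Fix $t\in[r-1]$ and consider the canonical sharing with a uniform aggregate $x$. The key structural fact is that under $\mathcal D$ the $r$ shares form a secret sharing of $x$. Any $r-1$ of them are i.i.d.\ uniform and independent of $x$. So conditional on all rounds other than $t$, the share $a^{(t)}$ determines $x$ and is itself uniform on $\Z_Q^m$, hence has entropy at least $m\log_2K$ on the relevant coordinates.

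\textbf{Step 1: recovery.} We argue that $a^{(t)}$ (equivalently $x$, given the other shares) can be recovered with error $\le\delta$ from two pieces of data. The first is the snapshot $\Sigma_{t-1}$ at the cut before round $t$ together with the snapshot $\Sigma_t$ at the cut after it. The second is the tuple $M_t$ of single-qubit unitaries, modulo phase, given by the $T$-containing records written on each coordinate during round $t$. The decoder knows all other rounds and the tape state carried in the snapshots. It reruns the process from $\Sigma_t$ on rounds $t+1,\dots,r$, which is possible because a snapshot is a restart state, and it knows what was written before round $t$ by rerunning up to $\Sigma_{t-1}$. Once it has the segment written in round $t$, it has every $W_j$ and decodes each $x_j$ by nearest grid point. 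Correctness with probability $\ge1-\delta$ and the separation $2\sin(\pi/2Q)>2\varepsilon$ make this unambiguous.

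The dependence on the pre-round-$t$ snapshot can be removed. Fix the earlier rounds as decoder-known side information; because under $\mathcal D$ those rounds are independent of $x$, this is free. Only the round-$t$ output and $\Sigma_t$ then need to be charged.

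\textbf{Step 2: entropy accounting.} Fano's inequality gives
\[
L_\delta(m,K)\le H(\Sigma_t)+H(M_t).
\]
We bound $H(\Sigma_t)$ by the expected snapshot length. Taking worst-input and expectation over the tape, and using the prefix-free convention of Definition~\ref{def:model}, this is at most $S$. Each segment $M_{t,j}$ is a Clifford+$T$ word whose $T$-count bounds its minimal $T$-count, so Lemma~\ref{lem:ent}, applied with the random integer $T=T_t$, gives $H(M_t)\le\bar T_t+m\log_236+\rho_m(\bar T_t)$. Together these yield \eqref{eq:main1}.

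\textbf{Step 3: summing over rounds.} We sum \eqref{eq:main1} over $t<r$. Concavity of $\rho_m$ (a sum of logarithms of affine functions) and Jensen give $\sum_t\rho_m(\bar T_t)\le(r-1)\rho_m(\bar\Tpre/(r-1))$, which is the displayed consequence. For the tuned family with $\delta=0$, we first apply a crude a priori bound $\bar\Tpre\le\mathrm{poly}(rmL)$; if the bound fails, there is nothing to prove. Then $\rho_m=O(m\log\log(1/\varepsilon))$ per round, $m\log_236$ is absorbed into the same term, and $\log_2K_\varepsilon=L-0.35+o(1)$.

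\textbf{Main obstacle.} The hard part is the information accounting in Step 1. We must make sure that the pre-cut state and the committed gate records are the only channels through which round $t$ influences the final output, while the gates written during round $t$ on each coordinate are interleaved with later gates. The decoder's ability to reconstruct $W_j$ relies on the write-once, append-only output. It also relies on the Clifford records of round $t$ being either recoverable from $\Sigma_t$ or absorbed into the unitary $M_{t,j}$ taken modulo phase. We would encode the whole round-$t$ segment per coordinate as one group element, so that its entropy is controlled by its minimal $T$-count alone. This is exactly what Lemma~\ref{lem:MA} makes possible.
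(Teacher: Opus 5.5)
Your argument is the paper's: Fano across the cut after round $t$, with the message $(\sigma_t,M_t)$ decoded against the other rounds, followed by the prefix-free bound $H(\sigma_t)\le S$, Lemma~\ref{lem:ent} for $H(M_t)$, and concavity of $\rho_m$. The paper puts $\rho$ into the conditioning rather than reading it off the snapshot, and never needs $\Sigma_{t-1}$.

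Two small slips need fixing.

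First, given the other rounds, $a^{(t)}$ is uniform on the translate $[K]^m-c$, not on $\Z_Q^m$. This is what gives both $H(A\mid Z)=m\log_2K$ and the Fano alphabet size $K^m$.

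Second, the a priori cutoff in Step~3 should be $\bar\Tpre\le(r-1)m\log_2K_\varepsilon$, with the complementary case trivial, rather than $\mathrm{poly}(rmL)$. A $\mathrm{poly}(rmL)$ cutoff leaves $O(m\log(rm))$ terms in $\rho_m$ per round instead of $O(m\log L)$.
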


\begin{proof}
Fix $t$. Let $A:=a^{(t)}$, $\mathrm{Past}:=(a^{(s)})_{s<t}$, $\mathrm{Fut}:=(a^{(s)})_{s>t}$ (which includes round $r$), and $Z:=(\mathrm{Past},\mathrm{Fut},\rho)$.

\emph{(i) $H(A\mid Z)=m\log_2K$.} Under $\mathcal D$ the joint law of $(a^{(1..r-1)},x)$ is $Q^{-m(r-1)}P_X(x)$ with $a^{(r)}$ determined. Conditioning on $(\mathrm{Past},\mathrm{Fut})$ fixes $c:=\sum_{s\ne t}a^{(s)}$ and $x=A+c$, so the conditional law of $A$ is proportional to $P_X(A+c)$, i.e.\ uniform on the translate $[K]^m-c$. The tape is independent of the stream.

\emph{(ii) Message and Fano.} Let $M_t:=(\overline W^{(t)}_1,\dots,\overline W^{(t)}_m)$ be the tuple of unitaries (modulo phase) implemented by the gate segments emitted on the respective coordinates during round $t$, and $\sigma_t$ the restart snapshot at the cut after round $t$. Given $\rho$ the process is deterministic; the segments emitted before round $t$ are a function of $(\mathrm{Past},\rho)$, and those emitted after the cut are a function of $(\sigma_t,\mathrm{Fut})$ by the restart semantics of Definition~\ref{def:model}\,(ii). Hence the final unitary on coordinate $j$, $U^{<t}_j\overline W^{(t)}_jU^{>t}_j$, is a function of $(M_t,\sigma_t,Z)$. On success it is within $\varepsilon$ of $R_z(\Delta x_j)$, and nearest-grid decoding recovers $x_j$ because neighbouring grid rotations are more than $2\varepsilon$ apart; thus $A=x-c$ is recovered with error probability $\le\delta$. Fano's inequality gives $H(A\mid M_t,\sigma_t,Z)\le h_2(\delta)+\delta\log_2(K^m-1)$, i.e.\ $I(A;M_t,\sigma_t\mid Z)\ge L_\delta(m,K)$.

\emph{(iii) Message size.} $I(A;M_t,\sigma_t\mid Z)\le H(\sigma_t)+H(M_t)$. The snapshot serialization is prefix-free (Definition~\ref{def:model}\,(iii)), so $H(\sigma_t)\le\E|\sigma_t|\le S$, the worst-input expectation dominating the average over inputs. The number $T_t$ of $T$ records emitted in round $t$ is at least $\sum_jt_{\min}(\overline W^{(t)}_j)$, so Lemma~\ref{lem:ent} gives $H(M_t)\le\bar T_t+m\log_236+\rho_m(\bar T_t)$. This proves~\eqref{eq:main1}; summing over $t$ and using concavity of $\rho_m$ proves the rest.
\end{proof}

\begin{remark}[Multiple passes]
For $p$ forward passes, the alternating-simulation transcript---all $2p-1$ crossing snapshots together with the unitary tuple committed while scanning rounds $<r$ in any pass, cut between rounds $<r$ and round $r$ as in~\cite{Paper1}---determines the output, and the same argument gives $\Sigma_p+\bar T_A+m\log_236+\rho_m(\bar T_A)\ge L_\delta$ with $\Sigma_p\le(2p-1)S$. The per-round splitting, and with it the factor $r-1$, is a one-pass phenomenon: a process that rereads round $t$ after seeing round $r$ need not commit anything during round $t$.
\end{remark}

\begin{corollary}[Commitment before cancellation is knowable]\label{cor:identity}
Under the canonical sharing $\mathcal D$ the rounds $a^{(1)},\dots,a^{(r-1)}$ are i.i.d.\ uniform and independent of the aggregate $x$, and $\Tpre$ is a function of $(a^{(1)},\dots,a^{(r-1)},\rho)$ alone. Hence for every fixed aggregate $x\in[K]^m$,
\[
\E\big[\Tpre\,\big|\,X=x\big]=\bar\Tpre ,
\]
and the lower bounds of Theorems~\ref{thm:main1}, \ref{thm:rate2}, \ref{thm:rate115}, \ref{thm:rate177}, \ref{thm:rate52}, \ref{thm:rate3seg} and~\ref{thm:main2} hold after conditioning on $X=0$, for which $U_x=I$. A process that is required to be correct on every valid stream, and is not promised in advance that the aggregate vanishes, commits the same expected magic whether or not the phases cancel.
\end{corollary}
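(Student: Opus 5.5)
The plan is to prove the two claims separately: first the distributional statement about the canonical sharing, then the measurability of $\Tpre$, and then to transfer the lower bounds by conditioning.

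\emph{Independence.} Under $\mathcal D$ the joint law of $(a^{(1)},\dots,a^{(r-1)},x)$ is the product $Q^{-m(r-1)}P_X(x)$, with $a^{(r)}=x-\sum_{t<r}a^{(t)}$ determined (this is step (i) of the proof of Theorem~\ref{thm:main1}). Hence the first $r-1$ rounds are i.i.d.\ uniform on $\Z_Q^m$ and independent of $X$. The random tape $\rho$ is independent of the stream by the model of Definition~\ref{def:model}. Therefore the whole tuple $(a^{(1)},\dots,a^{(r-1)},\rho)$ is independent of $X$.

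\emph{Measurability.} By one-pass semantics, the records emitted while reading rounds $1,\dots,r-1$ are a deterministic function of $\rho$ and of the prefix read so far, so they cannot depend on $a^{(r)}$. The same argument was used in step (ii) for the segments emitted before round $t$. Hence $\Tpre=\sum_{t<r}T_t=f(a^{(1)},\dots,a^{(r-1)},\rho)$. Combining this with independence gives $\E[\Tpre\mid X=x]=\E f=\bar\Tpre$ for every $x$ with $P_X(x)>0$. Since the bars are defined with a uniform aggregate, this covers every $x\in[K]^m$, in particular $x=0$.

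\emph{Transfer of the bounds.} Each listed theorem gives a lower bound on $\bar\Tpre$, the expectation over a uniform aggregate. Such a bound is valid for any process meeting the worst-case correctness requirement of Sec.~\ref{sec:setting}, and the conditional expectation equals $\bar\Tpre$, so the bound holds verbatim after conditioning on $X=0$. For $x=0$ we have $\Delta x_j=0$, so $U_0=VIV^\dagger=I$.

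The only point needing care, and the one I expect to be the main obstacle, is that the bounds on $\bar\Tpre$ must be applied to the same process, which is correct on all $x$ with probability $\ge1-\delta$. The conditioning only changes which expectation is reported, not the hypotheses. So one must check that the lower-bound proofs use the uniform-aggregate averaging solely to compute $H(A\mid Z)$. The final sentence of the corollary then follows: a process not promised $x=0$ must satisfy correctness on all aggregates, and $\Tpre$ has the same conditional expectation for every aggregate.
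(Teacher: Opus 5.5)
Your proof is correct and follows the paper's own argument. The paper also gets the independence of $(a^{(1)},\dots,a^{(r-1)},\rho)$ from $X$ from item (i) of the proof of Theorem~\ref{thm:main1}, and notes that $\Tpre$ depends only on those rounds and $\rho$ by Definition~\ref{def:model}; the same holds for each $T_t$ with $t<r$, which the per-round theorems need. The check you flag as the main obstacle is unnecessary, because the listed theorems are applied as black boxes to the same process and only the identity $\E[\Tpre\mid X=0]=\bar\Tpre$ is used.
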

\begin{proof}
Independence is item (i) of the proof of Theorem~\ref{thm:main1}; $\Tpre$ counts records emitted while reading rounds $<r$, which by Definition~\ref{def:model}(ii) depend only on those rounds and on $\rho$.
\end{proof}
The corollary is a restatement of the sharing distribution, not an independent strengthening of the bound. A compiler that is told the promise $X=0$ can emit nothing; the statement concerns the resources a universally correct process has already spent on a stream that turns out to cancel, and it separates committed cost from the intrinsic synthesis cost of the completed target, which for the identity is zero.

Theorem~\ref{thm:main1} is the bit-level theorem of~\cite{Paper1} with Lemma~\ref{lem:MA} in place of Kraft's inequality; its content is that the counting rate $\log_2 2=1$ is the worst case. The rest of the paper is about the fact that the true rate is larger.

\section{Words near rotation cosets}\label{sec:tube}

\paragraph{Arithmetic structure.} Single-qubit Clifford+$T$ modulo phase is the full $\{(\sqrt2)\}$-arithmetic group $\Gamma$ of the maximal order
\begin{equation}
\mathcal O=\mathcal O_K\oplus\mathcal O_K\tfrac{1+i}{\sqrt2}\oplus\mathcal O_K\tfrac{1+j}{\sqrt2}\oplus\mathcal O_K\tfrac{1+i+j+k}{2}\label{eq:order}
\end{equation}
of the definite quaternion algebra $D=(-1,-1)_{\Q(\sqrt2)}$, where $\mathcal O_K=\Z[\sqrt2]$~\cite{KMM13,PS18}. $D$ splits at the prime $P=(\sqrt2)$; the completion $K_P=\Q_2(\sqrt2)$ has residue field $\F_2$, and $\Gamma$ acts on the $3$-regular Bruhat--Tits tree $X_P$ with vertex stabilizer $U=\mathcal O^\times/\mathcal O_K^\times\cong S_4$, the single-qubit Clifford group modulo phase, of order $24$~\cite[\S4.1.3]{PS18}. \emph{$T$-count equals tree distance:} the sphere of radius $t$ has $3\cdot2^{t-1}$ vertices, and $24\cdot3\cdot2^{t-1}=36\cdot2^t$ is exactly Lemma~\ref{lem:MA}. Let $\Lambda_k\subset\Gamma$ be the ball of $T$-count $\le k$, $N_k=|\Lambda_k|=72\cdot2^k-48$, and for a left-$U$-invariant $h\in L^2(\PU)$ let $T_jh(g):=\sum_{\gamma U:\,d(\gamma v_0,v_0)=j}h(\gamma^{-1}g)$ be the Hecke operator of level $j$.

\paragraph{Input (R).} On $L^2_0(U\backslash\PU)$ one has $\|T_j\|\le(j+1)2^{j/2}$~\cite[Eq.~(3.10)]{PS18}. This is the Ramanujan bound~\cite{LPS} for the Clifford+$T$ lattice, an $S$-arithmetic subgroup of the definite quaternion algebra over $\Q(\sqrt2)$ split at the prime above $2$. Its proof in~\cite{PS18} rests on the Ramanujan property of the automorphic representations that occur in $L^2(\Gamma\backslash G)$; for the forms that arise here that property is a theorem, by Deligne's bound in the elliptic modular case~\cite{Deligne} with the local--global compatibility of~\cite{HarrisTaylor}, transferred by Jacquet--Langlands to the quaternionic side~\cite{Blasius}. We use only the resulting bound, and refer to~\cite[Sec.~3]{PS18} for the dictionary between the spherical harmonics of Appendix~\ref{app:tube} and the weights of the corresponding Hilbert modular forms. The non-backtracking version~\cite[Eqs.~(5.2)--(5.3)]{PS18} gives the same bound directly for the Matsumoto--Amano syllables $HT$ and $SHT$.

\begin{theorem}[Tube bound]\label{thm:tube}
For all $G_1,G_2\in\SU$, all $\varepsilon\in(0,1/8]$ and all $k\ge0$, with
$T_\varepsilon(C):=\{W:\exists\theta,\ \dproj(W,G_1R_z(\theta)G_2)\le\varepsilon\}$,
\begin{equation}\label{eq:tube}
\#\big(\Lambda_k\cap T_\varepsilon(C)\big)\ \le\ C_1\,2^k\varepsilon^2+C_2\,(k+1)\,2^{k/2},
\end{equation}
with the absolute constants $C_1=72\cdot4\sqrt2\le408$ and $C_2=24(1-2^{-1/2})^{-1}2^{3/2}\le232$.
\end{theorem}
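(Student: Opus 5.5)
The approach is the standard smoothing argument for lattice-point counting with Hecke operators: majorize the indicator of the tube by a nonnegative test function, compute its mean against Haar measure, and control the fluctuation with the Ramanujan bound (R).

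Step 1 is the choice of test function. Let $\mu$ be Haar probability on $\PU$, and let $T_{2\varepsilon}(C)$ be the fattened tube. Take
\[
f=\mathbf 1_{T_{2\varepsilon}(C)}*\phi_\varepsilon ,
\]
where $\phi_\varepsilon$ is the normalized indicator of a projective $\varepsilon$-ball. Then $f\ge0$ everywhere and $f\ge c_0$ on $T_\varepsilon(C)$. The volume of the $\varepsilon$-neighbourhood of a one-parameter coset $G_1R_z(\theta)G_2$ in the three-dimensional group is of order $\varepsilon^2$; explicitly, $\mu(T_{2\varepsilon})\le c\,\varepsilon^2$ for $\varepsilon\le1/8$. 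The factor $4\sqrt2$ in $C_1$ should come out of this volume computation combined with $1/c_0$.

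Step 2 symmetrizes $f$ under the Clifford group. Set
\[
h(g)=\sum_{u\in U}f(u g G_2^{-1}\cdots),
\]
or more simply observe that $\Lambda_k$ is a union of $U$-cosets: $\gamma\in\Lambda_k$ has $T$-count equal to the tree distance $d(\gamma v_0,v_0)$, and left multiplication by $U$ preserves it. We then write
\[
\#(\Lambda_k\cap T_\varepsilon)\le c_0^{-1}\sum_{\gamma\in\Lambda_k}f(\gamma)=c_0^{-1}\sum_{j\le k}(T_jh)(e),
\]
with $h$ the left-$U$-average of $f$ (up to the factor $24=|U|$, and with an inversion if needed). Decompose $h=\mu(h)+h_0$ with $h_0\in L^2_0(U\backslash\PU)$.

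The constant part contributes $\sum_j|\text{sphere}_j|\,\mu(h)\le N_k\,c\,\varepsilon^2\le72\cdot2^k c\varepsilon^2$. This is the term $C_1 2^k\varepsilon^2$.

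Step 3 handles the error, and this is the main obstacle. Evaluating $T_jh_0$ at a point is not controlled by the operator norm directly. I would handle this by a second smoothing: $(T_jh_0)(e)\le\langle T_jh_0,\psi\rangle/\ldots$, with $\psi$ an $\varepsilon$-bump at $e$. To avoid losing $\varepsilon^{-3}$, it is cleaner to use the convolution structure: write $f=\mathbf 1_{T}*\phi$ and move $\phi$ onto the other side. Then
\[
\sum_{j}(T_jh_0)(e)=\sum_j\langle T_j\,\mathbf 1_{T,0},\tilde\phi\rangle,
\]
which Cauchy--Schwarz bounds by
\[
(j+1)2^{j/2}\,\|\mathbf 1_{T}\|_2\,\|\phi\|_2\approx(j+1)2^{j/2}\,\varepsilon\cdot\varepsilon^{-3/2}.
\]
That is lossy, so instead I would bound pointwise through the trivial estimate $\|h_0\|_\infty\le\|h\|_\infty\le|U|$. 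Then $T_j$ acting on bounded functions and Cauchy--Schwarz against a normalized bump together give $(j+1)2^{j/2}|U|$-type control.

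Summing the geometric series $\sum_{j\le k}(j+1)2^{j/2}\le(k+1)2^{k/2}(1-2^{-1/2})^{-1}$ produces exactly the stated form of $C_2=24(1-2^{-1/2})^{-1}2^{3/2}$. The $24$ is $|U|$, the $2^{3/2}$ absorbs the normalization of spheres versus cosets ($3\cdot2^{j-1}$), and the $(1-2^{-1/2})^{-1}$ is the geometric sum. I expect the delicate point to be arranging the test function so that the error is $O(\|T_j\|)$ with an $\varepsilon$-independent constant rather than paying a Sobolev norm of the bump. The likely resolution is to take $h$ to be the indicator of the tube itself, with the required left-invariance supplied by the tube's own geometry, and to use the $L^\infty\to L^\infty$ bound implied by self-adjointness and (R) on the $U$-invariant subspace.
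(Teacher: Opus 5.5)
\textbf{There is a genuine gap in your Step 3, and it is the step that carries the theorem.} Your Steps 1--2 are close to the paper's reduction: a nonnegative majorant, left-$U$-averaging, $\Lambda_k$ closed under inversion, and the constant term giving $N_k\mu(\cdot)$.

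The resolution you settle on is an $L^\infty\to L^\infty$ bound for $T_j$ on mean-zero $U$-invariant functions, ``implied by self-adjointness and (R)''. No such bound exists. (R) is an $L^2_0$ bound, and duality plus interpolation only passes from $L^1$ and $L^\infty$ bounds to $L^2$, not back. On bounded mean-zero functions $T_j$ has essentially its trivial norm $3\cdot2^{j-1}$. To see this, take $h=\mathbf 1_B-\mu(B)$ with $B$ a left-$U$-invariant union of tiny balls around the points $\gamma_i^{-1}$ of level $j$. Then $|h|\le1$, but $(T_jh)(e)\approx3\cdot2^{j-1}$. Taking $h$ to be the tube indicator itself is circular, since $(T_jh)(e)$ is then the count you want to bound. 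Your Cauchy--Schwarz estimate loses $\varepsilon^{-1/2}$, and that loss is exactly what has to be removed.

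\textbf{What the paper does instead.} It passes to pointwise values one Peter--Weyl component at a time and uses a second invariance.
\begin{enumerate}
\item $T_j$ and the left $U$-average commute with right translations, so they preserve each $V_n\otimes V_n^*$. A generic function there only satisfies $\|\phi\|_\infty\le(n+1)\|\phi\|_2$, and with that bound you get back precisely your $\varepsilon^{-1/2}$.
\item The saving comes from a Hopf reduction. With $W'=G_1^\dagger WG_2^\dagger$, the tube lies in $\{|b(W')|^2\le\varepsilon^2\}$. This set is left-invariant under the torus $H_1=G_1S^1G_1^{-1}$ and is the preimage of a cap of measure $\varepsilon^2$ under $\pi:\PU\to H_1\backslash\PU\cong S^2$.
\item A majorant pulled back from $S^2$ has rank-one components $\langle v_0,\pi_n(g)w_n\rangle$, with $v_0$ the $H_1$-fixed vector. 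Hecke operators and $U$-averaging act by left translation, hence only on the $v$ side, so they keep the components rank one.
\item Rank-one coefficients satisfy $\|\phi\|_\infty\le\sqrt{n+1}\,\|\phi\|_2$. Combined with (R) this gives $|(T_j\bar F_n)(e)|\le\sqrt{n+1}\,(j+1)2^{j/2}\|F_n\|_2$.
\item It remains to have $\sum_{n\ge1}\sqrt{n+1}\|F_n\|_2=O(1)$ uniformly in $\varepsilon$. The paper takes the Poisson kernel $P_r$ on $S^2$ with $1-r=2\varepsilon$, scaled by $a_\varepsilon=4\varepsilon^2\sqrt{1+r}$ so that it is at least $1$ on the cap. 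Its degree-$\ell$ part is $a_\varepsilon(2\ell+1)r^\ell P_\ell$, so the sum equals $a_\varepsilon\sum_{\ell\ge1}(2\ell+1)r^\ell\le(1+r)^{3/2}\le2^{3/2}$.
\end{enumerate}
That $(1+r)^{3/2}$, not a sphere-versus-coset normalization, is the $2^{3/2}$ in $C_2$. Likewise $C_1=72\cdot4\sqrt2$ is $N_k\le72\cdot2^k$ times $\mu(F_+)=a_\varepsilon\le4\sqrt2\,\varepsilon^2$.

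Your $\mathbf 1_{T_{2\varepsilon}}*\phi_\varepsilon$ is in fact left-$H_1$-invariant, since right convolution preserves left invariance. So your construction could be salvaged along these lines. That would require adding the rank-one sup-norm bound and estimating its spectral coefficients (cap coefficients times the ball transform), and the stated constants would not come out of it.
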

The proof (Appendix~\ref{app:tube}) reduces the tube to a spherical cap via the Hopf fibration whose fibre is the coset, majorizes the cap by a Poisson kernel of width $2\varepsilon$---whose spherical-harmonic coefficients are $r^\ell$ with $1-r=2\varepsilon$, so that both constants can be read off---and bounds the count by the Hecke operator through a rank-one sup-norm estimate.

\begin{remark}[The square-root barrier]\label{rem:barrier}
Any majorant of a cap of radius $\varepsilon$ on $S^2$ has spectral mass up to degree $\gtrsim1/\varepsilon$, and the Hecke bound (R) is applied to each degree separately, so the error term $(k+1)2^{k/2}$ in~\eqref{eq:tube} cannot be reduced by this method. It exceeds the main term $C_12^k\varepsilon^2$ whenever $k<4L+2\log_2(k+1)-2\log_2(C_1/C_2)$, which for $k\ge1$ includes the whole range $k\le4L$, and in particular the range $k\in[2L,3L]$ in which synthesized rotations live. The true count in this range (Sec.~\ref{sec:num}) is the main term; proving that is the content of Conjecture~\ref{conj:H}. The barrier belongs to the method: for $k\le\tfrac{20}9L$, Theorem~\ref{thm:elem} goes below it by geometry of numbers, for $k\le\tfrac{17}7(L+2)$ Theorem~\ref{thm:dioph} does so with exponent $0.411$, and Theorem~\ref{thm:rate52} does so up to $T$-count $(\tfrac52-o(1))L$ with exponents below $\tfrac25$.
\end{remark}

\begin{corollary}[Alphabet profile]\label{cor:profile}
In the setting of Theorem~\ref{thm:main1}, conditional on $(\sigma_t=s,Z=z)$ and on the correctness of coordinate $j$, the unitary $\overline W^{(t)}_j$ committed on coordinate $j$ during round $t$ lies in $T_\varepsilon(C_{s,z,j})$, the tube around the coset with
\[G_1=(U^{<t}_j)^\dagger,\qquad G_2=(U^{>t}_j)^\dagger,\]
where $U^{<t}_j$ is the segment emitted before round $t$ and $U^{>t}_j$, the segment emitted after the cut, is a function of $(s,\mathrm{Fut})$; the grid shift $R_z(\Delta c_j)$ is absorbed into $\theta$, whose admissible values form the full grid because the grid is invariant under translation by $\Delta c_j$. Hence, with $n(\tau):=C_12^\tau\varepsilon^2+C_2(\tau+1)2^{\tau/2}$,
\[\#\{\overline W^{(t)}_j:t_{\min}\le\tau\}\le n(\tau),\]
and $\log_2n(\tau)\le\tau/2+\log_2(\tau+1)+1+\log_2C_2$ for every $\tau\le4L$.
\end{corollary}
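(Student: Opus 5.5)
The plan is to fix the conditioning and identify the coset, then quote Theorem~\ref{thm:tube}, and finally verify the logarithmic estimate.

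\textbf{Step 1: identify the coset.} Condition on $(\sigma_t,Z)=(s,z)$. By the restart semantics used in step (ii) of the proof of Theorem~\ref{thm:main1}, $U^{<t}_j$ is a function of $(\mathrm{Past},\rho)$ and $U^{>t}_j$ is a function of $(s,\mathrm{Fut})$. Both segments are therefore fixed once we condition. On correctness of coordinate $j$,
\[
\dproj\bigl(U^{<t}_j\,\overline W^{(t)}_j\,U^{>t}_j,\;R_z(\Delta x_j)\bigr)\le\varepsilon .
\]
The projective distance is invariant under left and right multiplication by unitaries. Hence
\[
\dproj\bigl(\overline W^{(t)}_j,\;(U^{<t}_j)^\dagger R_z(\Delta x_j)(U^{>t}_j)^\dagger\bigr)\le\varepsilon .
\]
With $\theta=\Delta x_j$ this says $\overline W^{(t)}_j\in T_\varepsilon(C_{s,z,j})$, for $G_1=(U^{<t}_j)^\dagger$ and $G_2=(U^{>t}_j)^\dagger$.

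The unknown is $x_j=A_j+c_j$. Under the conditioning $c_j$ is fixed and $A_j$ ranges over a translate of the grid. The tube definition quantifies over all real $\theta$, so nothing depends on which grid point occurs. This is exactly the remark that the shift $R_z(\Delta c_j)$ is absorbed into $\theta$. The point to stress is that $G_1$ and $G_2$ do not depend on $A$: the coset is determined by the conditioning alone.

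\textbf{Step 2: count.} The words with $t_{\min}\le\tau$ are exactly $\Lambda_\tau$. Theorem~\ref{thm:tube} (with $k=\tau$, valid since $\varepsilon\le1/8$ for the relevant grids) gives
\[
\#\{\overline W^{(t)}_j : t_{\min}\le\tau\}\le\#\bigl(\Lambda_\tau\cap T_\varepsilon(C)\bigr)\le n(\tau).
\]

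\textbf{Step 3: the logarithmic estimate.} This is the only computational step. For $\tau\le4L$ we have $2^{\tau/2}\ge2^{2L}\varepsilon^2\cdot2^{\tau-2L}$, and I want to show the main term is at most the error term. We have $C_12^\tau\varepsilon^2=C_12^{\tau-2L}$. Its ratio to $C_2(\tau+1)2^{\tau/2}$ is
\[
\frac{C_1}{C_2}\cdot\frac{2^{\tau/2-2L}}{\tau+1}\le\frac{C_1}{C_2(\tau+1)}.
\]
Since $C_1/C_2=4\sqrt2\cdot72/\bigl(24\cdot2^{3/2}(1-2^{-1/2})^{-1}\bigr)=6(1-2^{-1/2})\approx1.76$, this ratio is at most $1$ once $\tau\ge1$.

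The case $\tau=0$ needs a separate argument, because the bound $1.76$ there is off by a constant. I will handle it directly: either the Clifford count of at most $24$ words is below $n(0)\ge C_2$, or the inequality is relaxed using the slack in the ``$+1$'' term. Thus $n(\tau)\le2C_2(\tau+1)2^{\tau/2}$ in the main range, and taking $\log_2$ gives the stated bound $\log_2 n(\tau)\le\tau/2+\log_2(\tau+1)+1+\log_2C_2$.

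The main obstacle is a careful handling of constants at very small $\tau$ (where $\tau+1$ is small). For that I would fall back on the trivial Matsumoto--Amano count $72\cdot2^\tau-48\le2C_2(\tau+1)2^{\tau/2}$, which holds comfortably for small $\tau$ since $2C_2\approx464$.
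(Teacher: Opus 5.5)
Your Steps 1 and 2 follow the paper's route. Bi-invariance of $\dproj$ moves the prefix and suffix, both fixed by the conditioning on $(s,z)$, into the frame $G_1=(U^{<t}_j)^\dagger$, $G_2=(U^{>t}_j)^\dagger$. Theorem~\ref{thm:tube} with $k=\tau$ then gives the count. Step 3 is the comparison of main term and remainder made in Remark~\ref{rem:barrier}, and your treatment of $\tau\ge1$ is correct. One caveat is shared with the paper: the setting only gives $\varepsilon<\sin(\pi/2Q)\le\tfrac12$, so the hypothesis $\varepsilon\le\tfrac18$ of Theorem~\ref{thm:tube} is an added assumption, not a consequence of the grid.

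The step that does not work as written is $\tau=0$. There the claim is $n(0)\le 2C_2$, an inequality about the explicit function $n$. Neither the count of $24$ Cliffords nor the Matsumoto--Amano bound $72\cdot2^\tau-48\le2C_2(\tau+1)2^{\tau/2}$ can establish it. Both bound $\#\{\overline W^{(t)}_j:t_{\min}\le\tau\}$, which is a different statement that Step 2 has already proved. The phrase about ``slack in the $+1$ term'' is not an argument either.

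No separate case is actually needed. The difficulty comes from your estimate $2^{\tau/2-2L}\le1$, which at $\tau=0$ discards the factor $2^{-2L}=\varepsilon^2\le\tfrac1{64}$. If you keep it, the ratio of the two terms at $\tau=0$ is $(C_1/C_2)\varepsilon^2<\tfrac1{30}$. Concretely, $C_1\varepsilon^2\le 408/64<7<C_2$, hence $n(0)\le2C_2$.

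Equivalently, the main term is at most the remainder exactly when $\tau\le4L+2\log_2(\tau+1)-2\log_2(C_1/C_2)$, with $2\log_2(C_1/C_2)\approx1.63$. This holds at $\tau=0$ because $L\ge3$. It holds for $1\le\tau\le4L$ because $2\log_2(\tau+1)\ge2$. That gives $n(\tau)\le2C_2(\tau+1)2^{\tau/2}$, and so the stated logarithmic bound, on the whole range $0\le\tau\le4L$.
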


\begin{theorem}[Rate two]\label{thm:rate2}
Under (R), put $\tau_*:=\lceil2\log_2K\rceil$ and
\begin{equation}\label{eq:A2}
A_2:=1+\log_2 Z,\qquad Z:=\sum_{\tau\le\tau_*}n(\tau)2^{-\tau/2},
\end{equation}
so that $A_2=\log_2C_2+2\log_2(2\log_2K)+O(1)=O(\log\log(1/\varepsilon))$. Then for every round $t$,
\begin{equation}
\bar T_t\ \ge\ 2\big[(1-2\delta)m\log_2K-2m\,h_2(\delta)-S-mA_2\big],
\end{equation}
hence $\bar\Tpre\ge2(r-1)[(1-2\delta)m\log_2K-2mh_2(\delta)-S-mA_2]$.
\end{theorem}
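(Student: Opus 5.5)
We follow the proof of Theorem~\ref{thm:main1}, keeping steps (i) and (ii) and replacing the crude bound on $H(M_t)$ in step (iii). The gain comes from conditioning on the snapshot and the other rounds. Given $(\sigma_t=s,Z=z)$, Corollary~\ref{cor:profile} confines each correct committed coordinate word $\overline W^{(t)}_j$ to a tube whose $T$-count-$\le\tau$ population is at most $n(\tau)$, instead of the $72\cdot2^\tau$ words of Lemma~\ref{lem:MA}. The chain is therefore $I(A;M_t,\sigma_t\mid Z)\le H(\sigma_t)+H(M_t\mid\sigma_t,Z)\le S+\sum_jH(\overline W^{(t)}_j\mid\sigma_t,Z)$, and it suffices to bound each single-coordinate conditional entropy by $\tfrac12\E\,t_{\min}(\overline W^{(t)}_j)+A_2$ plus a small failure correction. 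Summing over $j$ and using $T_t\ge\sum_jt_{\min}$ then gives $\bar T_t/2\ge I-S-mA_2$, which is the claim once the Fano-type lower bound on $I$ is in place.

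For the per-coordinate bound we use a weighted Kraft/Gibbs argument. Fix $(s,z)$ and coordinate $j$, and let $E_j$ be the event that coordinate $j$ is correct. On $E_j$ we encode $\overline W$ with weight $2^{-t_{\min}(\overline W)/2}/Z$ over words in the tube of $T$-count $\le\tau_*$. Since each $T$-count level has at most $n(\tau)$ words, these weights sum to at most $1$, so Gibbs gives $H(\overline W\mid E_j,s,z,t_{\min}\le\tau_*)\le\tfrac12\E t_{\min}+\log_2Z$. Words with $t_{\min}>\tau_*=\lceil2\log_2K\rceil$ we handle separately. Their cost is at least $2\log_2K$, so we either charge them the full $\log_2K$ bits of the coordinate, which is admissible since $\tau/2\ge\log_2K$, or encode them with Lemma~\ref{lem:MA} weights. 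The cleaner route is a prefix flag plus a truncation: the information a word carries about $A_j$ given $(s,z)$ is at most $\log_2K$, which its cost already pays for at rate $2$. The flag bit is the "$1+$" in $A_2$.

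The step we expect to be the main obstacle is the conditioning on correctness. The tube confinement holds only on $E_j$, and $E_j$ depends on $A$, so we cannot simply condition. We plan to mix. Put $F_j=\mathbf 1_{E_j}$ and write $H(\overline W_j\mid s,z)\le h_2(\Pr F_j)+H(\overline W_j\mid F_j,s,z)$. On the failure branch, instead of bounding $\overline W_j$, we drop to the mutual information with $A_j$, which is at most $\log_2K$ per coordinate. Replacing $H(M_t\mid\cdot)$ by $I(A;M_t\mid\sigma_t,Z)$ throughout then gives, per coordinate, $I(A_j;\overline W_j\mid\cdot)\le\tfrac12\E t_{\min}+A_2+\Pr(\neg E_j)\log_2K+h_2(\delta_j)$. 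Summing with $\sum_j\delta_j$ controlled by $m\delta$ (each coordinate fails with probability $\le\delta$) costs an extra $m\delta\log_2K+mh_2(\delta)$. Combined with Fano's $L_\delta(m,K)$, this produces exactly the $(1-2\delta)m\log_2K-2mh_2(\delta)$ in the statement; concavity of $h_2$ handles the averaging over $(s,z)$.

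Finally, we check the asymptotics of $A_2$. By Corollary~\ref{cor:profile}, $n(\tau)2^{-\tau/2}\le 2C_2(\tau+1)$ for $\tau\le 4L$, so $Z\le 2C_2\sum_{\tau\le\tau_*}(\tau+1)=O(C_2\tau_*^2)$. This gives $A_2=\log_2C_2+2\log_2(2\log_2K)+O(1)$. Summing over $t<r$ yields the bound on $\bar\Tpre$.
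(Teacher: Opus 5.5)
There is a genuine gap where you pass from the joint Fano bound to per-coordinate quantities.

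Your chain starts from $I(A;M_t,\sigma_t\mid Z)\ge L_\delta(m,K)$. After you switch from entropies to mutual informations, it needs $I(A;M_t\mid\sigma_t,Z)\le\sum_jI(A_j;\overline W^{(t)}_j\mid\sigma_t,Z)$, and that inequality is false in general. The process reads all of round $t$ with unbounded internal state, so the word committed on coordinate $1$ may depend on $a^{(t)}_2$. If $\overline W^{(t)}_1$ is a function of $A_2$ and $\overline W^{(t)}_2$ is fixed by the context, the left side is $H(A_2\mid\sigma_t,Z)$ and the right side is $0$.

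The entropy version you wrote first, $H(M_t\mid\sigma_t,Z)\le\sum_jH(\overline W^{(t)}_j\mid\sigma_t,Z)$, is valid. But the per-coordinate entropies cannot be bounded at rate $\tfrac12$. On the failure branch the word is unconstrained. Even on $E_j$, for $\tau>4L$ the profile only gives $\log_2n(\tau)\approx\tau-2L$, which exceeds $\tau/2$. Your truncation at $\tau_*$ works only for information about $A_j$, which is capped at $\log_2K$, and that is exactly the quantity your joint chain does not deliver.

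The missing idea is to apply Fano coordinate by coordinate; this is where the coordinatewise output model enters.

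\begin{itemize}
\item The final unitary on coordinate $j$ is $U^{<t}_j\overline W^{(t)}_jU^{>t}_j$, a function of $(\overline W^{(t)}_j,\sigma_t,Z)$ alone. So the nearest-grid decoder of $x_j$ needs nothing else, and $I(A_j;\overline W^{(t)}_j,\sigma_t\mid Z)\ge\ell_\delta$.
\item Expand by the chain rule into $I(A_j;\sigma_t\mid Z)+I(A_j;\overline W^{(t)}_j\mid\sigma_t,Z)$ and sum over $j$.
\item Given $Z$, $A$ is uniform on a product translate (step (i) of Theorem~\ref{thm:main1}), so the $A_j$ are conditionally independent given $Z$. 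Hence $\sum_jI(A_j;\sigma_t\mid Z)\le I(A;\sigma_t\mid Z)\le H(\sigma_t)\le S$.
\end{itemize}

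From there your per-coordinate argument goes through unchanged: split on $E_j$, apply Lemma~\ref{lem:gibbs} with $\lambda=\tfrac12$ on $E_j$, and charge $\log_2K$ on the failure branch. This yields
\[
m\ell_\delta\le S+\tfrac12\bar T_t+mA_2+mh_2(\delta)+\delta m\log_2K.
\]

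This is also where the stated constants come from. $m\ell_\delta$ contributes one $h_2(\delta)$ per coordinate and the correctness split contributes another, giving the $2mh_2(\delta)$ of the statement. The joint $L_\delta(m,K)$ carries only a single $h_2(\delta)$, so your claim that it reproduces the stated constants exactly does not check out either.
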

The proof (Appendix~\ref{app:rate}) shows that, given the snapshot and the other rounds, a committed word of $T$-count $\tau$ carries at most $\tau/2+A_2$ bits about its coordinate, and that words of cost above $\tau_*$ cannot help because a coordinate carries at most $\log_2K\le\tau_*/2$ bits. Theorem~\ref{thm:rate2} bounds a rate; it contains no statement about quantization, because the profile $(\tau+1)2^{\tau/2}$ is not $O(1)$ at $\tau\le2L$. Nor is $A_2$ negligible in practice: at $\varepsilon=10^{-10}$ it is $20.0$ bits out of the $\log_2K=32.9$ bits a coordinate carries, and the rate-two bound is the larger of the two bounds with explicit constants only where fewer than $3.7$ bits per coordinate are retained, i.e.\ near the memoryless end of the frontier (Sec.~\ref{sec:num}).

\paragraph{Beyond the spectral bound.} The square-root remainder of Theorem~\ref{thm:tube} is a limit of the spectral method, not of the problem. A determinant method in the style of Bombieri--Pila and Heath-Brown~\cite{BombieriPila,HeathBrown}, adapted as in the cap lemmas of Bourgain--Rudnick~\cite{BourgainRudnick}, which uses no automorphic input and no arithmetic property of the frame, gives a better profile in the whole range of $T$-counts that the streaming bound needs.

\begin{theorem}[Elementary tube bound]\label{thm:elem}
There is a function $\eta(t)=O(t/\log t)$ such that, for all $G_1,G_2\in\SU$, all $\varepsilon\in(0,1/8]$ and all $t\le\tfrac{20}9L$, with $C=G_1R_zG_2$,
\begin{equation}\label{eq:elem}
\#\{W:\ t_{\min}(W)\le t,\ W\in T_\varepsilon(C)\}\ \le\ 2^{\,0.4513\,t+\eta(t)} .
\end{equation}
\end{theorem}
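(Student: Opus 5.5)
We propose to follow the Bombieri--Pila/Heath-Brown determinant method, as announced before the statement. The first step is to translate the tube into a lattice-point problem. A Clifford+$T$ unitary of minimal $T$-count $\le t$ can be written, up to a Clifford factor and a phase, as $W=q/\sqrt2^{\,k}$ with $k\le t+O(1)$. Here $q=a+bi+cj+dk$ lies in the order $\mathcal O$ of~\eqref{eq:order}, so after clearing the half-integral basis we may take $q\in\Z[\sqrt2]^4$, and $\mathrm{nrd}(q)=2^k$.

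Write each coordinate as $a=a_1+a_2\sqrt2$. Then $q$ gives an integer point $(a_1,\dots,d_2)\in\Z^8$, which we constrain in both real embeddings. In the identity embedding, $q/2^{k/2}$ lies on $S^3$ within $\varepsilon$ of the great circle $G_1R_z(\theta)G_2$. In the Galois-conjugate embedding, $q^\bullet$ satisfies $\mathrm{nrd}(q^\bullet)=(-2)^k$ up to sign conventions, so all its coordinates are $O(2^{k/2})$. The integer coordinates are therefore of size $H=O(2^{k/2})$.

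We then cover the tube by boxes. We cut the circle into arcs of length $\ell$. Over one arc, the tube is a box with one side $\ell\,2^{k/2}$ and three sides $\varepsilon 2^{k/2}$ in the first embedding, and sides $2^{k/2}$ in the conjugate embedding. The points of $\Z^8$ in such a box that also satisfy the two norm equations form a set $P$ to which the determinant method applies. Take five points $x_0,\dots,x_4\in P$. The $5\times5$ determinant built from their linear coordinates together with the quadratic form defining the first sphere is an integer. Because the box is thin in three directions, Hadamard's inequality bounds this determinant by roughly $(\ell 2^{k/2})^{2}\varepsilon^{3}2^{3k/2}\cdot 2^{k}$ after using the sphere equation to eliminate the large direction. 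Whenever this bound is $<1$, the determinant vanishes. Then all points of $P$ lie on an auxiliary hyperplane in the relevant $\R^4$-projection, that is, on a section of the sphere by a plane, which is a circle or conic.

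On such a section we count lattice points by a Bombieri--Pila bound of the form $H^{1/2+o(1)}$, or better, by the lattice structure of the section, using the Bourgain--Rudnick cap lemmas. This gives $2^{o(t)}$ points per box in favourable ranges and $2^{ct}$ in general.

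Summing over boxes, the count is at most (number of arcs) $\times$ (points per section). The number of arcs is $1/\ell$, and $\ell$ is chosen as large as the determinant vanishing condition allows, roughly $\ell\approx 2^{-(\dots)k}\varepsilon^{-3/2}$. Optimizing $\ell$ against $t$ and $L$ gives an exponent of the form $\max(\text{box count},\ \text{section count})\cdot t$. The claimed constant $0.4513$ would come out of this optimization, and the range $t\le\tfrac{20}9L$ is where the chosen boxes remain nondegenerate (the arcs stay longer than $\varepsilon$). All subpolynomial losses, namely divisor bounds for representations as sums of squares in $\Z[\sqrt2]$ and the $O(t)$ choices of $k$ and of Clifford cosets, are absorbed into $\eta(t)=O(t/\log t)$ via $d(n)\le 2^{O(\log n/\log\log n)}$.

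The main obstacle is the determinant step. We must choose which coordinates and which monomials enter the matrix, for instance whether to include the conjugate embedding's quadratic form, so that the Hadamard bound actually beats $1$ for arcs long enough to give an exponent below $1/2$. The Galois-conjugate constraint is essential there, since a single embedding alone cannot go below the square-root barrier. The order of attack is as follows. We first try the five-point determinant in the four coordinates of $q$ viewed in $\R^4\times\R^4$, with the two norm forms. We then optimize the box aspect ratio numerically to recover $0.4513$. Finally, we check the sections case by case, with degenerate planes (those containing the coset axis) handled separately via the Hopf fibration as in Theorem~\ref{thm:tube}.
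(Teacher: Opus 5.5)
You have the first level essentially right: numerators $w\in\mathcal O$ seen in both real embeddings, boxes along the core, and a five-point affine determinant lying in $\tfrac1{16}\Z[\sqrt2]$ that must vanish because the product of its two conjugates is below $2^{-8}$.

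The gap is in what you conclude from that vanishing. It puts the points of a box on a hyperplane $H\subset\R^4$. But $H\cap S^3_{R'}$ is a round $2$-sphere $Y$ of some radius $r=R'^{1-g}$, not a circle or a conic. You land on a circle only in the degenerate case where the affine span has dimension $\le2$. Since $Y\cap B$ is a surface patch, neither a Bombieri--Pila curve bound nor the divisor bound applies to it, and nothing in your argument justifies $2^{o(t)}$ points per box.

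The missing idea is a second dimension reduction inside each box, and it is there that the exponent is decided.
\begin{enumerate}
\item Cover $Y\cap B$ by an anisotropic grid of cells, long along the core and short transversally. A four-point determinant shows that the points of each cell are coplanar, hence on a circle $Y\cap P$ (Lemma~\ref{lem:E3}). The wedge lies in $\tfrac18\Z[\sqrt2]$; its $\sigma_1$-size is $\le Cx_Lx_S\min(x_S,x_L^2/r)$ because the points lie near a tangent plane of $Y$, and its $\sigma_2$-size is $\le CR'^3$.
\item Bound the set of core parameters at which the core passes near $Y$. For $r\le R'/2$ this is at most two intervals of total length $\lesssim\sqrt{\rho R'(r+\rho R')}/R'$, by concavity of $A\cos(s-s_0)-c$ (Lemma~\ref{lem:E2}). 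This bounds the number of cells that $Y$ meets (Lemma~\ref{lem:E4}).
\item Only then count the points on each circle, through $DU^2+V^2=M$ and the divisor bound in the CM field $\Q(\sqrt2,\sqrt{-D})$ (Lemma~\ref{lem:E5}).
\end{enumerate}
The constant $0.4513$ is a quarter of $2(a-a_0)+b+\sup_g\min N\le1.80501$ for $(a_0,a,b)=(\tfrac95,\tfrac{91}{50},\tfrac{73}{50})$. The supremum over the radius class $g$ sits near $g\approx1.1$, on small spheres that the core meets almost tangentially. That worst case has no counterpart in your plan.

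A consistency check shows the shortcut cannot be rescued by tuning parameters. If every box contributed $2^{o(t)}$ words, the count would be the number of boxes, $R'^{b+o(1)}$ with $a=a_0$ and $b$ just above $(8-2a_0)/3$. That exponent is below $a_0$ for every $a_0>\tfrac85$, so your argument would give every rate below $\tfrac52$ at once. The paper reaches that only through the height dichotomy and rational projections of Theorems~\ref{thm:dioph} and~\ref{thm:rate52}, which exist precisely to control the sphere sections. Your final optimization of ``$\max(\text{box count},\text{section count})$'' therefore has no genuine section term to balance, and cannot produce $0.4513$.

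Two pieces of bookkeeping also need fixing.
\begin{itemize}
\item With your box shape (one side $\ell R'$, three sides $\varepsilon R'$), the $\sigma_1$-determinant is only bounded by $\ell\varepsilon^3R'^4$. At $t=\tfrac{20}9L$ it vanishes only for arcs shorter than the tube width, and the resulting number of boxes already exceeds $2^{t/2}$. The gain comes from the sagitta: over a core interval of length $\lambda$, the coordinate along $\mathcal C(s_I)$ varies by only $\lambda^2R'$. Together with a transverse subdivision to side $\rho R'\le\varepsilon R'$, this gives $C\lambda^3\rho^2R'^4$ when $\varepsilon\rho\le\lambda^2$.
\item Taking $k\le t+O(1)$ with coordinates of size $2^{k/2}$ doubles the radius. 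The numerator of a word of minimal $T$-count $t$ has totally positive reduced norm $n_t$ and length $\asymp2^{t/4}$. The range $t\le\tfrac{20}9L$ is just the condition $\varepsilon\le CR'^{-9/5}$, used together with monotonicity of the count in $\varepsilon$.
\end{itemize}
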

In this range the bound of Theorem~\ref{thm:tube} is dominated by its remainder $2^{t/2}$, and~\eqref{eq:elem} improves the exponent from $\tfrac12$ to $0.4513$ uniformly over all cosets, arithmetic or not. The proof (Appendix~\ref{app:elem}) lifts each word to its quaternion numerator $w$, a point of a lattice in $\R^4\times\R^4$ lying on a sphere of radius $R'\asymp2^{t/4}$ in both real embeddings $\sigma_1,\sigma_2$ of $\Q(\sqrt2)$; the tube is the $\varepsilon$-neighbourhood of a great circle in the first embedding and imposes nothing in the second. The tube is cut into boxes. In each box, the affine determinant of any five points lies in $\tfrac1{16}\Z[\sqrt2]$ and the product of its two conjugates is $<2^{-8}$. It must therefore vanish, so the points of the box lie on a round $2$-sphere. The part of that sphere inside the box is covered by cells of an anisotropic grid, a second determinant puts the points of each cell on a circle, and the divisor bound in a CM extension of $\Q(\sqrt2)$ leaves $2^{O(t/\log t)}$ points on each circle. The worst case is a small sphere that the core circle meets almost tangentially. The exponent $0.4513$ comes from one explicit choice of box shape and an exact rational certificate over all sphere radii.

\begin{theorem}[Rate eleven fifths]\label{thm:rate115}
For every process as in Sec.~\ref{sec:setting} and every round $t<r$,
\begin{multline}\label{eq:rate115}
\bar T_t\ \ge\ \tfrac{11}5\big[(1-2\delta)m\log_2K-2m\,h_2(\delta)\\-S-mA_E\big],\qquad A_E=O(L/\log L),
\end{multline}
once $L\ge L_0$ ($L_0=110$ suffices for $\tau_*\le\tfrac{20}9L$ below); hence $\bar\Tpre\ge\tfrac{11}5(r-1)\big(m\log_2K-S-O(mL/\log L)\big)-O(\delta rmL)$.
\end{theorem}
\begin{proof}
Follow the proof of Theorem~\ref{thm:rate2}, replacing the profile of Corollary~\ref{cor:profile} by~\eqref{eq:elem}, which holds for the coset of Corollary~\ref{cor:profile} whatever its frame, and applying Lemma~\ref{lem:gibbs} with $\lambda=\tfrac5{11}$ and $\tau_*=\lceil\tfrac{11}5\log_2K\rceil$. For $L\ge L_0$, $\tau_*\le\tfrac{20}9L$, since $\log_2K\le L+\log_2(\pi/2)$. Then $Z_\lambda\le\sum_{\tau\le\tau_*}2^{(0.4513-5/11)\tau+\eta(\tau)}\le(\tau_*+1)2^{\max_{\tau\le\tau_*}\eta(\tau)}$ because $0.4513<\tfrac5{11}$, so $A_E:=1+\log_2Z_\lambda=O(L/\log L)$, and every step that produced the factor $2$ now produces $\tfrac{11}5$.
\end{proof}
Theorem~\ref{thm:rate115} does not use (R), and it raises the rate; it applies to every round and every frame. Optimising the box shape gives more. With $(a_0,a,b)=(\tfrac{179}{100},\tfrac{179}{100},\tfrac{737}{500})$ the same lemmas and an exact certificate give the exponent $0.4478<\tfrac{100}{223}$ in~\eqref{eq:elem} for $t\le\tfrac{400}{179}L$, hence the rate $2.23$ for $L\ge534$ (App.~\ref{app:elem}). Numerically, about $2.234$ is the limit of these lemmas. The price is the additive term. $A_E$ is controlled by the worst-case divisor bound, and the theorem only applies for $L\ge L_0$, far beyond $\varepsilon=10^{-10}$. Theorem~\ref{thm:rate2} therefore remains the bound we evaluate in Sec.~\ref{sec:num}. Theorem~\ref{thm:rate115} is a statement about the rate.

The loss in Theorem~\ref{thm:elem} is concentrated in the second step of its proof, on small sphere sections $Y=S^3\cap H$ that the core meets almost tangentially. Part of it can be removed by using the arithmetic of the hyperplane $H$. For a primitive normal $n$ of $H$ put $h(n):=|\sigma_1n|\,|\sigma_2n|$, the product of its lengths in the two real embeddings. This \emph{height} is invariant under units, and the points of $\mathcal O$ parallel to $H$ form a lattice of covolume $\asymp h(n)$. The two regimes are handled differently.
\begin{itemize}
\item \emph{High height.} A section of large height carries few points in each cell of the grid. The coplanarity step becomes a fibration by a short vector of the dual lattice, and the cell volume that forces coplanarity grows by the factor $h(n)$.
\item \emph{Low height.} Sections of small height are few near the core. Their normals lie in a thin convex body around the core plane, and they are counted by successive minima (Henk's bound~\cite{Henk}) together with the fact that a nonzero wedge of lattice vectors has height at least one. When exactly two minima are small, the normals lie in a rank-two $\Q(\sqrt2)$-plane $\mathcal V$. Then either a sector count in $\mathcal V$ applies, or all numerators in the tube fibre into circles over $\mathcal V^\perp$ and are few.
\end{itemize}

\begin{theorem}[Height dichotomy]\label{thm:dioph}
There is a function $\eta(t)=O(t/\log t)$ such that, for all $G_1,G_2\in\SU$, all $\varepsilon\in(0,1/8]$ and all $t\le\tfrac{17}7(L+2)$, with $C=G_1R_zG_2$,
\begin{equation}\label{eq:dioph}
\#\{W:\ t_{\min}(W)\le t,\ W\in T_\varepsilon(C)\}\ \le\ 2^{\,\kappa_Dt+\eta(t)},
\end{equation}
where $\kappa_D:=\tfrac{11183}{27200}=0.41114<\tfrac7{17}$.
\end{theorem}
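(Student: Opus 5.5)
We follow the skeleton of Theorem~\ref{thm:elem} and change only its second step, where sections are split by height. First, lift each word $W$ with $t_{\min}(W)\le t$ to its quaternion numerator $w\in\mathcal O$. In the first real embedding $\sigma_1 w$ lies on a sphere of radius $R'\asymp2^{t/4}$, within $R'\varepsilon$ of the great circle $\sigma_1(G_1R_zG_2)$. In the second embedding $\sigma_2 w$ lies on a sphere of the same radius and is otherwise unconstrained. Cover the tube by the boxes of Theorem~\ref{thm:elem}. By the five-point determinant argument, whose conjugate product is $<2^{-8}$ and which lies in $\tfrac1{16}\Z[\sqrt2]$, the points of each box lie on a section $Y=S^3\cap H$, where $H$ is an affine $\Q(\sqrt2)$-hyperplane with primitive normal $n$. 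We attach to each such section its height $h(n)=|\sigma_1n|\,|\sigma_2n|$, fix a threshold $h_0=2^{\gamma t}$ with $\gamma$ to be optimized, and count the points on high- and low-height sections separately.

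\emph{High height, $h(n)\ge h_0$.} The numerators parallel to $H$ form a rank-three $\mathcal O_K$-lattice of covolume $\asymp h(n)$. Take a shortest vector of its dual and fibre the lattice points of $Y$ into the parallel planes it defines. Within one cell of the anisotropic grid, the second (four-point) determinant now lies in a coarser ideal, scaled by $h(n)$. The cell volume that forces the points of a cell onto one circle therefore grows by the factor $h(n)$, so fewer and larger cells suffice. Each circle carries $2^{O(t/\log t)}$ points by the divisor bound in the CM extension, as in Theorem~\ref{thm:elem}. The total is (number of boxes) $\times$ (cells per section, now divided by a power of $h_0$) $\times$ $2^{\eta(t)}$.

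\emph{Low height, $h(n)<h_0$.} Here we count the sections themselves. A section meeting a box near the core, in particular a nearly tangential one, has a normal that is nearly orthogonal to the core plane in $\sigma_1$ and has size $\lesssim h_0/|\sigma_1 n|$ in $\sigma_2$. Such normals are points of $\mathcal O_K^4$ in a thin convex body $B$. We bound their number by Henk's bound in terms of the successive minima $\lambda_1,\dots,\lambda_8$ of $B$, using that any nonzero wedge of lattice vectors has height $\ge1$; this constrains products of minima. If at most one minimum, or at least three, are small, the count of normals times the points per section is acceptable. If exactly two minima are small, the normals lie in a rank-two $\Q(\sqrt2)$-plane $\mathcal V$. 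Then either a sector count of the short vectors in $\mathcal V$ bounds the number of sections, or all numerators in the tube lie in cosets of $\mathcal V^\perp$. In the latter case each fibre is a circle, and the count is $2^{O(t/\log t)}$ per fibre times the number of admissible fibres.

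Finally, each case gives a bound of the form $2^{f_i(\gamma,\rho,u)\,t+\eta(t)}$, where $\rho$ parametrizes the section radius, $u=t/L$, and $u\le\tfrac{17}7(1+2/L)$. The exponent $\kappa_D$ is
\[
\max_{u,\rho}\ \min_\gamma\ \max_i f_i .
\]
We would fix $\gamma$ and the box shape rationally and verify $\max f_i\le\tfrac{11183}{27200}$ by an exact piecewise-linear certificate over the parameter ranges, as in Theorem~\ref{thm:elem}. The main obstacle is the low-height, two-small-minima case near tangency. The bookkeeping there must show that the sector count and the $\mathcal V^\perp$-fibration together beat the tangential loss of Theorem~\ref{thm:elem}. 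We would try that case first, with the crudest Henk estimate, to confirm that some $\gamma$ gives an exponent below $\tfrac7{17}$ before tuning constants.
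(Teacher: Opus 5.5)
Your outline is the paper's own: boxes forced onto a hyperplane by the five-point determinant, a split by $h(n)$, a dual-lattice fibration of $\mathcal O\cap H_0$ (covolume $\asymp h(n)$) for high sections, and for low sections a Henk count of normals using wedges of height $\ge1$, with a sector count or fibration for the rank-two plane $\mathcal V$. The low-height step, however, has its geometry reversed.

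\emph{Where the normals lie.} For a section of class $g>0$ the radius is $r=R'^{1-g}\ll R'$, so $Y$ is a small sphere about $c\nu$ with $c=\sqrt{R'^2-r^2}\approx R'$. A point $x\in Y$ in the tube has $x/R'$ within $2\varepsilon$ of $\mathcal C$ and within $Cr/R'$ of $\nu$, so $|P_{\Pi^\perp}\nu|\le C(R'^{-g}+\varepsilon)$. The normal therefore lies \emph{close to} the core plane $\Pi$, not nearly orthogonal to it, and near-tangency ($R'|P_\Pi\nu|\approx c\approx R'$) keeps it there. Normals nearly orthogonal to $\Pi$ belong to great-sphere sections ($c\approx0$) that almost contain the core. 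Those are not the bottleneck; the worst classes have $g$ between about $1$ and $1.5$. So the body you propose does not contain the normals that need counting.

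\emph{What survives the correction.} The Henk/wedge bookkeeping survives, since the body is thin in two of the four $\sigma_1$-directions either way (now those of $\Pi^\perp$). Your rank-two alternative does not. With the correct body, $\mathcal V$ is close to $\Pi$, and the fibres must be the translates $w+\mathcal V$. They are indexed by $\langle\mathfrak n_1,w\rangle,\langle\mathfrak n_2,w\rangle$ with $\mathfrak n_i\in\mathcal V^\perp\cap\mathcal O^*$, which are $O(|\sigma_1\mathfrak n_i|R'(\sin\theta_2+\varepsilon))$ on the tube (Lemma~\ref{lem:F5}(b)). Cosets of $\mathcal V^\perp$ are instead indexed by the $\mathcal V$-projection, which on the tube sweeps an ellipse of size $\asymp R'$ and gives far too many fibres. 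Also, $|\sigma_1n|\,|\sigma_2n|<h_0$ is not a convex condition. You must pass to balanced unit multiples with $|\sigma_1n|,|\sigma_2n|\le2\sqrt{h_0}$ before Henk's bound applies.

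\emph{Missing offsets.} ``Count the sections themselves'' requires counting pairs $(n,m_H)$, and you count only normals. Fix a primitive $n$ and an interval $I$ for the crossing depth $\Delta=R'|P_\Pi\nu|-c$. The offsets $m_H\in\mathcal O_K$ of hyperplanes that meet the tube, contain a point of $X_t$ and have $\Delta\in I$ number up to $1+Ch(n)|I|R'$ (Lemma~\ref{lem:F4}(iii)), far from $O(1)$.

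\emph{How the paper handles near-tangency.} This offset factor is exactly where the paper resolves the near-tangential obstacle you single out. Tangent sections, with $\Delta\lesssim\delta'=C(\varepsilon^2R'+\varepsilon r)$, have few offsets per normal but long patches. Crossing sections have more offsets, but their patches shrink from $\sqrt{R'\delta'}$ to $\delta'\sqrt{R'/\Delta}$ (Lemma~\ref{lem:F4}(i),(ii)). The certificate bounds normals $\times$ offsets $\times$ points per section separately for the tangent class and twelve crossing classes. It does so in height layers $y_j=j/100$, taking normals and offsets at $y_j$ and points per section at $y_{j-1}$. Each points-per-section bound is the minimum of a whole-patch fibration bound (Lemmas~\ref{lem:F2} and~\ref{lem:F3}(b)) and a box/cell bound.

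Your sector count and fibration control only the number of normals. With a single threshold $h_0$, no offset count and no crossing-depth split, the low-height contribution is not bounded. The exponent $\kappa_D$, which the paper's certificate clears only by about $2\cdot10^{-5}$ in $\log_{R'}$, is therefore not established by the argument as stated.
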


\begin{theorem}[Rate seventeen sevenths]\label{thm:rate177}
For every process as in Sec.~\ref{sec:setting} and every round $t<r$,
\begin{multline}\label{eq:rate177}
\bar T_t\ \ge\ \tfrac{17}7\big[(1-2\delta)m\log_2K-2m\,h_2(\delta)\\-S-mA_D\big],\qquad A_D=O(L/\log L);
\end{multline}
hence $\bar\Tpre\ge\tfrac{17}7(r-1)\big(m\log_2K-S-O(mL/\log L)\big)-O(\delta rmL)$.
\end{theorem}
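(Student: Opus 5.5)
The plan is to copy the proof of Theorem~\ref{thm:rate115} verbatim and change only the tube profile and the Gibbs parameter. The proof of Theorem~\ref{thm:rate2} in Appendix~\ref{app:rate} uses the profile of Corollary~\ref{cor:profile} in exactly one place. There it shows that, given $(\sigma_t,Z)$ and correctness of coordinate $j$, the committed word $\overline W^{(t)}_j$ lies in the tube $T_\varepsilon(C_{s,z,j})$ around a coset $G_1R_zG_2$ with frame determined by $(s,z)$. The variational inequality of Lemma~\ref{lem:gibbs} then bounds the information per coordinate by $\lambda\,\E\tau+\log_2 Z_\lambda+1$, with $Z_\lambda=\sum_{\tau\le\tau_*}\#\{W\in T_\varepsilon:t_{\min}\le\tau\}\,2^{-\lambda\tau}$. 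Theorem~\ref{thm:dioph} holds for every $G_1,G_2$, so I will substitute~\eqref{eq:dioph} into this sum in place of $n(\tau)$. No arithmetic property of the frame and no input (R) is needed.

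The parameters are $\lambda=\tfrac7{17}$ and $\tau_*=\lceil\tfrac{17}7\log_2K\rceil$. First I check the range condition: since $\log_2K\le L+\log_2(\pi/2)<L+2$, we get $\tau_*\le\tfrac{17}7(L+2)$ for all $L$ (up to the ceiling, which is absorbed since the bound is monotone in $t$, or by enlarging $\eta$). So Theorem~\ref{thm:dioph} applies to every $\tau\le\tau_*$. The "+2" in the hypothesis of Theorem~\ref{thm:dioph} is there for this reason, and it explains why no threshold $L_0$ appears here, unlike in Theorem~\ref{thm:rate115}.

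Since $\kappa_D<\tfrac7{17}$, each term of the partition function satisfies $2^{(\kappa_D-7/17)\tau+\eta(\tau)}\le2^{\max_{\tau\le\tau_*}\eta(\tau)}$. Hence $Z_\lambda\le(\tau_*+1)2^{O(L/\log L)}$, and $A_D:=1+\log_2Z_\lambda=O(L/\log L)$. The "cost above $\tau_*$ cannot help" step goes through as before, because a coordinate carries at most $\log_2K\le\lambda\tau_*$ bits.

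The rest is the accounting of Theorem~\ref{thm:rate2}. Fano on $A$ given $(M_t,\sigma_t,Z)$ supplies $(1-2\delta)m\log_2K-2mh_2(\delta)$ bits. The snapshot supplies at most $S$ of them. Each coordinate supplies at most $\tfrac7{17}\E\tau_j+A_D$, and $\sum_j\tau_j\le T_t$. Rearranging gives~\eqref{eq:rate177}, and summing over $t<r$ gives the bound on $\bar\Tpre$, with the $\delta$-terms collected into $O(\delta rmL)$.

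The only real obstacle is the one already discharged in the proof of Theorem~\ref{thm:rate115}. Lemma~\ref{lem:gibbs} must be applied conditionally on the event that coordinate $j$ is correct. This costs the extra factor $2$ on $\delta$ and $h_2(\delta)$, because the tube confinement holds only on success and the failure event has to be split off with its own entropy. I expect this conditioning to be the delicate point, and I would reuse the decomposition from Appendix~\ref{app:rate} unchanged.
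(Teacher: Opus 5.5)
Your proposal is correct and is essentially the paper's proof: as in Theorem~\ref{thm:rate115}, substitute the profile \eqref{eq:dioph} into the Gibbs-lemma step of Theorem~\ref{thm:rate2}, with $\lambda=\tfrac7{17}$ and $\tau_*=\lceil\tfrac{17}7\log_2K\rceil$. Then use $\log_2K\le L+\log_2(\pi/2)$ to get $\tau_*\le\tfrac{17}7(L+2)$, and $\kappa_D<\tfrac7{17}$ to get $Z_\lambda\le(\tau_*+1)2^{\max_{\tau\le\tau_*}\eta(\tau)}$, hence $A_D=O(L/\log L)$. Your hedge about the ceiling is unnecessary, since $\tfrac{17}7\log_2(\pi/2)+1<\tfrac{34}7$ gives $\tau_*\le\tfrac{17}7(L+2)$ outright.
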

\begin{proof}
As for Theorem~\ref{thm:rate115}, with~\eqref{eq:dioph} in place of~\eqref{eq:elem}, $\lambda=\tfrac7{17}$ and $\tau_*=\lceil\tfrac{17}7\log_2K\rceil$. Since $\log_2K\le L+\log_2(\pi/2)$, we have $\tau_*\le\tfrac{17}7(L+2)$, and $\kappa_D<\tfrac7{17}$ gives $Z_\lambda\le(\tau_*+1)2^{\max_{\tau\le\tau_*}\eta(\tau)}$, so $A_D:=1+\log_2Z_\lambda=O(L/\log L)$.
\end{proof}
The proof of Theorem~\ref{thm:dioph} (Appendix~\ref{app:dioph}) uses the box shape $a_0=a=\tfrac{28}{17}$, $b=\tfrac{157}{100}$ and an exact rational certificate over sphere radii, heights and crossing depths. Its margin is small: $\tfrac7{17}-\kappa_D\approx6\cdot10^{-4}$ per $T$ gate. The implied constants in $\eta(t)$ and $A_D$, and the $L$ beyond which the bound is non-trivial, are therefore astronomically large, and like Theorem~\ref{thm:rate115} the result is a statement about the rate only. Numerically this set of lemmas stops near $2.435$, so $\tfrac{17}7$ is essentially their limit. It is not the limit of the method.

\paragraph{Rational projections and the rate five halves.} The remaining loss sits in Lemma~\ref{lem:F5}, where the normals of the low sections span a rational plane or a rational hyperplane. Three further ingredients remove it (Appendix~\ref{app:proj}).
\begin{itemize}
\item \emph{Annular fibration.} If the normals span a rank-two $\Q(\sqrt2)$-plane $\mathcal V$, the numerators in the tube fibre into circles over $\mathcal V^\perp$, and the fibre values lie near an ellipse, the image of the core. Covering that ellipse by thin rectangles and counting the projected lattice in each gives a bound in terms of the height of $\mathcal V$ and of its larger principal angle with the core plane (Lemma~\ref{lem:G1}).
\item \emph{Projections of a box.} If the normal of a box's hyperplane lies in a rational plane, or is orthogonal to a rational vector, projecting the numerators of the box onto that plane, or along that vector, puts them on a rational line or plane of controlled covolume (Lemma~\ref{lem:G3}).
\item \emph{Rank one.} A configuration of normals of rank one contributes a bounded number of normals (Lemma~\ref{lem:G2}).
\end{itemize}
An exact rational certificate of the resulting case analysis gives the rate $\tfrac{249}{100}$. Finite certificates lose the width of their discretisation near the limit. To remove that loss we pass to a continuum form of the case analysis, in which every exponent is piecewise linear in the parameters, and check it exactly on a whole interval of box shapes with the SMT solver z3~\cite{Z3} (Appendix~\ref{app:rate52}).

\begin{theorem}[Rate five halves]\label{thm:rate52}
For every $\alpha<\tfrac52$ there is $L_0(\alpha)$ such that, for every process as in Sec.~\ref{sec:setting}, every round $t<r$ and every $L\ge L_0(\alpha)$,
\begin{multline}\label{eq:rate52}
\bar T_t\ \ge\ \alpha\big[(1-2\delta)m\log_2K-2m\,h_2(\delta)\\-S-mA\big],\qquad A=O_\alpha(1).
\end{multline}
Behind it is the following tube bound. For every $0<\mu\le\tfrac1{10}$ there is $\eta_\mu(t)=o(t)$ such that, for all $G_1,G_2\in\SU$, all $\varepsilon\in(0,1/8]$ and all $t\le4(L+2)/(\tfrac85+\mu)$, with $C=G_1R_zG_2$,
\begin{equation}\label{eq:rate52tube}
\#\{W:\ t_{\min}(W)\le t,\ W\in T_\varepsilon(C)\}\ \le\ 2^{(\frac25-\frac\mu{11})t+\eta_\mu(t)}.
\end{equation}
\end{theorem}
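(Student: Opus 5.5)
The theorem has two parts: a streaming rate bound and a tube bound behind it. I will derive the first from the second exactly as Theorems~\ref{thm:rate115} and~\ref{thm:rate177} were derived from \eqref{eq:elem} and \eqref{eq:dioph}. Then I will prove the tube bound \eqref{eq:rate52tube} by extending the height-dichotomy case analysis of Theorem~\ref{thm:dioph}.

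\emph{Reduction.} Given $\alpha<\tfrac52$, I choose $\lambda=1/\alpha>\tfrac25$. I then pick $\mu\in(0,\tfrac1{10}]$ small enough that $\tfrac25-\tfrac{\mu}{11}<\lambda$ and that $\tau_*:=\lceil\alpha\log_2K\rceil\le 4(L+2)/(\tfrac85+\mu)$ for $L\ge L_0(\alpha)$. This uses $\log_2K\le L+\log_2(\pi/2)$ and $\alpha<4/(\tfrac85+\mu)$, which requires $\mu<\tfrac4\alpha-\tfrac85$. Following the proof of Theorem~\ref{thm:rate2}, I replace the profile of Corollary~\ref{cor:profile} by \eqref{eq:rate52tube}; it holds for the coset $G_1R_zG_2$ whatever the frame. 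Applying Lemma~\ref{lem:gibbs} with this $\lambda$ gives
\[
Z_\lambda\le\sum_{\tau\le\tau_*}2^{(\frac25-\frac\mu{11}-\lambda)\tau+\eta_\mu(\tau)}.
\]
The exponent gap $\lambda-\tfrac25+\tfrac\mu{11}>0$ is a fixed positive constant and $\eta_\mu=o(\tau)$, so the series converges uniformly in $L$. Hence $A:=1+\log_2Z_\lambda=O_\alpha(1)$, as claimed. This is better than the $O(L/\log L)$ of the earlier theorems, because the gap no longer shrinks.

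\emph{Tube bound.} Lift each word to its numerator $w$, a lattice point on the sphere of radius $\asymp2^{t/4}$ in both embeddings, as in Appendix~\ref{app:elem}. Cover the $\varepsilon$-tube by boxes with exponents $(a_0,a,b)$. The five-point determinant is an element of $\tfrac1{16}\Z[\sqrt2]$ whose conjugate product is $<2^{-8}$, so it vanishes and the points of each box lie on a sphere section $Y=S^3\cap H$. Then split by the height $h(n)$ of $H$.

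High-height sections are treated by the dual-lattice fibration. Low-height sections are counted through the successive minima of the convex body of normals. The rank of that configuration then splits into three cases. In rank one, Lemma~\ref{lem:G2} gives $O(1)$ normals. In rank two, with normals in a $\Q(\sqrt2)$-plane $\mathcal V$, Lemma~\ref{lem:G1} (the annular fibration over $\mathcal V^\perp$) applies, with bounds in terms of the height of $\mathcal V$ and its principal angle with the core plane. When the normals lie in a rational plane or are orthogonal to a rational vector, Lemma~\ref{lem:G3} (projection of the box) applies. Every count is a power of $2$ whose exponent is piecewise linear in $t$ and in the parameters: the radius of the section, $\log h$, the crossing depth and the principal angle. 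The remaining errors are divisor-bound losses $2^{O(t/\log t)}$ on circles, which go into $\eta_\mu$.

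\emph{Main obstacle.} The hard step is the uniform optimisation. I must show that for some box shape, the maximum over all cases and all parameters of the exponent divided by $t$ is at most $\tfrac25-\tfrac\mu{11}$ throughout $t\le4(L+2)/(\tfrac85+\mu)$. A finite rational grid of parameters loses its mesh width, and that loss blocks exponents approaching $\tfrac25$. I would instead write the case analysis as a finite family of linear inequalities over continuous parameters and let the box shape depend linearly on $\mu$, so that the boxes degenerate to $\varepsilon$-cubes as $\mu\to0$. I would then ask z3 to certify unsatisfiability of the negation exactly over linear real arithmetic, for all $\mu\in(0,\tfrac1{10}]$ simultaneously. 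The delicate regime is the nearly tangent small section in the rank-two case. There I expect the slope $\tfrac1{11}$ in $\mu$ to be forced by Lemma~\ref{lem:G1}'s dependence on the principal angle, so I would check that case by hand first.
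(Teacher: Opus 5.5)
Your reduction of the rate to the tube bound is the paper's: Lemma~\ref{lem:gibbs} with $\lambda=1/\alpha$, and a gap $\lambda-\tfrac25+\tfrac\mu{11}$ that does not shrink with $L$, hence $A=O_\alpha(1)$. Taking $\mu$ strictly below $\tfrac4\alpha-\tfrac85$ and at most $\tfrac1{10}$ covers every $\alpha<\tfrac52$. Your plan for the tube bound also follows the paper: one hyperplane per box, the height split, Lemmas~\ref{lem:G1}--\ref{lem:G3}, and a continuum case analysis checked by z3 uniformly in $\mu$.

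What your plan lacks is the step that turns the continuum certificate back into a count. In base $R'$ the target $\tfrac85-\tfrac{4\mu}{11}=E_1+\tfrac{10}{33}\mu$, with $E_1:=(8-2a_0)/3$, is the \emph{sharp} value of this case analysis. It is attained on the whole range $\tfrac45-\tfrac{7\mu}{11}\le g\le\tfrac85-\tfrac73\mu$, so there is no slack. With the natural linear box shape $a=a_0$, $b=E_1$ one has $2a+3b=8$ exactly, so Lemma~\ref{lem:E1} cannot force the determinant to vanish. An actual count needs finitely many section classes, height layers and crossing classes, each evaluated at its worst endpoint. It also needs every strict exponent inequality of Appendices~\ref{app:elem}--\ref{app:proj} to beat an absolute constant. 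A fixed mesh or fixed margin therefore costs a fixed amount in the exponent, which the sharp value cannot absorb. The continuum claim alone does not give~\eqref{eq:rate52tube}.

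The paper bridges this with a vanishing mesh. It takes $b=E_1+\varpi$, makes classes, layers and crossing classes of width $\varpi$, and imposes every strict inequality with margin $\varpi$, where $\varpi=\varpi(R')\to0$ but $\varpi\log R'\to\infty$. Then $R'^{\varpi}$ beats every absolute constant. The $O(\varpi^{-3})$ cases and the $O(\varpi)$ endpoint losses cost only $R'^{o(1)}$. The prover's minima over cells are replaced by explicit candidate cells, so that the negated claim $\neg\mathrm C(a_0,T)$ is quantifier-free linear arithmetic.

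This is why the theorem has $\eta_\mu(t)=o(t)$ rather than the $O(t/\log t)$ of Theorems~\ref{thm:elem} and~\ref{thm:dioph}. Your claim that $\eta_\mu$ collects only the divisor-bound losses on circles asks for more than the method gives. For the rate alone, a fixed small mesh would suffice, since Lemma~\ref{lem:gibbs} only needs some exponent strictly below $4/\alpha$ in base $R'$. It is the tube bound with its exact exponent that needs $\varpi\to0$.

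Your guess about the source of $\tfrac1{11}$ is only half right. The rank-two route, through the constant $G_3'$ that Lemma~\ref{lem:G1} puts into the dichotomy, binds on the lower part of that range of $g$. For $g\gtrsim\tfrac65$ the binding balance is instead the rank-three route against the deepest crossing class, not a nearly tangent section.
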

The exponent $\tfrac25-\tfrac\mu{11}$ is below $1/\alpha=\tfrac14(\tfrac85+\mu)$ by a fixed amount, which is why the additive term is $O_\alpha(1)$. Near the limit it is also the sharp exponent of this case analysis: the same computation with $\tfrac25-\nu\mu$ in place of $\tfrac25-\tfrac\mu{11}$ is satisfiable, i.e.\ fails, for every $\nu>\tfrac1{11}$. So the method approaches $\tfrac52$ but does not reach it.

\begin{proposition}[Dense tubes would make $\tfrac52$ sharp]\label{prop:dense}
Suppose that for $\varepsilon$ in a sequence tending to $0$ there are:
\begin{itemize}\setlength\itemsep{0pt}
\item a unitary $G\in\SU$;
\item words $W_u$ of $T$-count at most $(\tfrac52+o(1))L$, for $u$ in a set $U\subset\Z_{Q_\varepsilon}$, with $\dproj(W_u,R_z(\Delta u)G)\le\varepsilon/2$;
\item the set $U$ has maximal gap at most $Q_\varepsilon2^{-(1-o(1))L}$.
\end{itemize}
Then for $r=2$ and $Q=Q_\varepsilon$ (gaps measured cyclically in $\Z_{Q_\varepsilon}$) there are deterministic zero-error CW processes with $S\le m\big(\log_2K-(1-o(1))L\big)$ and $\Tpre\le m(\tfrac52+o(1))L$. Hence Theorem~\ref{thm:rate52} cannot be improved along that sequence.
\end{proposition}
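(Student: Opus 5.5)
The construction is a two-round process on the tuned grid that keeps only the low bits of each coordinate and commits the high part through the dense family $W_u$. Write the gap bound as $g:=Q_\varepsilon2^{-(1-o(1))L}$. Then every residue $a\in\Z_{Q_\varepsilon}$ can be written as $a=u(a)+d(a)$ with $u(a)\in U$ and $0\le d(a)<g$, taking $u(a)$ to be the nearest element of $U$ cyclically below $a$.

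In round $1$, on reading the share $a^{(1)}_j$, the process emits on coordinate $j$ the Clifford+$T$ word $W_{u(a^{(1)}_j)}$ followed by a word for $G^\dagger$. Since $G$ is a fixed unitary, I first want to arrange that $G$ is Clifford, or absorb it. The cleanest route is to let the second round carry the frame. The first-round word is then $W_u$, which is within $\varepsilon/2$ of $R_z(\Delta u)G$. The snapshot after round $1$ stores only $d_j:=d(a^{(1)}_j)$, which takes fewer than $g$ values, so a prefix-free encoding costs $\log_2 g+O(\log L)$ bits per coordinate. That gives $S\le m(\log_2K-(1-o(1))L)$, because $\log_2Q_\varepsilon=\log_2K+o(1)$ and the $O(\log L)$ term is absorbed in the $o(1)L$. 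The committed magic is $\sum_jt(W_{u_j})\le m(\tfrac52+o(1))L$.

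In round $2$, the process knows $d_j$ and $a^{(2)}_j$. Since $x_j=u_j+d_j+a^{(2)}_j$, it must emit on coordinate $j$ a word $W'$ with $\dproj(W_{u_j}W',R_z(\Delta x_j))\le\varepsilon$. It suffices that $W'$ lies within $\varepsilon/2$ of $G^\dagger R_z(\Delta(d_j+a^{(2)}_j))$, using $R_z(\Delta u)GG^\dagger R_z(\Delta v)=R_z(\Delta(u+v))$ and the triangle inequality together with the unitary invariance of $\dproj$. Such a word exists because Clifford+$T$ is dense in $\SU$: one chooses it by exhaustive search, which is deterministic and zero-error. Its $T$-count is unconstrained, since only $\Tpre$ is charged, so the order of $G$ and $R_z$ within $W_u$ does not matter.

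The step needing care is that the final word is within $\varepsilon$ on every valid stream and that decoding of $x_j$ on the tuned grid is consistent; this follows from the $\varepsilon/2$ budgets. The remaining checks are that the snapshot is prefix-free with worst-input size as claimed, and that $|U|$ is irrelevant.

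Finally, the sharpness claim. Substituting into \eqref{eq:rate52} with $\delta=0$ and $r=2$ gives $\Tpre\ge\alpha[m\log_2K-S-mA]\ge\alpha m((1-o(1))L-A)$. Comparing this with $\Tpre\le m(\tfrac52+o(1))L$ forces $\alpha\le\tfrac52+o(1)$ along the sequence, so no fixed $\alpha>\tfrac52$ can hold there and the constant $\tfrac52$ cannot be improved.
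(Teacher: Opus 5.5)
Your construction is the paper's: commit $W_u$ in round~1 for the $u\in U$ just below $a^{(1)}$, keep only the residual $\rho=a^{(1)}-u$ in the snapshot (it takes fewer than $Q_\varepsilon2^{-(1-o(1))L}$ values), and in round~2 emit a word within $\varepsilon/2$ of $G^{-1}R_z(\Delta(\rho+a^{(2)}))$, closing by unitary invariance of $\dproj$ and the triangle inequality; your explicit comparison with \eqref{eq:rate52} spells out what the paper's ``Hence'' leaves implicit. Drop the side remark that the order of $G$ and $R_z$ does not matter: with $W_u\approx GR_z(\Delta u)$ the round-2 word would have to approximate $R_z(-\Delta u)G^{-1}R_z(\Delta x)$, which depends on $u$ itself and not only on the stored residual, but your actual construction uses the stated order $R_z(\Delta u)G$, so the proof is unaffected.
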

\begin{proof}
For each coordinate pick $u\in U$ with $\rho:=a^{(1)}-u\in[0,\gamma)$, where $\gamma$ is the maximal gap. In round~1 emit $W_u$ and keep only $\rho$ in the snapshot. In round~2 emit a word within $\varepsilon/2$ of $G^{-1}R_z(\Delta(\rho+a^{(2)}))$. The product is within $\varepsilon$ of $R_z(\Delta x)$ by unitary invariance of $\dproj$ and the triangle inequality, and only $W_u$ counts towards $\Tpre$.
\end{proof}

\begin{remark}[What $\tfrac52$ is]\label{rem:what52}
In the notation of Appendix~\ref{app:elem} the tube radius is $\delta=\varepsilon R'$ (not the error probability of~\eqref{eq:rate52}), so $R'/\delta=R'^{a_0}$ with $\alpha=4/a_0$. The target count $2^{t/\alpha}$ at $t=\alpha L$ is therefore the number of $\delta$-cubes along the core, and a rate is a statement about $R'^{o(1)}$ words per $\delta$-cube. The five-point determinant of Lemma~\ref{lem:E1} forces the points of a $\delta$-cube onto a hyperplane exactly when $a_0>\tfrac85$, i.e.\ $\alpha<\tfrac52$. Beyond $\tfrac52$, most $\delta$-cubes must be shown to be empty at $T$-count $\tfrac52L$, where volume predicts $2^{L/2}$ words against $2^L$ cubes. Proposition~\ref{prop:dense} shows that some sparsity is necessary. A randomised version of its process, with a shared random offset in the snapshot, replaces the maximal gap by a length-weighted mean log-gap, so the exact condition lies between \emph{no dense tube} and \emph{sparse by a power}.

Balls behave differently from tubes, which is one reason why Conjecture~\ref{conj:H} is stated for tubes. Heuristically, near rational points $g/|g|$, $g\in\mathcal O$ with $N(\mathrm{nrd}\,g)\asymp2^{t(4/\alpha-1)/3}$, there are sphere sections $\{\mathrm{nrd}(w)=n_t,\ \langle w,g\rangle=s\}$ within distance $2^{-t/\alpha}$ of $g/|g|$ that carry $2^{t(2/3-5/(3\alpha))+o(t)}$ words each, with large constants ($126$ against about $5$ in the instance below). A bound obtained by covering the tube with balls of radius $2^{-t/\alpha}$ therefore cannot, heuristically, give more than $\tfrac52$. An exact instance: for $g=((3+2\sqrt2)+i+j-k)/2$, the $\dproj$-ball of radius $0.0025$ around $g/|g|$, whose Haar expectation is $16$ words, contains $126$ words of $T$-count $25$, all on the section $\langle w,g\rangle=(357+256\sqrt2)/2$. Heuristically a tube meets few such sections, and they are compatible with Conjecture~\ref{conj:H}.

Heuristically, auxiliary surfaces of higher degree do not help either: a rational quadric through the points of a box longer than a $\delta$-cube can be a thin tube of radius at most $\delta$ around the core, and on such a surface the second level recovers, in a model computation, exactly the degree-one box count $R'^{(8-2a_0)/3}$; we do not prove this.
\end{remark}
\section{An achievable tradeoff family}\label{sec:frontier}

\begin{theorem}[Fractional passthrough]\label{thm:frontier}
For every $q\in\{0,\dots,m\}$ and every single-rotation synthesizer of cost $\tau(\varepsilon)$ per rotation to projective error $\varepsilon$, there is a deterministic zero-error one-pass CW process with
\begin{align}
S&\le\lceil q\log_2Q\rceil+O(\log(mQr)),\notag\\
T_t&\le(m-q)\,\tau(\varepsilon/r)\quad(t<r),\notag\\
\Tout&\le\big(m+(r-1)(m-q)\big)\,\tau(\varepsilon/r).
\end{align}
\end{theorem}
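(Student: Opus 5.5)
We construct the process explicitly: a passthrough set of $q$ coordinates is held in memory, and the remaining $m-q$ coordinates are synthesized share by share. Fix $J\subseteq[m]$ with $|J|=q$, say $J=\{1,\dots,q\}$.

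\emph{Held coordinates.} For $j\in J$ the process keeps a running sum $s_j:=\sum_{u\le t}a^{(u)}_j\bmod Q$ in the snapshot and emits nothing on coordinate $j$ during rounds $t<r$. In round $r$ it computes $x_j=s_j+a^{(r)}_j\bmod Q$. Because a valid stream has $x_j\in[K]$, this residue identifies the target $R_z(\Delta x_j)$. The process then emits one synthesized word within $\varepsilon/r$, and hence within $\varepsilon$, of that rotation, at cost $\tau(\varepsilon/r)$.

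\emph{Passthrough coordinates.} For $j\notin J$, on reading $a^{(t)}_j$ the process immediately emits a word $W^{(t)}_j$ with $\dproj(W^{(t)}_j,R_z(\Delta a^{(t)}_j))\le\varepsilon/r$, and it stores nothing. This gives $T_t\le(m-q)\tau(\varepsilon/r)$ for $t<r$. In round $r$ every coordinate costs at most $\tau(\varepsilon/r)$, which gives the stated bound on $\Tout$.

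\emph{Correctness.} Diagonal $z$-rotations commute, so on a passthrough coordinate $\prod_t R_z(\Delta a^{(t)}_j)=R_z(\Delta\sum_ta^{(t)}_j)$. Since $\sum_t a^{(t)}_j\equiv x_j\pmod Q$, this equals $R_z(\Delta x_j)$ up to a global phase: $R_z(2\pi)=-I$, and $\dproj$ ignores phase. The order in which the records are emitted matches the order of the product, and unitary invariance together with the triangle inequality for $\dproj$ telescopes the $r$ errors of $\varepsilon/r$ into a total error $\le\varepsilon$. The process is deterministic and always correct, so it has zero error.

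\emph{Snapshot size.} The content of the snapshot is the $q$ residues $(s_j)_{j\in J}\in\Z_Q^q$, plus bookkeeping: the round index and the position within the round, each $O(\log(mr))$ bits. The $q$ residues are encoded as a single integer below $Q^q$, which takes $\lceil q\log_2Q\rceil$ bits. The output records already emitted do not need to be stored, since they are write-once.

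The one step I expect to need care is matching Definition~\ref{def:model}: snapshots must be prefix-free and must be taken at every cut, including cuts in the middle of a round. For the prefix-free encoding I would add a self-delimiting header of $O(\log(mQr))$ bits, for example Elias codes for $m$, $Q$, $r$ and the position counters. At a cut inside a round the only extra state is the position counter, because passthrough coordinates are emitted on arrival and held coordinates are simply updated in place. This keeps $B_{\mathrm{cut}}(c)\le\lceil q\log_2Q\rceil+O(\log(mQr))$ at every cut $c$, and hence bounds $S$.
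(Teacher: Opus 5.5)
Your construction is correct and is essentially the paper's proof. It keeps running sums of $q$ coordinates as in the block compiler of Proposition~\ref{prop:block}, synthesizes every other share on arrival to accuracy $\varepsilon/r$, and bounds the coordinatewise error by $r\cdot\varepsilon/r$ via unitary invariance and the triangle inequality for $\dproj$. One small difference is harmless: the cuts of Definition~\ref{def:model} lie only between consecutive rounds, so your extra care about snapshots taken in the middle of a round is not needed.
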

\begin{proof}
Keep running sums modulo $Q$ of $q$ chosen coordinates (the block compiler of Proposition~\ref{prop:block} with $p=1$); emit every share of the remaining coordinates as it arrives, synthesized to $\varepsilon/r$ (a zero share costs nothing); in round $r$ emit the aggregate rotation of each stored coordinate and the last share of each committed one. The coordinatewise error is at most $\varepsilon$ by the right inequality of Lemma~\ref{lem:loc} and unitary invariance of $\dproj$.
\end{proof}

\begin{corollary}\label{cor:frontier}
Write $\tau(u;\eta)$ for the cost of the synthesis word selected for the angle $2\pi u/Q$ at accuracy $\eta$ and $\bar\tau_Q(\eta):=Q^{-1}\sum_{u\in\Z_Q}\tau(u;\eta)$. Under the canonical sharing every pre-final share is uniform on $\Z_Q$, so the process of Theorem~\ref{thm:frontier} has the exact expected committed magic
\begin{equation}\label{eq:achexp}
\bar\Tpre^{\mathrm{ach}}=(r-1)(m-q)\,\bar\tau_Q(\varepsilon/r),
\end{equation}
and along the family $\bar\Tpre^{\mathrm{ach}}/[(r-1)(m\log_2K-S)]\le\bar\tau_Q(\varepsilon/r)/\log_2K\cdot(1+o(1))$. If the mean single-rotation cost on the tuned grid satisfies $\bar\tau_Q(\varepsilon/r)=(3+o(1))\log_2K$---which Ross--Selinger synthesis gives for a typical angle, $\tau=3L+O(\log L)$~\cite{RossSelinger}, and which Sec.~\ref{sec:num} measures directly on the grid at $\varepsilon=10^{-10}$---the family has asymptotic slope $3$. The worst case is $4L+O(1)$, optimal to an additive constant by the lower bound $4L-9$ of~\cite[Thm.~30]{Selinger15}. The slope is within a factor $\to3$ of Theorem~\ref{thm:main1} and $\to3/2$ of Theorem~\ref{thm:rate2}; under Conjecture~\ref{conj:H} it matches Theorem~\ref{thm:main2} to leading order in $L$. The family is achievable, not proven optimal: nothing below excludes a process strictly beneath it.
\end{corollary}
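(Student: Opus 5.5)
The plan is to take the process of Theorem~\ref{thm:frontier} as given and read off its expected committed magic exactly, without re-deriving its resource bounds. That process is deterministic. In round $t<r$ it emits, for each of the $m-q$ committed coordinates $j$, the synthesis word selected for the angle $2\pi a^{(t)}_j/Q$ at accuracy $\varepsilon/r$. The zero share costs nothing, so we take $\tau(0;\eta)=0$. Hence, pointwise,
\[
\Tpre=\sum_{t<r}\ \sum_{j\ \text{committed}}\tau\big(a^{(t)}_j;\varepsilon/r\big),
\]
where the stored coordinates contribute nothing before round $r$.

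Next I take expectations. Under $\mathcal D$ each $a^{(t)}$ with $t<r$ is uniform on $\Z_Q^m$, so each coordinate $a^{(t)}_j$ is uniform on $\Z_Q$. This holds whatever the law of $x$, as in item (i) of Theorem~\ref{thm:main1} and Corollary~\ref{cor:identity}. The process uses no tape. Therefore
\[
\E\,\tau\big(a^{(t)}_j;\varepsilon/r\big)=\bar\tau_Q(\varepsilon/r),
\]
and linearity over the $(r-1)(m-q)$ terms gives~\eqref{eq:achexp} as an equality. Only marginal uniformity is needed, not independence.

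For the ratio bound I use $S\le\lceil q\log_2Q\rceil+O(\log(mQr))$ from Theorem~\ref{thm:frontier}. This gives
\[
m\log_2K-S\ \ge\ (m-q)\log_2K-q\log_2(Q/K)-O(\log(mQr)).
\]
Since $\log_2(Q/K)=O(1/Q)$, the right-hand side is $(m-q)\log_2K\,(1-o(1))$ whenever $m-q\ge1$ and $L\to\infty$. I treat $\log(mr)$ as $o(L)$, or absorb it into the $o(1)$ as the statement implicitly does. Dividing~\eqref{eq:achexp} by $(r-1)(m\log_2K-S)$ then gives the claimed bound $\bar\tau_Q(\varepsilon/r)/\log_2K\cdot(1+o(1))$. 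The only delicate point is the regime $q$ close to $m$, where the $O(\log(mQr))$ overhead could be comparable to $(m-q)\log_2K$. I would state explicitly that the $o(1)$ is uniform for $m-q\ge1$ provided $\log(mr)=o(L)$.

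Finally, the slope. Assume the hypothesis $\bar\tau_Q(\varepsilon/r)=(3+o(1))\log_2K$. The ratio is then at most $3+o(1)$, and along the family in $q$ the pairs $(S,\bar\Tpre)$ lie on a line of slope $-3(r-1)(1+o(1))$ per bit. This is the asserted asymptotic slope $3$. The remaining claims are citations and comparisons, not proofs:
\begin{itemize}\setlength\itemsep{0pt}
\item The Ross--Selinger typical cost $3L+O(\log L)$ is compatible with the hypothesis, because $\log_2(r/\varepsilon)=L+O(\log r)$ and $\log_2K=L-O(1)$ on the tuned grid.
\item The worst case $4L+O(1)$ matches the lower bound $4L-9$ of~\cite[Thm.~30]{Selinger15}.
\item The factors $3$ and $3/2$ come from dividing slope $3$ by the rates $1$ of Theorem~\ref{thm:main1} and $2$ of Theorem~\ref{thm:rate2}.
\end{itemize}
The match with Theorem~\ref{thm:main2} is deferred to that theorem. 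No optimality is claimed.
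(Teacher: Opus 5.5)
Your proposal is correct and follows the paper's own argument: marginal uniformity of the pre-final shares under $\mathcal D$ plus linearity gives the exact formula~\eqref{eq:achexp}, the snapshot bound of Theorem~\ref{thm:frontier} gives the ratio bound, and, as in the paper, the slope and comparison claims follow from the stated synthesis hypothesis and from citations. One correction to your uniformity remark: the overhead $O(\log(mQr))$ contains $\log Q\approx L$, so $\log(mr)=o(L)$ alone does not make it negligible against $(m-q)\log_2K$ when $m-q$ is bounded; you need the finer fact that the non-payload part of the snapshot in Proposition~\ref{prop:block} (round index, position, program counter and prefix-free length code) is only $O(\log(mr)+\log\log Q)=o(L)$.
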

The endpoint gap $\log_2Q$ versus $\log_2K$ is the difference between storing a share, which lives in $\Z_Q$, and the entropy of an aggregate, which lives in $[K]^m$ (Proposition~\ref{prop:block}).

\begin{remark}[The first bits of a coordinate are cheap]\label{rem:firstbit}
Fractional passthrough is not optimal for the first few bits, and the reason is the exact alphabet of Lemma~\ref{lem:identity}. Suppose $4\mid Q$. On any coordinate and in any round a process may commit the Clifford $S^{k}=T^{2k}$ with $k:=\mathrm{round}(4u/Q)\in\Z_4$, where $u$ is the share, at no $T$ cost, and carry in the snapshot only the residual $u-kQ/4$, an integer with at most $Q/4+1$ values: two bits per coordinate are shed for no magic at all. This is the $\log_236$ of Theorem~\ref{thm:main1}, of which $\log_224$ bits are the Cliffords. If $8\mid Q$ a third bit follows for half a $T$ gate: commit $T^{j}$, $j:=\mathrm{round}(8u/Q)\in\Z_8$, at cost $j\bmod2$, i.e.\ $\tfrac12$ on average over a uniform share, and carry $u-jQ/8$, which has at most $Q/8+1$ values. The tradeoff family therefore starts at $(2(r-1)m\text{ bits},\,0)$ and passes through $(3(r-1)m\text{ bits},\,(r-1)m/2)$, and only then continues with the passthrough slope; the three bits are exactly the allowance $\log_2(c_0+1)=3.17$ of the $c_0$ term, and by Lemma~\ref{lem:identity} the eight $T$-powers are the only words of cost at most $2L-8$ that can be used this way. The construction needs the divisibility, and in general the residual $u-jQ/g$ with $g:=\gcd(Q,8)$ and a half-open rounding takes exactly $Q/g$ values, so the saving is $\log_2g$ bits: three for $8\mid Q$, two for $Q\equiv4\pmod 8$, one for $Q\equiv2\pmod 4$, and none for odd $Q$, which is the case of the tuned $Q_\varepsilon$ at $L=5$ ($Q=25$). Rounding the residual is not free either, because the tuned grid spacing $2\pi/Q_\varepsilon\approx8\varepsilon$ is already of the order of the error budget. The tuned $Q_\varepsilon$ is in general not a multiple of $8$; the points of Fig.~\ref{fig:frontier} that use this construction are taken on the admissible grid $Q=8\lfloor Q_\varepsilon/8\rfloor$, which changes $\log_2K$ by $\log_2\frac{Q_\varepsilon-1}{Q'-1}\le\log_2\big(1+\tfrac7{Q'-1}\big)<13\cdot2^{-L}$ bits per coordinate ($0.061$ at $L=5$, $0.060$ at $L=6$). Under Conjecture~\ref{conj:H} the low-cost alphabet is larger---$N_0=O(L)$ words of cost $\le\tau_0$, the powers of a cheap infinite-order word among them (Proposition~\ref{prop:cluster})---and Theorem~\ref{thm:main2} bounds what it can buy by $\log_2N_0=O(\log L)$ bits per coordinate per round, $7$ bits at $\varepsilon=10^{-10}$. The rate statements $\alpha\ge2$ and $\alpha\to3$ describe the remaining $\log_2K-O(\log L)$ bits of each coordinate, which is where all but $O(m\log L)$ of the tradeoff lies.
\end{remark}

\section{Rate three and the quantum of commitment}\label{sec:conj}

\begin{conjecture}[H]\label{conj:H}
There are constants $c_0,c_1,c$ such that for all $G_1,G_2\in\SU$, all $\varepsilon\in(0,1/8]$, all moduli $3\le Q\le Q_\varepsilon$ (i.e.\ $\sin(\pi/2Q)\ge2\varepsilon$, the separation of the tuned family) and all $\tau\ge0$,
\begin{multline}\label{eq:H}
\#\big\{W:\ t_{\min}(W)\le\tau,\ \exists d\in\Z_Q:\\ \dproj\big(W,G_1R_z(\tfrac{2\pi d}{Q})G_2\big)\le\varepsilon\big\}\\ \le\ c_0+c_1\tau+c\,2^\tau\varepsilon^{\beta},\qquad\beta=2 .
\end{multline}
\end{conjecture}
The three terms have three different origins, and the constants are chosen for the domain stated. The exponent $\beta$ is the codimension of a rotation coset in the three-dimensional group, and Haar measure predicts a cumulative coefficient $96\varepsilon Q/\pi\le24$ for $Q\le Q_\varepsilon$ in~\eqref{eq:H}---twice the shell constant $c_{\mathrm{grid}}\le12$ of Sec.~\ref{sec:num}, since $\sum_{s\le\tau}2^s\approx2^{\tau+1}$; the restriction $Q\le Q_\varepsilon$ is part of the statement because the accuracy condition alone admits grids up to twice as fine, on which the Haar coefficient doubles. We do not fix the constants, and we do not claim that the enumeration determines them. The count on the left of~\eqref{eq:H} is a step function of $\varepsilon$, and its steps are not small: at the identity coset and at the axis cosets of cheap words the words enter in whole orbits of a symmetry of the coset, all at the same distance, so a step can be a few hundred words wide. At the onset $2^\tau\varepsilon^2\approx1$ the right side of~\eqref{eq:H} is only $c_0+c_1\tau+c$, so every such step forces $c$ above it. The largest we know is $232$ words at the identity coset with $\varepsilon=0.024532$, $Q=Q_\varepsilon=32$ and $\tau=11$, which requires $c\ge163.9$ if $(c_0,c_1)=(8,2)$ are kept (Sec.~\ref{sec:num}); dyadic accuracies fall between the steps and miss it entirely. A finite scan therefore bounds $c$ from below and never from above. Every quantity derived from~\eqref{eq:H} below is stated with $c$ as a parameter; the numbers of Table~\ref{tab:numbers} are given for $c\in\{48,256\}$ as a sensitivity check only, and they move by less than $3\%$ over that range because $c$ enters solely through $\log_2c$. The constant $c_0\le8$ counts the words lying exactly on an arithmetic coset, which are the powers of $T$ in the $(G_1,G_2)$ frame (Lemma~\ref{lem:arith}). The linear term is forced: a coset can contain, or lie within $\varepsilon$ of, every power of an infinite-order word (Proposition~\ref{prop:cluster}), and the cheapest such word, $HT$, has $t_{\min}((HT)^{j})=j$ because $(HT)^j$ is its own Matsumoto--Amano normal form, so $c_1\ge2$ from the powers $j$ and $-j$; we take $c_1=2$. The words on or near a rotation coset thus consist of a volume-law population and a single sparse cluster, and the cluster is what a uniform statement must pay for. By Remark~\ref{rem:barrier} the conjecture lies beyond (R): in the parametrization of Appendix~\ref{app:lemmas} it asserts that among the $\xi=tt^\dagger\in\Z[\sqrt2]$ in a thin strip, those with $2^k-\xi$ also a relative norm have density $2^{-k+o(k)}$---a shifted-norm sieve at the ramified prime $(1-\omega)$.

\begin{lemma}[Words on an arithmetic coset]\label{lem:arith}
For $G_1,G_2\in\Gamma$ the Clifford+$T$ unitaries lying exactly on $\{G_1R_z(\theta)G_2\}$ are the eight words $G_1T^jG_2$.
\end{lemma}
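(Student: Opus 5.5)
First reduce to the case $G_1=G_2=I$. Since $G_1,G_2\in\Gamma$, left multiplication by $G_1^{-1}$ and right multiplication by $G_2^{-1}$ are bijections of $\Gamma$. They map Clifford+$T$ unitaries on $\{G_1R_z(\theta)G_2\}$ onto Clifford+$T$ unitaries on the subgroup $\{R_z(\theta)\}$, all modulo phase. It therefore suffices to show that the elements of $\Gamma$ equal, up to phase, to some $R_z(\theta)$ are exactly the eight powers $T^j$, $j\in\Z_8$. These are distinct modulo phase because $T^j=e^{i\pi j/8}R_z(\pi j/4)$ and $R_z(\theta)$ is trivial modulo phase only for $\theta\in2\pi\Z$.

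\emph{Diagonal elements.} Suppose $W\in\Gamma$ equals $e^{i\phi}R_z(\theta)$, so $W$ is diagonal. The standard Kliuchnikov--Maslov--Mosca normal form (exact synthesis, \cite{KMM13}) writes a single-qubit Clifford+$T$ unitary, up to a global phase $\omega^k$ with $\omega=e^{i\pi/4}$, as
\[
\frac{1}{\sqrt2^{\,s}}\begin{pmatrix}u&-v^*\omega^\ell\\ v&u^*\omega^\ell\end{pmatrix},\qquad u,v\in\Z[\omega].
\]
The phase-invariant form of this step is the quaternion picture of~\eqref{eq:order}. The numerator $w=w_0+w_1i+w_2j+w_3k$ of $W$ is diagonal exactly when $w_2=w_3=0$. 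Then $w\in\mathcal O\cap(\Q(\sqrt2)+\Q(\sqrt2)\,i)$ is an element of $\Z[\omega]$ up to the scaling of $\mathcal O$. Its reduced norm is $|w|^2=2^{s}\cdot(\text{unit})$, so $w$ is an element of $\Z[\omega]$ whose norm $N_{\Q(\omega)/\Q(\sqrt2)}(w)$ is a power of $\sqrt2$ times a unit of $\Z[\sqrt2]$.

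\emph{Factorisation.} In $\Z[\omega]$ the prime $\sqrt2$ is totally ramified: $(\sqrt2)=(1-\omega)^2$. So $w=(1-\omega)^a\cdot\epsilon$ with $\epsilon$ a unit of $\Z[\omega]$. The unit group of $\Z[\omega]$ is $\langle\omega\rangle\times\Z[\sqrt2]^\times$, because the CM field $\Q(\omega)$ has Hasse unit index $1$ over $\Q(\sqrt2)$. Up to positive real scalars, which do not change the unitary, $W=\mathrm{diag}(w,\bar w)/|w|$. Since $(1-\omega)/|1-\omega|$ is a power of $\omega^{1/2}$ up to sign, this reduces to $\mathrm{diag}(\zeta,\bar\zeta)$ with $\zeta$ a $16$th root of unity. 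Modulo phase this is $\mathrm{diag}(1,\zeta^{-2})=\mathrm{diag}(1,\omega^{j})=T^j$.

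The main obstacle is this last arithmetic step. One must check that the unit index is $1$, i.e.\ that no unit of $\Z[\omega]$ has a non-real argument beyond the roots of unity. One must also handle the half-integral generators of $\mathcal O$ in~\eqref{eq:order}, so that the diagonal sublattice is exactly $\Z[\omega]$ up to scaling. I would check both directly: for the index, by the norm computation $\epsilon/\bar\epsilon$ being a root of unity (Kronecker); for the half-integral generators, by intersecting the given basis with the span of $1$ and $i$.
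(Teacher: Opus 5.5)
Your proposal is correct and follows essentially the paper's route. Both reduce via $W\mapsto G_1^{-1}WG_2^{-1}$ to a diagonal element of $\Gamma$, whose entry lies in $\Z[\omega]$ with norm supported on the single ramified prime $(1-\omega)$, leaving only a unit factor to control. The paper reads this off the Kliuchnikov--Maslov--Mosca form with $t=0$, while you read it from the quaternion numerator.

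The Hasse unit index is more than you need, and the Kronecker check you propose for it only shows that $\epsilon/\bar\epsilon$ is a root of unity, which is index at most two. That weaker fact already suffices: $\epsilon/\bar\epsilon\in\langle\omega\rangle$, so $w/\bar w=(-\omega)^a\,\epsilon/\bar\epsilon$ is an eighth root of unity, and $W\sim\mathrm{diag}(1,\bar w/w)$ is a power of $T$ directly. This is what the paper's statement ``the units of relative norm one are the $\omega^j$'' amounts to.
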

\begin{proof}
$R_z(\theta)=G_1^{-1}WG_2^{-1}\in\Gamma$ is diagonal; in the parametrization of Lemma~\ref{lem:identity} this is the case $t=0$, whence $u=\omega^a(1-\omega)^{2k}$ up to a unit and the unitary is $\mathrm{diag}(\omega^a,\omega^b)$ modulo phase.
\end{proof}

\begin{proposition}[The linear term cannot be removed]\label{prop:cluster}
No constants $c_0,c$ satisfy~\eqref{eq:H} with $c_1=0$, even when $Q=Q_\varepsilon$ is tuned and even when $G_1,G_2\in\Gamma$.
\end{proposition}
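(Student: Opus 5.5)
Our approach is to exhibit a coset that contains many cheap words lying exactly on it, namely all powers of one infinite-order word. Then, at a fixed accuracy, we let $\tau$ grow: the count on the left of~\eqref{eq:H} grows linearly in $\tau$, while $c_0+c\,2^\tau\varepsilon^2$ is bounded for $\tau$ in a range where $2^\tau\varepsilon^2\le1$. Concretely, take $g:=HT$, which has infinite order because its trace is not that of a finite-order element of $\PU$ (the eigenphase $\theta_0$ of $HT$ satisfies $\cos(\theta_0/2)=\cos^2(\pi/8)$ up to phase, and $\theta_0/\pi$ is irrational). Diagonalize $g=G\,R_z(\theta_0)\,G^\dagger$ with $G\in\SU$. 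Then $g^j=G R_z(j\theta_0)G^\dagger$ lies exactly on the coset $GR_zG^\dagger$. By the remark after Conjecture~\ref{conj:H}, $(HT)^j$ is its own Matsumoto--Amano normal form, so $t_{\min}(g^{\pm j})=j$, and these elements are pairwise distinct modulo phase.

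The main obstacle is that the conjecture does not ask whether $W$ is near the whole coset. It asks whether $W$ is within $\varepsilon$ of a \emph{grid} point $G_1R_z(2\pi d/Q)G_2$. So we need $j\theta_0$ to be within about $\varepsilon$ of $2\pi\Z/Q$ for many $j$. With $Q=Q_\varepsilon$ the grid spacing is about $8\varepsilon$, so only a fixed fraction of angles qualifies. By Weyl equidistribution of $j\theta_0/2\pi$ modulo $1$ (irrational rotation), the proportion of $j\le J$ with $\dproj(R_z(j\theta_0),R_z(2\pi d/Q))\le\varepsilon$ for some $d$ tends to $p(\varepsilon)\asymp\varepsilon Q\asymp1/4>0$ as $J\to\infty$, uniformly enough for fixed $\varepsilon$. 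Here we use Lemma~\ref{lem:loc}, $\dproj(R_z(\alpha),R_z(\beta))=2|\sin((\alpha-\beta)/4)|$ up to the sign conventions, which gives an arc of fixed positive measure around each grid point.

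Now we fix the frame $G_1=G$, $G_2=G^\dagger$. This coset is not Clifford-framed, so to obtain the strengthening ``even when $G_1,G_2\in\Gamma$'' we use a second choice. Take $G_1=g^{0}=I$-like frames in $\Gamma$ that carry $R_z$ onto the axis of a word; this is possible exactly when the axis of some infinite-order word is a Clifford+$T$ conjugate of the $z$-axis. Failing that, we replace exact containment by ``within $\varepsilon$'', as the statement allows: choose $G_1\in\Gamma$ with $G_1R_zG_1^{-1}$ within $\varepsilon/2$ of the axis of $g$ (possible by density of $\Gamma$ in $\SU$). The powers $g^j$ with $|j|\le J$ then stay within about $|j|\cdot$(axis error) of the coset, which is fine if the axis error is $\ll\varepsilon/J$. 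We would fix $J$ first and then pick $G_1$, so the order of quantifiers works.

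With these choices the left side of~\eqref{eq:H} at $\tau=J$ is at least $p\,J-o(J)$. We choose $\varepsilon$ so small that $2^{J}\varepsilon^2\le1$, i.e.\ we send $J\to\infty$ along $J\le 2L$, with $\varepsilon$ in a sequence and $J=\lfloor 2L\rfloor$. The equidistribution must then be quantitative at scale $\varepsilon\approx2^{-J/2}$ with $J$ terms. That requires a discrepancy bound $D_J=o(\varepsilon)$, which fails for generic irrationals. This is the step we expect to be hardest. To avoid it we would instead fix $\varepsilon$ and let $\tau\to\infty$ only in the sense needed: fix $\varepsilon$ and $Q$, and count $j\le J$ with $J$ arbitrary. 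The right side $c_0+c\,2^J\varepsilon^2$ then grows exponentially, so this alone fails. The resolution is to use the freedom in the frame. We choose $G_1,G_2$ (with $\theta$-offset) so that the $J+1$ powers $g^0,\dots,g^J$ sit at $\theta$-values rotated onto grid points. Since $\theta_0$ is irrational, Dirichlet's theorem gives $J\sim\varepsilon^{-1/2}\cdot$const values with $j\theta_0$ within $\varepsilon$ of the lattice $2\pi\Z/Q$ when we take $\theta_0$ close to $2\pi a/Q$. In the worst case we fall back on taking $j$ in an arithmetic progression $j=kq_n$, with $q_n$ a continued-fraction denominator of $\theta_0/2\pi$, giving $\asymp 1/(\varepsilon q_n)$ hits with $t_{\min}\le\tau$ among $\tau/q_n$ multiples. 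Balancing $q_n$ against $\tau\le2L$ yields a count $\gg\tau$ in the regime $c\,2^\tau\varepsilon^2=O(1)$, contradicting any $c_0,c$ with $c_1=0$.
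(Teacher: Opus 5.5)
\textbf{There is a genuine gap.} Several parts of your proposal are right: the powers of one cheap infinite-order word lie on a single rotation coset, $t_{\min}((HT)^{\pm j})\le j$, and density of $\Gamma$ moves the frame into $\Gamma$. But the step that carries the proposition is never established: getting many of those powers within $\varepsilon$ of \emph{grid} points of the tuned modulus while $c\,2^\tau\varepsilon^2$ stays bounded.

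\begin{itemize}
\item Weyl equidistribution at fixed $\varepsilon$ fails, as you note.
\item The frame has no freedom that aligns $j\theta_0$ with $2\pi\Z/Q$ for many $j$ at once. An angular offset can fix only one $j$.
\item Your final fallback is incorrect. With $j=kq_n$, only $\tau/q_n\le\tau$ exponents have cost $\le\tau$, so no count $\gg\tau$ can come from them; your claimed $\asymp1/(\varepsilon q_n)$ hits exceeds the number of candidates.
\item Even those candidates all land near the grid point $0$ only if $k|q_n\theta_0-2\pi p_n|\lesssim\varepsilon$ for $k\le\tau/q_n$. At $\tau\approx2L$ this needs $q_{n+1}\gtrsim2^L$ while $q_n=o(L)$. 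That is a Liouville-type gap in the continued fraction, which you have no reason to expect.
\end{itemize}
As written, the argument does not produce more than $c_0$ words on the left of \eqref{eq:H} in any regime where the right side is bounded.

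\textbf{The missing idea.} The freedom is in the choice of $\varepsilon$, not in the frame. Fix $N$, and let $p/Q$ run over the convergents of $\theta_0/2\pi$, so that $|\theta_0/2\pi-p/Q|<Q^{-2}$. Set $\varepsilon_Q:=\tfrac12\sin(\pi/2Q)$, so the tuned modulus is exactly $Q_{\varepsilon_Q}=Q$; the tuned grid is then $\{2\pi d/Q\}$, with spacing matched to the approximation $p/Q$. Put $d_j:=jp\bmod Q$. Then
\[
\dproj\big(R_z(j\theta_0),R_z(2\pi d_j/Q)\big)\le\tfrac{|j|}{2}\,|\theta_0-2\pi p/Q|<\pi|j|/Q^2 ,
\]
which is below $\varepsilon_Q/2\ge1/(4Q)$ for every $|j|\le N$ once $Q>4\pi N$. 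No equidistribution or discrepancy estimate is needed.

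Now keep the cost $\tau$ fixed (at $N$ for $HT$, or $2N$ for $(HT)^2$ as in the paper) while $Q\to\infty$. Then $c\,2^\tau\varepsilon_Q^2\to0$, so the left side of \eqref{eq:H} is at least $N+1$ against a right side tending to $c_0$; taking $N>c_0$ gives the contradiction. This is the paper's proof.

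For the $\Gamma$-frame, write $D=G_1E$. The error does not grow like $|j|$: bi-invariance gives $\dproj(ERE^\dagger,R)\le2\dproj(E,I)$ for every $R$, so choosing $\dproj(E,I)\le\varepsilon_Q/4$ after $N$ and $Q$ are fixed suffices.

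Two minor points:
\begin{itemize}
\item $\cos(\theta/2)=\cos^2(\pi/8)$ is the half-angle of $(HT)^2$, not of $HT$.
\item Irrationality of the angle needs an argument. For example, $2\cos(\theta/2)$ for $(HT)^2$ is a root of $2y^2-4y+1$ and is not an algebraic integer.
\end{itemize}
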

\begin{proof}
Let $V:=e^{-i\pi/4}HTHT$, of $T$-count $2$, with rotation angle $\theta$, $\cos(\theta/2)=(2+\sqrt2)/4$. Since $2\cos(\theta/2)$ is a root of $2y^2-4y+1$ and not an algebraic integer, $V$ has infinite order, and $I,V,\dots,V^N$ are distinct with $t_{\min}(V^j)\le2j$. Write $V=DR_z(\theta)D^\dagger$. Let $x:=\theta/2\pi$ and let $p/Q$ run over the convergents of $x$, $|x-p/Q|<Q^{-2}$; put $\varepsilon_Q:=\tfrac12\sin(\pi/2Q)$, so that $Q_{\varepsilon_Q}=Q$, and $d_j:=jp\bmod Q$. Then $\dproj(V^j,DR_z(2\pi d_j/Q)D^\dagger)\le\tfrac j2|\theta-2\pi p/Q|<\pi N/Q^2$, which is below $\varepsilon_Q/2\ge1/(4Q)$ once $Q>4\pi N$. For $G_1,G_2\in\Gamma$: $\Gamma$ is dense in $\PU$, so there is $G_1\in\Gamma$ with $D=G_1E$, $\dproj(E,I)\le\varepsilon_Q/4$; then $\dproj(V^j,G_1R_z(j\theta)G_1^{-1})\le2\dproj(E,I)\le\varepsilon_Q/2$, and with $G_2:=G_1^{-1}$ all $N+1$ powers are within $\varepsilon_Q$ of a grid rotation on the arithmetic coset $G_1R_zG_1^{-1}$. With $\tau=2N$ the left side of~\eqref{eq:H} is $\ge N+1$ while, along the convergents, $c2^{2N}\varepsilon_Q^2\to0$; so $N+1\le c_0$ for every $N$, which is absurd. (At $p/Q=1746/10009$, $N=9$ and the frame $D$ itself, a direct check gives ten words against $8+12\cdot2^{18}\varepsilon_Q^2=8.02$.)
\end{proof}

Conjecture~\ref{conj:H} is the quantitative form of an observation of Selinger~\cite[\S9]{Selinger15}: the elementary count $36\cdot2^\tau\ge c\,\varepsilon^{-3}$, which gives $3L-O(1)$ for a generic element of the three-dimensional group, ``is not a priori clear'' for $z$-rotations, since these form a one-dimensional submanifold; \cite[Thm.~30]{Selinger15} then proves the worst-case bound $4L-9$ for a specific angle. What Conjecture~\ref{conj:H} adds is a bound for \emph{every} coset and \emph{every} cost, with the constants, which is what a counting argument against a stream needs. It is also falsifiable in a useful direction. The $\varepsilon$-balls of distinct grid rotations are disjoint, so \eqref{eq:H} bounds the number of grid rotations reachable at cost $\tau$ by $c_0+c_1\tau+c2^\tau\varepsilon^2$; covering the $K_\varepsilon\approx\pi/4\varepsilon$ rotations of the tuned family then forces
\begin{equation}\label{eq:Hcons}
\max_d\tau_{\min}(d)\ \ge\ 3L-\log_2(4c/\pi)-o(1)\ =\ 3L-5.93-o(1),
\end{equation}
so Conjecture~\ref{conj:H} cannot coexist with an ancilla-free synthesis of grid rotations below the Ross--Selinger exponent. The numerical constant is shown for the illustrative $c=48$; the statement itself is $3L-\log_2(4c/\pi)-o(1)$ for whatever $c$ the conjecture carries. The enumeration gives $\max_d\tau_{\min}=10,16,18$ against $3L=12,15,18$ at $L=4,5,6$.

\begin{theorem}[Rate three; quantization]\label{thm:main2}
Assume Conjecture~\ref{conj:H}, $Q\le Q_\varepsilon$ and $\varepsilon\le(2c)^{-1/2}$. Put $k:=\log_2K$, $b:=\beta L-\log_2c-1$, $\tau_0:=\lfloor\beta L-\log_2c\rfloor$, $N_0:=c_0+c_1\tau_0+1$, $a_0:=1+\log_2N_0$, $\kappa:=1+b/k$ and $D_t:=(1-2\delta)mk-2mh_2(\delta)-S$. For every round $t<r$,
\begin{equation}\label{eq:rate3}
\bar T_t\ \ge\ \kappa\,(D_t-ma_0)-3m\log_2(1+\bar T_t/m),
\end{equation}
and, writing $\bar{\mathcal A}^{>}_t$ for the expected number of coordinates that are correct and whose word committed in round $t$ has minimal $T$-count greater than $\tau_0$,
\begin{equation}\label{eq:active}
\bar{\mathcal A}^{>}_t\ \ge\ \frac{D_t-ma_0}{k}.
\end{equation}
The accuracy restriction is what makes $b\ge0$, equivalently $\tau_0\ge1$, which Lemma~\ref{lem:cost} requires; and it cannot be dropped: $b<0$ gives $\kappa<0$ and a right-hand side of~\eqref{eq:rate3} that grows with $S$, which the process that stores every share and commits nothing before the last round violates. At $\varepsilon=10^{-10}$ it is satisfied with room to spare for every $c$ below $5\times10^{19}$; it bites only at coarse accuracy, where~\eqref{eq:rate3} was vacuous anyway. With $\beta=2$, $c_1=2$ and $c_0=8$, $\kappa\to1+\beta=3$ for every fixed $c$, so $\bar\Tpre\ge3(r-1)(m\log_2K-S)(1-o(1))-O(rm\log L)$; a committed word of cost at most $\tau_0=\lfloor2L-\log_2c\rfloor$ carries at most $\log_2N_0=\log_2(2\tau_0+9)=O(\log L)$ bits about its coordinate, and every bit beyond that allowance is carried by a word of cost at least $\tau_0+1$, i.e.\ $2L-\log_2c-O(1)$ gates. At $\varepsilon=10^{-10}$ the threshold is $\tau_0=60$ and the allowance $\log_2N_0=7.0$ bits for $c=48$, and $\tau_0=58$ with $7.0$ bits for $c=256$, against the $32.9$ bits a coordinate carries.
\end{theorem}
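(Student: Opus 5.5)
The plan is to run the proof of Theorem~\ref{thm:rate2} again with the profile of Conjecture~\ref{conj:H} in place of Corollary~\ref{cor:profile}, and to split the cost axis at $\tau_0$.

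\textbf{Setup.} Fix a round $t<r$ and condition on $(\sigma_t,Z)$ as in Corollary~\ref{cor:profile}. On a correct coordinate $j$ the committed word $\overline W^{(t)}_j$ lies in the tube around $G_1R_zG_2$, with $\theta$ restricted to the grid $2\pi d/Q$. The left side of~\eqref{eq:H} therefore bounds the number of possible committed words of minimal $T$-count at most $\tau$ by
\[n_H(\tau):=c_0+c_1\tau+c2^\tau\varepsilon^2.\]
The Fano step of Theorem~\ref{thm:main1} gives $\sum_jI(A_j;\overline W^{(t)}_j\mid\sigma_t,Z,\text{correct})\ge D_t$ up to the usual $\delta$-bookkeeping, which already appears in $D_t$. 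This bookkeeping is copied from Appendix~\ref{app:rate}: it conditions on the correctness event and absorbs the $2\delta$ and $2h_2(\delta)$ terms.

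\textbf{Per-coordinate information bound.} Write $I_j$ for the information a coordinate's word carries. I would bound it by the entropy of the word under a Gibbs-type weight adapted to $n_H$, using Lemma~\ref{lem:gibbs} with weight $2^{-\tau/\kappa}$ on the cost.
\begin{itemize}
\item For $\tau\le\tau_0$ we have $2^\tau\varepsilon^2c\le1$ by the choice of $\tau_0=\lfloor\beta L-\log_2c\rfloor$. So $n_H(\tau)\le N_0$, and these words contribute at most $\log_2 N_0$ bits in total.
\item For $\tau>\tau_0$ the count is $\le N_0+c2^{\tau}\varepsilon^2\le 2c2^\tau\varepsilon^2$, using $\varepsilon\le(2c)^{-1/2}$ to dominate the constant terms. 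So a word of cost $\tau$ carries at most $\tau-b+O(1)$ bits, with $b=\beta L-\log_2c-1$.
\end{itemize}
Since a coordinate carries at most $k$ bits, the line from $(\tau_0,a_0)$ to the point of full information gives
\[I_j\le a_0+\mathbf 1[\tau_j>\tau_0]\cdot\min\bigl(k,\tau_j-b\bigr).\]
Linearizing this as $I_j\le a_0+\tau_j/\kappa$ with $\kappa=(k+b)/k$ is valid: on $[b,b+k]$ we have $\tau-b\le \tau k/(k+b)$ exactly when $\tau\le k+b$, and beyond that the cap at $k$ takes over. Words of cost $\le b$ fall in the first case.

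\textbf{Summing over coordinates.} Passing to $T$-counts that are random rather than fixed requires a concavity/union step. Summing over $j$ and using Jensen converts this into a $\log_2(1+\bar T_t/m)$ overhead per coordinate, giving $3m\log_2(1+\bar T_t/m)$ from the prefix-coding of $\tau_j$, as in Lemma~\ref{lem:ent}. This yields~\eqref{eq:rate3}.

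\textbf{The active-coordinate bound.} For~\eqref{eq:active}, use the same split with the crude cap. A coordinate whose word has $\tau_j\le\tau_0$ carries at most $a_0$ bits, and any other carries at most $k$. Hence $D_t\le ma_0+k\,\bar{\mathcal A}^{>}_t$, which is~\eqref{eq:active}.

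\textbf{Remaining claims.} The limiting statements $\kappa\to3$ and $\log_2N_0=\log_2(2\tau_0+9)$ follow by substituting $\beta=2$, $c_0=8$, $c_1=2$. The numerical values at $\varepsilon=10^{-10}$ follow by evaluation. The necessity of $b\ge0$ is the stated counterexample of storing every share.

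\textbf{Main obstacle.} I expect the hard step to be making the conditional-entropy/Gibbs step rigorous when the cost $\tau_j$ is itself random and correlated with correctness. I would try first to reuse Lemma~\ref{lem:gibbs} verbatim with $\lambda=1/\kappa$, checking that the conjectured profile gives $Z_\lambda\le N_0\cdot O(\tau_*)$. This would push the $O(\log L)$ losses into $a_0$ and the logarithmic term.
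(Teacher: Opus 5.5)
Your derivation of \eqref{eq:active} is the paper's. You split each coordinate's information on correctness and on the event $t_{\min}>\tau_0$, which gives $I(A_j;\overline W_j\mid s,z)\le h_2(d_j)+d_jk+(1-d_j)[1+\log_2N_0+q_jk]$ and, after summing, $D_t\le ma_0+k\bar{\mathcal A}^{>}_t$.

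The gap is in \eqref{eq:rate3}, and it has three parts.

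\emph{The count above threshold is false.} At $\tau=\tau_0+1$ one has $c2^\tau\varepsilon^2\le2$, while $N_0=c_0+c_1\tau_0+1$ grows linearly in $L$. So $N_0+c2^\tau\varepsilon^2\le2c2^\tau\varepsilon^2$ fails, and $\varepsilon\le(2c)^{-1/2}$ plays no role there; it only guarantees $\tau_0\ge1$ and $b\ge0$. The usable bound is multiplicative: $c_0+c_1\tau+c2^\tau\varepsilon^2\le(c_0+c_1\tau+1)\,c2^\tau\varepsilon^2$ for $\tau>\tau_0$. This costs $\log_2N_0+\log_2(1+\tau-\tau_0)$ extra bits per word. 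Together with the Elias-$\gamma$ code of $\tau-\tau_0$, this is Lemma~\ref{lem:cost}: $H(W)\le\E t-bq+h_2(q)+\log_2N_0+3\log_2(1+\E t)$. It is an expectation-level statement in the high-cost probability $q$.

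\emph{The pointwise bound has no meaning.} Your $I_j\le a_0+\mathbf 1[\tau_j>\tau_0]\min(k,\tau_j-b)$ bounds an expectation by a function of a random cost.

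\emph{Your fallback proves a different inequality.} Lemma~\ref{lem:gibbs} with $\lambda=1/\kappa$ and $\tau_*=\lceil\kappa k\rceil$ gives $\bar T_t\ge\kappa\bigl(D_t-m(1+\log_2Z_{1/\kappa})\bigr)$, which is not \eqref{eq:rate3}. With your estimate $Z_{1/\kappa}\le N_0\cdot O(\tau_*)$, it leaves an extra $\kappa m\log_2\tau_*$ that $3m\log_2(1+\bar T_t/m)$ cannot absorb when $\bar T_t/m$ is small. A sharper estimate does better: once $b/k$ is bounded below, $Z_{1/\kappa}=O(1)$, because $2^{-\tau/\kappa}$ damps the linear term and each term of the volume part is at most $1$. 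So this route is a legitimate rate-$\kappa$ bound in its own right, but it is not the stated one.

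The missing step is to eliminate $q$ between the two expectation-level bounds, not between pointwise costs. Conditional on $E_j=1$ you have both of the following:
\begin{itemize}
\item the crude split, $I\le a_0+qk$;
\item Lemma~\ref{lem:cost}, $I\le\E t-bq+a_0+3\log_2(1+\E t)$.
\end{itemize}
The paper sums the second over $j$ to get $D_t\le\bar T_t-b\bar{\mathcal A}^{>}_t+ma_0+3m\log_2(1+\bar T_t/m)$. It then substitutes \eqref{eq:active}, which is legitimate precisely because $b\ge0$. Equivalently, per coordinate, take the convex combination of the two bounds with weights $1/\kappa$ and $1-1/\kappa$. The $q$-terms cancel because $b/\kappa=(1-1/\kappa)k$, leaving $I\le\E t/\kappa+a_0+(3/\kappa)\log_2(1+\E t)$.

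This is the rigorous form of your ``min, then linearize'' step. It is also why the logarithm in \eqref{eq:rate3} carries the coefficient $3$. If the prefix-coding overhead is added after linearizing, as your summing paragraph suggests, it ends up multiplied by $\kappa$.
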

The proof is in Appendix~\ref{app:rate}. The structural content is that memory cannot be shed a few bits at a time within a coordinate. The low-cost alphabet of cost $\le\tau_0\approx2L$ holds at most $N_0=O(L)$ words---the eight $T$-powers of Remark~\ref{rem:firstbit}, which are real, and the powers of a cheap infinite-order word, which Proposition~\ref{prop:cluster} shows must be allowed for: a process may align its prefix with the axis of $HT$ and then encode an integer $j$ by emitting $(HT)^j$, which is $\log_2j$ bits for $j$ gates---so it buys $O(\log L)$ bits, and a coordinate that carries more must commit a word of cost $\ge2L-O(1)$; \eqref{eq:active} counts how many coordinates per round must do so. The theorem does not say that every coordinate receiving a $T$ gate has paid $2L$ gates---a process may mix cheap words with rare expensive ones---but the expected cost of every bit beyond the $O(\log L)$ allowance is $\kappa$ gates, and the optimal way to pay it is to commit whole coordinates, which is what fractional passthrough does. The convergence $\kappa\to3$ is slow: at $\varepsilon=10^{-10}$, $\kappa=2.82$ for $c=48$ and $2.75$ for $c=256$, rising only to $2.93$ and $2.90$ at $\varepsilon=2^{-80}$, so the number that a resource estimate at a chemically relevant accuracy should use is $\kappa$, not $3$---and the choice of $c$ matters far less than the finite accuracy does.

\begin{lemma}[Identity coset]\label{lem:identity}
If $2\varepsilon^24^k<1$, i.e.\ $k<L-\tfrac12$, the only Clifford+$T$ unitaries with denominator exponent $\le k$ within $\varepsilon$ of some $R_z(\theta)$ are the eight powers of $T$.
\end{lemma}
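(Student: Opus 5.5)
The plan is to use the exact parametrization of single-qubit Clifford+$T$ unitaries (Kliuchnikov--Maslov--Mosca) and a norm argument in $\Z[\sqrt2]$. Write $\omega=e^{i\pi/4}$. Every such unitary with ($\sqrt2$-)denominator exponent $\le k$ can be written, modulo phase, as
\[
W=\frac1{\sqrt2^{\,k}}\begin{pmatrix}u&-t^{*}\omega^{\ell}\\ t&u^{*}\omega^{\ell}\end{pmatrix},\qquad u,t\in\Z[\omega],\quad |u|^2+|t|^2=2^k .
\]
If the paper's $k$ counts powers of $2$ rather than $\sqrt2$, I would replace $2^k$ by $4^k$ below; the factor of $2$ of slack in the hypothesis $2\varepsilon^24^k<1$ absorbs this bookkeeping.

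\textbf{Step 1: distance to the rotation subgroup.} I compute the distance from $W$ to the whole subgroup $\{R_z(\theta)\}$ via Lemma~\ref{lem:loc}, $\dproj(A,B)=\sqrt{2-|\tr(B^\dagger A)|}$. For diagonal $B=R_z(\theta)$ one has
\[
\tr\big(R_z(\theta)^\dagger W\big)=2^{-k/2}\big(e^{i\theta/2}u+e^{-i\theta/2}u^{*}\omega^{\ell}\big).
\]
Maximizing over $\theta$ aligns the two terms, so $\max_\theta|\tr|=2|u|/2^{k/2}$. Hence $W$ is within $\varepsilon$ of some $R_z(\theta)$ exactly when $|u|2^{-k/2}\ge1-\varepsilon^2/2$. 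This gives
\[
|t|^2/2^k=1-|u|^2/2^k\le1-(1-\varepsilon^2/2)^2<\varepsilon^2,
\]
that is, $|t|^2<\varepsilon^22^k$.

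\textbf{Step 2: the norm argument.} This is the only real step. Apply the Galois automorphism $\sqrt2\mapsto-\sqrt2$ (i.e.\ $\omega\mapsto\omega^{3}$) to the unitarity relation $|u|^2+|t|^2=2^k$, which lives in $\Z[\sqrt2]$. This gives $|u^\bullet|^2+|t^\bullet|^2=2^k$ with both terms nonnegative, so $|t^\bullet|^2\le2^k$. The element $\xi:=tt^{*}\in\Z[\sqrt2]$ therefore has field norm
\[
N(\xi)=|t|^2\,|t^\bullet|^2<\varepsilon^2\,2^k\cdot2^k=\varepsilon^24^k<\tfrac12<1 .
\]
A nonzero element of $\Z[\sqrt2]$ has $|N|\ge1$, so $\xi=0$ and $t=0$.

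\textbf{Step 3: diagonal words.} With $t=0$, $W$ is diagonal with entries $u/\sqrt2^{\,k}$ and $u^{*}\omega^\ell/\sqrt2^{\,k}$, both unimodular elements of $\Z[\omega][1/\sqrt2]$. I then use the standard facts that:
\begin{itemize}
\item the unimodular elements of $\Z[\omega][1/\sqrt2]$ are exactly the powers $\omega^a$ (a unit of $\Z[\omega]$ of modulus one in both complex embeddings is a root of unity, and $(1-\omega)$ is the unique ramified prime, so $u=\omega^a(1-\omega)^{k}\cdot$unit);
\item hence $W=\mathrm{diag}(\omega^a,\omega^b)$, which modulo phase is $T^{\,b-a}$, giving exactly eight elements $T^0,\dots,T^7$.
\end{itemize}
Conversely, these eight elements all lie on the coset. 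The same reasoning also yields Lemma~\ref{lem:arith} as the case $t=0$.

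\textbf{Main obstacle.} The main care point is the normalization: matching the paper's convention for ``denominator exponent'' with the $2^k$ versus $4^k$ in the unitarity relation, and checking that the conjugate bound $|t^\bullet|^2\le2^k$ uses the same exponent. Otherwise the argument is a two-line norm bound.
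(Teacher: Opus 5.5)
Your proof is correct and is essentially the paper's argument: the Kliuchnikov--Maslov--Mosca form, the off-diagonal bound from the trace formula (your $|t|^2<\varepsilon^22^k$ is even a factor $2$ sharper than the paper's $|t|^2\le2\varepsilon^22^k$), Galois conjugation of $|u|^2+|t|^2=2^k$, the norm of $\xi=tt^*$ forcing $t=0$, and the diagonal entries then being powers of $\omega$. The paper's $k$ is the $\sqrt2$-exponent (its proof writes the prefactor $2^{-k/2}$), which is just as well, because your hedge is wrong: with a powers-of-$2$ convention the same argument gives only $N(\xi)<\varepsilon^2 16^k$, which no factor of $2$ absorbs; also, in Step 3 the valuation should read $(1-\omega)^{2k}$ rather than $(1-\omega)^{k}$.
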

Lemma~\ref{lem:identity} (Appendix~\ref{app:lemmas}) is an elementary instance of the fee. A word of minimal $T$-count $\tau$ has denominator exponent $k\le\tfrac\tau2+3$ (Appendix~\ref{app:clifford}), so at the identity coset every word of $T$-count $\le2L-8$ within $\varepsilon$ of a $z$-rotation is a power of $T$: the fee is $2L-O(1)$, unconditionally. The enumeration puts it at $2L+O(1)$: the least $T$-count $\ge2$ at which some word comes within $\varepsilon$ of a grid rotation is $10,10,12,17,18$ for $L=4,\dots,8$, i.e.\ between $2L$ and $2L+3$ (within $\varepsilon$ of some $z$-rotation, not necessarily a grid one: $7,10,11$ at $L=4,5,6$).

Beyond the fee, the whole of Conjecture~\ref{conj:H} holds at the cosets whose frame is Clifford, up to a subexponential factor. The reason is that there the off-diagonal entry of a word is an arithmetic coordinate, and the divisor bound controls the rest.

\begin{theorem}[Volume law at Clifford cosets]\label{thm:clifford}
There is $\eta(\tau)=O(\tau/\log\tau)$ such that for all Cliffords $C_1,C_2$, all $\varepsilon\in(0,1/8]$ and all $\tau\ge1$, with $C=C_1R_zC_2$,
\begin{equation}\label{eq:clifford}
\#\{W:\ t_{\min}(W)\le\tau,\ W\in T_\varepsilon(C)\}\le2^{\eta(\tau)}\big(\tau+1+\varepsilon^22^\tau\big).
\end{equation}
For $G_1,G_2\in\Gamma$ with $t_{\min}(G_1)+t_{\min}(G_2)\le h$ the same holds with $\tau$ replaced by $\tau+h$.
\end{theorem}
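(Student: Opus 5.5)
Since Clifford frames only permute entries and multiply them by units of $\Z[\omega]$, $\omega=e^{i\pi/4}$, the plan is to reduce to the single coset $C_1=C_2=I$ and then count lattice points in the arithmetic coordinates of a word. Every single-qubit Clifford is, modulo phase, a product of $H$, $S$ and $\omega$, and each of these maps the diagonal torus $\{R_z(\theta)\}$ to itself or to the anti-diagonal torus. Hence $C_1R_zC_2$ equals $C_1'R_zC_2'$ with $C_1',C_2'$ drawn from a finite set $\{I,H,SH,\dots\}$ of coset representatives. Replacing $W$ by $C_1^{-1}WC_2^{-1}$ preserves the $T$-count and $\dproj$. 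It therefore suffices to bound, for the $24^2$ representative frames, the words $W=\frac1{\sqrt2^{\,k}}\begin{pmatrix}u&-t^\dagger\\ t&u^\dagger\end{pmatrix}$ (up to $\omega$-phases) with $u,t\in\Z[\omega]$ and $|u|^2+|t|^2=2^k$. By Appendix~\ref{app:clifford}, a word of $t_{\min}\le\tau$ has denominator exponent $k\le\tau/2+3$. Within $\varepsilon$ of $R_z(\theta)$ for the trivial frame means the off-diagonal entry is small: $|t|\le\varepsilon\sqrt2^{\,k}\cdot O(1)$ in the first embedding. For a Clifford frame $C R_z C'$ the same condition holds for a fixed $\Z[\omega]$-linear combination of $u,t$ with unit coefficients, which is again an arithmetic coordinate; this is the point of the hypothesis.

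The count then proceeds in two stages, for each $k\le\tau/2+3$. First, count the admissible $t\in\Z[\omega]$: the conditions are $|t|\le c\varepsilon2^{k/2}$ in the embedding $\omega\mapsto e^{i\pi/4}$ and $|t^\bullet|\le2^{k/2}$ in the conjugate embedding $\omega\mapsto e^{3i\pi/4}$ (forced by $|u^\bullet|^2+|t^\bullet|^2=2^k$ with $\sqrt2\mapsto-\sqrt2$). Thus $t$ lies in a box in $\R^4$ of volume $\asymp\varepsilon^22^{k}$, which contains $O(1+\varepsilon^22^k)$ lattice points. Since $\Z[\omega]$ has covolume $O(1)$, the count is the volume plus a boundary term $O(1+\varepsilon2^{k/2}\cdot2^{k/2}\ldots)$. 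I would bound the boundary error by splitting: either $t=0$, which gives the $T$-powers (Lemma~\ref{lem:arith}), or the box is fat enough that the lattice-point count is $\ll$ volume $+1$ after a unit rescaling. This uses the fact that units of $\Z[\sqrt2]$ let one balance the two embeddings, which turns the box count into a count of elements of bounded norm in a sector. Second, for each $t$, the entry $u$ is determined up to the solutions of $uu^\dagger=2^k-tt^\dagger=:\xi\in\Z[\sqrt2]$. The number of such $u$ is at most the number of divisors of $\xi$ in $\Z[\omega]$, which is $2^{O(\tau/\log\tau)}$ because $N(\xi)\le2^{O(\tau)}$.

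Summing over $k\le\tau/2+3$ and over words sharing a unitary gives $2^{\eta(\tau)}\sum_k(1+\varepsilon^22^k)\le2^{\eta(\tau)}(\tau+1+\varepsilon^22^\tau)$. The step from exponent $k$ back to $T$-count costs a factor at most $2^{O(1)}$, since $2^k\le2^{\tau/2+3}\le2^{\tau}$ for large $\tau$; in fact the bound is even better than stated. For the extension to $G_1,G_2\in\Gamma$ with $t_{\min}(G_1)+t_{\min}(G_2)\le h$, write $G_i$ as Clifford times $T$-words. Then $G_1^{-1}WG_2^{-1}$ is near a Clifford coset and has $T$-count $\le\tau+h$, and the map $W\mapsto G_1^{-1}WG_2^{-1}$ is injective, so applying the Clifford case at $\tau+h$ gives the claim.

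The main obstacle is the lattice-point count in the skewed box when $\varepsilon^22^k$ is small but the box is long in the conjugate embedding: a naive bound gives a boundary error like $\varepsilon2^{k}$, not $1$. I would handle it by writing $t=\omega^a\lambda s$ with $\lambda$ a unit of $\Z[\sqrt2]$ chosen so that the two embeddings of $s$ are balanced. Each sector then has bounded norm $N(tt^\dagger)\le\varepsilon^2 2^{2k}$. The count of elements of norm $\le X$ is $O(X^{1/2}\cdots)$, so I would instead count norms $n=N(\xi)$ of $2^k-tt^\dagger$ via divisors. Concretely, parametrize by the value $tt^\dagger\in\Z[\sqrt2]$, which lies in a box of area $\asymp\varepsilon^22^k\cdot2^k/2^{k}$ in the $\Z[\sqrt2]$ lattice in $\R^2$, giving $O(1+\varepsilon^22^k)$ values, each contributing $2^{O(\tau/\log\tau)}$ choices of $t$ and of $u$ by the divisor bound. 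This pushes all the arithmetic into the divisor bound, which I expect to be the clean route.
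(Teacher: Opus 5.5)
Your route is essentially the paper's. You reduce to the torus by the $t_{\min}$-preserving bijection $W\mapsto C_1^{-1}WC_2^{-1}$, treat the off-diagonal entry as a lattice point in a box that is small in the first embedding and of size $\asymp2^{k/2}$ in the second, balance that box by units of $\Z[\sqrt2]$, count the diagonal entry by the divisor bound in $\Z[\omega]$, sum over levels, and handle $\Gamma$-frames by injectivity with $t_{\min}$ raised by at most $h$. The paper differs only in bookkeeping. It uses quaternion numerators of norm $n_{\tau'}$ instead of the denominator exponent $k$. It also balances each real coordinate $c,d\in\tfrac12\Z[\sqrt2]$ separately, via $\#\{|x|_{\sigma_1}\le X,\ |x|_{\sigma_2}\le Y\}\le C(1+XY)$. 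Your variant through $\xi=tt^\dagger$ (one rank-two count, then two divisor bounds) works equally well.

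There is, however, a computational error you must fix. The set containing $t$ is $\{|t|\le\varepsilon2^{k/2}\}\times\{|t^\bullet|\le2^{k/2}\}\subset\mathbb{C}\times\mathbb{C}$. Its volume is $\asymp\varepsilon^2 4^k$, not $\varepsilon^22^k$. Likewise $\xi$ lies in $[0,\varepsilon^22^k]\times[0,2^k]$, of area $\varepsilon^24^k$; you even write $N(tt^\dagger)\le\varepsilon^22^{2k}$ yourself at one point. With $k\le\tau/2+3$ the main term is at most $64\,\varepsilon^22^\tau$, exactly the stated one. Your remark that the bound is ``even better than stated'' is therefore false, as it must be. Haar measure puts $\asymp\varepsilon^22^\tau$ words in the tube, and the enumeration of Sec.~\ref{sec:num} finds about $36\cdot2^t\varepsilon^2$ per shell.

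The ``main obstacle'' is not one. Multiplication by $(1+\sqrt2)^n$ scales the two embeddings uniformly by reciprocal factors. It therefore maps the whole product of discs of radii $R_1,R_2$ to one of radii $(1+\sqrt2)^nR_1$ and $(\sqrt2-1)^nR_2$, which balances it at once. Since $\Z[\omega]$ has Euclidean minimum at least $\sqrt2$, the count is $O(1+(R_1R_2)^2)=O(1+\varepsilon^24^k)$, with no per-element sector decomposition needed.

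Finally, the detour through $24^2$ representative frames is unnecessary, and its justification is wrong: $HR_z(\theta)$ is neither diagonal nor anti-diagonal, and $HR_zH=R_x$. The bijection alone reduces every Clifford-framed coset to the torus.
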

The proof is in Appendix~\ref{app:clifford}. Up to $2^{\eta(\tau)}$, \eqref{eq:clifford} is the tube form of~\eqref{eq:H} with $\beta=2$, all constants absorbed in $2^\eta$, and it holds for every $\tau$, not only above the onset. Two consequences follow at once.

\begin{corollary}[Typical $z$-rotations need $3L$]\label{cor:typical}
For every $\varepsilon\in(0,1/8]$ and every $s\ge0$, the set of $\theta\in[0,4\pi)$ for which $R_z(\theta)$ has an $\varepsilon$-approximation of $T$-count $\le3L-s$ has measure $\le C\,2^{\eta(3L)}\big(L2^{-L}+2^{-s}\big)$. The same bound holds for the fraction of grid angles $2\pi d/Q$, $d\in\Z_Q$, for the tuned modulus $Q=Q_\varepsilon$ and for any modulus $Q\ge c'/\varepsilon$ with $\sin(\pi/2Q)>\varepsilon$.
\end{corollary}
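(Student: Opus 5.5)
The plan is a union-bound argument: cover the set of good angles by the disjoint $\varepsilon$-neighbourhoods of cheap words in the tube around the identity coset, and count those words with Theorem~\ref{thm:clifford}. Take the identity coset $C=R_z$, which has $C_1=C_2=I$, and put $\tau:=\lfloor 3L-s\rfloor$. If $R_z(\theta)$ has an $\varepsilon$-approximation $W$ of $T$-count $\le\tau$, then $W\in\Lambda_\tau\cap T_\varepsilon(R_z)$, and $\theta$ lies in the set
\[
I_W:=\{\theta:\dproj(W,R_z(\theta))\le\varepsilon\}.
\]
So the measure of good angles is at most
\[
\#\big(\Lambda_\tau\cap T_\varepsilon(R_z)\big)\cdot\max_W|I_W|.
\]

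The next step bounds $|I_W|$. By Lemma~\ref{lem:loc}, $\dproj(R_z(\theta),R_z(\theta'))=\sqrt{2-2|\cos((\theta-\theta')/2)|}$, which is comparable to $|\theta-\theta'|/2$ for small differences (mod the phase period). By the triangle inequality, any two angles in $I_W$ are within projective distance $2\varepsilon$ of each other. Hence $I_W$, taken within $[0,4\pi)$, is contained in $O(1)$ arcs of length $O(\varepsilon)$, one for each of the two lifts modulo the phase identification $\theta\mapsto\theta+2\pi$. This gives $|I_W|\le C'\varepsilon$.

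Theorem~\ref{thm:clifford} then gives the count
\[
\#\le 2^{\eta(\tau)}\big(\tau+1+\varepsilon^2 2^{\tau}\big),\qquad \varepsilon^2 2^{\tau}\le 2^{L-s}.
\]
Multiplying by $C'\varepsilon=C'2^{-L}$ and dividing by the total measure $4\pi$ yields
\[
C\,2^{\eta(\tau)}\big((3L+1)2^{-L}+2^{-s}\big).
\]
Since $\eta$ is (or may be taken) nondecreasing and $\tau\le 3L$, this is $\le C\,2^{\eta(3L)}(L2^{-L}+2^{-s})$. The case $3L-s<0$ is trivial, since then only Cliffords qualify and the bound exceeds one anyway after enlarging $C$.

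The grid version runs the same way, counting grid points instead of measure. Each word $W$ in the tube is within $\varepsilon$ of at most $O(1)$ grid rotations $R_z(2\pi d/Q)$. The reason is that the hypothesis $\sin(\pi/2Q)>\varepsilon$ makes neighbouring grid rotations more than $2\varepsilon$ apart, so the triangle inequality allows at most one $d$ per lift, and a bounded number overall. Hence the number of good $d$ is at most $O(\#(\Lambda_\tau\cap T_\varepsilon(R_z)))$. Dividing by $Q\ge c'/\varepsilon$ (for the tuned $Q_\varepsilon$ we have $Q_\varepsilon\asymp1/\varepsilon$) gives the same fraction bound with a constant depending on $c'$.

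I expect the only step needing care to be the bookkeeping of the phase identification and of $\theta$ versus $\theta+2\pi$ in $[0,4\pi)$, so that $|I_W|=O(\varepsilon)$ and the per-word grid multiplicity is $O(1)$ hold uniformly. Everything substantive is carried by Theorem~\ref{thm:clifford}.
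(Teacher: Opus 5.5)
Your proposal is correct and follows the paper's own proof: a union bound over the words of $T$-count $\le3L-s$ in the tube of the identity coset, counted by Theorem~\ref{thm:clifford}, where each word covers angle measure $O(\varepsilon)$ and, for the grid, lies within $\varepsilon$ of at most one grid rotation because distinct grid rotations are more than $2\varepsilon$ apart. One small correction to your edge case: when $3L-s<0$ no word qualifies at all, so the set is empty rather than consisting of Cliffords.
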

\begin{proof}
A word within $\varepsilon$ of the torus is within $\varepsilon$ of $R_z(\theta)$ only for $\theta$ in a set of measure $\le C\varepsilon$, so the measure is $\le C\varepsilon$ times~\eqref{eq:clifford} at $\tau=3L-s$. For the grid, distinct grid rotations have disjoint $\varepsilon$-balls, so the number of grid angles reached is at most~\eqref{eq:clifford}, and there are $Q\ge c'/\varepsilon$ of them.
\end{proof}
In words: all but a vanishing fraction of $z$-rotations need $T$-count $3L-O(L/\log L)$. By Borel--Cantelli along $\varepsilon=2^{-j}$ this holds, for almost every $\theta$, for all small $\varepsilon$. That is the lower half of~\cite[Conj.~8.10]{RossSelinger} for almost every angle, whereas the conjecture is stated for every angle with $\tan(\theta/2)\notin\Q(\sqrt2)$. The analogue for Haar-almost every element of $\SU$ is proved in~\cite{SMA}. This is the typical-case lower bound that Ross and Selinger~\cite{RossSelinger} call information-theoretic and support heuristically, and that Selinger~\cite[\S9]{Selinger15} notes is ``not a priori clear'' for $z$-rotations. It also makes~\eqref{eq:Hcons} unconditional up to $O(L/\log L)$: the passthrough family cannot be improved by a better single-rotation synthesizer, only by a better way of committing.

The second consequence concerns the processes that pipelines actually run. Call a CW process \emph{$\vartheta$-rotation-segmented} if, on every input and tape, for every coordinate $j$ and every round $t<r$, the unitary $\overline W^{(t)}_j$ of the segment it emits on $j$ in round $t$ is within $\vartheta$ of $C\,R_z(\varphi)\,C'$ for some angle $\varphi$ and some Cliffords $C,C'$. Per-rotation synthesis of each share is of this form, and so is the fractional-passthrough family of Theorem~\ref{thm:frontier}, with $\vartheta=\varepsilon/r$. The constraint bears on the committed word itself, so the frames of Corollary~\ref{cor:profile}, which are the source of the gap between Theorems~\ref{thm:rate52} and~\ref{thm:main2}, play no role.

\begin{theorem}[Rate three for rotation-segmented processes]\label{thm:rate3seg}
For every $\vartheta$-rotation-segmented process as in Sec.~\ref{sec:setting} with $\vartheta\le1/8$, and every round $t<r$,
\begin{multline}\label{eq:rate3seg}
\bar T_t\ \ge\ \alpha_\vartheta\big[(1-2\delta)m\log_2K-2m\,h_2(\delta)\\-S-mA_\vartheta\big],\qquad\alpha_\vartheta:=1+\frac{2\log_2(1/\vartheta)}{\log_2K},
\end{multline}
with $A_\vartheta=O(\Lambda/\log\Lambda)$, $\Lambda:=\alpha_\vartheta\log_2K$. In particular, a $\vartheta$-rotation-segmented process with $\vartheta\le\varepsilon$ is $\varepsilon$-rotation-segmented, and $\alpha_\varepsilon\ge3-O(1/L)$, so $\bar\Tpre\ge3(r-1)\big(m\log_2K-S-O(mL/\log L)\big)-O(\delta rmL)$.
\end{theorem}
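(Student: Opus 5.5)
The plan is to follow the proof of Theorem~\ref{thm:rate2}, i.e.\ the Gibbs-type argument of Lemma~\ref{lem:gibbs}, replacing the frame-dependent profile of Corollary~\ref{cor:profile} by a profile read off Theorem~\ref{thm:clifford}. Fix a round $t<r$. Items (i)--(ii) of the proof of Theorem~\ref{thm:main1} apply unchanged: they give $I(A;M_t,\sigma_t\mid Z)\ge L_\delta(m,K)$, and the per-coordinate Fano bound yields the $(1-2\delta)$ and $2h_2(\delta)$ corrections exactly as in Theorem~\ref{thm:rate2}. What changes is the bound on the information carried by the committed tuple $M_t$. By rotation-segmentation, each $\overline W^{(t)}_j$ lies within $\vartheta$ of some $CR_z(\varphi)C'$ with Cliffords $C,C'$. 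Since there are only $24^2$ Clifford pairs modulo phase, $\overline W^{(t)}_j$ lies in the union of $O(1)$ tubes $T_\vartheta(C R_z C')$, whatever the past and future segments are. So the number of possible values of $\overline W^{(t)}_j$ with $t_{\min}\le\tau$ is at most $n_\vartheta(\tau):=576\cdot2^{\eta(\tau)}(\tau+1+\vartheta^22^\tau)$ by~\eqref{eq:clifford}, applied with $\varepsilon$ replaced by $\vartheta\le1/8$.

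Next, I use this profile in the entropy-versus-cost step with weight $\lambda=1/\alpha_\vartheta$ and cutoff $\tau_*:=\lceil\Lambda\rceil$, where $\Lambda=\alpha_\vartheta\log_2K=\log_2K+2\log_2(1/\vartheta)$. The relevant partition function is $Z_\lambda=\sum_{\tau\le\tau_*}n_\vartheta(\tau)2^{-\lambda\tau}$. Write $\ell:=\log_2(1/\vartheta)$, so that $\vartheta^22^\tau=2^{\tau-2\ell}$. I will bound $Z_\lambda$ by splitting at $\tau=2\ell$:
\begin{itemize}
\item For $\tau\le2\ell$ the linear term dominates and each summand is at most $2^{\eta}(\tau+1)$.
\item For $2\ell<\tau\le\tau_*$ the summand is $2^{\eta+\tau-2\ell-\lambda\tau}$. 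Its exponent $(1-\lambda)\tau-2\ell$ is increasing in $\tau$, and at $\tau=\Lambda$ it equals $(1-\lambda)\Lambda-2\ell=\Lambda-\log_2K-2\ell=0$.
\end{itemize}
Both pieces are therefore at most $2^{\max\eta}(\tau_*+1)^2$, which gives $A_\vartheta:=1+\log_2Z_\lambda=O(\eta(\Lambda)+\log\Lambda)=O(\Lambda/\log\Lambda)$.

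Words of cost above $\tau_*$ cannot help. A coordinate carries at most $\log_2K=\lambda\Lambda\le\lambda\tau_*$ bits, so truncating at $\tau_*$ loses nothing, exactly as in Theorem~\ref{thm:rate2}. The conclusion is that a committed word of cost $\tau$ carries at most $\lambda\tau+A_\vartheta$ bits about its coordinate, given the snapshot and the other rounds. Summing over coordinates and rearranging gives~\eqref{eq:rate3seg}. For the final clause, $\vartheta\le\varepsilon$ implies $\varepsilon$-segmentation (the union of tubes only grows), and $\log_2K\le L+\log_2(\pi/2)$ gives $\alpha_\varepsilon\ge1+2L/(L+O(1))=3-O(1/L)$. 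Summing over $t<r$ then gives the bound on $\bar\Tpre$, with $\Lambda=O(L)$.

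I expect the main obstacle to be the bookkeeping in the Gibbs step. Lemma~\ref{lem:gibbs} must be applied to a profile that counts words in a union of tubes fixed independently of $(s,z)$; conditioning only shrinks the alphabet, so this should be harmless. The real check is that the near-tangency point $\tau=\Lambda$, where the volume term meets weight $2^{-\lambda\tau}$, costs only a polynomial factor and not a power of $2$. The exact matching computed above is the reason $\alpha_\vartheta$ is defined as it is.
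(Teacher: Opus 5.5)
Your proposal is correct and follows the paper's proof essentially step for step. The shared steps are:
\begin{itemize}
\item the union of the $576$ Clifford-framed tubes, with the profile of Theorem~\ref{thm:clifford};
\item Lemma~\ref{lem:gibbs} at $\lambda=1/\alpha_\vartheta$ and $\tau_*=\lceil\alpha_\vartheta\log_2K\rceil$;
\item the observation that $\tau-2\log_2(1/\vartheta)\le\lambda\tau$ for $\tau\le\Lambda$, which gives $Z_\lambda\le C(\tau_*+1)^22^{\max\eta}$. Your split at $\tau=2\ell$ is the same computation.
\end{itemize}
The only step you leave implicit is the passage from $\alpha_\varepsilon D$ to $3D-O(m)$ in the final clause. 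The paper makes it explicit: when $D\ge0$ it follows from $D=O(mL)$ and $\alpha_\varepsilon=3-O(1/L)$, and when $D<0$ the claim is trivial.
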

\begin{proof}
Follow the proof of Theorem~\ref{thm:rate2}. The alphabet of $\overline W^{(t)}_j$ lies in the union of the $576$ tubes of radius $\vartheta$ around the Clifford-framed tori, so by Theorem~\ref{thm:clifford} its profile is $n(\tau)\le576\cdot2^{\eta(\tau)}(\tau+1+\vartheta^22^\tau)$, whatever the context. Apply Lemma~\ref{lem:gibbs} with $\lambda=1/\alpha_\vartheta$ and $\tau_*=\lceil\alpha_\vartheta\log_2K\rceil$. For $\tau\le\alpha_\vartheta\log_2K$ one has $\tau-2\log_2(1/\vartheta)\le\tau/\alpha_\vartheta$, which is the definition of $\alpha_\vartheta$. So $n(\tau)2^{-\lambda\tau}\le576\cdot2^{\eta(\tau)+1}(\tau+1)$, and $Z_\lambda\le C(\tau_*+1)^22^{\max\eta}$. Hence $A_\vartheta:=1+\log_2Z_\lambda=O(\Lambda/\log\Lambda)$, and every step that produced the factor $2$ now produces $\alpha_\vartheta$. Finally $\alpha_\varepsilon\ge1+2L/(L+\log_2\frac\pi2)$, because $\log_2K<L+\log_2\frac\pi2$. If $D:=(1-2\delta)m\log_2K-2mh_2(\delta)-S-mA_\varepsilon\ge0$, then $\alpha_\varepsilon D\ge3D-O(m)$; otherwise the claim is trivial.
\end{proof}
Theorem~\ref{thm:rate3seg} is unconditional and matches the passthrough family in its rate. For shares committed at accuracy $\varepsilon/r$ the lower bound is $1+2\log_2(r/\varepsilon)/\log_2K$, and passthrough achieves $\bar\tau_Q(\varepsilon/r)/\log_2K$, which is $3+O(\log r/L)$ under the synthesis hypothesis of Corollary~\ref{cor:frontier}. In the model where each committed piece is within $\varepsilon$ of a Clifford-framed $z$-rotation, the exchange rate is therefore $3$ to leading order in $L$: unconditionally from below, and from above under the synthesis hypothesis. The restriction builds in the entry fee: by Lemma~\ref{lem:identity} and $k\le\tfrac\tau2+3$, a segment of cost $\le2L-8$ is exactly some $CT^jC'$, one of at most $24^2\cdot8$ unitaries. Theorems~\ref{thm:clifford} and~\ref{thm:rate3seg} and Corollary~\ref{cor:typical} are asymptotic. Their $\eta$ is the worst-case divisor bound, and at $\varepsilon=10^{-10}$ the largest divisor count of an ideal of $\Z[\sqrt2]$ of norm $\le16\cdot2^{99}$ is $2^{22.8}$; with the constants of the proof $A_\varepsilon>\log_2K$, so~\eqref{eq:rate3seg} is vacuous there. Beating it requires committing words that are \emph{not} near any Clifford-framed rotation and whose frames, created by the prefix and suffix, carry the rest of the information. Theorem~\ref{thm:rate52} bounds how much such frames can help, and Conjecture~\ref{conj:H} asserts that they cannot help at all.

\section{Side information}\label{sec:side}

\begin{theorem}[Side information]\label{thm:side}
Let $X\sim P_X$ on $[K]^m$, let the sharing have any joint law with $X$, and let side information $Y$ be available to the process from the start. Then for every round $t$,
\begin{multline}
S+\bar T_t+m\log_236+\rho_m(\bar T_t)\\
\ge\ H\big(X\mid\mathrm{Past},\mathrm{Fut},Y\big)-h_2(\delta)-\delta\log_2(K^m-1).
\end{multline}
\end{theorem}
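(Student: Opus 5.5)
The plan is to rerun the proof of Theorem~\ref{thm:main1} with $Y$ added to the conditioning variable. Two things change. The uniformity computation (i) is no longer available, since the sharing is arbitrary, so the target quantity becomes a conditional entropy. Fano's inequality must also be applied to $X$ rather than to the share $A=a^{(t)}$.

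Fix a round $t$ and set $Z':=(\mathrm{Past},\mathrm{Fut},Y,\rho)$. The tape $\rho$ is independent of $(X,\text{stream},Y)$, so $H(X\mid Z')=H(X\mid\mathrm{Past},\mathrm{Fut},Y)$.

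\emph{Step 1: the output determines $X$.} The process sees $Y$ from the start. The segments emitted before round $t$ are therefore a function of $(\mathrm{Past},Y,\rho)$, and by the restart semantics of Definition~\ref{def:model}\,(ii) those emitted after the cut are a function of $(\sigma_t,\mathrm{Fut})$. We must make sure the snapshot $\sigma_t$ is allowed to depend on $Y$. If $Y$ is instead regarded as held downstream, it sits in the conditioning in any case, and the later segments are a function of $(\sigma_t,\mathrm{Fut},Y)$.

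In either reading, each final coordinate unitary $U^{<t}_j\overline W^{(t)}_jU^{>t}_j$ is a function of $(M_t,\sigma_t,Z')$. On success, nearest-grid decoding recovers every $x_j$, since grid rotations are more than $2\varepsilon$ apart. So $X$ is recovered from $(M_t,\sigma_t,Z')$ with error probability at most $\delta$. We do not need $X=A+c$ here; we decode $X$ directly.

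Fano's inequality over the alphabet $[K]^m$ gives
\[
H(X\mid M_t,\sigma_t,Z')\le h_2(\delta)+\delta\log_2(K^m-1),
\]
and hence
\[
I(X;M_t,\sigma_t\mid Z')\ge H(X\mid\mathrm{Past},\mathrm{Fut},Y)-h_2(\delta)-\delta\log_2(K^m-1).
\]

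\emph{Step 2: message size.} As in part (iii) of the proof of Theorem~\ref{thm:main1},
\[
I(X;M_t,\sigma_t\mid Z')\le H(\sigma_t)+H(M_t).
\]
Prefix-freeness of the snapshot serialization gives $H(\sigma_t)\le\E|\sigma_t|\le S$; the worst-input expectation still dominates the average, now also over the joint law of the sharing and $Y$. Since $T_t\ge\sum_jt_{\min}(\overline W^{(t)}_j)$, Lemma~\ref{lem:ent} gives
\[
H(M_t)\le\bar T_t+m\log_236+\rho_m(\bar T_t).
\]
Chaining the two steps proves the claim. Here $\bar T_t$ is the expectation under the given joint law rather than under the canonical sharing.

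\emph{Main obstacle.} The only delicate step is the functional-dependence claim in Step 1. We have to check that Definition~\ref{def:model} permits $Y$ in both halves: the part of the process before the cut reads $Y$, and the restart from $\sigma_t$ is either given $Y$ or has already absorbed it into $\sigma_t$. Either way $Y$ lies in the conditioning variable, so no bits of $Y$ are charged to $S$. I would state this convention explicitly. I would also note that the bound is useful precisely because conditioning on $Y$ can make $H(X\mid\cdot)$ much smaller than $m\log_2K$, which is the situation of Corollary~\ref{cor:side}.
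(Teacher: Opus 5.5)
Your proof is correct and follows the paper's argument: rerun Theorem~\ref{thm:main1} with $Z:=(\mathrm{Past},\mathrm{Fut},Y,\rho)$, using that the tape is independent of the stream and of $Y$. The only difference is cosmetic: you apply Fano to $X$ directly over $[K]^m$, whereas the paper applies it to $a^{(t)}$ and uses that $a^{(t)}\leftrightarrow X$ is a bijection given the other rounds.
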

\begin{proof}
Theorem~\ref{thm:main1} with $Z:=(\mathrm{Past},\mathrm{Fut},Y,\rho)$; the tape is independent of $(\text{stream},Y)$, and $a^{(t)}\leftrightarrow X$ is a bijection given the other rounds.
\end{proof}
\begin{corollary}\label{cor:side}
(a) Uniform $X$, canonical sharing, trivial $Y$ recovers Theorem~\ref{thm:main1}.
(b) If $Y$ determines the earlier shares (the process holds the sharing or twirl seed), the right side is $0$: no memory and no committed magic. This is the boundary of the design rule ``randomize and lower late'' of~\cite{Paper1}.
(c) Suppose the shares of rounds $1,\dots,r-1$ are a deterministic injective function of a $\lambda$-bit uniform seed $V$, independent of $X$ and unknown to the process, and $Y$ is trivial. Given the other rounds, $A^{(t)}$ is a function of $V$, so
\[H\big(X\mid\mathrm{Past},\mathrm{Fut}\big)=H\big(A^{(t)}\mid\mathrm{Past},\mathrm{Fut}\big)\le H(V)=\lambda\]
and the bound of Theorem~\ref{thm:side} degrades from $m\log_2K$ to $\lambda$: this is the direction that matters in practice, since in randomized compiling $\lambda$ is a handful of bits, not $m\log_2K$. The matching lower bound needs two rounds and a uniform aggregate, and then holds: for $r=2$ and $X$ uniform on $[K]^m$, $H(a^{(2)})\le m\log_2Q$ and $H(a^{(2)}\mid A^{(1)})=H(X)=m\log_2K$, so $I(A^{(1)};a^{(2)})\le m\log_2(Q/K)$ and
\[H\big(A^{(1)}\mid a^{(2)}\big)\ \ge\ \lambda-m\log_2(Q/K),\]
where $m\log_2(Q/K)=O(m/Q)$ is negligible for the tuned family. This is the precise form of~\cite[Rem.~9]{Paper1}: the lower bound is governed by the entropy the process is missing, not by the size of the aggregate.

Both restrictions are needed. With $r=3$, take $a^{(1)}:=V$, $a^{(2)}:=-V$ and $a^{(3)}:=X$, where $V\in\Z_Q^m$ is the injective image of the $\lambda$-bit seed. The first two rounds are still an injective function of the seed, but for $t=1$ the conditioning already contains $a^{(3)}=X$, so $H(X\mid\mathrm{Past},\mathrm{Fut})=0$ and nothing is charged. A seed that is spent and then cancelled in a later round buys no lower bound, whatever its entropy; only the upper bound $\lambda$ survives for general $r$.
(d) For non-product sources Theorems~\ref{thm:rate2} and~\ref{thm:main2} hold with $m\ell_\delta$ replaced by $\sum_jH(A_j\mid Z)-m(h_2(\delta)+\delta\log_2K)$ and $S$ by $S+\mathrm{TC}(A\mid Z)$, since $\sum_jI(A_j;\sigma\mid Z)\le I(A;\sigma\mid Z)+\mathrm{TC}(A\mid Z)$.
\end{corollary}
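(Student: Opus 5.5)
Each item is a direct evaluation of the conditional entropy $H(X\mid\mathrm{Past},\mathrm{Fut},Y)$ on the right of Theorem~\ref{thm:side}, so I would treat the parts in turn and use only the chain rule, the data-processing inequality, and the bijection $a^{(t)}\leftrightarrow X$ given the other rounds. That bijection is already in the proof of Theorem~\ref{thm:side}. For (a), take $Y$ constant. Step (i) of the proof of Theorem~\ref{thm:main1} gives $H(X\mid\mathrm{Past},\mathrm{Fut})=H(A\mid\mathrm{Past},\mathrm{Fut})=m\log_2K$ under $\mathcal D$ with uniform $X$. Then $m\log_2K-h_2(\delta)-\delta\log_2(K^m-1)\ge L_\delta(m,K)$ recovers~\eqref{eq:main1}. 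For (b), if $Y$ determines the shares of rounds $<r$, then at a round $t<r$ the pair $(\mathrm{Fut},Y)$ determines $a^{(t)}$ together with every other share. So $X=\sum_sa^{(s)}$ is a function of the conditioning and the entropy is $0$. The trivial process that computes $x$ from $Y$ and round $r$ and emits everything at the end then attains $S=0$ and $\Tpre=0$, which shows the bound is not just vacuous but tight.

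For (c), the upper bound comes from writing $X=A^{(t)}+c$, where $c$ is a function of the other rounds. This gives $H(X\mid\mathrm{Past},\mathrm{Fut})=H(A^{(t)}\mid\mathrm{Past},\mathrm{Fut})$. For $t<r$, $A^{(t)}$ is a coordinate block of an injective function of $V$, so $H(A^{(t)}\mid\cdot)\le H(A^{(t)})\le H(V)=\lambda$. For the lower bound with $r=2$, $t=1$, the step I expect to need care is computing the mutual information between $A^{(1)}$ and $a^{(2)}=X-A^{(1)}$. First, $H(a^{(2)})\le m\log_2Q$ because $a^{(2)}$ lives in $\Z_Q^m$. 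Second, $H(a^{(2)}\mid A^{(1)})=H(X\mid A^{(1)})=H(X)=m\log_2K$, by translation invariance and the independence of $X$ from $V$. Together these give $I\le m\log_2(Q/K)$. Then, since $A^{(1)}$ is injective in $V$, $H(A^{(1)}\mid a^{(2)})=H(A^{(1)})-I\ge\lambda-m\log_2(Q/K)$. To show that both restrictions are needed, I would verify the $r=3$ example $a^{(1)}=V$, $a^{(2)}=-V$, $a^{(3)}=X$ directly: for $t=1$ the future contains $a^{(3)}=X$, so the conditional entropy vanishes. Finally, $\log_2(Q/K)=\log_2(1+1/K)=O(1/Q)$ gives the negligibility claim.

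For (d), I would rerun the proofs of Theorems~\ref{thm:rate2} and~\ref{thm:main2} coordinatewise. Those proofs lower-bound $\sum_j I(A_j;\sigma_t,M_t\mid Z)$ by coordinatewise Fano, which gives $H(A_j\mid Z)-h_2(\delta)-\delta\log_2K$ per coordinate, in place of $\ell_\delta$. They then split this into a word part, handled by the tube profile, and a snapshot part. Only the snapshot part changes. The main obstacle is to justify $\sum_jI(A_j;\sigma\mid Z)\le I(A;\sigma\mid Z)+\mathrm{TC}(A\mid Z)$. I would prove it from the identity $\sum_jI(A_j;\sigma\mid Z)-I(A;\sigma\mid Z)=\mathrm{TC}(A\mid Z)-\mathrm{TC}(A\mid\sigma,Z)$ and the nonnegativity of conditional total correlation. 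Combined with $I(A;\sigma\mid Z)\le S$, this replaces $S$ by $S+\mathrm{TC}(A\mid Z)$, and the rest of both proofs goes through unchanged.
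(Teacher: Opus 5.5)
Your proposal is correct and follows the paper's own inline justifications item by item. Your identity $\sum_jI(A_j;\sigma\mid Z)-I(A;\sigma\mid Z)=\mathrm{TC}(A\mid Z)-\mathrm{TC}(A\mid\sigma,Z)$ just supplies the short proof of the total-correlation inequality that the paper states without derivation. One slip in (d): the per-coordinate Fano step should be applied to $I(A_j;\overline W_j,\sigma_t\mid Z)$ with the single word $\overline W_j$, not to $I(A_j;\sigma_t,M_t\mid Z)$ with the whole tuple, because after the chain rule only $\sum_jI(A_j;\overline W_j\mid\sigma_t,Z)$ is controlled coordinate by coordinate by the tube profile.
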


\begin{theorem}[One-shot]\label{thm:oneshot}
Let the process be deterministic with worst-case resources $S_{\max}$, $T_{\max}$, correct with probability $\ge1-\delta$ over $(X,\text{sharing},Y)$, and $r=2$. With the set-smoothed entropy $H_0^\delta(U\mid VY):=\min_{\Omega:\Pr[\Omega]\ge1-\delta}\log_2\max_{v,y}|\{u:(u,v,y)\in\Omega\}|$ of Renner--Wolf~\cite{RennerWolf},
\begin{multline}
S_{\max}+T_{\max}+m\log_272+m\log_2\!\big(e(1+T_{\max}/m)\big)\\
\ge\ H_0^\delta(U\mid V,Y).
\end{multline}
\end{theorem}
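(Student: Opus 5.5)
The plan is to repeat the message argument of Theorem~\ref{thm:main1} in its one-shot form. We replace Fano's inequality by a direct injectivity count on the success event, and Lemma~\ref{lem:ent} by the Matsumoto--Amano ball count of Lemma~\ref{lem:MA}. With $r=2$, take $U:=a^{(1)}$ and $V:=a^{(2)}$. Since the process is deterministic, the round-1 output tuple $M_1=(\overline W^{(1)}_1,\dots,\overline W^{(1)}_m)$ and the snapshot $\sigma_1$ at the single cut are functions of $(U,Y)$. By the restart semantics of Definition~\ref{def:model}\,(ii), the round-2 segments are functions of $(\sigma_1,V,Y)$. Hence the final unitary on every coordinate is a function of $(M_1,\sigma_1,V,Y)$.

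First I define $\Omega$ as the set of triples $(u,v,y)$ on which the process is correct on all coordinates. Correctness is a deterministic function of $(u,v,y)$, because $x=u+v \bmod Q$ is determined by them, so $\Omega$ is a legitimate set with $\Pr[\Omega]\ge1-\delta$. Now fix $(v,y)$. For $(u,v,y)\in\Omega$, nearest-grid decoding applied to the final coordinate unitaries recovers $x$, since grid rotations are more than $2\varepsilon$ apart. Then $u=x-v$ is recovered. So the map $u\mapsto(M_1,\sigma_1)$ is injective on $\{u:(u,v,y)\in\Omega\}$. This gives
\[
H_0^\delta(U\mid V,Y)\le\log_2\max_{v,y}\#\{u:(u,v,y)\in\Omega\}\le\log_2\#\{\text{possible }(M_1,\sigma_1)\}.
\]

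Second, I bound the two factors. The snapshot serialization is prefix-free with length at most $S_{\max}$, so by Kraft's inequality there are at most $2^{S_{\max}}$ snapshots. For $M_1$, the $T$-count emitted in round 1 is at most $T_{\max}$ and dominates $\sum_j t_{\min}(\overline W^{(1)}_j)$. I split according to the vector $(t_1,\dots,t_m)$ of per-coordinate upper bounds with $\sum_j t_j\le T_{\max}$; there are $\binom{T_{\max}+m}{m}$ such vectors. For a given vector, Lemma~\ref{lem:MA} bounds coordinate $j$ by the ball size $72\cdot2^{t_j}-48\le72\cdot2^{t_j}$, which gives at most $72^m2^{T_{\max}}$ tuples. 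Using $\binom{T+m}{m}\le(e(1+T/m))^m$ and taking logarithms yields exactly $S_{\max}+T_{\max}+m\log_272+m\log_2(e(1+T_{\max}/m))$.

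The step I expect to be the main obstacle is the bookkeeping of $\Omega$. The sharing randomness must not spoil the determinism used in the injectivity step, so $\Omega$ has to be expressed in the variables $(u,v,y)$ rather than $(x,\text{sharing},y)$; this works because $(u,v)$ determines $x$. The cumulative ball count must also be used in place of the exact-count bound of Lemma~\ref{lem:MA}, since a coordinate's word may have $T$-count strictly below its allotted $t_j$. The appearance of $\log_272$ instead of $\log_236$ is precisely the price of this cumulative count.
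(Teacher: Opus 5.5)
Your proof is correct and follows the paper's argument: on the success event, $u$ is recovered from the message $(M_1,\sigma_1)$ together with $(v,y)$, and the number of possible messages is at most $2^{S_{\max}}$ times the Matsumoto--Amano count of $m$-tuples of total minimal $T$-count at most $T_{\max}$. One correction to your closing remark: the cumulative ball count is not forced. Partitioning the tuples by their exact vector $(t_{\min}(\overline W^{(1)}_j))_j$, whose sum is at most $T_{\max}$, and using the shell counts $n_1(t)\le36\cdot2^t$ gives the bound $N^{\le}_m(T_{\max})\le36^m2^{T_{\max}}\binom{T_{\max}+m}{m}$ of Lemma~\ref{lem:MA}. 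So the $\log_272$ in the statement is slack, not the price of the method.
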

The Shannon form (Theorem~\ref{thm:side}) applies to expected resources and the one-shot form to worst-case resources; the two are not to be mixed. We state the bound in terms of $H_0^\delta$ only. It can be relaxed further to a smooth max-entropy, but not at the same smoothing parameter: truncating to an event of probability $1-\delta$ and renormalizing moves the distribution by $\sqrt\delta$ and not $\delta$ in purified distance, so the resulting statement is a bound by $H^{\sqrt\delta}_{\max}$, with a second-order term that is negative rather than positive~\cite{TomamichelHayashi}. Since nothing below uses it, we do not carry that version.

\section{Numerical evidence}\label{sec:num}

We enumerated all single-qubit Clifford+$T$ unitaries of $T$-count $\le22$ ($3.0\times10^8$ words, of which $1.5\times10^8$ lie in the top shell $t=22$) through the Matsumoto--Amano normal form, with the $24$ right Cliffords verified by closure.

\paragraph{Conjecture~\ref{conj:H} (Fig.~\ref{fig:tube}).} For each $t\le22$, $\varepsilon\in\{2^{-4},\dots,2^{-8}\}$ with tuned $Q_\varepsilon\in\{12,25,50,100,201\}$, and three coset pairs (identity and two Haar-random $(G_1,G_2)$), we counted words within $\varepsilon$ of some grid rotation $G_1R_z(2\pi d/Q)G_2$ and words within $\varepsilon$ of the whole coset. A least-squares fit over the $136$ cells with count $\ge30$ gives
\begin{equation}
\log_2N=1.001\,t-2.007\,L+\text{const},
\end{equation}
i.e.\ exponents $(1,2)$ to three decimals. The constants are not fitted. The grid constant is $c_{\mathrm{grid}}=36\cdot\tfrac13\cdot8\varepsilon/(2\pi/Q_\varepsilon)=48\varepsilon Q_\varepsilon/\pi$, which equals $12$ only up to the rounding of $Q_\varepsilon$ and is $11.46,11.94,11.94,11.94,12.00$ for $\varepsilon=2^{-4},\dots,2^{-8}$; the tube constant is $36$ for every $\varepsilon$. At $t=22$, pooling the two Haar pairs, $N/(2^t\varepsilon^2)=11.47,11.96,11.99,12.16,12.14$ for the grid and $35.98,35.98,36.02,36.20,36.43$ for the tube, agreeing column by column to $0.1,0.2,0.5,1.8,1.2$ per cent and to $0.1,0.1,0.1,0.6,1.2$ per cent respectively. Over the whole table, restricted to the $30$ Haar-pair cells whose \emph{expected} count exceeds $3000$, $N$ divided by its exact prediction lies in $[0.98,1.04]$ for the grid and $[0.98,1.01]$ for the tube; the cut must be on the expected and not the observed count, since selecting cells by observed count conditions on the upper Poisson tail. Using $12$ in place of $c_{\mathrm{grid}}$ instead leaves the $\varepsilon=2^{-4}$ column low by a systematic $4.5\%$ at every $t$ --- at $t=22$ that is $20$ standard deviations in the single column of Table~\ref{tab:counts}, and manifestly not Poisson noise. What the three extra $T$-counts buy is the small-$\varepsilon$ end of the table, where the expected counts were previously of order one: at $t=19$ the $\varepsilon=2^{-8}$ cell held $77$ words against a prediction of $96$ ($20\%$ low, and Poisson-noisy), whereas at $t=22$ it holds $786$ against $768$ ($2\%$ high). Under Haar measure the squared off-diagonal entry $|b|^2$ is uniform on $[0,1]$, so the tube $\{|b|^2\le\varepsilon^2\}$ has measure $\varepsilon^2$ and $36\cdot2^t\varepsilon^2$ of the $36\cdot2^t$ words of $T$-count $t$ are expected in it; the grid acceptance region $\{s+\varphi^2/4\le\varepsilon^2\}$ in the (transverse, angular) coordinates occupies one third of the tube per angular window of width $8\varepsilon$, and the windows are spaced by $2\pi/Q_\varepsilon$, giving $c_{\mathrm{grid}}2^t\varepsilon^2$ with $c_{\mathrm{grid}}=48\varepsilon Q_\varepsilon/\pi$ as above. The words are therefore Haar-equidistributed near rotation cosets to within statistical error wherever the expected count is at least of order one. At the identity coset the count is exactly $0$ for $2\le t\le9$ at $L=4$, with onset at $t=10=2L+2$: that coset is \emph{below} Haar at low cost, which is the fee.

\begin{figure*}[t]
\centering\includegraphics[width=0.92\textwidth]{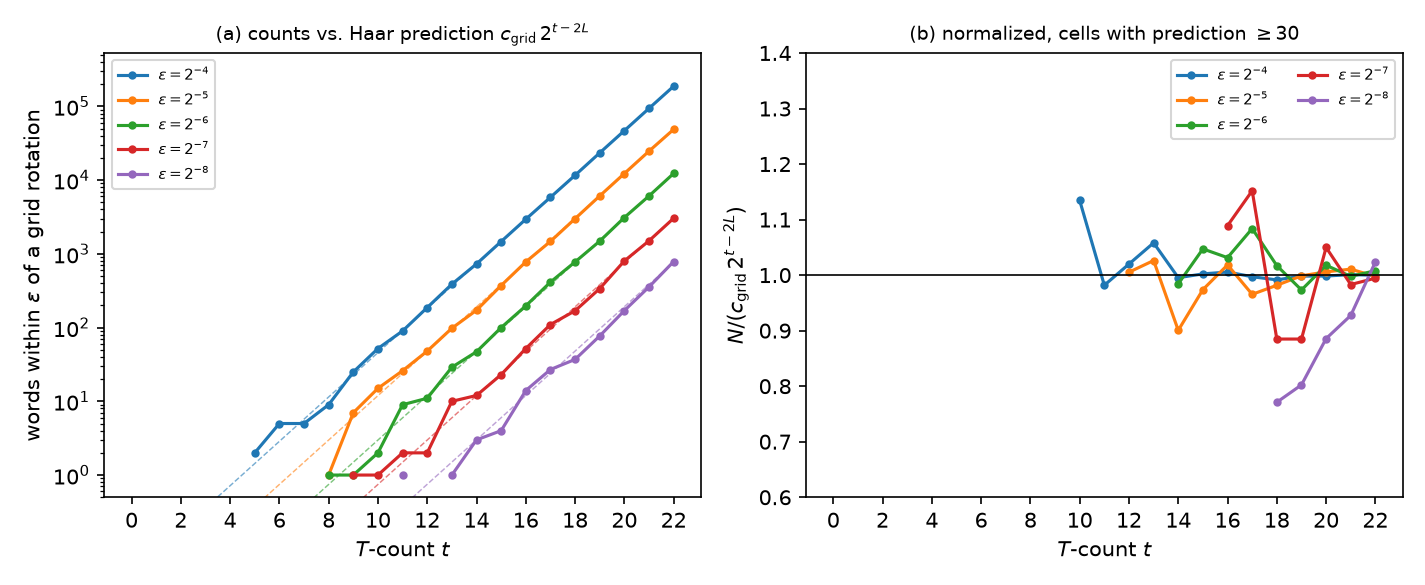}
\caption{Number of Clifford+$T$ words of $T$-count exactly $t$ within $\varepsilon$ of a grid rotation on a Haar-random coset $G_1R_z(\theta)G_2$. (a) Counts (markers) against the exact Haar prediction $c_{\mathrm{grid}}2^{t-2L}$ with $c_{\mathrm{grid}}=48\varepsilon Q_\varepsilon/\pi$ (dashed). (b) The same counts divided by that prediction, plotted where the prediction is at least $30$; every $\varepsilon$ has reached the Haar value by $t=22$. Exhaustive enumeration to $t=22$ ($3.0\times10^8$ words, $1.5\times10^8$ in the top shell).}
\label{fig:tube}
\end{figure*}

\begin{table}[t]
\centering\scriptsize
\caption{Grid counts on Haar coset pair 0; exact Haar prediction $c_{\mathrm{grid}}2^{t-2L}$, $c_{\mathrm{grid}}=48\varepsilon Q_\varepsilon/\pi$, in parentheses.}
\label{tab:counts}
\setlength{\tabcolsep}{1pt}\begin{tabular}{@{}c ccccc@{}}
\toprule
$t$ & $\varepsilon=2^{-4}$ & $2^{-5}$ & $2^{-6}$ & $2^{-7}$ & $2^{-8}$\\
\midrule
12 & 187 (183) & 48 (48) & 11 (12) & 2 (3.0) & 0 (0.7)\\
15 & 1470 (1467) & 372 (382) & 100 (95) & 23 (24) & 4 (6.0)\\
19 & 23426 (23468) & 6104 (6112) & 1487 (1528) & 338 (382) & 77 (96)\\
22 & 187707 (187747) & 48772 (48892) & 12321 (12223) & 3039 (3056) & 786 (768)\\
\bottomrule
\end{tabular}
\end{table}

\paragraph{Arithmetic cosets.} The cosets that Corollary~\ref{cor:profile} hands to Theorems~\ref{thm:rate2} and~\ref{thm:main2} are not Haar-random: $G_1$ and $G_2$ are themselves Clifford+$T$ words, and Conjecture~\ref{conj:H} is asserted uniformly in $(G_1,G_2)$, such frames included. We therefore repeated the whole count on four arithmetic pairs---random Matsumoto--Amano words of $T$-count $5$, $10$ and $15$, and a pair of Cliffords ($T$-count $0$)---with the three pairs of the previous paragraph recomputed as a regression ($345$ cells, all identical). The exponents do not move: a fit over the $180$ arithmetic cells with count $\ge30$ gives $\log_2N=1.006\,t-2.003\,L$ for the grid count and $1.002\,t-2.009\,L$ for the tube, against $1.001\,t-2.000\,L$ and $0.997\,t-2.003\,L$ on the Haar pairs. Nor does the constant, for pairs of $T$-count $10$ and $15$: over all cells with expected count $\ge100$ the ratio $N/(c_{\mathrm{grid}}2^{t-2L})$ lies in $[0.87,1.07]$, which is the spread of the Haar pairs themselves ($[0.87,1.08]$). Cosets close to the identity behave differently, and exactly as they must. A Clifford pair reproduces the identity-coset counts \emph{cell by cell}: for $G_1,G_2$ Clifford, $W\mapsto G_1^\dagger WG_2^\dagger$ is a $T$-count-preserving bijection of $\Lambda_k$ carrying the grid onto itself, so Clifford cosets are isometric copies of the identity coset and inherit its fee. There the ratio ranges over $[0.34,1.75]$---suppressed below the onset $t\approx2L$, overshooting for a few $T$-counts above it---and the $T$-count-$5$ pair interpolates, $[0.73,1.24]$. All $70$ cells at $t=22$ lie in $[0.86,1.04]$. Uniformity in $(G_1,G_2)$ thus costs at least a factor of two in the constant $c$; since $\kappa$ depends on $c$ only through $\log_2c$, the cost of that factor is small---$\kappa=2.82$ against the $2.85$ of the cumulative Haar value $24$ at $\varepsilon=10^{-10}$---which is what makes it affordable to leave $c$ unspecified. The additive terms are a separate matter: an arithmetic coset contains the exact word $G_1G_2$ at $\theta=0$, so it has grid hits at $T$-count $\le t_1+t_2$ no matter how small $\varepsilon$ is, and a frame whose axis is aligned with that of an infinite-order word carries the powers of that word (Proposition~\ref{prop:cluster}). Table~\ref{tab:lowcost} counts, on each frame and each $\varepsilon$, the words of cost $\le\tau_{24}=\lfloor2L-\log_224\rfloor$ within $\varepsilon$ of a grid rotation, against the $c_0+c_1\tau_{24}$ that Conjecture~\ref{conj:H} allows. On the identity and Clifford frames the count is the number of $T$-powers that fall within $\varepsilon$ of the grid; on the arithmetic and Haar-random frames it is at most three. The last row is a frame built for Proposition~\ref{prop:cluster}: $G_1\in\Gamma$ of $T$-count $16$ is the word of cost $\le16$ whose image $G_1\hat z$ of the $z$-axis is closest to the rotation axis of $HT$ (angle $1.0\times10^{-2}$), and $G_2=G_1^{-1}$, so that the powers $(HT)^{\pm j}$ lie within about $10^{-2}$ of the coset. Where that alignment is finer than $\varepsilon$ ($L\le6$) the cluster is visible---seven words at $L=4$, i.e.\ all of $(HT)^{\pm j}$ with $j\le\tau_{24}=3$, and seven of the fifteen candidates at $L=6$, about the fraction of the coset that the grid balls cover at the tuned modulus; where it is coarser ($L\ge7$) the frame behaves like an arithmetic one. Every entry is below $c_0+c_1\tau_{24}$, and the linear term is a worst case over frames rather than the population of a generic one. The shell counts of the previous paragraphs come from \texttt{scripts/enum\_arith.py} and \texttt{scripts/arith\_analysis.py}; Table~\ref{tab:lowcost}, the exact-optimal costs $\tau_{\min}(d)$ below, and Fig.~\ref{fig:frontier} from \texttt{lowcost\_counts.py}, \texttt{thm\_numbers.py} and \texttt{make\_fig2.py}.
\begin{table}[t]
\centering\footnotesize
\caption{Words of cost $\le\tau_{24}:=\lfloor2L-\log_224\rfloor$ within $\varepsilon$ of a grid rotation of the tuned family, on the frames indicated, against the allowance $c_0+c_1\tau_{24}$ of Conjecture~\ref{conj:H} with $(c_0,c_1)=(8,2)$. The threshold $\tau_{24}$ is one unit above the $\tau_0$ of Theorem~\ref{thm:main2} at $c=48$ and equal to it at $c=24$, so the table is a more demanding check than Theorem~\ref{thm:main2} requires. On the axis-aligned frame the powers $(HT)^{\pm j}$, $j\le\tau_{24}$, lie within $10^{-2}$ of the coset, which is below $\varepsilon$ at $L\le6$ and above it at $L\ge7$. Computed by \texttt{scripts/lowcost\_counts.py}.}
\label{tab:lowcost}
\setlength{\tabcolsep}{4pt}\begin{tabular}{@{}l ccccc@{}}
\toprule
$L$ & $4$ & $5$ & $6$ & $7$ & $8$\\
$\tau_{24}$ & $3$ & $5$ & $7$ & $9$ & $11$\\
\midrule
identity & $4$ & $3$ & $2$ & $4$ & $3$\\
Clifford pair & $4$ & $3$ & $2$ & $4$ & $3$\\
arithmetic, $T$-count $5$ & $0$ & $0$ & $0$ & $0$ & $3$\\
arithmetic, $T$-count $10$ & $1$ & $1$ & $0$ & $1$ & $1$\\
arithmetic, $T$-count $15$ & $3$ & $0$ & $1$ & $1$ & $0$\\
Haar-random pair A & $0$ & $0$ & $0$ & $0$ & $1$\\
Haar-random pair B & $1$ & $1$ & $1$ & $0$ & $0$\\
axis-aligned (Prop.~\ref{prop:cluster}) & $7$ & $3$ & $7$ & $3$ & $3$\\
\midrule
$c_0+c_1\tau_{24}$ & $14$ & $18$ & $22$ & $26$ & $30$\\
\bottomrule
\end{tabular}
\end{table}

\paragraph{Axis cosets and the size of $c$.} The frames on which~\eqref{eq:H} is tightest are the cosets $DR_zD^\dagger$ of the rotation axis of a cheap word $V$, with $D$ the eigenbasis of $V$. On the axis of $HT$ at $\varepsilon=2^{-5}$, $Q=25$, the cumulative grid count at $\tau=11$ is $137$---shell counts $1,0,0,2,0,0,2,4,8,16,32,72$---against the allowance $8+22+48\cdot2^{11}\varepsilon^2=126$ that $c=48$ would give, so a uniform constant must satisfy $c\ge53.5$ if $(c_0,c_1)=(8,2)$ are kept. The mechanism is visible in the enumeration. Since $V$ lies on the coset and $\dproj$ is bi-invariant, left multiplication by $V$ maps the $\varepsilon$-tube of the coset bijectively onto itself, and it changes the Matsumoto--Amano cost by at most one, $t_{\min}(VW)\le t_{\min}(W)+1$. It does not raise the cost of every word: $V=HT$ and $W=V^{-1}$ each cost one $T$ while $VW=I$ costs none. The consequence is therefore one about cumulative counts and not about shells---the tube words of cost $\le\tau$ contain $V$ times those of cost $\le\tau-1$---and that containment alone is no stronger than monotonicity of the cumulative count. What carries the explanation is the enumeration itself, shell by shell: at $t=11$, $57$ of the $156$ tube words of that shell are $V$ times a tube word of shell $10$, against a Haar prediction of $72$ words for the shell. Near the onset of the $2^\tau\varepsilon^2$ term the shell counts run at $2.2$--$2.4$ times the Haar value ($t=8$--$10$); the fraction of genuinely new words then falls below Haar ($0.78$ at $t=15$), the shell ratio returns to $1.07$ by $t=17$, and the required constant falls from $53.5$ at $\tau=11$ to $28.5$ at $\tau=17$. The same frame at $\varepsilon=2^{-6}$ shows no overshoot at all up to $t=17$ (shell ratio $\le1.09$, required $c\le27.8$). The overshoot is therefore a finite-cost effect of the orbit of a cheap word inside the tube, not a change of exponent. The identity coset shows the same effect in a sharper form, and at a non-dyadic accuracy. Its tube is invariant under left and right multiplication by powers of $T$ and under conjugation by $X$, so words enter the count in whole orbits of those symmetries, all at the same distance. At $\varepsilon=0.024532$---just above the distance $0.0245310$ of one such orbit---and $Q=Q_\varepsilon=32$, the cumulative count at $\tau=11$ is $232$, in shells $4,4,0,\dots,0,64,160$ and at only two distinct distances $0.022965$ and $0.024531$, against the $8+22+c\cdot1.23$ that~\eqref{eq:H} allows: this forces $c\ge163.9$, and it rules out both $48$ and $96$. At the dyadic $\varepsilon=2^{-5}$ the same coset needs only $c\le24.8$, so the step lies between the dyadic accuracies that our scans sample. This is why we leave $c$ unspecified, and why the value $53.5$ below is not to be read as an estimate of it: a finite scan bounds $c$ from below, never from above. Two scans locate it. The first takes $25$ frames---identity, Clifford and Haar pairs, the exact axes of fourteen words of cost $\le3$, the $HT$ axis perturbed by $0.01$ to $0.1$ rad, and four angular offsets of the grid---at $\varepsilon\in\{2^{-3},\dots,2^{-6}\}$ and $\tau\le15$: $100$ cells, $1600$ triples. The second takes $272$ frames, among them all $219$ distinct rotation axes of words of cost $\le3$ and forty axes of cost-$4$ and cost-$5$ words, at $\varepsilon\in\{2^{-3},\dots,2^{-7}\}$ and $\tau\le14$: $1360$ cells, $20\,400$ triples. The maximum required $c$ is $53.5$ in both, and the second produces no new counterexample. Of its $17$ cells above $48$, sixteen are the orbit of the $HT$ axis under the Clifford group---the twelve directions of that axis together with four cost-$2$ and cost-$3$ words sharing it, all at $\varepsilon=2^{-5}$, $\tau=11$, with identical shell counts---and the seventeenth is the same axis perturbed by $0.003$ rad, at $49.5$. Everything further away is smaller: axes of other cheap words need at most $43.5$, a perturbation of $0.1$ rad restores the Haar value, and Haar frames give $23.0$--$25.4$ at every accuracy. The overshoot is also confined to one accuracy: the per-accuracy maxima are $29.5$, $24.8$, $\mathbf{53.5}$, $29.0$ and $11.0$ for $\varepsilon=2^{-3},\dots,2^{-7}$. In neither scan is the linear term $(8,2)$ ever breached below onset ($0$ of $1360$ cells). Profiling four further axes to $\tau=17$ at $\varepsilon=2^{-6}$ and $2^{-7}$ gives peaks between $12.6$ and $29.5$, so the $HT$ axis at $L=5$ is the only frame in the dyadic scan above twice the Haar value, and at $L=6$ and $7$ the same and neighbouring axes stay below $30$. That scan is nevertheless blind to the identity-coset step above, which is larger and sits at a non-dyadic $\varepsilon$: sampling accuracies on the dyadic ladder is what hides the worst steps. Not covered: pairs $(G_1,G_2)$ not of the form $(D,D^\dagger V)$, axes of words of cost $\ge6$, $L\ge8$, and non-dyadic accuracies other than the one exhibited. Computed by \texttt{scripts/conj1\_scan.py}, \texttt{scripts/scan2.py}, \texttt{scripts/ht\_profile.py} and \texttt{scripts/astra1.py} (the identity-coset step); \texttt{scripts/scan2\_report.py} recomputes the scan totals from the stored scan.

\paragraph{Exact optimal synthesis and the frontier (Fig.~\ref{fig:frontier}).} The same enumeration yields, for $L\le6$, the minimal $T$-count $\tau_{\min}(d)$ of a word within $\varepsilon$ of every grid rotation $R_z(2\pi d/Q)$; this is the exact optimum, independent of factoring. At $L=5$ ($Q=25$) the mean synthesis cost over the $24$ nonzero grid angles is $11.83$ and the largest is $16$ ($3L=15$); at $L=6$ ($Q=50$) they are $15.59$ and $18$ ($3L=18$). Fractional passthrough with canonical sharing ($m=64$, $r=2$) then has $S=q\log_2Q$ and $\Tpre=(m-q)\E_u\tau_{\min}(u)$, in which $u$ is uniform on all of $\Z_Q$ and the zero share is free, so the mean that sets the slope is $\E_u\tau_{\min}=11.36$ and $15.28$ --- the $d\ne0$ means above, times $(Q-1)/Q$ --- giving $\E\tau/\log_2K=2.48$ ($L=5$) and $2.72$ ($L=6$). Both are computed at word accuracy $\varepsilon$, so they calibrate a single rotation and are \emph{not} the $r=2$ point of Theorem~\ref{thm:frontier}, which certifies a committed share only at $\varepsilon/2$. Repeating the enumeration at that budget resolves every grid point, at $T$-count $18$ for $L=5$ and $22$ for $L=6$, and gives $\E_u\tau_{\min}=15.28$ and $18.72$---slopes $3.333$ and $3.334$, worst cases $18$ and $22$. The two budgets approach the asymptotic $3$ from opposite sides: the calibration at $\varepsilon$ rises through $2.478$, $2.721$ and $3.113$, the last at $\varepsilon=10^{-10}$ below, while the certified $\varepsilon/2$ family falls through $3.333$, $3.334$ and $3.205$. Fig.~\ref{fig:frontier}(a) plots both, and it is the second sequence that Theorem~\ref{thm:frontier} bounds and that may be compared with panel (b). Computed by \texttt{scripts/frontier\_eps\_half.py}, whose $\varepsilon$ column reproduces \texttt{data/taumin\_exact.json} grid point by grid point. A partial-commitment variant that commits the low $\ell$ bits of a share as a rotation and stores the rest---cost $\E_u\tau_{\min}(u\bmod2^\ell)$ per coordinate for $\ell$ bits shed---costs $4.8,3.2,2.9,2.7$ $T$ gates per bit shed for $\ell=1,\dots,4$ at $L=5$, and $8.5,6.1,4.6,3.7,3.0$ for $\ell=1,\dots,5$ at $L=6$: strictly worse than passthrough and worsening as $\ell$ decreases, which is the quantum of commitment seen at exact optimality. At $L\le6$ the additive terms swamp the main terms: at $L=6$ they are $m\log_236=5.17m$ for Theorem~\ref{thm:main1} and $mA_2=15.4m$ for Theorem~\ref{thm:rate2}, against $m\log_2K=5.6m$, so both bounds are vacuous, and so is Theorem~\ref{thm:main2}, whose additive term $\kappa(1+\log_2N_0)$ already exceeds $\kappa\log_2K$ at $L=6$. Fig.~\ref{fig:frontier}(a) therefore tests slopes, not intercepts.

\paragraph{Chemistry scale (Fig.~\ref{fig:frontier}(b)).} Exhaustive enumeration stops at $L\le6$, so at a chemically relevant accuracy we replace $\tau_{\min}$ by Ross--Selinger synthesis (\texttt{gridsynth}, whose $\varepsilon$ is the up-to-phase operator norm and so is exactly $\dproj$). At $\varepsilon=10^{-10}$ the tuned modulus is $Q_\varepsilon=7\,853\,981\,633$ and $\log_2K_\varepsilon=32.871$. Under canonical sharing the share $u$ is uniform on $\Z_Q$ and the frontier depends on it only through $\E_u\tau(u)$, so we sample rather than synthesize all $Q_\varepsilon$ rotations: $20\,000$ uniform $u$ give $\E_u\tau=102.32\pm0.02$ (s.d.\ $2.16$, range $88$--$113$) against $3L=99.66$, hence $\E\tau/\log_2K=3.113\pm0.001$; this is the single-rotation cost at accuracy $\varepsilon$, a calibration. The certified full-stream point of Theorem~\ref{thm:frontier} with $r=2$ synthesizes each committed share to $\varepsilon/2=5\times10^{-11}$; the same $20\,000$ angles synthesized at that accuracy give $\E_u\tau=105.34$, i.e.\ $3.205$ per bit, three gates more per share as the $3\log_2r$ of Ross--Selinger scaling predicts, and it is this point that Table~\ref{tab:numbers}, Table~\ref{tab:models} and Fig.~\ref{fig:frontier}(b) use. Both numbers exceed $3$ slightly, as expected --- $\tau_{\mathrm{RS}}=3L+O(\log L)$, and Corollary~\ref{cor:frontier} asserts only the limit. With $m=10^4$ and $r=2$ this is $\Tpre=1.053\times10^6$ committed $T$ gates at $S=0$; both axes are linear in $m$, so $m=10^5$ is the same line scaled tenfold. Partial commitment of the low $\ell$ bits, with the committed piece synthesized at $\varepsilon$, costs $61,46,35,28,22,19,16,14$ $T$ gates per bit shed for $\ell=1,\dots,8$ (at $\varepsilon/2$ each entry rises by about three per share): the quantum of commitment persists at scale, at between $4.5$ and $20$ times the calibration slope $3.113$ at the same accuracy, because in this model a rotation by any nonzero grid angle, however small, costs a full synthesis --- the small angles $2\pi k/Q$ with $0<k<2^\ell$ here average $110$ to $122$ $T$ gates, slightly \emph{above} the generic $102$. The one exception is the exact commitment of Remark~\ref{rem:firstbit}, which on the admissible grid $Q=8\lfloor Q_\varepsilon/8\rfloor$ sheds two bits per coordinate for no magic and a third for $\tfrac12$ a $T$ gate; those points, and not the $\ell=1$ cross, are the true left end of the family. That the fee is a property of the deterministic-unitary model and not of fault-tolerant synthesis in general is the subject of Sec.~\ref{sec:models}.

\paragraph{The bounds at chemistry scale (Table~\ref{tab:numbers}).} With $C_1,C_2$ explicit, the lower bounds can be evaluated rather than compared as rates. At $\varepsilon=10^{-10}$ and $S=0$, Theorem~\ref{thm:main1} gives $\bar T_t\ge21.75\,m$, Theorem~\ref{thm:rate2} gives $25.72\,m$ (with $A_2=20.0$ bits of the $32.9$ available), Theorem~\ref{thm:main2} gives $52.9\,m$, and passthrough achieves $105.34\,m$ at the certified share accuracy: the conditional bound is within a factor $2$ of achievability, the unconditional ones within a factor $4.1$ (Theorem~\ref{thm:rate2}) and $4.8$ (Theorem~\ref{thm:main1}). The rate-two bound pays its additive term twice, so it exceeds the rate-one bound only for $S<3.7\,m$ bits---$11\%$ of the range of $S$, at the memoryless end---and only because $\varepsilon$ is small enough to begin with: at $S=0$ the two agree at $L=28.7$, and below that accuracy the rate-one bound is the stronger of the two. Fig.~\ref{fig:frontier}(b) plots these bounds themselves, not their leading-order slopes: because the bounds of Theorems~\ref{thm:main1} and~\ref{thm:main2} contain a logarithm of $\bar T_t$, their finite-precision derivatives $-\partial\bar T_t/\partial S$ at $S=0$ are $0.94$ and $2.61$ rather than the leading coefficients $1$ and $\kappa=2.82$. All four numbers are per coordinate and per round, are computed by \texttt{scripts/thm\_numbers.py}, and scale linearly in $m$.

\begin{table}[t]
\centering\footnotesize
\caption{The bounds at $\varepsilon=10^{-10}$ ($L=33.22$, $Q_\varepsilon=7\,853\,981\,633$, $\log_2K=32.871$), per coordinate per round, at $S=0$. ``Coeff.'' is the leading coefficient of $-S$ in the bound (for the bounds of Theorems~\ref{thm:main1} and~\ref{thm:main2} the finite-precision derivative $-\partial\bar T_t/\partial S$ at $S=0$ is $0.94$ and $2.61$, because of their logarithmic terms); ``per bit'' is $\bar T_t/(m\log_2K)$ at $S=0$. The achievable row is the fractional-passthrough family with each committed share synthesized at $\varepsilon/2$, the accuracy Theorem~\ref{thm:frontier} requires for $r=2$; the single-rotation calibration at $\varepsilon$ is $102.32$. The Theorem~\ref{thm:main2} row is conditional on Conjecture~\ref{conj:H}, whose constant $c$ we do not fix; the range given is $c=256$ to $c=48$, two illustrative values spanning the largest requirement we know ($c\ge163.9$, Sec.~\ref{sec:num}). It is a sensitivity check, not an interval estimate.}
\label{tab:numbers}
\setlength{\tabcolsep}{4pt}\begin{tabular}{@{}l ccc@{}}
\toprule
& $\bar T_t/m$ & per bit & coeff.\\
\midrule
Thm.~\ref{thm:main1} (unconditional) & $21.75$ & $0.66$ & $1$\\
Thm.~\ref{thm:rate2} (given (R)) & $25.72$ & $0.78$ & $2$\\
Thm.~\ref{thm:main2} (under Conj.~\ref{conj:H}) & $51.3$--$52.9$ & $1.56$--$1.61$ & $\kappa=2.75$--$2.82$\\
achievable (Thm.~\ref{thm:frontier}) & $105.34$ & $3.20$ & $3.205$\\
\quad(\texttt{gridsynth} at $\varepsilon/2$) & & &\\
\bottomrule
\end{tabular}
\end{table}

\begin{figure*}[t]
\centering\includegraphics[width=0.92\textwidth]{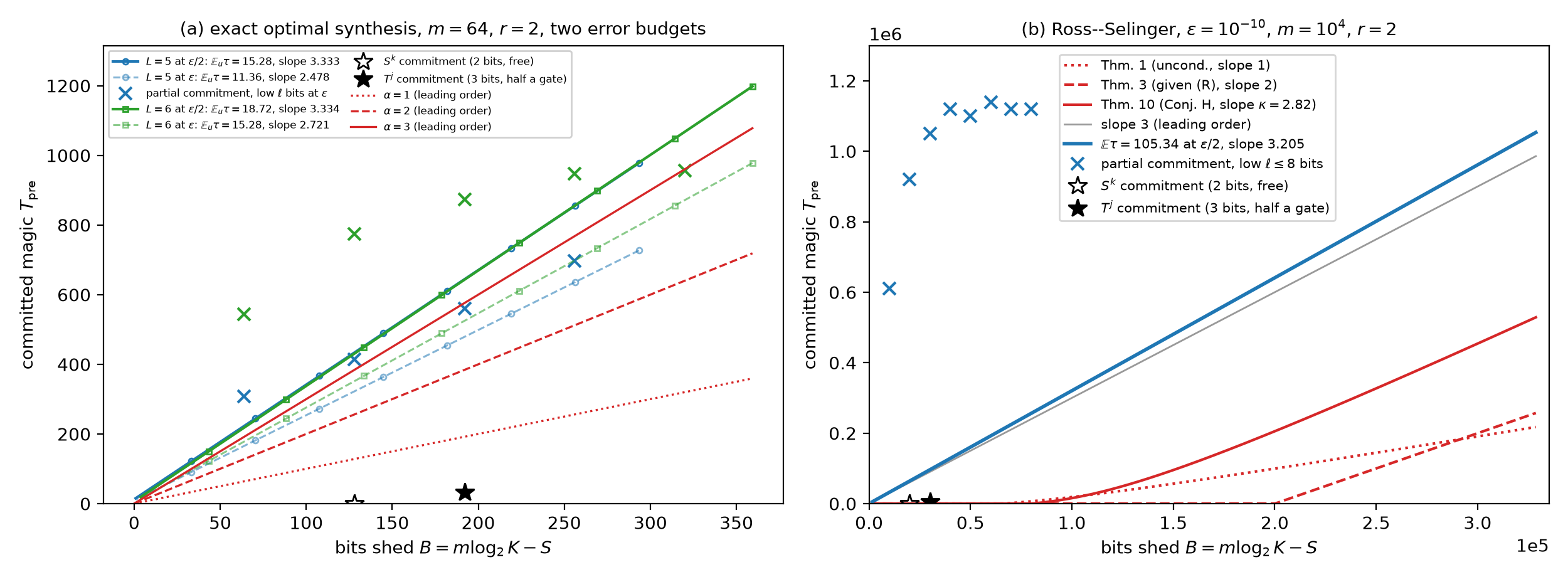}
\caption{The fractional-passthrough tradeoff family ($r=2$) against the lower bounds. (a) Exact minimal $T$-counts from the enumeration, $m=64$, $L=5,6$---an optimum no synthesis algorithm can beat---against the leading-order slopes $\alpha=1,2,3$; at $L\le6$ the additive terms swamp the main terms, all three bounds being vacuous, so only slopes can be compared there. Solid: the committed piece synthesized at $\varepsilon/2$, the accuracy Theorem~\ref{thm:frontier} requires for $r=2$, which is the family the theorem bounds. Dashed: the same optima at word accuracy $\varepsilon$, a single-rotation calibration and not a point of the family. The two budgets straddle $\alpha=3$---slopes $3.333$ and $3.334$ above it, $2.478$ and $2.721$ below---and the $\varepsilon/2$ slopes at $L=5$ and $6$ coincide to three decimals. (b) Ross--Selinger synthesis at $\varepsilon=10^{-10}$, $m=10^4$, from $20\,000$ sampled grid rotations, against those three bounds as stated, intercepts included (Table~\ref{tab:numbers}; the rate-three line uses the constants of Theorem~\ref{thm:main2} with $c=48$, intercept $52.9\,m$---at $c=256$ the line shifts down by $1.6\,m$, about the width of the stroke); the achievable line is the $\varepsilon/2$ point $\E\tau=105.34$; the grey line is the leading-order slope $3$, for comparison. The rate-two bound (dashed) overtakes the rate-one bound (dotted) only near the right-hand end, $S<3.7\,m$ bits. Crosses: partial commitment of the low $\ell$ bits of each share, committed piece synthesized at $\varepsilon$ in both panels. Star: the exact commitment of Remark~\ref{rem:firstbit} on the grid $Q=8\lfloor Q_\varepsilon/8\rfloor$, three bits per coordinate for half a $T$ gate (the first two of them free). Its abscissa is $2-\log_2\frac{Q}{Q-1}$ and $3-\log_2\frac{Q}{Q-1}$ rather than exactly $2$ and $3$ ($2.9386$ for the third bit at $Q=24$); points taken on different moduli are all plotted as $\log_2(Q-1)-S/m$.}
\label{fig:frontier}
\end{figure*}

\section{Model dependence}\label{sec:models}

\begin{table*}[t]
\centering\small
\caption{Exchange rate $\alpha$ (committed $T$ gates per bit of memory forgone) in several synthesis models.}
\label{tab:models}
\begin{tabular}{@{}p{3.6cm} p{5.2cm} p{4.6cm} p{3.4cm}@{}}
\toprule
\raggedright Model & \raggedright Rigorous lower bound & \raggedright Achievable & \raggedright Granularity\tabularnewline
\midrule
\raggedright CW ancilla-free Clifford+$T$ & \raggedright $1$ (Thm.~\ref{thm:main1}); $2$ with explicit constants (Thm.~\ref{thm:rate2}, via the theorem (R)); $\tfrac{11}5$ asymptotically and without (R) (Thm.~\ref{thm:rate115}); $\tfrac{17}7$ by a height dichotomy (Thm.~\ref{thm:rate177}) and every rate $<\tfrac52$ with rational projections (Thm.~\ref{thm:rate52}), asymptotically and without (R); $3-o(1)$ for rotation-segmented processes (Thm.~\ref{thm:rate3seg}); $1+\beta\to3$ under Conj.~\ref{conj:H}, any $c$ (Thm.~\ref{thm:main2}) & \raggedright $\to3$ (Thm.~\ref{thm:frontier}, under the synthesis hypothesis of Cor.~\ref{cor:frontier}); $3.205$ by \texttt{gridsynth} at the certified share accuracy $\varepsilon/2$, $\varepsilon=10^{-10}$; exact $3.333$ ($L=5$) and $3.334$ ($L=6$) at $\varepsilon/2$, against calibrations $2.478$ and $2.721$ at $\varepsilon$ (Sec.~\ref{sec:num}) & \raggedright one rotation (fee $2L-\log_2c-O(1)$) under Conj.~\ref{conj:H}, past the first $O(\log L)$ bits of a coordinate (Rem.~\ref{rem:firstbit}, Prop.~\ref{prop:cluster})\tabularnewline
\raggedright CW $+$ probabilistic mixing (Def.~\ref{def:mixed}; \cite{Campbell17,Hastings17,KLMPP,Bothe}) & \raggedright $0$ on the low bits; $1-\gamma$ per coarse bit \emph{in total}, beyond a $\log_236+O(\log L)$ allowance (Thm.~\ref{thm:mixing}(a)); $2(1-\gamma)$ given (R) (Thm.~\ref{thm:mixing}(b)); $1+\beta\to3$ under Conj.~\ref{conj:H}, any $c$ (Cor.~\ref{cor:mixq}); floor $L-O(\log L)$ per share, $\tfrac32L$ under Conj.~\ref{conj:H}. All vacuous at $\varepsilon=10^{-10}$ except Thm.~\ref{thm:mixing}(a) & \raggedright $O(1)$ on the low $\tfrac12L-\tfrac12\log_2L$ bits; $\tfrac32L$ for a share that commits more, under the mixed-synthesis hypothesis~\eqref{eq:mixhyp}---four-word mixing \cite[Thm.~2]{Campbell17} at word accuracy $\sqrt{\varepsilon/(5r)}$, estimated $57.5$ at $\varepsilon=10^{-10}$, $r=2$ by RS scaling and not certified; $1.52\log_2(1/\delta)-0.01=50.5$ $T$ per share at $\delta=10^{-10}$ \cite{KLMPP,Bothe} & \raggedright bits below $\tfrac12L-\tfrac12\log_2L-O(1)$ free (Cor.~\ref{cor:mixstop}(i), \cite{Bothe}); bits at $\tfrac12L-O(1)$ cheap for no program (Lem.~\ref{lem:nocheap}); the top $\log_2\gcd(Q,8)$ bits free at any position; fee $L-O(1)$ under Conj.~\ref{conj:H} (Cor.~\ref{cor:mixq})\tabularnewline
\raggedright CW $+$ RUS/fallback on $n$ qubits (measurement-adaptive, unitary branches; Prop.~\ref{prop:adaptive}) & \raggedright $(1-\gamma)/(2n+2)$ per high bit, in $T$ gates plus measurements; floor $L/(4n+4)$ per share & \raggedright $\tau\approx1.15L$ \cite{BRS}; $0.53\log_2(1/\delta)+4.86=22.5$ at $\delta=10^{-10}$ with mixed fallback \cite{KLMPP,Bothe} & \raggedright open: the token count of Prop.~\ref{prop:adaptive} does not see the fee\tabularnewline
\raggedright Quasi-probabilistic mixtures \cite{Bothe} & \raggedright not covered: a signed mixture is not a channel, and the resource is sample overhead & \raggedright $\tilde O(\theta^2/(\lambda-1))$ at small angles \cite{Bothe} & \raggedright open\tabularnewline
\raggedright Phase gradient + adder, $Q=2^b$ & \raggedright none for general prefixes & \raggedright $\approx4$ per coordinate, unbatched~\cite{Gidney}: commit the high $b-\ell$ bits of a share into the top gradient bits, remember the low $\ell$ bits, one $b$-bit addition in the suffix; batched, see next row & \raggedright one bit, no fee: the catalytic gradient state pays the fee once\tabularnewline
\raggedright General $n$-qubit Clifford+$T$, clean ancillas & \raggedright $(m\log_2K-S-\log_2|\mathcal C_n|-O(n))/(2n+1)$ from $\#\le|\mathcal C_n|(2\cdot4^n)^t$; degenerate for $n\sim m$ & \raggedright $O(1/\log L)$ per bit with a phase-gradient catalyst (Prop.~\ref{prop:batch}); $80$ (total error) and $64$ (coordinatewise) $T$ per share at $\varepsilon=10^{-10}$, $m=10^4$, $r=2$ & \raggedright no constant-rate law\tabularnewline
\raggedright $Q\mid8$ & \raggedright vacuous: $\log_2K\le\log_27<\log_236$, so Theorem~\ref{thm:main1} asserts nothing & \raggedright $\le1$ $T$ gate per share: every grid rotation is a power of $T$, exactly & \raggedright one bit, no fee. This case is \emph{not} excluded by $\varepsilon<\sin(\pi/2Q)$, which at $Q=8$ permits every $\varepsilon\le1/8$; it is the $c_0$ of Conj.~\ref{conj:H} filling the whole grid\tabularnewline
\bottomrule
\end{tabular}
\end{table*}

Table~\ref{tab:models} summarizes what changes with the synthesis model. Three entries deserve comment. In the phase-gradient model the exchange is bit-granular: adding the high bits of a share into the top bits of the gradient register is an exact operation whose cost is linear in the number of bits, so the fee of the CW model disappears; the catalytic gradient state is, in this sense, a device that has paid the fee once for all rotations. In the mixed-channel models now used in resource estimation the fee also disappears, but only at small angles and for a different reason: a rotation by $\theta$ may be realized as a probabilistic or quasi-probabilistic mixture of Clifford+$T$ channels in which the identity, at zero cost, carries most of the weight, giving an average cost $\tilde O(\theta^2/\delta)$ instead of $O(\log(1/\delta))$~\cite{Bothe}. The low bits of a share are exactly the small angles, so a mixing implementation can shed them cheaply, and the quantization of Theorem~\ref{thm:main2} does not survive in that range. The saving stops there and not later: Sec.~\ref{sec:mixing} extends the counting argument to channels and shows that mixing rescales the law by one half rather than removing it, the free bits being exactly the ones the rescaled bound does not charge. In the fully general $n$-qubit model even the premise fails: a $T$ gate on $n$ qubits can be any of $2\cdot4^n$ Pauli rotations, and with ancillas a tensor product of many single-qubit rotations can be synthesized for far less than the sum of their costs, the optimal $T$-count for an arbitrary diagonal $n$-qubit unitary being $\Theta(\sqrt{2^n\log(1/\varepsilon)}+\log(1/\varepsilon))$~\cite{GKW}. No constant-rate law survives ancilla-assisted synthesis across coordinates.

Two cases have to be separated. If ancilla qubits may survive a cut, a share can be copied into them with $X$ gates, which are Clifford, and read back in round $r$; then $\Tpre=0$ at $S=0$. Such qubits are memory and must be charged as such. The meaningful model therefore has \emph{clean} ancillas, prepared in $|0\rangle$ and returned to $|0\rangle$ within a round, and possibly a catalytic register that does not depend on the stream.

\begin{proposition}[Batched commitment]\label{prop:batch}
Allow in every round clean ancillas, the measurement-based uncomputation of~\cite{Gidney}, and a catalytic phase-gradient register of $b:=\lceil L+\log_2(2\pi mr)\rceil$ qubits. Then for every $g\ge1$ there is a one-pass process with total projective error $\le\varepsilon$ on every valid stream, $S=O(\log(mQr))$, and
\[
T_t\ \le\ \frac mg\big(8(2^g-1)+4(b-1)\big)\qquad(t<r).
\]
With $2^g\asymp L/\log L$ this is $T_t\le(4+o(1))\,mL/\log_2L$ for $\log(mr)=o(L)$, so the exchange rate is $O(1/\log L)=O(1/\log\log(1/\varepsilon))$.
\end{proposition}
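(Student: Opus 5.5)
We build the process directly, adapting the phase-gradient construction of~\cite{Gidney} and amortizing its table lookup over $g$ coordinates. Split the $m$ coordinates into $m/g$ groups of $g$. Truncate every share to a $b$-bit dyadic approximation of the angle $\Delta a_j^{(t)}$. Rounding each of the $mr$ shares to $b$ bits costs phase error at most $2\pi2^{-b}\le\varepsilon/(mr)$. By the right inequality of Lemma~\ref{lem:loc} and unitary invariance, the total projective error is then at most $\varepsilon$. The snapshot keeps only loop counters and the round index, so $S=O(\log(mQr))$. The catalytic register holds $|\psi_b\rangle=\sum_y e^{-2\pi iy/2^b}|y\rangle/\sqrt{2^b}$. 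It is prepared once and is independent of the stream, so it is neither memory nor committed magic beyond a one-time cost outside the per-round count.

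\textbf{Per-group step.} Within a round, handle one group as follows. The $g$ logical qubits $V^\dagger$-mapped to $Z_{j_1},\dots,Z_{j_g}$ form an address register $|z\rangle$, $z\in\{0,1\}^g$. The desired diagonal phase is $\exp(-i\sum_k\Delta a_{j_k}(-1)^{z_k}/2)$. Up to a global phase this is $e^{-2\pi i\,\phi(z)}$ with $\phi(z)=\sum_k z_k\,\Delta a_{j_k}/2\pi$, a classical function known to the compiler in this round.

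\textbf{Lookup, addition, cleanup.} A QROM lookup~\cite{Gidney} writes the $b$-bit truncation $\phi_b(z)$ into a clean $b$-qubit ancilla register. Its cost is $4(2^g-1)$ $T$ gates with unary iteration, and the data are Clifford-controlled $X$s since the table is classical. Next, one $b$-bit addition of that register into the phase-gradient register costs $4(b-1)$ $T$ gates using Gidney's adder with measurement-based uncomputation. This kicks back $e^{-2\pi i\phi_b(z)}$ exactly. Finally, uncompute the lookup register. With measurement-based unlookup this costs at most another $4(2^g-1)$ and is typically much less, which is why the bound carries the factor $8(2^g-1)$. The ancillas return to $|0\rangle$ and the gradient register is unchanged. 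Summing over the $m/g$ groups gives the stated $T_t$.

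Correctness on every valid stream follows because each round's emitted operation is a diagonal unitary implementing the rounded shares exactly up to phase. The aggregate of the applied rounded rotations is the rounded sum, within $\varepsilon$ of $U_x$. The final round is treated the same way, but it does not count toward $\Tpre$. Then choose $2^g\asymp L/\log L$. The lookup term $8m(2^g-1)/g$ is $O(mL/\log^2L)$, while $4m(b-1)/g$ with $g=\log_2L-\log_2\log_2L+O(1)$ and $b=L+O(\log(mr))=L(1+o(1))$ gives $(4+o(1))mL/\log_2L$. Dividing by the $m\log_2K\approx mL$ bits forgone yields rate $O(1/\log L)$.

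\textbf{Main obstacle.} The step needing most care is the cost accounting of measurement-based uncomputation. Both the unlookup and the adder's carry uncomputation use $X$-basis measurements followed by Clifford phase fixups, and these fixups can be conditioned on measurement outcomes, in the form of a lookup of a phase table. We must check that this contributes no more than $4(2^g-1)$ further $T$ gates and that it leaves no residual stream-dependent phase. The second point is that the fixup phase tables are again classical functions of the current shares, so they are handled by the same unary iteration. I would verify it by writing the fixup as a controlled-$Z$ table lookup on $g$ address bits and citing~\cite{Gidney} for its cost.
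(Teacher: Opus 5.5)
Your construction is correct and is essentially the paper's: blocks of $g$ coordinates, a unary-iteration lookup of the rounded block phase table into a clean $b$-qubit register, one addition into the phase-gradient register, and uncomputation, with the rounding error summed over blocks and rounds (the paper bounds the error per block in operator norm, you per share via Lemma~\ref{lem:loc}). Your ``main obstacle'' does not arise, because the paper simply uncomputes by re-running the lookup at the same $4(2^g-1)$ cost. Separately, two of your sign conventions are flipped but cancel: the $R_z$ phase is $e^{+\mathrm i\theta z}$ up to global phase, and adding into your gradient state kicks back $e^{+2\pi\mathrm i k/2^b}$, so the circuit as described implements the right phase.
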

\begin{proof}
Split the coordinates into blocks of $g$. Up to a global phase, the round-$t$ rotation of a block acts on its basis states $|x\rangle$, $x\in\{0,1\}^g$, as $e^{\mathrm i\Delta\phi(x)}$ with $\phi(x)=\sum_{j}a^{(t)}_jx_j$; here we conjugate by the public Clifford $V$ as in Sec.~\ref{sec:setting}. Let $\tilde y_j$ be $2^ba^{(t)}_j/Q$ rounded to an integer, and $y(x):=\sum_j\tilde y_jx_j\bmod2^b$. Then $|2\pi y(x)/2^b-\Delta\phi(x)|\le\pi g2^{-b}$.

The process applies three steps.
\begin{enumerate}\setlength\itemsep{0pt}
\item A unary-iteration lookup~\cite{Babbush} writes $y(x)$ into a clean $b$-qubit register with $4(2^g-1)$ $T$ gates. The data are written by CNOTs, so the cost does not depend on $b$.
\item An addition of that register into the phase-gradient register~\cite{Gidney} costs $4(b-1)$ $T$ gates and applies the phase $e^{2\pi\mathrm iy(x)/2^b}$.
\item The lookup is uncomputed at the same cost.
\end{enumerate}
All ancillas return to $|0\rangle$ and the gradient register is unchanged. The error per block is $\le\pi g2^{-b}$ in operator norm, and over $m/g$ blocks and $r$ rounds it is $\le\pi mr2^{-b}\le\varepsilon/2$. Every share is committed on arrival, so the snapshot holds only a program counter. Blocks number $\lceil m/g\rceil$.
\end{proof}
Proposition~\ref{prop:batch} is an instance of known ancilla-assisted diagonal synthesis. $\bigotimes_{j\le g}R_z$ is a $g$-qubit diagonal, of $T$-count $O(\sqrt{2^g\log(1/\varepsilon)}+\log(1/\varepsilon))$~\cite{LKS,GKW}, hence $O(L/\log L)$ per rotation at $2^g\asymp L$. Phase-gradient addition of looked-up angles is standard~\cite{Gidney,Sanders}. What breaks the law is batching across coordinates: with $g=1$ the construction costs $4b+4\approx4L$ per share. The gradient register is prepared once, before round $1$, with $O(b\log(b/\varepsilon))$ $T$ gates; since it is returned unchanged, its preparation error enters once, not per use. The asymptotics are slow: at $L=50$ the cost is still about $2.8$ times $4L/\log_2L$, and at $\varepsilon=10^{-10}$ batching saves about $40\%$, not an order of magnitude.
At $\varepsilon=10^{-10}$, $m=10^4$ and $r=2$ the best block size is $g=4$. The cost is $80$ $T$ gates per share at total error $\varepsilon$ ($b=51$). Since $e^{2\pi\mathrm iy(x)/2^b}=\prod_je^{2\pi\mathrm i\tilde y_jx_j/2^b}$ is a product of single-qubit phases, a coordinate accumulates $\dproj\le\pi r2^{-b-1}$. So under the coordinatewise criterion, for which passthrough is certified, $b=\lceil L+\log_2(\pi r/2)\rceil=35$ suffices, and the cost is $64$ ($1.95$ per bit) against the $105.34$ of ancilla-free passthrough. Asymptotically the ratio tends to zero like $1/\log L$. The lower-bound side is open: counting circuits on $n$ qubits gives only the degenerate bound of Table~\ref{tab:models}, and magic monotones bound the magic of the final unitary, which is $\Theta(m)$, not the magic committed before the information arrived~\cite{BCHK,GKW}. The constant-rate law of this paper is therefore specific to coordinatewise synthesis, with ancillas bounded per coordinate as in Proposition~\ref{prop:adaptive}, and that is where the lower bounds above live.

\subsection{The law under mixing}\label{sec:mixing}

The counting argument of Secs.~\ref{sec:tube}--\ref{sec:conj} concerns deterministic words. Mixed-channel synthesis~\cite{KLMPP,Bothe} and the mixing constructions of Campbell~\cite{Campbell17} and Hastings~\cite{Hastings17} replace the committed word by a program that is sampled at run time; the channel implemented is then closer to the target than any word in its support, and it is this gap that lets a small angle be shed for almost nothing~\cite{Bothe}. We show that the gap is exactly a square root: what mixing buys is that a committed word need only be accurate to $\sqrt\varepsilon$, so that a share is charged for the top half of its bits instead of all of them, and the rate at which those bits are charged is unchanged.

\setcounter{definition}{2}
\begin{definition}[Mixed-unitary CW process]\label{def:mixed}
A \emph{mixed-unitary CW process} is a charged one-pass process (Definition~\ref{def:model}) whose output tape holds, for each coordinate $j\in[m]$ and round $t\in[r]$, a \emph{program} $\pi^{(t)}_j$: a finitely supported probability distribution on single-qubit Clifford+$T$ words. At execution each program is sampled once, independently of every other program and of the tape $\rho$, and the sampled word is applied. The operation implemented on coordinate $j$ is the channel
\begin{equation}
\mathcal E_j=\mathcal E^{(r)}_j\circ\cdots\circ\mathcal E^{(1)}_j,\quad
\mathcal E^{(t)}_j:=\sum_{W}\pi^{(t)}_j(W)\,\mathcal U_W,
\end{equation}
and the process is \emph{correct} on a stream if $\|\mathcal E_j-\mathcal U_{R_z(\Delta x_j)}\|_\diamond\le2\varepsilon$ for every $j$. The charge $S$ is as in Definition~\ref{def:model}. The committed magic is the expected number of $T$ gates \emph{executed} while reading round $t$,
\begin{equation}
T_t:=\sum_j\E_{W\sim\pi^{(t)}_j}t(W),\qquad\Tpre:=\sum_{t<r}T_t,
\end{equation}
with $t(W)$ the number of $T$ records of the word as written. Finally $n_f$ denotes the largest, over $j$ and $t<r$, of $\log_2N^{<t}_j+\log_2N^{>t}_j$, where $N^{<t}_j$ and $N^{>t}_j$ are the numbers of distinct words in the supports of the prefix programs $\pi^{(1)}_j,\dots,\pi^{(t-1)}_j$ and the suffix programs $\pi^{(t+1)}_j,\dots,\pi^{(r)}_j$. It is a parameter of the process that Definition~\ref{def:mixed} does not bound; the results below do not use it, and it appears only in the sharper bounded-support variant of Remark~\ref{rem:mixconst}.
\end{definition}
\setcounter{definition}{0}

Three remarks on the definition. (i) The criterion contains the CW criterion: for unitaries $\|\mathcal U-\mathcal V\|_\diamond=2\sin(\Theta/2)=2\dproj(U,V)\sqrt{1-\dproj^2/4}\le2\dproj(U,V)$ with $\Theta$ as in Lemma~\ref{lem:dproj} \cite[Thm.~3.55]{Watrous}, so a deterministic CW process correct to $\dproj\le\varepsilon$ is a mixed-unitary process correct in the above sense, with one-point programs and $n_f=0$. (ii) Independence of the samples across rounds is not a restriction but the statement of the model: a seed shared by the executor across the cut is side information in the sense of Theorem~\ref{thm:side}, and Corollary~\ref{cor:side}(b)--(c) prices it. (iii) Measurement-adaptive protocols (repeat-until-success, fallback~\cite{BRS,KLMPP}) fall outside Definition~\ref{def:mixed}, because the operation realized on a branch is not an ancilla-free Clifford+$T$ word; Proposition~\ref{prop:adaptive} extends the bound to them at a weaker rate. Quasi-probabilistic mixtures~\cite{Bothe} remain outside: a signed mixture is not a channel, its correctness is a statement about an estimator, and its resource is sample overhead.

The two elementary facts we need are the following; proofs are in Appendix~\ref{app:mixing}.

\begin{lemma}[Fidelity from diamond error]\label{lem:fid}
For a qubit channel $\mathcal E$ and a unitary $V$ let $F_e(\mathcal E,V):=\langle\Phi_V|J(\mathcal E)|\Phi_V\rangle$ be the entanglement fidelity, $J$ the normalized Choi state and $|\Phi_V\rangle=(V\otimes I)|\Phi\rangle$. Then $1-F_e(\mathcal E,V)\le\tfrac12\|\mathcal E-\mathcal U_V\|_\diamond$, and for a mixed-unitary channel $\mathcal E=\sum_ip_i\mathcal U_{X_i}$ one has $F_e(\mathcal E,V)=\sum_ip_i|\tr(V^\dagger X_i)/2|^2$.
\end{lemma}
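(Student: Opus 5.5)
The plan is to prove the two claims of Lemma~\ref{lem:fid} separately. Both follow from writing everything in terms of the normalized Choi state. Fix $|\Phi\rangle=\tfrac1{\sqrt2}(|00\rangle+|11\rangle)$, so that $J(\mathcal E)=(\mathcal E\otimes\mathrm{id})(|\Phi\rangle\langle\Phi|)$. Then $J(\mathcal U_V)=|\Phi_V\rangle\langle\Phi_V|$ is a pure state.

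\emph{The mixed-unitary formula.} We prove it first because it is a direct computation. By linearity,
\[
F_e(\mathcal E,V)=\sum_ip_i\,|\langle\Phi_V|\Phi_{X_i}\rangle|^2 .
\]
The identity $\langle\Phi|(A\otimes I)|\Phi\rangle=\tr(A)/2$ gives $\langle\Phi_V|\Phi_{X_i}\rangle=\tr(V^\dagger X_i)/2$, which is the stated expression. Since only moduli enter, it does not depend on the phases of $V$ or $X_i$, as it must not.

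\emph{The fidelity bound.} The diamond norm dominates the trace distance of the outputs on the maximally entangled input. Hence
\[
\|\mathcal E-\mathcal U_V\|_\diamond\ \ge\ \big\|J(\mathcal E)-|\Phi_V\rangle\langle\Phi_V|\big\|_1 .
\]
It therefore suffices to show that, for any state $\rho$ and pure state $\psi$,
\[
\|\rho-|\psi\rangle\langle\psi|\|_1\ \ge\ 2\big(1-\langle\psi|\rho|\psi\rangle\big).
\]
For this, take the two-outcome measurement $\{P,I-P\}$ with $P=|\psi\rangle\langle\psi|$. The variational characterization $\|\rho-\sigma\|_1=2\max_{0\le P\le I}\tr P(\sigma-\rho)$ then gives
\[
\|\rho-P\|_1\ \ge\ 2\big(\tr P\cdot P-\tr P\rho\big)=2\big(1-\langle\psi|\rho|\psi\rangle\big).
\]
Dividing by $2$ yields $1-F_e\le\tfrac12\|\mathcal E-\mathcal U_V\|_\diamond$.

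Neither step is a real obstacle. The only points needing care are the normalization conventions: the diamond norm with trace norm $\|\cdot\|_1$, and the factor $\tfrac12$ in $J$, which is what makes the trace $\tr(V^\dagger X)/2$ appear. I would state the one-line variational bound explicitly rather than cite Fuchs--van de Graaf, since the latter would only give the weaker, square-rooted form.
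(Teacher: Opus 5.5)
Your proof is correct and is essentially the paper's. You restrict the diamond norm to the maximally entangled input, then bound the trace distance to the pure Choi state $|\Phi_V\rangle\langle\Phi_V|$ from below with the variational formula, using the projector onto $|\Phi_V\rangle$ as test operator. The paper instead applies the complementary test $I-|\Phi_V\rangle\langle\Phi_V|$ to $\rho-|\Phi_V\rangle\langle\Phi_V|$, which gives the same quantity because the difference is traceless. The mixed-unitary formula then follows, as in the paper, from $\langle\Phi|(A\otimes I)|\Phi\rangle=\tr(A)/2$.
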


\begin{lemma}[Concentration: window and tube]\label{lem:conc}
Let $X_1,\dots,X_N\in\SU$, let $p$ be a distribution, $V\in\SU$, $f_i:=|\tr(V^\dagger X_i)/2|^2$, and suppose $\sum_ip_if_i\ge1-\eta^2$ with $\eta\le\tfrac12$.

(i) \emph{(window)} For every $\psi\in[-\pi,\pi]$,
\begin{equation}\label{eq:window}
\sum_ip_i\Big|\tfrac12\tr\big((VR_z(\psi))^\dagger X_i\big)\Big|^2\le\cos^2\tfrac\psi2+\eta^2+\eta|\sin\psi|,
\end{equation}
and the right side is $<1-\eta^2$ whenever $\sin(|\psi|/2)>(1+\sqrt3)\eta$.

(ii) \emph{(tube)} If $f_i\ge1-\eta^2$ then $\dproj(X_i,V)\le\sqrt2\,\eta$.
\end{lemma}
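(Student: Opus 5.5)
The plan is to work directly in the quaternion parametrization of $\SU$, where every quantity in the statement becomes an explicit quadratic expression in four real coordinates. First, write $V^\dagger X_i=q_0^{(i)}I+\mathrm i\,(q_1^{(i)}X+q_2^{(i)}Y+q_3^{(i)}Z)$ with $q^{(i)}\in\R^4$ a unit vector. Then $\tfrac12\tr(V^\dagger X_i)=q_0^{(i)}$ is real and $f_i=(q_0^{(i)})^2$. The hypothesis reads $\sum_ip_i(q_0^{(i)})^2\ge1-\eta^2$, and hence
\[
\sum_ip_i\big(1-(q_0^{(i)})^2\big)\le\eta^2 .
\]
Since $(q_3^{(i)})^2\le1-(q_0^{(i)})^2$, this gives the key second-moment bound $\sum_ip_i(q_3^{(i)})^2\le\eta^2$.

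For (i), the first step is to multiply out $(VR_z(\psi))^\dagger X_i=R_z(-\psi)V^\dagger X_i$ with $R_z(-\psi)=\cos\tfrac\psi2\,I+\mathrm i\sin\tfrac\psi2\,Z$. Taking half the trace, only the $I\cdot I$ and $Z\cdot Z$ terms survive, and this gives
\[
\tfrac12\tr\big((VR_z(\psi))^\dagger X_i\big)=\cos\tfrac\psi2\,q_0^{(i)}-\sin\tfrac\psi2\,q_3^{(i)} .
\]
Squaring yields
\[
\cos^2\tfrac\psi2\,q_0^2+\sin^2\tfrac\psi2\,q_3^2-\sin\psi\,q_0q_3 .
\]
Next I average over $p$ and bound the three terms separately:
\begin{itemize}
\item the first by $\cos^2\tfrac\psi2$, using $\sum_ip_iq_0^2\le1$;
\item the second by $\eta^2$, using the moment bound above and $\sin^2\le1$;
\item the cross term by $|\sin\psi|\sum_ip_i|q_0q_3|\le|\sin\psi|\big(\sum_ip_iq_3^2\big)^{1/2}\le\eta|\sin\psi|$, by Cauchy--Schwarz with $|q_0|\le1$.
\end{itemize}
This is \eqref{eq:window}.

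For the strictness claim in (i), put $s:=\sin(|\psi|/2)$ and write $\cos^2\tfrac\psi2=1-s^2$ and $|\sin\psi|=2s\cos\tfrac\psi2\le2s$. The right side of \eqref{eq:window} is then at most $1-s^2+\eta^2+2\eta s$. This is $<1-\eta^2$ as soon as $s^2-2\eta s-2\eta^2>0$. The positive root of that quadratic in $s$ is $(1+\sqrt3)\eta$, which gives the stated threshold.

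For (ii), I use the closed form of Lemma~\ref{lem:loc}, $\dproj(X_i,V)=\sqrt{2-|\tr(V^\dagger X_i)|}=\sqrt{2-2|q_0^{(i)}|}$. The hypothesis gives $|q_0^{(i)}|\ge\sqrt{1-\eta^2}\ge1-\eta^2$, so $2-2|q_0^{(i)}|\le2\eta^2$ and $\dproj(X_i,V)\le\sqrt2\,\eta$.

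There is no real obstacle. The only points needing care are keeping the sign conventions in the quaternion product straight, so that the cross term comes out as $\sin\psi\,q_0q_3$ and no other components enter, and noting that the bound $\eta\le\tfrac12$ is not even needed for these estimates beyond $\eta\le1$.
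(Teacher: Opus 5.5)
Your proof is correct and takes essentially the paper's route: a direct computation in the unit-quaternion parametrization of $V^\dagger X_i$, with Cauchy--Schwarz bounding the cross term by $\eta$, and the same quadratic $x^2-2\eta x-2\eta^2>0$ giving the threshold $(1+\sqrt3)\eta$. The differences are cosmetic. You expand the square in the Cartesian coordinates $(q_0,q_3)$, whereas the paper first bounds the modulus pointwise by $\cos\tfrac{|\psi|-\phi_i}{2}$ and then applies a trigonometric identity. For (ii) you use the trace closed form of Lemma~\ref{lem:loc} together with $\sqrt{1-\eta^2}\ge1-\eta^2$, where the paper writes $2\sin(\phi_i/4)=\sin(\phi_i/2)/\cos(\phi_i/4)$.
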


\begin{theorem}[The law under mixing]\label{thm:mixing}
Let the process be mixed-unitary (Definition~\ref{def:mixed}), correct with probability $\ge1-\delta$ over $\rho$ on every valid stream, and fix $\gamma\in(0,1)$; put $\eta:=\sqrt{\varepsilon/\gamma}$,
\[
\begin{gathered}
w:=\Big\lfloor\tfrac Q\pi\arcsin\big((1+\sqrt3)\eta\big)\Big\rfloor,\qquad
\varepsilon':=36\sqrt2\,\eta,\\
w':=\Big\lfloor\tfrac{2Q}\pi\arcsin\varepsilon'\Big\rfloor,
\end{gathered}
\]
$k_\gamma:=\log_2K-\log_2(2w+1)$ and $k'_\gamma:=\log_2K-\log_2(2w'+1)-1$, the \emph{coarse bits} of a coordinate. Then for every round $t<r$:

(a) \emph{(rate one, unconditional)} If $\gamma\ge4(1+\sqrt3)^2\varepsilon$, so that $(1+\sqrt3)\eta\le\tfrac12$,
\begin{multline}\label{eq:mixone}
S+\frac{\bar T_t}{1-\gamma}+m\log_236+\rho_m\!\Big(\frac{\bar T_t}{1-\gamma}\Big)\\ \ge\ (1-\delta)\,m\,k_\gamma-h_2(\delta).
\end{multline}

(b) \emph{(rate two, given (R))} If $\varepsilon'\le\tfrac18$, then with $K':=\lfloor Q/(2w'+1)\rfloor$, $\tau_*':=\lceil2\log_2K'\rceil$, $n_{\varepsilon'}(\tau):=C_12^\tau\varepsilon'^2+C_2(\tau+1)2^{\tau/2}$ and $A_2':=1+\log_2\sum_{\tau\le\tau_*'}n_{\varepsilon'}(\tau)2^{-\tau/2}$,
\begin{multline}\label{eq:mixtwo}
\bar T_t\ \ge\ 2(1-\gamma)\big[(1-\delta)\,m\,(k'_\gamma-1)-\delta\,m\log_2K'\\-2m\,h_2(\delta)-S-m\,A_2'\big].
\end{multline}
No assumption is made on the size of the supports of the programs: the prefix and suffix programs of a coordinate may mix over arbitrarily many words.
\end{theorem}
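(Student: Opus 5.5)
The plan is to reduce the mixed case to the deterministic counting argument of Theorems~\ref{thm:main1} and~\ref{thm:rate2}. Each program is replaced by one representative word, selected by the fidelity concentration of Lemma~\ref{lem:conc}. The price is that the representative only identifies the share to within a window of width about $2w+1$ grid steps, which is why only the coarse bits $k_\gamma$ are charged.

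First, fix $t<r$, the context $Z=(\mathrm{Past},\mathrm{Fut},\rho)$ and the snapshot $\sigma_t$, and look at one coordinate $j$. The overall channel is within $2\varepsilon$ in diamond norm of $\mathcal U_{R_z(\Delta x_j)}$. By Lemma~\ref{lem:fid}, its entanglement fidelity is at least $1-\varepsilon$. That channel is the convex combination, over prefix word $P$, committed word $W$ and suffix word $Q'$, of $\mathcal U_{PWQ'}$ (all mixed-unitary), so $\sum p(P)\pi(W)p(Q')|\tr(R^\dagger PWQ')/2|^2\ge1-\varepsilon$ with $R=R_z(\Delta x_j)$. Set $f(W):=\E_{P,Q'}|\tr(R^\dagger PWQ')/2|^2$. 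By Markov's inequality applied to $1-f$, the program mass on words with $1-f(W)\le\varepsilon/\gamma=\eta^2$ is at least $1-\gamma$. Call these the \emph{good} words; the expected $T$-count of the program is at least $(1-\gamma)$ times the minimum over good words, and more usefully the good words carry a $(1-\gamma)$ fraction of the cost.

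Second, decode the share from a single good word $W$. By the window part of Lemma~\ref{lem:conc}(i), applied to the mixture over $(P,Q')$ with the word fixed, any grid rotation $R_z(\Delta(x_j+d))$ with $\sin(|\Delta d|/2)>(1+\sqrt3)\eta$ has fidelity $<1-\eta^2$. Hence, given $(W,\sigma_t,Z)$, every $x_j$ consistent with $W$ being good lies in a window of at most $2w+1$ grid points. This yields an encoding scheme. The message is $(\sigma_t$, a good word $W_j$ sampled from the restricted program on each coordinate$)$. The decoder lists the window and needs at most $\log_2(2w+1)$ extra bits, which are not charged. Fano's inequality on the coarse label gives $I\ge(1-\delta)mk_\gamma-h_2(\delta)$. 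Sampling the representative from the program conditioned on goodness gives expected $T$-count at most $\bar T_t/(1-\gamma)$. Lemma~\ref{lem:ent} and the prefix-free bound on $\sigma_t$ then give part~(a) exactly as in step (iii) of Theorem~\ref{thm:main1}. The hypothesis $(1+\sqrt3)\eta\le\tfrac12$ is what Lemma~\ref{lem:conc} requires.

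Third, part~(b) needs the tube profile, but a good word is not yet confined to a tube around a single coset: $P$ and $Q'$ range over supports of unbounded size. I would handle this by also sampling one prefix--suffix pair $(P,Q')$ conditioned on high fidelity; Lemma~\ref{lem:conc}(ii) then puts $PWQ'$ within $\sqrt2\eta$ of the target. This pair cannot simply be added to the message, since that would cost $n_f$ bits, which the statement forbids. Instead I would let the decoder reconstruct the coset from $(\sigma_t,Z)$ by choosing the prefix and suffix words canonically, as the most probable pair, and would use a second Markov step to show that, with constant loss absorbed into the factor $36$ in $\varepsilon'=36\sqrt2\eta$, the good $W$ lies within $\varepsilon'$ of the coset $P_0^\dagger R_z(\cdot)Q_0'^\dagger$ for a pair $(P_0,Q_0')$ determined by the context. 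This gives the enlarged window $w'$ and the extra bit of loss in $k'_\gamma$. Once $W$ sits in $T_{\varepsilon'}$ of a context-determined coset, the Gibbs-weighted argument of Theorem~\ref{thm:rate2} (Lemma~\ref{lem:gibbs} with the profile $n_{\varepsilon'}$ of Corollary~\ref{cor:profile}) applies verbatim with $K'$ coarse cells, yielding~\eqref{eq:mixtwo}.

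The main obstacle is the step that makes the coset independent of the unboundedly many prefix and suffix words. Choosing that canonical pair uniformly for all good $W$ is delicate, and I would try first to use the averaged fidelity over the pair $(P,Q')$ and a triangle inequality through a fixed reference pair.
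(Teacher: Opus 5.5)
Your part (a) is the paper's argument. Markov on $1-F_e(\mathcal E^{>t}_j\circ\mathcal U_W\circ\mathcal E^{<t}_j,V_a)$ gives good words of mass $\ge1-\gamma$. Lemma~\ref{lem:conc}(i) then list-decodes the share to $2w+1$ residues, and list Fano plus Lemma~\ref{lem:ent} finish the bound. You sample the representative from the program conditioned on goodness, whereas the paper takes the cheapest good word. That change is harmless, since $\E[t\mid\text{good}]\le\E[t]/(1-\gamma)$. Your aside that good words ``carry a $(1-\gamma)$ fraction of the cost'' is wrong (they carry that fraction of the mass), but you do not use it.

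Part (b) has a genuine gap, at the step you flag yourself. You need a frame $(X,Y)$ that is a function of $(\sigma_t,Z)$ alone and puts every representative in the $\varepsilon'$-tube of $Y^\dagger R_zX^\dagger$. The most probable prefix--suffix pair does not provide one. The prefix words close to the target can be spread over more than $2/\varepsilon$ distinct unitaries, each lighter than a single word of mass $\varepsilon/2$ at distance of order one. That far word is then the mode, and it is still compatible with correctness. Markov controls the mass of bad words, not where the mode sits.

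The missing idea is Lemma~\ref{lem:cluster}, which has two parts.
\begin{itemize}
\item \emph{Small diameter.} From $\sum_{i,k}p_iq_k(1-F_{ik})\le\eta^2$, the prefixes with $\sum_kq_k(1-F_{ik})\le4\eta^2$ form a set $P$ of $p$-mass $\ge\tfrac34$. For $i,i'\in P$, the sets $\{k:1-F_{ik}\le16\eta^2\}$ and $\{k:1-F_{i'k}\le16\eta^2\}$ each have $q$-mass $\ge\tfrac34$, so they share some $k$. Bi-invariance of $\dproj$ then gives $\dproj(U^<_i,U^<_{i'})\le8\sqrt2\eta$. So the heavy set has small diameter however many words it contains, and likewise for suffixes.
\item \emph{Independence from the share.} The heavy sets still depend on the share through $V_a$ and $W^*_j(a)$. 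The paper fixes an anchor $a_0(s,z)$ in the conditional support on which coordinate $j$ is correct; a fixed value such as $0$ can fail. It takes $X,Y$ from the heavy sets at $a_0$. The heavy sets at $a$ and at $a_0$ both have mass $\ge\tfrac34$, hence meet, which gives $\dproj(U^<_i,X)\le16\sqrt2\eta$ and $\dproj(U^>_k,Y)\le16\sqrt2\eta$.
\end{itemize}
Adding the pair $(i,k)$, which lies within $4\sqrt2\eta$ of $V_a$, gives $4\sqrt2\eta+16\sqrt2\eta+16\sqrt2\eta=36\sqrt2\eta=\varepsilon'$. In your proposal this constant is only read off the statement.

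The remaining step is also not quite verbatim. Nearest-grid decoding of $YW^*_jX$ yields a window of half-width $w'$, not a cell. The paper therefore partitions $\Z_Q$ into $K'$ cells of sizes between $h=2w'+1$ and $2h$. Each cell holds at most $2h$ values of the share, which is the $-1$ inside $k'_\gamma$. A window meets at most two cells, so the list has size $2$, which is the $-1$ in $k'_\gamma-1$.
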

For the tuned family $k_\gamma=\tfrac12L-0.80-\tfrac12\log_2\tfrac1\gamma+o(1)$ and $k'_\gamma=\tfrac12L-7.0-\tfrac12\log_2\tfrac1\gamma+o(1)$, the constant $7.0=\log_2(4\cdot36\sqrt2/\pi)+1$ being the price of the frame-free argument (Remark~\ref{rem:mixconst}), while $A_2'=\log_2C_2+2\log_2L+O(1)$. Hence with $\delta=0$ and $\gamma=1/L$, \eqref{eq:mixtwo} reads $\bar T_t\ge m\log_2K-2S-O(m\log L)$, i.e.
\begin{equation}\label{eq:mixfloor}
\bar\Tpre\ \ge\ (r-1)\big(m\log_2K-2S\big)-O(rm\log L):
\end{equation}
under mixing a coordinate that is not held still costs at least $L-O(\log L)$ committed $T$ gates per round---one half of the deterministic floor $2L$ of Theorem~\ref{thm:rate2}---and the rate at which the bits that remain chargeable are paid is still two per bit, the factor $2$ in front of $S$. The proof (Appendix~\ref{app:mixing}) is the argument of Theorems~\ref{thm:main1} and~\ref{thm:rate2} applied not to the committed word but to a \emph{representative}: the cheapest word in the support of $\pi^{(t)}_j$ whose composite channel still has fidelity $\ge1-\eta^2$ with the target. Markov's inequality gives such a word mass $\ge1-\gamma$, which is where the factor $1-\gamma$ comes from; Lemma~\ref{lem:conc}(i) shows that it list-decodes the share to a window of half-width $w$. For (b) the representative must in addition be placed in a tube around a coset whose frame does not depend on the share, which Lemma~\ref{lem:cluster} does without bounding the supports, at the price of the radius $36\sqrt2\eta$ in place of $\sqrt2\eta$.

\begin{corollary}[Where mixing stops paying]\label{cor:mixstop}
Write the shares in binary with $\log_2Q$ bits.

(i) \emph{(the low bits are cheap, with a logarithm)} For a rotation $R_z(\varphi)$ implemented as a probabilistic mixture of Clifford+$T$ channels at diamond error $\delta$, Ref.~\cite{Bothe} gives the expected cost $T(\varphi,\delta)=\min\{T_{\mathrm{small}}(\varphi/2,\delta),\,1.52\log_2(1/\delta)-0.01\}$ \cite[Eqs.~(9)--(12), in the convention $R_Z(\theta)=e^{i\theta Z}$]{Bothe}, where $T_{\mathrm{small}}$ is the cost of a mixture whose under-rotation is the identity and the second entry is the angle-independent mixed-diagonal cost of~\cite{KLMPP}. Along the ray $\theta=\Theta(\sqrt\delta)$ that formula gives $T_{\mathrm{small}}=\Theta\big((\theta^2/\delta)\log_2(1/\delta)\big)$, so the logarithm is not a constant there. It is a bound along that ray only, and not a cost law on the region $\theta^2\le\delta$; two things go wrong if it is read as one. At $\theta=\delta/4$ the identity channel is already within $2\sin\theta\le\delta/2$ of the target in diamond norm, so zero $T$ gates suffice while $\theta^2\le\delta$ still holds, and no positive lower bound of that form can hold there. And at $\theta=\delta L_\delta$, writing $L_\delta:=\log_2(1/\delta)$, reading the expression as a law would predict $\Theta(\delta L_\delta^3)$, whereas the formula of~\cite{Bothe} gives $\Theta(\delta L_\delta^2\log L_\delta)$, smaller by a factor $L_\delta/\log L_\delta$. What we use below is only the resulting \emph{upper} bound. On the tuned grid the low $\ell$ bits of a share span angles $\varphi\le2\pi2^\ell/Q\approx8\varepsilon2^\ell$, so shedding them into a mixed program at $\delta=\Theta(\varepsilon)$ costs $O(\varepsilon4^\ell L)$ expected $T$ gates, which is $O(1)$ for
\begin{equation}\label{eq:freezone}
\ell\ \le\ \tfrac12L-\tfrac12\log_2L-O(1),
\end{equation}
so~\eqref{eq:freezone} is a sufficient condition for a free zone, established by a known construction. The same formula grows geometrically over the next $\tfrac12\log_2L$ bits and reaches the angle-independent cost $\Theta(L)$ at $\ell=\tfrac12L-O(1)$. At $\varepsilon=10^{-10}$ the formula gives $2.9$, $10.6$, $28.2$ and $50.5$ expected $T$ gates for $\ell=12,13,14,16$, against $\tfrac12L=16.6$ and $\tfrac12\log_2L=2.5$.

(ii) \emph{(the coarse bits are charged, in total)} By Theorem~\ref{thm:mixing}(a) the $k_\gamma=\tfrac12L-O(1)$ coarse bits of every share are paid \emph{in total} at rate $\ge1-\gamma$ committed $T$ gates per bit per round, beyond an allowance of $\log_236+\rho_m(\bar T_t)/m$ bits per coordinate, and by (b) at rate $\ge2(1-\gamma)$ beyond $A_2'+2=O(\log L)$ bits, given (R).
\end{corollary}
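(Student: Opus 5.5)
Part~(i) is an upper bound obtained by plugging the grid geometry into the cost formula of~\cite{Bothe}. Part~(ii) is a restatement of Theorem~\ref{thm:mixing} in the language of bit positions. I would treat them separately, and in both take the tuned family $Q=Q_\varepsilon$, so that $\Delta=2\pi/Q\approx8\varepsilon$.

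For (i) the steps are as follows.
\begin{enumerate}
\item Write a share as $u=u_{\mathrm{hi}}2^\ell+u_{\mathrm{lo}}$ with $0\le u_{\mathrm{lo}}<2^\ell$. The process commits $R_z(\Delta u_{\mathrm{lo}})$ on arrival as a mixed program and stores only $u_{\mathrm{hi}}$, which saves $\ell$ bits of snapshot per coordinate.
\item The committed angle is $\varphi=\Delta u_{\mathrm{lo}}\le2\pi2^\ell/Q\approx8\varepsilon2^\ell$. I would allot it diamond error $\delta=\varepsilon/r$, so that $\delta=\Theta(\varepsilon)$. Correctness of the composite channel then follows from the triangle inequality for $\|\cdot\|_\diamond$ under composition.
\item Since $T(\varphi,\delta)$ is defined as a minimum, it suffices to bound $T_{\mathrm{small}}(\varphi/2,\delta)$ from above. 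By Eqs.~(9)--(12) of~\cite{Bothe} this is $O\big((\varphi^2/\delta)\log_2(1/\delta)\big)$ throughout the range $\varphi\lesssim\sqrt\delta$. Only the upper direction is needed, and that is what the caveats in the statement make precise. I expect the main obstacle here: one must check that the formula's upper bound holds uniformly for all $\varphi$ below the ray, not only along it. I would verify this from the explicit form of $T_{\mathrm{small}}$, using its monotonicity in $\varphi$ at fixed $\delta$. If monotonicity fails, I would fall back on the identity channel for the very smallest angles, as in the $\theta=\delta/4$ remark.
\item Substituting $\varphi\approx8\varepsilon2^\ell$ gives cost $O(\varepsilon4^\ell L)$. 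This is $O(1)$ exactly when $2\ell\le L-\log_2L-O(1)$, which is~\eqref{eq:freezone}.
\item The same expression doubles, up to the factor $4$ per bit, over the next $\tfrac12\log_2L$ bits. It meets the angle-independent branch $1.52\log_2(1/\delta)=\Theta(L)$ of~\cite{KLMPP} at $\ell=\tfrac12L-O(1)$.
\item The four numbers quoted at $\varepsilon=10^{-10}$ come from evaluating the formula of~\cite{Bothe} at $\delta=\varepsilon$ and $\varphi=2\pi2^\ell/Q_\varepsilon$. These are routine arithmetic.
\end{enumerate}

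For (ii) I would quote Theorem~\ref{thm:mixing}(a) and read it coordinatewise.
\begin{itemize}
\item Dividing~\eqref{eq:mixone} by $m$ shows that the $k_\gamma$ coarse bits, net of $S/m$, cost at least $(1-\gamma)$ per bit in $\bar T_t$. The allowance is $\log_236+\rho_m(\bar T_t)/m$.
\item For the tuned family, $\eta=\sqrt{\varepsilon/\gamma}$ gives $2w+1\approx(2Q/\pi)(1+\sqrt3)\eta$. Since $Q\approx\pi/(4\varepsilon)$, this yields $k_\gamma=\tfrac12L-O(1)-\tfrac12\log_2(1/\gamma)$. That is the identification of the coarse bits with the top half of the share.
\item For the rate-two statement, take Theorem~\ref{thm:mixing}(b) and note that $A_2'=\log_2C_2+2\log_2L+O(1)$. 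This holds because $\tau_*'\le2\log_2K'+1=O(L)$ and $n_{\varepsilon'}(\tau)2^{-\tau/2}\le C_1\varepsilon'^22^{\tau/2}+C_2(\tau+1)$. The first term is $O(1)$ for $\tau\le\tau_*'$, because $\varepsilon'^2K'$ is bounded.
\item The extra $2$ in the allowance absorbs the $-1$ in $k'_\gamma-1$ and the difference $k_\gamma-k'_\gamma=O(1)$. This gives the allowance $A_2'+2=O(\log L)$, under (R) as Theorem~\ref{thm:mixing}(b) requires.
\end{itemize}
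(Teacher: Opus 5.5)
Your proposal follows the paper's own route: (i) is the upper branch of the mixed-synthesis formula of \cite{Bothe} evaluated at $\varphi\le2\pi2^\ell/Q\approx8\varepsilon2^\ell$ with $\delta=\Theta(\varepsilon)$, and (ii) is Theorem~\ref{thm:mixing}(a),(b) divided by $m$, with $k_\gamma=\tfrac12L-O(1)-\tfrac12\log_2\tfrac1\gamma$ and $A_2'=\log_2C_2+2\log_2L+O(1)$ obtained as you compute them.

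Two small corrections, neither of which affects the conclusion. First, the ``$+2$'' is the pair of unit losses in~\eqref{eq:mixtwo} measured against $\log_2K-\log_2(2w'+1)$, and it cannot absorb $k_\gamma-k'_\gamma=1+\log_2\frac{2w'+1}{2w+1}\approx6.2$. Measured against $k_\gamma$, the allowance is therefore about $A_2'+7.2$, which is still $O(\log L)$. Second, in step~3 monotonicity in $\varphi$ only reduces you to the largest angle $\approx8\varepsilon2^\ell$. In the free zone that angle still lies at least a factor $\sim\sqrt L$ below the ray, and the identity-channel fallback only covers $\varphi\lesssim\delta$. So the quadratic upper bound has to be read off the explicit form of $T_{\mathrm{small}}$ there, which is what the paper implicitly does.
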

The allowance in (ii) is not slack in the proof: it is the information carried by Cliffords and by powers of $T$, which are free or cost one gate. On a grid with $4\mid Q$ the top bit $h$ of a share $u=hQ/2+v$ is shed for nothing---$R_z(2\pi u/Q)\sim Z^hR_z(2\pi v/Q)$ up to phase, so a process may commit the Clifford $Z^h$ and keep $v$---and with $8\mid Q$ the top three bits go for at most one $T$ gate. Theorem~\ref{thm:mixing} is therefore a statement about the total count and not about individual bit positions: every coarse bit beyond the allowance costs a committed $T$ gate, wherever in the share it sits.

The two boundaries lie $\tfrac12\log_2L+O(1)$ bits apart: \eqref{eq:freezone} frees the bits below $\tfrac12L-\tfrac12\log_2L-O(1)$ by construction, and Theorem~\ref{thm:mixing} charges those above $\tfrac12L-O(1)$. Since the slack of Theorem~\ref{thm:mixing} is itself $O(\log L)$ bits per coordinate, the two agree to within the resolution of the theorem. Which of them is tight we do not determine. Lemma~\ref{lem:nocheap} rules out extending the free zone all the way to $\tfrac12L-O(1)$, but it is stated for a fixed ratio $a=\theta/\sqrt\delta$ and gives no rate of divergence, so it does not exclude a free zone at, say, $\ell=\tfrac12L-\tfrac14\log_2L$, which would correspond to $a(L)=\Theta(L^{-1/4})$. Settling the exact boundary, and the shape of the transition between the two regimes, needs a lower bound quantitative in $a(L)$, which we do not have.

\begin{lemma}[No cheap rotation at angle $\sqrt\delta$]\label{lem:nocheap}
Fix $a>0$ and let $V_\delta:=e^{i\theta_\delta Z}$ with $\theta_\delta=a\sqrt\delta$. If $\mathcal E_\delta=\sum_jp_j\mathcal U_{U_j}$ is a mixture of single-qubit Clifford+$T$ unitaries with $\|\mathcal E_\delta-\mathcal U_{V_\delta}\|_\diamond\le\delta$, then for every integer $M$
\[
\liminf_{\delta\to0}\Pr_{j\sim p}\big[t_{\min}(U_j)>M\big]\ \ge\ \frac{a^2}{a^2+1/2},
\]
so $\liminf_{\delta\to0}\E_{j\sim p}t(U_j)\ge(M+1)a^2/(a^2+\tfrac12)$ and the expected $T$-count diverges as $\delta\to0$.
\end{lemma}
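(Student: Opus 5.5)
The plan is to read off two scalar constraints from the diamond-norm condition by testing it on a single input state, and then combine them with Cauchy--Schwarz. Writing the unitaries in the Bloch (rotation) picture, the binding constraint is \emph{first order} in $\theta_\delta$: the channel must produce a rotation of size $\sin2\theta_\delta\asymp a\sqrt\delta$. The fidelity route of Lemma~\ref{lem:fid} is second order and gives only the weaker $1-q\ge1-1/(2a^2)$, so I will not use it. The only mass that can generate this rotation is mass outside the finite set of cheap words that fix the test direction.

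\emph{Step 1 (one test state).} Put $|+\rangle$, Bloch vector $\hat x$, through both channels. The target output has Bloch vector $(\cos2\theta_\delta,\sin2\theta_\delta,0)$. The mixture outputs $\sum_jp_jR_j\hat x$, where $R_j\in SO(3)$ is the rotation of $U_j$. The trace distance between the two outputs equals the Euclidean distance of their Bloch vectors and is at most the diamond distance $\le\delta$. Writing $x_j,y_j$ for the first two components of $R_j\hat x$, I get
\[\textstyle\sum_jp_jy_j\ge\sin2\theta_\delta-\delta,\qquad \sum_jp_j(1-x_j)\le1-\cos2\theta_\delta+\delta=2\sin^2\theta_\delta+\delta.\]
Since $R_j\hat x$ is a unit vector, $y_j^2\le1-x_j^2\le2(1-x_j)$.

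\emph{Step 2 (split the cheap words).} Let $F$ be the set of unitaries modulo phase with $t_{\min}\le M$. By Lemma~\ref{lem:MA} it is finite, and it does not depend on $\delta$. Split it as
\[F_0:=\{U\in F:\ y(U)=0\},\qquad F_1:=F\setminus F_0 .\]
By finiteness there is $c_M>0$ with $1-x(U)\ge c_M$ for every $U\in F_1$: if $y\ne0$ then $x<1$. The second inequality of Step~1 then gives
\[p(F_1)\le(2\sin^2\theta_\delta+\delta)/c_M=O_M(\delta)\to0 .\]
Let $q_0:=p(F_0)$ and write $O:=\complement F$ for the set of words with $t_{\min}>M$.

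\emph{Step 3 (Cauchy--Schwarz).} Words in $F_0$ contribute nothing to $\sum p_jy_j$. Applying Cauchy--Schwarz on $F_1\cup O$, whose mass is $1-q_0$, and using $y_j^2\le2(1-x_j)$,
\[\big(\sin2\theta_\delta-\delta\big)^2\le\Big(\textstyle\sum_{F_1\cup O}p_jy_j\Big)^2\le(1-q_0)\cdot2\sum_jp_j(1-x_j)\le(1-q_0)\big(4\sin^2\theta_\delta+2\delta\big).\]
The left side is $4a^2\delta(1+o(1))$ and the right side is $(1-q_0)(4a^2+2)\delta(1+o(1))$. Hence
\[1-q_0\ge\frac{a^2}{a^2+1/2}-o(1),\qquad \Pr[t_{\min}>M]=1-q_0-p(F_1)\ge\frac{a^2}{a^2+1/2}-o(1),\]
which is the $\liminf$ claim. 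The expectation bound follows by Markov, $\E\,t(U_j)\ge\E\,t_{\min}(U_j)\ge(M+1)\Pr[t_{\min}>M]$. Letting $M\to\infty$ shows that the expected $T$-count diverges.

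The main obstacle is Step~2, the control of cheap words that move $\hat x$ only slightly. It is handled by the \emph{uniform} gap $c_M$, available because $F$ is finite and fixed as $\delta\to0$; this is why the statement is a $\liminf$ with no rate in $M$ or $a$. The two small-angle expansions, $\sin2\theta_\delta=2a\sqrt\delta(1+o(1))$ and $\sin^2\theta_\delta=a^2\delta(1+o(1))$, are routine.
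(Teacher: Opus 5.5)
Your proof is correct and is essentially the paper's argument: a second-order constraint on the mass that moves the test direction, a first-order constraint on the signed rotation, a uniform gap over the finite set of words with $t_{\min}\le M$ that leaves the cheap movers only $O_M(\delta)$ mass, and Cauchy--Schwarz on the rest, giving exactly $a^2/(a^2+\tfrac12)$. The differences are only in bookkeeping: the paper reads the two constraints off the normalized Choi state, as $\sum_jp_j(1-q_{0j}^2)\le\sin^2\theta_\delta+\tfrac\delta2$ and $\sum_jp_jq_{0j}q_{zj}\ge\cos\theta_\delta\sin\theta_\delta-\tfrac\delta2$, and separates the identity from the other cheap words, whereas you use the single input $|+\rangle$ and separate cheap words by $y(U)=0$ (for $V_\delta=e^{i\theta_\delta Z}$ the target Bloch vector is actually $(\cos2\theta_\delta,-\sin2\theta_\delta,0)$, a harmless sign).
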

On the tuned grid a low residue of $\ell=\tfrac12L-C$ bits subtends an angle $\varphi\approx8\varepsilon2^\ell=\Theta(\sqrt\varepsilon)$, so Lemma~\ref{lem:nocheap} applies for every fixed $C$: no choice of program makes a residue at angle $\asymp\sqrt\delta$ cheap, and in particular the free zone~\eqref{eq:freezone} cannot be pushed to $\tfrac12L-O(1)$. The lemma says nothing about ratios $a$ that shrink with $L$, so it does not by itself identify the exact position of the boundary.

\begin{proposition}[Mixed passthrough]\label{prop:mixach}
Let a \emph{mixed synthesizer} return, for every grid angle and every diamond budget $\varepsilon'$, a program of expected cost $\tau_{\mathrm{mix}}(\varepsilon')$. Running the fractional-passthrough process of Theorem~\ref{thm:frontier} with programs in place of words, each committed share at diamond budget $2\varepsilon/r$, gives a mixed-unitary CW process with
\[
S\le\lceil q\log_2Q\rceil+O(\log(mQr)),\quad T_t\le(m-q)\tau_{\mathrm{mix}}(2\varepsilon/r)
\]
for $t<r$, correct to coordinatewise diamond error $2\varepsilon$, since diamond errors of composed channels add. Campbell's Theorem~2~\cite{Campbell17} provides the synthesizer: for an axial target $V$ and any $\eta<0.01$, four Clifford+$T$ words $U_1,U_2,ZU_1Z,ZU_2Z$, each obtained by one call to a deterministic synthesizer at operator-norm accuracy $\eta$, can be mixed so that $\tfrac12\|\mathcal E-\mathcal U_V\|_\diamond\le5\eta^2$, i.e.\ $\|\mathcal E-\mathcal U_V\|_\diamond\le10\eta^2$; a similar four-word construction is in~\cite{Hastings17}. Hence $\tau_{\mathrm{mix}}(\varepsilon')\le f_{\mathrm{ax}}(\sqrt{\varepsilon'/10})$ with $f_{\mathrm{ax}}$ the \emph{worst-case} cost of the deterministic synthesizer at that accuracy over axial targets. To turn this into the average $\tfrac32L$ we assume, in place of Corollary~\ref{cor:frontier},
\begin{equation}\label{eq:mixhyp}
\E\,\max\big(t(U_1),t(U_2)\big)\ \le\ 3\log_2(1/\eta)+o(L)
\end{equation}
for the two words the construction actually calls at accuracy $\eta$, the expectation being over the uniform grid share. Under~\eqref{eq:mixhyp} the committed cost per share is
\[
\begin{aligned}
\tau_{\mathrm{mix}}(2\varepsilon/r)\ &\le\ 3\log_2\sqrt{5r/\varepsilon}+o(L)\\
&=\ \tfrac32L+\tfrac32\log_2(5r)+o(L),
\end{aligned}
\]
and the mixed family has slope $\alpha_{\mathrm{mix}}\to\tfrac32$.
\end{proposition}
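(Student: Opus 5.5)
The plan is to rerun the proof of Theorem~\ref{thm:frontier} with each deterministic word replaced by a program, and then to bound the cost of one program by Campbell's four-word construction. The process is the same. Keep running sums modulo $Q$ of the $q$ stored coordinates, using the block compiler of Proposition~\ref{prop:block} with $p=1$. On every other coordinate, emit on arrival of each share $a^{(t)}_j\neq0$ a program $\pi^{(t)}_j$ returned by the mixed synthesizer for $R_z(\Delta a^{(t)}_j)$ at diamond budget $2\varepsilon/r$; a zero share gets the one-point program on the identity. In round $r$, emit a program for the aggregate rotation of each stored coordinate and for the last share of each committed coordinate, at the same budget.

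The snapshot holds only the $q$ running sums and a program counter, so the bound on $S$ is the same as in Theorem~\ref{thm:frontier}. The programs are written on the output tape, while their samples are drawn at execution, independently of $\rho$ and of each other, as Definition~\ref{def:mixed} prescribes. Nothing about the sampling therefore enters $S$. Committed magic counts the \emph{expected} executed $T$-count, so $T_t\le(m-q)\tau_{\mathrm{mix}}(2\varepsilon/r)$ follows directly from the definition of $\tau_{\mathrm{mix}}$.

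For correctness, I will use two standard properties of the diamond norm: subadditivity under composition, $\|\mathcal E_2\mathcal E_1-\mathcal U_2\mathcal U_1\|_\diamond\le\|\mathcal E_2-\mathcal U_2\|_\diamond+\|\mathcal E_1-\mathcal U_1\|_\diamond$ for channels $\mathcal E_i$ and unitary channels $\mathcal U_i$, which follows from contractivity; and invariance under unitary channels. The target channels compose to $\mathcal U_{R_z(\Delta\sum_ta^{(t)}_j)}$. Since $\sum_ta^{(t)}_j\equiv x_j\pmod Q$ and $R_z(2\pi)=-I$ acts trivially as a channel, this is $\mathcal U_{R_z(\Delta x_j)}$. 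A committed coordinate receives at most $r$ programs, each at error $\le2\varepsilon/r$, and a stored coordinate receives one; in both cases the total is $\le2\varepsilon$.

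For the synthesizer, every grid rotation $R_z(\Delta u)$ is axial, so Campbell's Theorem~2~\cite{Campbell17} applies with $\eta:=\sqrt{\varepsilon'/10}$. This requires $\eta<0.01$, i.e.\ $\varepsilon'<10^{-3}$, which holds for small $\varepsilon$; I will state that restriction explicitly. The resulting mixture has $\|\mathcal E-\mathcal U_V\|_\diamond\le10\eta^2=\varepsilon'$. Conjugation by $Z$ is Clifford, so $ZU_iZ$ has the same $T$-count as $U_i$. The expected cost of the four-word mixture is therefore at most $\max(t(U_1),t(U_2))\le f_{\mathrm{ax}}(\eta)$, which gives $\tau_{\mathrm{mix}}(\varepsilon')\le f_{\mathrm{ax}}(\sqrt{\varepsilon'/10})$.

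For the average, the share is uniform on $\Z_Q$ under the canonical sharing, as in Corollary~\ref{cor:frontier}. Averaging the previous inequality over the share and applying~\eqref{eq:mixhyp} with $\varepsilon'=2\varepsilon/r$ gives
\[3\log_2\sqrt{10r/(2\varepsilon)}+o(L)=\tfrac32L+\tfrac32\log_2(5r)+o(L).\]
Dividing by $\log_2K=L+O(1)$, as in Corollary~\ref{cor:frontier}, gives the slope $\tfrac32$ for fixed $r$.

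The main obstacle is bookkeeping in the hypothesis rather than any analysis. Campbell's bound controls the worst case through $f_{\mathrm{ax}}$, whereas the statement wants a mean. That is why~\eqref{eq:mixhyp} is phrased for the two words the construction actually calls, with the expectation taken over the grid share. I need to check that this expectation is exactly what $\bar T_t$ averages, which it is because each committed program depends only on its own uniform share. Zero shares contribute cost $0$, which only helps.
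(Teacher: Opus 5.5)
Your proposal is correct and follows the paper's own argument, which the paper gives inline in the statement. That argument runs fractional passthrough with programs at diamond budget $2\varepsilon/r$, adds diamond errors under composition, uses Campbell's four-word mixture at $\eta=\sqrt{\varepsilon'/10}$ (the $Z$-conjugates costing the same as $U_1,U_2$), and averages hypothesis~\eqref{eq:mixhyp} over the uniform grid share to obtain $\tfrac32L+\tfrac32\log_2(5r)+o(L)$. Your explicit bookkeeping of the restriction $2\varepsilon/r<10^{-3}$, needed for $\eta<0.01$, and of the trivial action of $R_z(2\pi)=-I$ as a channel fills in details the paper leaves implicit.
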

Hypothesis~\eqref{eq:mixhyp} is stronger than Corollary~\ref{cor:frontier} and is not implied by it. Corollary~\ref{cor:frontier} concerns the mean cost of one synthesis call at a uniform grid angle, whereas the second of Campbell's two words is a target offset from the first by an amount chosen from the first word's error direction; it need not be a grid point, and the cost that enters is the larger of the two calls rather than either mean. We do not prove~\eqref{eq:mixhyp}, and the $\tfrac32$ of this proposition is conditional on it.

Two things this does not say. First, the construction is not a pair of words with exactly antiparallel errors. Such a pair $VR_{\mathbf n}(\pm a)$, whose equal-weight mixture would have diamond error $2\sin^2(a/2)$, need not exist at all: for $V=R_z(2\pi/25)$ it would give $W_++W_-=2\cos(a/2)V$, so the ratio $\zeta_{25}^{-1}$ of the diagonal entries of $V$, a primitive $25$th root of unity of degree $20$ over $\Q$, would have to lie in the degree-$8$ field $\Q(\zeta_{16})$ that contains every entry of every $SU(2)$ representative of a Clifford+$T$ word---a contradiction at every $T$-count, and $Q=25$ is in the tuned family. Campbell's construction instead uses one over- and one under-rotation with weights chosen to cancel the first-order $Z$ component, paying the residual $X,Y$ components at second order. Second, the average cost of a single deterministic synthesis is not the cost of a mixed program. On the $20\,000$ grid angles of Fig.~\ref{fig:frontier}(b), \texttt{gridsynth} at $\eta=7.07\times10^{-6}$ gives $\E_u\tau=53.43\pm0.02$ (s.d.\ $2.20$, range $37$--$65$), which is a single-word measurement at that accuracy and nothing more. The budget $2\varepsilon/r$ requires $\eta=\sqrt{\varepsilon/(5r)}=3.16\times10^{-6}$ at $\varepsilon=10^{-10}$, $r=2$, and the four-word program costs the larger of two synthesis calls there, about $3\log_2(1/\eta)+2.7\approx57.5$ by Ross--Selinger scaling. Certifying that number means synthesizing both words at $\eta$ and verifying the channel error; we report it as an estimate. For comparison the angle-independent mixed-diagonal formula of~\cite{KLMPP,Bothe} gives $50.5$ at $\delta=10^{-10}$, and both are about half the deterministic $105.34$. Computed by \texttt{scripts/mixing\_bounds.py} and \texttt{scripts/run\_mixed\_achievable.py}.

\paragraph{The exchange curve under mixing.} Corollary~\ref{cor:mixstop} and Proposition~\ref{prop:mixach} together describe the achievable trade. Committing the low $\ell\le\tfrac12L-\tfrac12\log_2L-O(1)$ bits of a share costs $O(1)$; over the next $\tfrac12\log_2L$ bits the cost of that construction grows geometrically to $\Theta(L)$, and by $\ell=\tfrac12L-O(1)$ Lemma~\ref{lem:nocheap} shows no program at all is cheap, though it does not pin down where in this window the transition actually sits; from $\ell=\tfrac12L-O(1)$ on, committing needs a program for an angle $\gtrsim\sqrt\varepsilon$ whose cost is the angle-independent $\tfrac32L$ however many bits it carries~\cite{KLMPP,Bothe}. The \emph{achievable} curve---the cost of the constructions above, not a determination of the optimum---is thus flat at $O(1)$ on the low $\tfrac12L-\tfrac12\log_2L$ bits, climbs over the next $\tfrac12\log_2L$, and is flat again at $\tfrac32L$, except for the top $\log_2\gcd(Q,8)$ bits, which are Clifford or $T$-power information and are free wherever they sit (Corollary~\ref{cor:mixstop}(ii)). Averaged over the share this is $\alpha_{\mathrm{mix}}=\tfrac32$ per bit \emph{of the share}---not per coarse bit, the unit in which Theorem~\ref{thm:mixing} charges---but the memory worth keeping is the \emph{high} part: remembering the top $\tfrac12L+O(\log L)$ bits of a coordinate and committing the rest buys that memory for $O(1)$ committed $T$ gates, whereas remembering the low half saves nothing. Theorem~\ref{thm:mixing} bounds \emph{every} program from below by a line of the same shape---nothing charged on the low part, slope $\ge1-\gamma$ unconditionally and $\ge2(1-\gamma)$ given (R) on the coarse bits beyond the $O(\log L)$ allowance---and under Conjecture~\ref{conj:H} the bound itself acquires a jump of $L-O(1)$ at the first coarse bit carrying more than $O(\log L)$ bits, then slope $3$. Bound and construction agree at the top of the share and at $S=0$; where between the two ends the true cost lies we do not determine.

\begin{lemma}[Grid cover of a tube]\label{lem:cover}
Let $0<\varepsilon\le1/16$, let $Q_1:=\max\{Q:\sin(\pi/2Q)\ge4\varepsilon\}$ be the tuned modulus at accuracy $2\varepsilon$, and for $i=0,\dots,8$ let $\mathcal G_i$ be the set of words within $\dproj$-distance $2\varepsilon$ of some rotation $G_1R_z\big(2\pi(d+i/9)/Q_1\big)G_2$, $d\in\Z_{Q_1}$. Then $T_\varepsilon(G_1R_zG_2)\subseteq\bigcup_{i=0}^8\mathcal G_i$, and consequently, under Conjecture~\ref{conj:H}, for every $\tau\ge0$
\begin{equation}\label{eq:cover}
\begin{aligned}
&\#\{W\in T_\varepsilon(G_1R_zG_2):t_{\min}(W)\le\tau\}\\
&\qquad\le9c_0+9c_1\tau+36c\,2^\tau\varepsilon^2.
\end{aligned}
\end{equation}
\end{lemma}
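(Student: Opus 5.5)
The plan is to prove the cover with the triangle inequality on the rotation angle, and then apply Conjecture~\ref{conj:H} once per shifted grid. Each shifted grid is itself a grid of the form the conjecture allows, at accuracy $2\varepsilon$.

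\emph{Step 1: the covering.} Take $W\in T_\varepsilon(G_1R_zG_2)$, so that $\dproj(W,G_1R_z(\theta)G_2)\le\varepsilon$ for some $\theta$. The union of the nine shifted grids is the fine grid $\{2\pi e/(9Q_1):e\in\Z_{9Q_1}\}$, whose spacing is $2\pi/(9Q_1)$. I will choose a point $\theta'$ of the fine grid with $|\theta-\theta'|\le\pi/(9Q_1)$, taken modulo $4\pi$ if necessary; the sign ambiguity is harmless modulo phase. By Lemma~\ref{lem:loc},
\[
\dproj(R_z(\theta),R_z(\theta'))=\sqrt{2-2|\cos((\theta-\theta')/2)|}=2\sin(|\theta-\theta'|/4)\le\pi/(18Q_1).
\]
Bi-invariance of $\dproj$ and the triangle inequality then give $\dproj(W,G_1R_z(\theta')G_2)\le\varepsilon+\pi/(18Q_1)$.

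It remains to show $\pi/(18Q_1)\le\varepsilon$, which needs a lower bound on $Q_1$. By maximality, $\sin(\pi/2(Q_1+1))<4\varepsilon$. Since $4\varepsilon\le\tfrac14$, we have $\arcsin(4\varepsilon)\le1.011\cdot4\varepsilon$, hence $Q_1+1>\pi/(8.09\,\varepsilon)\ge6.2$. This gives $Q_1\ge0.83\,\pi/(8.09\,\varepsilon)>0.32/\varepsilon$, and so $\pi/(18Q_1)<0.55\,\varepsilon$. The same estimate shows $Q_1\ge5\ge3$. Therefore $W\in\mathcal G_i$ for the residue $i$ of $\theta'$, and this is where the factor nine buys the margin.

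\emph{Step 2: the count.} For fixed $i$, I will write $R_z(2\pi(d+i/9)/Q_1)G_2=R_z(2\pi d/Q_1)\,G_2^{(i)}$ with $G_2^{(i)}:=R_z(2\pi i/(9Q_1))G_2\in\SU$. This turns $\mathcal G_i$ into exactly the set on the left of~\eqref{eq:H} for the frame $(G_1,G_2^{(i)})$, accuracy $2\varepsilon$ and modulus $Q_1$. The hypotheses of Conjecture~\ref{conj:H} hold: $2\varepsilon\le1/8$ because $\varepsilon\le1/16$; $Q_1\ge3$ by Step 1; and $\sin(\pi/2Q_1)\ge4\varepsilon=2\cdot(2\varepsilon)$ is precisely $Q_1\le Q_{2\varepsilon}$. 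The conjecture gives $\#\{W\in\mathcal G_i:t_{\min}(W)\le\tau\}\le c_0+c_1\tau+c\,2^\tau(2\varepsilon)^2$. A union bound over the nine values of $i$ yields~\eqref{eq:cover}, with $9\cdot4c=36c$.

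\emph{Main obstacle.} The only delicate point is the numerical inequality $\pi/(18Q_1)\le\varepsilon$ in Step 1. This step needs both the restriction $\varepsilon\le1/16$, so that $\arcsin$ is nearly linear and $Q_1$ is not tiny, and the choice of nine shifts. With fewer shifts the crude bound $Q_1\gtrsim1/(3\varepsilon)$ would not close. I would verify it with the explicit constants above rather than asymptotically. Everything else is bookkeeping: absorbing the shift into the frame, and confirming that the conjecture is stated uniformly in $G_2\in\SU$, which it is.
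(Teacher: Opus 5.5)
Your proof is correct and follows the paper's argument. You cover the tube by the fine grid of spacing $2\pi/(9Q_1)$, bound the angular step by $\pi/(18Q_1)\le\varepsilon$ using the maximality of $Q_1$, then apply Conjecture~\ref{conj:H} at accuracy $2\varepsilon$ and modulus $Q_1=Q_{2\varepsilon}$ to each of the nine shifted frames and take a union bound. The only differences are cosmetic: the paper lower-bounds $Q_1$ by $3/(16\varepsilon)$ via $\sin x\ge 2x/\pi$, getting $\pi/(18Q_1)<8\pi\varepsilon/27$, where you use the near-linearity of $\arcsin$ on $[0,1/4]$, and it absorbs the shift into $G_1$ rather than $G_2$.
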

Lemma~\ref{lem:cover} is what replaces a ``tube form'' of Conjecture~\ref{conj:H}: the grid form~\eqref{eq:H} is asserted for all frames and all moduli $Q\le Q_\varepsilon$, so shifting the frame by one ninth of a grid step nine times covers the tube at twice the radius. The cumulative Haar prediction for the tube coefficient is $72$, twice the shell constant $36$ of Sec.~\ref{sec:num}, against which the coefficient $36c$ of~\eqref{eq:cover}---$1728$ at the illustrative $c=48$---is loose by a factor $24$, i.e.\ by $4.6$ bits in the fee; it is rigorous given~\eqref{eq:H} and needs no further hypothesis.

\begin{corollary}[Rate three and quantization under mixing]\label{cor:mixq}
Assume Conjecture~\ref{conj:H} and let the process be mixed-unitary and correct with probability $\ge1-\delta$. Fix $\gamma$ with $\varepsilon':=36\sqrt{2\varepsilon/\gamma}\le2^{-6}$ and $\varepsilon'\le(72c)^{-1/2}$, the latter being what makes $b'\ge0$ below, and put $L':=\log_2(1/\varepsilon')$, $\tau_0':=\lfloor2L'-\log_2(36c)\rfloor$, $N_0':=9c_0+9c_1\tau_0'+1$, $a_0':=1+\log_2N_0'$, $b':=2L'-\log_2(36c)-1$, $\bar k:=\log_2K'$ with $K'$ as in Theorem~\ref{thm:mixing}, $\kappa':=1+b'/\bar k$, and
\[
D_t':=(1-\delta)m(k'_\gamma-1)-2mh_2(\delta)-2\delta m\log_2K'-S.
\]
Then for every round $t<r$,
\begin{equation}\label{eq:mixrate}
\bar T_t\ \ge\ (1-\gamma)\Big[\kappa'\big(D_t'-ma_0'\big)-3m\log_2\Big(1+\frac{\bar T_t}{(1-\gamma)m}\Big)\Big],
\end{equation}
and, writing $\bar{\mathcal A}'^{>}_t$ for the expected number of coordinates that are correct and whose representative round-$t$ word has minimal $T$-count above $\tau_0'$,
\begin{equation}\label{eq:mixactive}
\bar{\mathcal A}'^{>}_t\ \ge\ \frac{D_t'-ma_0'}{\bar k}.
\end{equation}
\end{corollary}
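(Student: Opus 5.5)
The proof combines the representative-word reduction behind Theorem~\ref{thm:mixing} with the quantized counting argument of Theorem~\ref{thm:main2}, using Lemma~\ref{lem:cover} as the tube profile.

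\emph{Step 1: the representative word.} Fix a round $t<r$ and condition on $Z=(\mathrm{Past},\mathrm{Fut},\rho)$ and the snapshot $\sigma_t$. As in Theorem~\ref{thm:mixing}, on each correct coordinate $j$ we select a representative word $W_j$ from the support of $\pi^{(t)}_j$: the cheapest word whose composite channel, with prefix and suffix programs, still has fidelity $\ge1-\eta^2$ with the target, where $\eta=\sqrt{\varepsilon/\gamma}$. By Lemma~\ref{lem:fid} and Markov's inequality this word carries mass $\ge1-\gamma$. Hence $\E t(W_j)\le\E_{\pi}t/(1-\gamma)$, which accounts for the global factor $(1-\gamma)$ in~\eqref{eq:mixrate}.

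\emph{Step 2: placing $W_j$ in a fixed tube and decoding.} Lemma~\ref{lem:cluster}, as used in part (b) of Theorem~\ref{thm:mixing}, places $W_j$ within $\varepsilon'/2$-type distance of a coset $G_1R_zG_2$. The frame of this coset depends only on $(\sigma_t,Z)$, not on the share, and the radius is $\varepsilon'=36\sqrt{2\varepsilon/\gamma}$. Lemma~\ref{lem:conc}(i) then shows that $W_j$ list-decodes the share to a window of half-width $w'$. Decoding the window index among $K'$ cells, together with Fano's inequality, gives the information requirement $\sum_jI(\text{coarse}_j;W_j,\sigma_t\mid Z)\ge D'_t+S$. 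The terms $(1-\delta)m(k'_\gamma-1)$, $2mh_2(\delta)$ and $2\delta m\log_2K'$ arise exactly as in~\eqref{eq:mixtwo}.

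\emph{Step 3: counting, and the main obstacle.} Conjecture~\ref{conj:H} is stated for grids rather than tubes, and Lemma~\ref{lem:cover} applied at radius $\varepsilon'$ converts it into the tube profile
\[n'(\tau)\le9c_0+9c_1\tau+36c\,2^\tau\varepsilon'^2.\]
We then rerun the proof of Theorem~\ref{thm:main2} (Appendix~\ref{app:rate}) verbatim with $(c_0,c_1,c)\mapsto(9c_0,9c_1,36c)$ and $L\mapsto L'$. Words of cost $\le\tau_0'$ form an alphabet of at most $N_0'$ elements and carry at most $a_0'$ bits. Above $\tau_0'$ the profile is dominated by $2^{\tau-b'}$, so each further bit costs one gate on top of the fee $b'$. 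This is the step I expect to be the main obstacle. The split into cheap and expensive representatives must be applied to the representative and not to the sampled word, and the Gibbs/Lagrangian optimization (Lemma~\ref{lem:cost} with the $3m\log_2(1+\bar T/m)$ entropy of the cost variable) must be run on $\bar T_t/(1-\gamma)$. The condition $\varepsilon'\le(72c)^{-1/2}$ is needed to ensure $b'\ge0$, i.e.\ $\tau_0'\ge1$.

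\emph{Step 4: optimization.} Optimizing the allocation of bits between cheap and expensive words gives $\bar T_t/(1-\gamma)\ge\kappa'(D'_t-ma_0')-3m\log_2(1+\bar T_t/((1-\gamma)m))$, which is~\eqref{eq:mixrate}. The bound~\eqref{eq:mixactive} follows as in~\eqref{eq:active}. Each coordinate whose representative costs at most $\tau_0'$ contributes at most $a_0'$ bits, while every other coordinate contributes at most $\bar k$ bits. Summing these contributions against $D'_t$ bounds $\bar{\mathcal A}'^{>}_t$ from below.
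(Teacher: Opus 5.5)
Your proposal is correct and follows the paper's own proof: the representative $W^*_j$ with a share-independent frame from Theorem~\ref{thm:mixing}(b), the coarse variable with $K'$ values, the tube profile $9c_0+9c_1\tau+36c\,2^\tau\varepsilon'^2$ from Lemma~\ref{lem:cover} at radius $\varepsilon'$, and the proof of Theorem~\ref{thm:main2} rerun with $\sum_j\E\,t(W^*_j)\le\bar T_t/(1-\gamma)$. The one thing to tighten is the order in Step~4, and it is not a gap: the paper first proves \eqref{eq:mixactive} from the cheap/expensive splitting, then uses Lemma~\ref{lem:cost} to get $D_t'\le\bar T_t/(1-\gamma)-b'\bar{\mathcal A}'^{>}_t+ma_0'+3m\log_2\big(1+\bar T_t/((1-\gamma)m)\big)$, and substitutes \eqref{eq:mixactive} using $b'\ge0$; that substitution is what your ``optimization'' amounts to.
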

Since $2L'=L-\log_2(1/\gamma)-2\log_2(36\sqrt2)=L-\log_2(1/\gamma)-11.34$, one has $\tau_0'=\lfloor L-\log_2(1/\gamma)-\log_2(36c)-11.34\rfloor$ and, for every fixed $c$, $\kappa'\to1+\beta=3$; hence $\bar\Tpre\ge\tfrac32(r-1)(m\log_2K-2S)(1-o(1))-O(rm\log L)$. This meets Proposition~\ref{prop:mixach} in its leading term at $S=0$, both giving $\tfrac32(r-1)mL$, but not in its dependence on $S$: the lower bound falls with slope $3(r-1)$ in $S$, the whole-coordinate passthrough family only with $\tfrac32(r-1)$. The two rates are per different units---the bound charges $2$ per coarse bit given (R), of which a share has $\tfrac12L$, while the family trades whole shares of $\log_2Q$ bits against $\tfrac32L$ gates---and they are not to be quoted as one exchange rate $\alpha$ away from $S=0$. Closing the gap would need an achievable family that keeps the high bits of some coordinates and sheds their low bits cheaply, which we do not construct. A representative of cost at most $\tau_0'=L-O(1)$ carries at most $\log_2N_0'=O(\log L)$ bits about its coordinate: the fee is $L-\log_2(1/\gamma)-\log_2(36c)-O(1)$, one half of the deterministic $2L-\log_2c-O(1)$. The constant of Conjecture~\ref{conj:H} again enters only through $\log_2c$. At $\varepsilon=10^{-10}$ both~\eqref{eq:mixrate} and~\eqref{eq:mixactive} are vacuous, the radius $\varepsilon'=2^{-9.9}$ leaving only $L'=9.9$; \eqref{eq:mixactive} becomes informative from $L\approx45$, where it forces $24\%$ of the coordinates to commit a representative of cost above $\tau_0'=20$, and~\eqref{eq:mixrate} from $L\approx37$. Under the bounded-support hypothesis of Remark~\ref{rem:mixconst} the radius improves to $\sqrt2\eta$ and the chemistry-scale figures $\tau_0'=19$, $27.5\%$ are recovered.

\begin{proposition}[Measurement-adaptive protocols]\label{prop:adaptive}
Let a process be as in Definition~\ref{def:mixed} except that the round-$t$ program of coordinate $j$ is an adaptive circuit on $n$ qubits---the coordinate's qubit and $n-1$ ancillas prepared in $|0\rangle$ and measured or reset before the next program---built from Clifford gates, $T$ gates and single-qubit computational-basis measurements, in which each gate may depend on the outcomes obtained earlier in the same program, and such that on every outcome branch of nonzero probability the operation induced on the coordinate's qubit is unitary. Repeat-until-success and fallback synthesis~\cite{BRS,KLMPP} are of this form. Let $T_t$ and $M_t$ be the expected numbers of $T$ gates and of measurements executed in round $t$, and $|\mathcal C_n|$ the order of the $n$-qubit Clifford group modulo phase. Then, with $k_\gamma$ and $\rho_m$ as above,
\begin{multline}\label{eq:adaptive}
S+(2n+2)\frac{\bar T_t+\bar M_t}{1-\gamma}+m\log_2|\mathcal C_n|+\rho_m\Big(\frac{\bar T_t+\bar M_t}{1-\gamma}\Big)\\ \ge\ (1-\delta)mk_\gamma-h_2(\delta).
\end{multline}
\end{proposition}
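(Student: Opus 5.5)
The proof follows Theorem~\ref{thm:mixing}(a), replacing the committed word by a \emph{transcript} of the adaptive program. Fix a round $t<r$ and condition on $Z=(\mathrm{Past},\mathrm{Fut},\rho)$ as in Theorem~\ref{thm:main1}. For each coordinate $j$, a run of the round-$t$ program yields a sequence of gate records and measurement outcomes. Since each gate depends only on earlier outcomes of the same program, the run is determined by the program together with its outcome string. On a branch $\omega$ of nonzero probability, the operation on the coordinate's qubit is a unitary $X_\omega$. Thus the induced channel $\mathcal E^{(t)}_j=\sum_\omega p_\omega\,\mathcal U_{X_\omega}$ is mixed-unitary with one term per branch. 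The argument of Theorem~\ref{thm:mixing}(a) then applies verbatim to this mixture. Lemma~\ref{lem:fid} and Markov's inequality give a set of branches of mass $\ge1-\gamma$ whose composite fidelity with the target is $\ge1-\eta^2$. Among these we choose a \emph{representative} branch of least cost $T+M$, and its cost is then at most $(\bar T+\bar M)/(1-\gamma)$ in expectation. By Lemma~\ref{lem:conc}(i), this branch list-decodes the share to a window of half-width $w$, so Fano's inequality yields the right-hand side $(1-\delta)mk_\gamma-h_2(\delta)$ exactly as before.

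The new ingredient is the counting lemma that replaces Lemma~\ref{lem:MA}. We must bound the number of distinct \emph{descriptions} of a branch of cost $s=(\#T)+(\#\text{meas})$, since the unitary $X_\omega$ is a function of that description. The plan is to write a branch in normal form on $n$ qubits: an alternation of $n$-qubit Cliffords (absorbed into neighbours), $T$ gates, and measurements. Each $T$ gate is specified by a choice of qubit together with a Clifford frame, and conjugating it through the running Clifford turns it into a $\pi/8$ Pauli rotation, one of at most $2\cdot4^n$. A measurement is likewise a Pauli measurement, at most $2\cdot4^n$ choices, with its outcome bit absorbed because the branch fixes it. Each token therefore carries at most $\log_2(2\cdot4^n)+1=2n+2$ bits. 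A single trailing Clifford contributes $\log_2|\mathcal C_n|$, and resets and ancilla preparations are free. This gives $\#\{\text{branches of cost}\le s\}\le|\mathcal C_n|\,2^{(2n+2)s}$. Lemma~\ref{lem:ent} then carries over in the form $H(M)\le(2n+2)\bar s+m\log_2|\mathcal C_n|+\rho_m(\bar s)$, using the same stars-and-bars step over how the total cost $\bar s$ splits among coordinates. Combining this with $H(\sigma_t)\le S$ gives~\eqref{eq:adaptive}.

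The main obstacle is making the token count rigorous for adaptive circuits in which measured qubits are reset and reused, and in which later Cliffords depend on outcomes. I would handle this by fixing the branch first, so that adaptivity disappears and the branch is a straight-line circuit with postselected outcomes. I would then move every Clifford to the end by Pauli conjugation, in the style of the Pauli-based-computation normal form. Two points need care. First, the representative must be chosen as a branch, not as a unitary, because distinct branches may induce the same $X_\omega$; choosing the branch only overcounts, which is harmless. Second, the ``$+1$'' per token must cover the outcome bit and the $T$ versus measurement flag together. If a separate flag bit is needed, the $2n+2$ still suffices, because $2\cdot4^n$ already over-counts the signed Paulis. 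If it does not, I would accept the constant $2n+3$, which does not change the order.
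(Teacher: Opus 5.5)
Your proposal is correct and is essentially the paper's proof: transcripts obtained by pushing the Cliffords of a fixed branch to the right end, tokens drawn from $2\cdot4^n$ Pauli $\pi/8$-rotations and $2\cdot4^n$ Pauli projectors (so $2^{2n+2}$ per token) followed by one Clifford in $\mathcal C_n$, the Elias-$\gamma$ count of Lemma~\ref{lem:ent}, and the cheapest qualifying branch of Theorem~\ref{thm:mixing}(a) as the representative. Your hedge toward $2n+3$ is unnecessary: measuring $-P$ with outcome $+$ is the same projector as measuring $P$ with outcome $-$, so sign and outcome merge and the extra bit is only the $T$-versus-measurement flag; and the one step the paper spells out that you leave implicit is that $K_o^\dagger K_o=p_oI$ makes the branch weights input-independent, which is what makes each program a fixed mixed-unitary channel.
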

Asymptotically $\bar T_t+\bar M_t\ge(1-\gamma)(m\log_2K-2S)/(4n+4)\cdot(1-o(1))-O(m\log L)$: with one ancilla ($n=2$) a share that is not held costs at least $L/12$ committed $T$ gates plus measurements per round, against $0.53\log_2(1/\delta)+4.86\approx0.53L$ achievable by mixed fallback~\cite[Eq.~(13)]{KLMPP,Bothe}. The constant is poor because~\eqref{eq:adaptive} counts transcripts by a crude token bound and not by a normal form; at $n=1$ it gives four tokens per bit where Theorem~\ref{thm:mixing}(a) gives one. What~\eqref{eq:adaptive} settles is the order: bounded-ancilla measurement adaptivity does not remove the floor, it lowers the rate by a factor that depends only on $n$.

\paragraph{What the results say together.} In the deterministic CW model the rigorous floor per share is $2L$ with explicit constants (Theorem~\ref{thm:rate2}) and $(\tfrac52-o(1))L$ asymptotically (Theorem~\ref{thm:rate52}), the conjectural floor and fee are $3L$ and $2L$ (Theorem~\ref{thm:main2}), and the achievable cost is $3L$ (Theorem~\ref{thm:frontier}). Under mixing the same three statements read $L$ (Theorem~\ref{thm:mixing}(b)), $\tfrac32L$ and $L$ (Corollary~\ref{cor:mixq}, under the same Conjecture~\ref{conj:H}) and $\tfrac32L$ (Proposition~\ref{prop:mixach} under~\eqref{eq:mixhyp}; $1.52L$ measured in~\cite{KLMPP}), all of them asymptotic in $L$---the two lower bounds vacuous at $\varepsilon=10^{-10}$, the two achievable costs conditional on a synthesis hypothesis, Corollary~\ref{cor:frontier} in one model and~\eqref{eq:mixhyp} in the other. Mixing rescales the accuracy exponent by $\tfrac12$, and with it every quantity linear in $L$, leaving the ratio rigorous-with-explicit-constants\,:\,conjectural\,:\,achievable at $2:3:3$ in both models; it does not touch the exchange rate.

Remark~\ref{rem:mixconst} in Appendix~\ref{app:mixing} evaluates the constants. Theorem~\ref{thm:mixing} is asymptotic in the same sense as Theorems~\ref{thm:main1} and~\ref{thm:rate2}, and its additive terms are larger, because $\log_236$ and $A_2'$ are now paid against $\tfrac12L$ bits instead of $L$: at $\varepsilon=10^{-10}$, $\gamma=1/4$ and $S=0$, \eqref{eq:mixone} gives $\bar T_t\ge4.1\,m$ and~\eqref{eq:mixtwo} is vacuous, becoming positive only at $L\ge56$ ($L\ge45$ under the bounded-support hypothesis).

\section{Discussion}\label{sec:disc}

In the model where rotations are synthesized coordinate by coordinate without ancillas, a bit of aggregate phase information that a schedule does not carry across a round boundary costs at least two committed magic states with explicit constants, and at least $\tfrac52-o(1)$ asymptotically ($3-o(1)$ when every committed segment is $\varepsilon$-close to a Clifford-framed rotation, Theorem~\ref{thm:rate3seg}), and three suffice along the passthrough family whenever the grid-mean synthesis cost is $(3+o(1))L$---asymptotically in $L$, and after the first $O(\log L)$ bits of each coordinate. The two constants are not the same kind of number. Three is a volume count: a rotation coset has codimension $\beta=2$ in $\SU$, so a word pays $\beta L$ gates to enter the tube around it and one gate per bit thereafter, and $1+\beta$ is also the Ross--Selinger cost of a single rotation. Two is what the spectral method yields: the Ramanujan bound for the octahedral golden gates controls the tube only up to an error $2^{\tau/2}$, the square root of the number of all words of that cost, and a profile $2^{\tau/2}$ is worth $1+\beta/2$ gates per bit. Five halves is what elementary geometry of numbers yields without the spectral method, below its square-root barrier: eleven fifths from one determinant per box, seventeen sevenths once the heights of the hyperplanes are used, and every rate below $\tfrac52$ once the numerators are fibred over rational projections. It is the exact limit of that method as set up here, with one auxiliary hyperplane per box, because $\tfrac52$ is where a box of the five-point determinant has shrunk to an $\varepsilon$-cube, and going beyond it means showing that most $\varepsilon$-cubes along a tube are empty. Two further observations, both heuristic, locate the difficulty. Arguments that cover a tube by balls cannot pass $\tfrac52$ (Remark~\ref{rem:what52}): near rational points of small height there are sphere sections of tiny radius whose words crowd into a single ball. And, as far as we can see, the square-root inputs that are standard in this area---the Ramanujan bound, or a twisted Linnik conjecture of the kind that gives optimal strong approximation on $S^3$~\cite{BKS}---stop at rate $2$ in their natural circle-method and spectral implementations. The gap between two and three is the square-root remainder of the spectral estimate, which Theorems~\ref{thm:elem}, \ref{thm:dioph} and~\ref{thm:rate52} close only in part; whether the true rate is $3$ is not decided here, and Conjecture~\ref{conj:H}---which the enumeration supports on Haar-random cosets, on arithmetic cosets, and on the Clifford cosets, where Theorem~\ref{thm:clifford} proves it up to $2^{O(\tau/\log\tau)}$---is the statement that the arithmetic words do not know the difference.

Three consequences are worth stating plainly. First, the corner $S=0$ is where the law binds without anyone choosing to be memoryless. When the rotation angles are produced on the quantum side---qubitization, a QROM-loaded phase table, adaptive QSVT phases---no classical snapshot of them exists to be traded away. If a reduction of that setting to the classical-input model existed, Theorem~\ref{thm:rate2} would bind there with nothing subtracted for memory; none is given here, and the theorem is not to be applied to it: Fano's inequality is applied to a classical stream, whereas with quantum coefficients the object crossing the cut is a state and not a message. That corner is the natural target for this line of work rather than a consequence of it. Extending the model to quantum coefficients is left to future work. Second, memory cannot be traded for magic in small units inside a coordinate, and which units are available is a property of the synthesis mechanism, not of magic. With deterministic words the trade is quantized in whole rotations past the $O(\log L)$ bits that the eight $T$-powers and the powers of a cheap infinite-order word can buy (Remark~\ref{rem:firstbit}, Proposition~\ref{prop:cluster}), and Ross--Selinger synthesis at $\varepsilon=10^{-10}$ puts the penalty for shedding low bits at between $4.5$ and $20$ times the passthrough slope. With probabilistic mixing the low $\tfrac12L-\tfrac12\log_2L-O(1)$ bits of every share are free~\cite{Bothe}, the bits at $\tfrac12L-O(1)$ are cheap for no mixed program at all (Lemma~\ref{lem:nocheap}, which bounds the free zone without locating its edge), the coarse bits are charged in total at the same rate as before (Theorem~\ref{thm:mixing}, Corollary~\ref{cor:mixstop}), and under Conjecture~\ref{conj:H} they are again quantized in whole rotations, now with fee $L-O(1)$ (Corollary~\ref{cor:mixq}): the rule ``shed whole coordinates, never low bits'' becomes ``shed the low $\tfrac12L-\tfrac12\log_2L$ bits of every coordinate, and above that shed whole coordinates''. The charge is on the total and not on bit positions: a Clifford or a power of $T$ carries the top $\log_2\gcd(Q,8)$ bits for free wherever they sit. Measurement-adaptive synthesis with a bounded number of ancillas lowers the rate by a factor that depends only on the qubit count and does not remove the floor (Proposition~\ref{prop:adaptive}); there the floor is on the token count, $T$ gates together with measurements, and we do not separate the two. Only the phase-gradient register removes the premium altogether, because there the total cost of a coordinate is split-invariant. With clean ancillas and batched table lookups it removes the law itself, at $O(L/\log L)$ committed $T$ gates per share (Proposition~\ref{prop:batch}). For ancilla-free pipelines, however, a resource estimate that reads the final $T$-count of a mixed pipeline must still book at least $L-O(\log L)$ committed $T$ gates for every share that is not held---one half of what it books for a deterministic pipeline, and not zero. Third, side information enters at the level of a conditional entropy: a process that holds the seed of the randomization faces no information obstruction, and one that misses an injective $\lambda$-bit seed independent of the aggregate is charged for at most $\lambda$ bits rather than $m\log_2K$---and, on the two-round streams with uniform aggregate where the matching lower bound holds, for at least $\lambda-m\log_2(Q/K)$ (Corollary~\ref{cor:side}(c)). That is the regime of randomized compiling, where $\lambda$ is a handful of bits. Over more rounds the upper bound is all that survives: a seed spent in one round and cancelled in a later one leaves nothing to charge for.

Two unconditional statements bound the arithmetic question from the other side: Selinger's worst-case $T$-count $4L-9$ for a specific $z$-rotation~\cite[Thm.~30]{Selinger15}, and the covering exponents of Parzanchevski and Sarnak, who show that some single-qubit unitaries need $T$-count $4L$ while all are reachable at $6L$~\cite{PS18}. The committed-magic frontier is a statement about the typical rather than the worst case, and it sits between the rigorous and the conjectural covering exponents in the same way. Theorems~\ref{thm:elem}, \ref{thm:dioph} and~\ref{thm:rate52} show that the square-root barrier can be crossed for these lattice points near a great circle of $S^3$, for $T$-counts up to $(\tfrac52-o(1))L$. Closing the gap between $\tfrac52$ and $1+\beta$ for committed words with arbitrary frames is the remaining problem, and we leave it open; by Proposition~\ref{prop:dense} it requires at least excluding tubes that carry words of $T$-count $(\tfrac52+o(1))L$ with gaps of at most $2^{o(L)}$ grid steps along their core.

\begin{acknowledgments}
J.Y.'s deepest thanks go to two teachers who are also coauthors of this
paper, Yangyang Li and Xiu-Hao Deng. Their help was immense; without them
this paper would not exist.

J.Y. thanks Ke Lin of Xidian University for opening the door to research
and for the recommendation that led him to join Prof.~Li's group.

J.Y. also thanks the teachers who were willing to offer guidance along the
way, and the senior members of Prof.~Li's group, who have been generous
with their help throughout this work.

Thanks are due, finally, to everyone who supported this paper, too many to
name individually here.

Y.L. acknowledges support from the National Natural Science Foundation of
China under Grant No.~62476209, the Key Research and Development Program of
Shaanxi under Grant No.~2024CY2-GJHX-18, and the Fundamental Research Funds
for the Central Universities under Grant No.~QTZX26097. X.-H.D.
acknowledges support from the Science, Technology and Innovation Commission of Shenzhen
Municipality (Grant Nos.~JCYJ20170412152620376 and KYTDPT20181011104202253),
the Innovation Program for Quantum Science and Technology (Grant
No.~2021ZD0301703), the Guangdong Major Project of Basic Research (Grant
No.~2025B0303000007), and the Shenzhen Science and Technology Program
(Grant No.~KQTD20200820113010023).

Generative AI tools were used to assist with manuscript drafting and
revision, code and repository maintenance, literature discovery, and
consistency checks. The authors directed their use through task-specific
prompts and independently reviewed the generated text, citations, code, and
numerical claims against the cited literature, source files, executable tests,
and the exact rational certificates reported in this work. AI output was
not treated as a scientific source or as independent proof verification. The
authors take full responsibility for the manuscript, proofs, code, data,
figures, citations, and conclusions.
\end{acknowledgments}

\section*{Author contributions}
J.Y. developed the theory and proofs, implemented the enumeration, synthesis and certificate code, and performed
the numerical computations and analyses. Y.L. supervised the research, reviewed and revised the manuscript,
and provided project administration and funding support. X.-H.D.
independently validated the theoretical derivations, numerical results, and
computational artifacts. All authors wrote the manuscript together,
discussed the results, and approved the final version.

\section*{Data availability}
{\raggedright
The exhaustive enumeration of Clifford+$T$ words to $T$-count $22$, the coset counts of Table~\ref{tab:counts} and Fig.~\ref{fig:tube}, the exact-optimal and \texttt{gridsynth} data of Fig.~\ref{fig:frontier}, and every script that produces a number or figure of this paper are available at \url{https://github.com/OIerYangJZ/memory-magic-exchange-law}. The scripts are in \texttt{scripts/}, among them \texttt{enum2.py} and \texttt{enum\_arith.py} (the enumeration and coset counts), \texttt{gridsynth\_run.py}, \texttt{fig1\_fig.py} and \texttt{make\_fig2.py} (the figures), \texttt{thm\_numbers.py} (Table~\ref{tab:numbers}), \texttt{mixing\_bounds.py} (the constants of Sec.~\ref{sec:mixing}) and \texttt{scan2.py}, \texttt{scan2\_report.py} and \texttt{astra1.py} (the frame scans of Sec.~\ref{sec:num}). The exact rational certificates of Appendices~\ref{app:elem} and~\ref{app:dioph} are \texttt{research/cert/certificate.py}, \texttt{research/cert/certificate\_223.py} and \texttt{research/dioph/cert\_dioph.py}. The exact certificate of Appendix~\ref{app:proj} is \texttt{research/annulus/cert\_proj.py}, and the continuum model and SMT check of Appendix~\ref{app:rate52} are \texttt{research/uniform52/cont\_model.py} and \texttt{research/uniform52/verify\_z3.py}, with the solver output in \texttt{research/uniform52/z3\_outputs.txt}. The example of Remark~\ref{rem:what52} is reproduced by \texttt{research/conditional3/ballpts.cpp} (\texttt{ballpts 25 5.828427 1 1 -1 0.0025}) and \texttt{research/conditional3/review/check\_sections.py}. The repository README lists, for every table, figure and theorem, the script and data file that produce it.\par}

\appendix

\section{The streaming model}\label{app:model}

This appendix states the four items of~\cite{Paper1} that Secs.~\ref{sec:setting}--\ref{sec:side} use, in the form in which they are used, with proofs, so that the present paper can be refereed on its own. The definitions are those of~\cite[Def.~9, 13, 20, 21]{Paper1} and the two statements are~\cite[Lem.~1, Prop.~10]{Paper1}; no other result of~\cite{Paper1} is used anywhere above.

\begin{definition}[Charged one-pass process]\label{def:model}
A \emph{charged one-pass process} on the stream $a^{(1)},\dots,a^{(r)}$ is an algorithm with a random tape $\rho$ and a write-once output tape such that
\begin{enumerate}\setlength\itemsep{0pt}
\item[(i)] it reads the stream once, in order, and may keep unbounded internal state while reading a round;
\item[(ii)] at each \emph{cut} $c$ between consecutive rounds it emits a \emph{snapshot} $\sigma_c$, a finite bit string, with the \emph{restart} property: everything the process writes after $c$ is a function of $\sigma_c$ and of the part of the stream after $c$. The tape $\rho$ is part of $\sigma_c$, so a process that wishes to reuse its randomness pays for it;
\item[(iii)] snapshots are serialized in a prefix-free code, whence $H(\sigma_c)\le\E|\sigma_c|$;
\item[(iv)] its \emph{charge} is $S:=\max_{\mathrm{input}}\E_\rho\max_cB_{\mathrm{cut}}(c)$, where $B_{\mathrm{cut}}(c):=|\sigma_c|$.
\end{enumerate}
For $p>1$ passes the process may re-read the stream $p$ times, and the object playing the role of $\sigma_c$ is the transcript of all $2p-1$ crossings of the cut.
\end{definition}
Condition (iii) is free: prefixing a snapshot with its own length in Elias-$\gamma$ makes any serialization prefix-free at a cost of $2\log_2(|\sigma_c|+1)+1$ bits, which is dominated by the $O(\log(mQr))$ terms of every construction below. Condition (ii) is the only substantive one, and it is what an append-only instruction stream enforces: the process may not revise what it has already emitted, and what it has not written down it must remember.

\begin{definition}[Dispersed family, canonical sharing, tuned modulus]\label{def:family}
With $Q,\Delta,K,m,V,P_j,U_x$ as in Sec.~\ref{sec:setting}, an \emph{$r$-round dispersed stream} for the aggregate $x\in[K]^m$ is any $a^{(1)},\dots,a^{(r)}\in\Z_Q^m$ with $\sum_ta^{(t)}\equiv x\pmod Q$; correctness is required on every such stream, not only on typical ones. The \emph{canonical sharing} $\mathcal D$ draws $a^{(1)},\dots,a^{(r-1)}$ i.i.d.\ uniform on $\Z_Q^m$ and sets $a^{(r)}:=x-\sum_{t<r}a^{(t)}$. The \emph{tuned} modulus at accuracy $\varepsilon$ is $Q_\varepsilon:=\lfloor\pi/(2\arcsin2\varepsilon)\rfloor$.
\end{definition}

\begin{lemma}[Tuned modulus]\label{lem:tuned}
For $0<\varepsilon\le1/8$ one has $Q_\varepsilon\ge6$, $\sin(\pi/2Q_\varepsilon)\ge2\varepsilon$, and $\log_2K_\varepsilon=L-\log_2(4/\pi)+o(1)=L-0.35+o(1)$.
\end{lemma}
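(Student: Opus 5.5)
The statement is Lemma~\ref{lem:tuned}, with three claims: $Q_\varepsilon\ge6$, $\sin(\pi/2Q_\varepsilon)\ge2\varepsilon$, and the asymptotic for $\log_2K_\varepsilon$. I would prove them in that order, working directly from the definition $Q_\varepsilon=\lfloor\pi/(2\arcsin2\varepsilon)\rfloor$.

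\emph{The bound $Q_\varepsilon\ge6$.} Since $\arcsin$ is increasing, $\varepsilon\le1/8$ gives $\arcsin2\varepsilon\le\arcsin\tfrac14$. Because $\sin(\pi/12)=(\sqrt6-\sqrt2)/4\approx0.2588>\tfrac14$, we have $\arcsin\tfrac14<\pi/12$. Hence $\pi/(2\arcsin2\varepsilon)>6$, and taking the floor gives $Q_\varepsilon\ge6$.

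\emph{The separation $\sin(\pi/2Q_\varepsilon)\ge2\varepsilon$.} From $Q_\varepsilon\le\pi/(2\arcsin2\varepsilon)$ we get $\pi/(2Q_\varepsilon)\ge\arcsin2\varepsilon$. Both sides lie in $[0,\pi/2]$, where $\sin$ is monotone, so applying $\sin$ gives the claim. This also shows that $Q_\varepsilon$ is the largest $Q$ with this separation, as stated in Sec.~\ref{sec:setting}.

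\emph{The asymptotic.} Write $x:=\pi/(2\arcsin2\varepsilon)$. Since $\arcsin y=y+O(y^3)$, we have $\arcsin2\varepsilon=2\varepsilon(1+O(\varepsilon^2))$, so $x=\tfrac{\pi}{4\varepsilon}(1+O(\varepsilon^2))$. The definitions give $K_\varepsilon=Q_\varepsilon-1\in(x-2,x-1]$, so $K_\varepsilon=x(1+O(1/x))=\tfrac{\pi}{4\varepsilon}(1+O(\varepsilon))$. Taking logarithms,
\[
\log_2K_\varepsilon=\log_2(1/\varepsilon)-\log_2(4/\pi)+O(\varepsilon)=L-\log_2(4/\pi)+o(1).
\]
Finally $\log_2(4/\pi)=2-\log_2\pi\approx0.348$, which is the constant $0.35$.

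No step is a real obstacle. The only care needed is handling the floor together with the subtraction of $1$ in $K=Q-1$: both contribute only a relative error $O(\varepsilon)$, so they disappear inside the $o(1)$. The only numerical check needed is $\sin(\pi/12)>1/4$, which is what makes $\varepsilon\le1/8$ sufficient for the first claim.
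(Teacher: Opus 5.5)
Your proof is correct and follows the paper's argument step for step. Both use monotonicity of $\arcsin$ and $\sin$ together with $\arcsin\tfrac14<\pi/12$ for the first two claims, then $\arcsin2\varepsilon=2\varepsilon(1+O(\varepsilon^2))$ for the asymptotic, absorbing the floor and the $-1$ into the error; your explicit $O(\varepsilon)$ remainder is slightly sharper than the paper's $o(1)$. Your side remark that the argument "also shows" $Q_\varepsilon$ is the largest admissible modulus would additionally need $Q_\varepsilon+1>\pi/(2\arcsin2\varepsilon)$, but that is not part of the lemma.
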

\begin{proof}
$Q_\varepsilon\le\pi/(2\arcsin2\varepsilon)$ gives $\pi/2Q_\varepsilon\ge\arcsin2\varepsilon$, i.e.\ $\sin(\pi/2Q_\varepsilon)\ge2\varepsilon$, which is the accuracy condition of Sec.~\ref{sec:setting} with a factor $2$ to spare; $\varepsilon\le1/8$ gives $\arcsin2\varepsilon\le\arcsin\tfrac14<\pi/12$ and so $Q_\varepsilon\ge6$. Since $\arcsin2\varepsilon=2\varepsilon(1+O(\varepsilon^2))$, $Q_\varepsilon=\pi/(4\varepsilon)+O(1)$ and $\log_2K_\varepsilon=\log_2(Q_\varepsilon-1)=L+\log_2(\pi/4)+o(1)$.
\end{proof}

\begin{lemma}[Projective distance]\label{lem:dproj}
For unitaries $A,B$ of equal size, $\dproj(A,B)=2\sin(\Theta/4)$, where $\Theta\in[0,2\pi]$ is the length of the shortest closed arc of the unit circle containing $\spec(B^\dagger A)$.
\end{lemma}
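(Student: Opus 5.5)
The plan is to reduce the infimum over phases to a one-dimensional problem on the unit circle, and then to identify that problem's value with half the length of the shortest arc containing the spectrum. The argument uses only unitary invariance of the operator norm and normality of $B^\dagger A$. No case split on whether $\Theta\le\pi$ is needed.

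\emph{Step 1 (reduction to the spectrum).} For every $\phi$, unitary invariance gives $\|A-e^{i\phi}B\|=\|B^\dagger A-e^{i\phi}I\|$. The matrix $N:=B^\dagger A$ is unitary, hence normal, so $N-e^{i\phi}I$ is normal with eigenvalues $\lambda-e^{i\phi}$, $\lambda\in\spec(N)$. Its norm is therefore its spectral radius, and
\[
\dproj(A,B)=\inf_{z\in S^1}\ \max_{\lambda\in\spec N}|\lambda-z| .
\]
The function being minimised is continuous in $z$ and $S^1$ is compact, so the infimum is attained.

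\emph{Step 2 (chords versus angles).} For $\lambda,z\in S^1$ at angular distance $d(\lambda,z)\in[0,\pi]$, measured along the shorter arc, one has $|\lambda-z|=2\sin(d/2)$. This is strictly increasing on $[0,\pi]$. Hence
\[
\dproj(A,B)=2\sin\!\Big(\tfrac12\,d^*\Big),\qquad d^*:=\min_{z}\max_{\lambda}d(\lambda,z),
\]
and it remains to show that $d^*=\Theta/2$. Note that $\Theta/2\in[0,\pi]$, so the formula stays within the monotone range even when $\Theta>\pi$.

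\emph{Step 3 ($d^*=\Theta/2$).} For the upper bound, let $z_0$ be the midpoint of a shortest closed arc $I$ of length $\Theta$ containing $\spec N$. Every eigenvalue reaches $z_0$ along $I$ within arc length $\Theta/2$. Since the angular distance is at most any connecting arc length, $d(\lambda,z_0)\le\Theta/2$ for all $\lambda$, so $d^*\le\Theta/2$.

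For the lower bound, let $z$ be any point and put $d:=\max_\lambda d(\lambda,z)\le\pi$. Every eigenvalue then lies in the closed arc centred at $z$ of length $2d$; when $d=\pi$ this arc is the whole circle, which is still a closed arc of length $2\pi$. This arc contains $\spec N$, so minimality of $\Theta$ gives $\Theta\le2d$. Taking the minimum over $z$ yields $d^*\ge\Theta/2$.

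Combining the two bounds, $d^*=\Theta/2$ and $\dproj(A,B)=2\sin(\Theta/4)$.

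The only point needing care is the lower bound in Step 3. It must be phrased with closed arcs of length up to $2\pi$, so that it covers spectra not contained in any half-circle, including the degenerate case $\Theta=2\pi$ in which $\dproj=2\sin(\pi/2)=2$. Once arcs are allowed to have any length in $[0,2\pi]$, the midpoint argument and the minimality argument apply uniformly.
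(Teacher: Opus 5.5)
Your proof is correct and follows the paper's argument essentially step for step. You reduce via unitary invariance and normality of $B^\dagger A$ to $\inf_\phi\max_\lambda|\lambda-e^{i\phi}|$, pass to angular distance through the monotonicity of $2\sin(d/2)$ on $[0,\pi]$, take the midpoint of a shortest arc for the upper bound, and use minimality of $\Theta$ against the closed arc of length $2D$ centred at $e^{i\phi}$ for the lower bound; your explicit treatment of $\Theta>\pi$, including $\Theta=2\pi$, makes precise what the paper leaves implicit.
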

\begin{proof}
By unitary invariance of the operator norm, $\dproj(A,B)=\inf_\phi\|B^\dagger A-e^{i\phi}I\|=\inf_\phi\max_{\lambda\in\spec(B^\dagger A)}|\lambda-e^{i\phi}|$, since $B^\dagger A$ is normal. Write $d(\phi,\lambda)\in[0,\pi]$ for angular distance, so that $|\lambda-e^{i\phi}|=2\sin(d(\phi,\lambda)/2)$, increasing in $d$. If $e^{i\phi_0}$ is the midpoint of a shortest arc, every eigenvalue is within $\Theta/2$ of it and the two ends attain that, so the maximum equals $2\sin(\Theta/4)$. Conversely, for any $\phi$ put $D:=\max_\lambda d(\phi,\lambda)\le\pi$; the spectrum then lies in the closed arc of length $2D$ centred at $e^{i\phi}$, so $\Theta\le2D$ by minimality and $\max_\lambda|\lambda-e^{i\phi}|=2\sin(D/2)\ge2\sin(\Theta/4)$.
\end{proof}

\begin{proposition}[Block compiler]\label{prop:block}
For every $q\le m$ there is a deterministic one-pass process which, on every $r$-round dispersed stream, keeps the running sums modulo $Q$ of $q$ designated coordinates in a snapshot of at most $\lceil q\log_2Q\rceil+O(\log(mQr))$ bits and, at round $r$, knows the exact aggregate $x_j$ of each designated coordinate.
\end{proposition}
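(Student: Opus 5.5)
The process is the obvious one: keep, for each designated coordinate $j$, a running sum $s_j^{(t)}:=\sum_{u\le t}a^{(u)}_j\bmod Q$. The snapshot at each cut is the concatenation of these $q$ residues together with a little bookkeeping, and we check the restart property and the size bound. Nothing is emitted for designated coordinates before round $r$. In round $r$ the process reads $a^{(r)}_j$ and forms $s_j^{(r)}=s_j^{(r-1)}+a^{(r)}_j\bmod Q$. Since $\sum_ta^{(t)}\equiv x\pmod Q$ and $x_j\in[K]\subset\{0,\dots,Q-1\}$, the residue $s^{(r)}_j$ is exactly $x_j$, so the aggregate is known exactly. The process is deterministic, so no random tape enters the snapshot.

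For the encoding of the $q$ residues, use a single mixed-radix integer $\sum_i s_{j_i}Q^{i-1}<Q^q$ written in binary, which costs $\lceil q\log_2Q\rceil$ bits rather than $q\lceil\log_2Q\rceil$. To the residues we add the round index $c\in[r]$ and, if $q$ and the list of designated coordinates are not hard-wired, the integers $q\le m$ and $Q$. These are $O(\log(mQr))$ bits, since the designated set can be taken to be a fixed public set such as the first $q$ coordinates and need not be stored. Prefix-freeness, item (iii) of Definition~\ref{def:model}, is obtained by prefixing each field with its length in Elias-$\gamma$ code. This adds $O(\log\log Q^q+\log\log(mQr))\le O(\log(mQr))$ bits, absorbed into the additive term because $\log q\le\log m$.

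The restart property, item (ii), needs that everything written after cut $c$ is a function of $\sigma_c$ and the later rounds. For designated coordinates the only later output is the final rotation, which depends only on $s^{(c)}$ and $a^{(c+1)},\dots,a^{(r)}$. Undesignated coordinates are not the subject of the proposition; in uses such as Theorem~\ref{thm:frontier} they are emitted on arrival and are functions of the current round alone. So the snapshot determines the continuation, and the bound on $|\sigma_c|$ holds at every cut and on every input. The worst-input expectation in the charge $S$ is then bounded by the same quantity.

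I expect no genuine obstacle. The only points needing care are two bookkeeping steps. First, the bound is $\lceil q\log_2Q\rceil$ rather than $q\lceil\log_2Q\rceil$, which forces the joint mixed-radix encoding. Second, the prefix-free and header overhead must stay within $O(\log(mQr))$ uniformly in $q$.
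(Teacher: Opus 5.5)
Your proposal is correct and follows the paper's proof: the paper also stores the $q$ partial sums as one integer in $[Q^q]$ ($\lceil q\log_2Q\rceil$ bits), adds the round index, position and a fixed program counter in $O(\log(mQr))$ bits made prefix-free as in Definition~\ref{def:model}(iii), takes the designated set as fixed in advance, and reads off $x_j$ as the final sum modulo $Q$. Your explicit checks of the restart property and of the Elias-$\gamma$ overhead spell out what the paper leaves implicit.
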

\begin{proof}
The state carried across a cut is the vector of partial sums in $\Z_Q^q$, encoded as one integer in $[Q^q]$, i.e.\ $\lceil q\log_2Q\rceil$ bits; the remainder of the snapshot is the round index, the position within the round and a fixed-size program counter, $O(\log(mQr))$ bits, made prefix-free as in Definition~\ref{def:model}\,(iii). The designated set is fixed in advance and need not be stored. Each arriving share is added modulo $Q$ into its partial sum, so after round $r$ the $j$-th sum is $\sum_ta^{(t)}_j\bmod Q=x_j$.
\end{proof}
The gap between $\log_2Q$ here and the $\log_2K$ of the lower bounds is the difference between storing a share, which ranges over $\Z_Q$, and the entropy of an aggregate, which ranges over $[K]$; it is $\log_2(Q/K)=O(1/Q)$ per stored coordinate.

\section{Proofs of the lemmas}\label{app:lemmas}

\begin{proof}[Proof of Lemma~\ref{lem:loc}]
By Lemma~\ref{lem:dproj}, $\dproj(A,B)=2\sin(\Theta/4)$ where $\Theta\in[0,2\pi]$ is the length of the shortest closed arc containing $\spec(B^\dagger A)$. The spectrum of $\bigotimes_jB_j^\dagger A_j$ consists of the products of eigenvalues. Fix $j$ and two eigenvalues $\lambda,\lambda'$ of $B_j^\dagger A_j$ at circular distance $\theta_j\le\pi$; multiplying both by a fixed product $\mu$ of eigenvalues of the other factors gives two points of the joint spectrum at circular distance $\theta_j$, and any closed arc containing both has length at least $\theta_j$. Hence $\Theta\ge\theta_j$, and since $\sin$ increases on $[0,\pi/2]$, $\dproj(A,B)\ge2\sin(\theta_j/4)=\dproj(A_j,B_j)$. For the right inequality replace one factor at a time: $\|\bigotimes A_j-e^{i\sum\phi_j}\bigotimes B_j\|\le\sum_j\|A_j-e^{i\phi_j}B_j\|$, then take infima. For the closed form, with eigenphases $a,b$ of $B^\dagger A$ one has $|\tr|^2=2+2\cos\Theta$ and $2\sin(\Theta/4)=\sqrt{2-2\cos(\Theta/2)}=\sqrt{2-|\tr|}$, using $\cos(\Theta/2)\ge0$ for $\Theta\le\pi$.
\end{proof}

\begin{proof}[Proof of Lemma~\ref{lem:MA}]
Every Clifford+$T$ unitary has a unique normal form $(T|\epsilon)(HT|SHT)^*C$ with $C$ one of the $24$ Cliffords modulo phase~\cite{MA08}, and the normal form is $T$-optimal~\cite{KMM13}. Strings with exactly $t$ letters $T$ number $2^{t-1}$ (initial $T$ present) plus $2^t$ (absent), times $24$. For tuples, $\prod_jn_1(t_j)\le36^m2^{\sum_jt_j}$ and the number of $(t_j)$ with $\sum t_j\le T$ is $\binom{T+m}{m}$.
\end{proof}

\begin{proof}[Proof of Lemma~\ref{lem:ent}]
Encode $M$ by the Elias-$\gamma$ code of $T$ (at most $2\log_2(T+1)+1$ bits) followed by an index among $N^{\le}_m(T)$ tuples; $H(M)$ is at most the expected length of this prefix code. Use $\log_2\binom{T+m}{m}\le m\log_2(e(1+T/m))$, which is concave in $T$, and Jensen.
\end{proof}

\begin{proof}[Proof of Lemma~\ref{lem:identity}]
Write the unitary as $2^{-k/2}\begin{pmatrix}u&-t^\dagger\omega^l\\ t&u^\dagger\omega^l\end{pmatrix}$ with $u,t\in\Z[\omega]$, $\omega=e^{i\pi/4}$, $uu^\dagger+tt^\dagger=2^k$~\cite{KMM13}. By the closed form of Lemma~\ref{lem:loc}, closeness to a diagonal unitary forces $|t|^2\le2\varepsilon^22^k$. The Galois conjugate $\sqrt2\mapsto-\sqrt2$ of the norm equation gives $|t^\bullet|^2\le2^k$. Hence $\xi:=tt^\dagger\in\Z[\sqrt2]$ is totally non-negative with $N_{\Q(\sqrt2)/\Q}(\xi)=\xi\xi^\bullet\le2\varepsilon^24^k<1$, which forces $\xi=0$ and $t=0$. Then $uu^\dagger=2^k$; the prime above $2$ in $\Z[\omega]$ is $(1-\omega)$ with $(1-\omega)^4\sim2$, and the units of relative norm one are the $\omega^j$, so $u=\omega^j(1-\omega)^{2k}$ times a unit and the unitary is $\operatorname{diag}(\omega^a,\omega^b)$ modulo phase.
\end{proof}

\section{Proof of Theorem~\ref{thm:tube}}\label{app:tube}

\emph{Hopf reduction.} Lift $W$, $G_1$, $G_2$ to $\SU$; $\dproj$ does not see the phase. Put $W':=G_1^\dagger WG_2^\dagger=\begin{pmatrix}a&-\bar b\\ b&\bar a\end{pmatrix}$. Then $\tr(R_z(\theta)^\dagger W')=2\operatorname{Re}(e^{i\theta/2}a)$, so by Lemma~\ref{lem:loc}, $\min_\theta\dproj(W',R_z(\theta))=\sqrt{2-2|a|}$ and
\begin{multline}\label{eq:btube}
\exists\theta:\ \dproj(W,G_1R_z(\theta)G_2)\le\varepsilon\\
\iff\ |b(W')|^2\le\varepsilon^2\big(1-\tfrac{\varepsilon^2}4\big).
\end{multline}
Let $H_1:=G_1S^1G_1^{-1}$ be the maximal torus containing $G_1R_zG_1^{-1}$ and $\pi:\PU\to H_1\backslash\PU\cong S^2$ the Hopf map, with the measure on $S^2$ normalized to $1$. The function $W\mapsto|b(G_1^\dagger WG_2^\dagger)|$ is left-$H_1$-invariant, hence a function of $\pi(W)$, and its sublevel sets are the spherical caps centred at $\pi(G_1G_2)$, the point through which the fibre is the coset $C$. Since $|b|^2$ is uniform on $[0,1]$ under Haar measure,
\begin{equation}
T_\varepsilon(C)\subseteq\{W:|b(G_1^\dagger WG_2^\dagger)|^2\le\varepsilon^2\}=\pi^{-1}(\mathrm{cap}),
\end{equation}
a cap of measure $\varepsilon^2$, i.e.\ of angular radius $\theta_\varepsilon$ with $\sin^2(\theta_\varepsilon/2)=\varepsilon^2$.

\emph{Majorant.} Let $r:=1-2\varepsilon\in[\tfrac34,1)$ and let
\begin{equation}
P_r(u)=\sum_{\ell\ge0}(2\ell+1)r^\ell P_\ell(u)=\frac{1-r^2}{(1-2ru+r^2)^{3/2}}
\end{equation}
be the Poisson kernel, a positive zonal function on $S^2$ of mean $1$, decreasing in the polar angle. Because $1-2r\cos\theta_\varepsilon+r^2=(1-r)^2+4r\sin^2(\theta_\varepsilon/2)=4\varepsilon^2(1+r)$ and $1-r^2=2\varepsilon(1+r)$, its value at the rim of the cap is $P_r(\cos\theta_\varepsilon)=1/(4\varepsilon^2\sqrt{1+r})$. Set
\begin{equation}
F_+:=a_\varepsilon\,(P_r\circ\langle\cdot,x_0\rangle)\circ\pi,\qquad a_\varepsilon:=4\varepsilon^2\sqrt{1+r}\le4\sqrt2\,\varepsilon^2,
\end{equation}
with $x_0$ the centre of the cap. Then $F_+\ge0$ everywhere, $F_+\ge1$ on $\pi^{-1}(\mathrm{cap})$, $F_+$ is left-$H_1$-invariant, and $\mu(F_+)=a_\varepsilon$. No smoothing or truncation is needed: the coefficients $r^\ell$ decay geometrically at scale $\ell\sim1/\varepsilon$, which is exactly the spectral width a cap of radius $\varepsilon$ forces (Remark~\ref{rem:barrier}), and they are explicit.

\emph{Hecke reduction.} $\Lambda_k$ is closed under inversion ($T$-count is symmetric). Write $\Lambda_k=\bigsqcup_i\gamma_iU$, one left coset per vertex at distance $\le k$, and let $\bar F(g):=\frac1{24}\sum_{u\in U}F_+(ug)$, which is left-$U$-invariant. Then
\begin{align}
\sum_{\gamma\in\Lambda_k}F_+(\gamma)&=\sum_{\gamma\in\Lambda_k}F_+(\gamma^{-1})=\sum_i\sum_{u\in U}F_+(u^{-1}\gamma_i^{-1})\notag\\
&=24\sum_i\bar F(\gamma_i^{-1})=24\sum_{j\le k}(T_j\bar F)(e).
\end{align}

\emph{Isotypic decomposition.} By Peter--Weyl, $L^2(\SU)=\bigoplus_nV_n\otimes V_n^*$, and only even $n=2\ell$ occur on $\PU$; the decomposition is preserved by left and right translations, hence by $T_j$ and by left-$U$-averaging. Since $F_+$ is left-$H_1$-invariant, its $n$-th component is a rank-one matrix coefficient $F_n(g)=\langle v_0,\pi_n(g)w_n\rangle$ with $v_0$ the unique $H_1$-fixed unit vector, and it corresponds to the degree-$\ell$ spherical-harmonic component $f_\ell$ of $F_+$ viewed on $S^2$, with $\|F_n\|_2=\|f_\ell\|_{L^2(S^2)}$ and $n+1=2\ell+1$. Left translations act on the $v$ side only, so $T_j\bar F_n(g)=\langle v',\pi_n(g)w_n\rangle$ is again rank one, and rank-one coefficients satisfy $\|\phi\|_\infty\le\sqrt{n+1}\,\|\phi\|_2$. For $n\ge1$, using (R) and $\|\bar F_n\|_2\le\|F_n\|_2$,
\begin{equation}
\begin{aligned}
|(T_j\bar F_n)(e)|&\le\sqrt{n+1}\,\|T_j\bar F_n\|_2\\
&\le\sqrt{n+1}\,(j+1)2^{j/2}\,\|F_n\|_2 .
\end{aligned}
\end{equation}
The $n=0$ term contributes $24\sum_{j\le k}(\#\text{vertices at distance }j)\,\mu(F_+)=N_k\,a_\varepsilon\le72\cdot4\sqrt2\cdot2^k\varepsilon^2$, which is the main term with $C_1=72\cdot4\sqrt2\le408$.

\emph{Summation.} Two geometric series finish the proof. First $\sum_{j\le k}(j+1)2^{j/2}\le(1-2^{-1/2})^{-1}(k+1)2^{k/2}$. Second, the degree-$\ell$ component of $a_\varepsilon P_r$ is $a_\varepsilon r^\ell(2\ell+1)P_\ell$, and $\|P_\ell\|_{L^2(S^2)}=(2\ell+1)^{-1/2}$, so with $(1-r)^2=4\varepsilon^2$,
\begin{multline}
\sum_{n\ge1}\sqrt{n+1}\,\|F_n\|_2=a_\varepsilon\!\sum_{\ell\ge1}(2\ell+1)r^\ell\\
\le\ a_\varepsilon\frac{1+r}{(1-r)^2}=(1+r)^{3/2}\le2^{3/2}.
\end{multline}
Collecting, $\#(\Lambda_k\cap T_\varepsilon(C))\le\sum_{\gamma\in\Lambda_k}F_+(\gamma)\le C_12^k\varepsilon^2+C_2(k+1)2^{k/2}$ with
\begin{equation}
C_2=24\,(1-2^{-1/2})^{-1}2^{3/2}=231.7\le232 .
\end{equation}
Taking $1-r=\lambda\varepsilon$ with $\lambda\ne2$ trades the two constants against each other, $C_1=72(\lambda^2+4r)^{3/2}/(\lambda(1+r))$ and $C_2=24(1-2^{-1/2})^{-1}(\lambda^2+4r)^{3/2}/\lambda^3$; only $\log_2C_2$ enters the bounds of Sec.~\ref{sec:tube}, through $A_2$, and it varies by one bit over $\lambda\in[2,4]$. \qed

\section{Proofs of Theorems~\ref{thm:rate2} and~\ref{thm:main2}}\label{app:rate}

\begin{lemma}[Cost-weighted mutual information]\label{lem:gibbs}
Let $(A,W)$ be jointly distributed with $|\mathrm{supp}\,A|\le K$; let $t$ be a cost on the alphabet of $W$ with $\#\{w:t(w)\le\tau\}\le n(\tau)$ for every $\tau\le\tau_*$ and $n(0)\ge1$ (so $Z_\lambda\ge1$), let $\lambda\in(0,1]$, and let $\tau_*\ge\lambda^{-1}\log_2K$ be an integer. Then
\begin{equation}
I(A;W)\ \le\ \lambda\,\E[t(W)]+1+\log_2Z_\lambda,
\end{equation}
where $Z_\lambda:=\sum_{\tau\le\tau_*}n(\tau)2^{-\lambda\tau}$. The case $\lambda=\tfrac12$, with $Z:=Z_{1/2}$, is the one used in Theorem~\ref{thm:rate2}; $\lambda=\tfrac5{11}$ and $\lambda=\tfrac7{17}$ are used in Theorems~\ref{thm:rate115} and~\ref{thm:rate177}, and $\lambda=1/\alpha$ with $\alpha<\tfrac52$ in Theorem~\ref{thm:rate52}.
\end{lemma}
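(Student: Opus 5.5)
The plan is a Gibbs-variational (Donsker--Varadhan) argument. We compare the conditional law of $W$ given $A$ with a fixed, $A$-independent reference measure weighted by cost. On the alphabet of $W$ we introduce the probability measure
\[
q(w):=\frac{2^{-\lambda t(w)}\,\mathbf 1[t(w)\le\tau_*]}{Z'},\qquad Z':=\sum_{w:\,t(w)\le\tau_*}2^{-\lambda t(w)}.
\]
Grouping words by cost level, we get $Z'\le\sum_{\tau\le\tau_*}\#\{w:t(w)=\tau\}2^{-\lambda\tau}\le Z_\lambda$, since each level count is at most $n(\tau)$ (cumulative bound), and $Z_\lambda$ uses exactly these counts. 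If costs are non-integer, we round down to integer levels first, which only increases the weights. Because the truncation at $\tau_*$ makes $q$ vanish on expensive words, those words have to be handled separately.

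First we split off the expensive words. Let $E:=\mathbf 1[t(W)>\tau_*]$, and use the chain rule:
\[
I(A;W)\le I(A;W,E)=I(A;E)+I(A;W\mid E)\le 1+\Pr[E=0]\,I(A;W\mid E=0)+\Pr[E=1]\,I(A;W\mid E=1).
\]
This is the source of the ``$+1$''. On $\{E=1\}$ we use $I\le H(A)\le\log_2K\le\lambda\tau_*<\lambda\,\E[t(W)\mid E=1]$, which is exactly where the hypothesis $\tau_*\ge\lambda^{-1}\log_2K$ enters. On $\{E=0\}$ we use the variational bound: for any reference $q$,
\[
I(A;W)=\E_A D(P_{W|A}\|P_W)\le\E_A D(P_{W|A}\|q),
\]
and since $-\log_2 q(w)=\lambda t(w)+\log_2 Z'$ on the support, this gives $\E_A D(P_{W|A}\|q)=-H(W\mid A)+\lambda\E t(W)+\log_2Z'$. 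Dropping $H(W\mid A)\ge0$ yields $I(A;W\mid E=0)\le\lambda\E[t(W)\mid E=0]+\log_2 Z_\lambda$. Here $\log_2Z_\lambda\ge0$ because $n(0)\ge1$, so $Z_\lambda\ge1$.

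Finally we recombine. Weighting the two conditional bounds by $\Pr[E=0]$ and $\Pr[E=1]$ gives at most $\lambda\E t(W)+\log_2Z_\lambda$; nonnegativity of $\log_2Z_\lambda$ lets us absorb the factor $\Pr[E=0]\le1$. Adding the $+1$ from $I(A;E)\le H(E)\le1$ gives the claim.

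The main obstacle is bookkeeping rather than depth. One point needs care: the cost profile $n(\tau)$ is cumulative while $Z_\lambda$ sums it level by level, and we should confirm that $Z'\le Z_\lambda$ indeed holds (it does, with room to spare). The other is that conditioning on $E$ must preserve the variational inequality. This holds because $q$ is fixed, and $I(A;W\mid E=e)\le\E D(P_{W|A,E=e}\|q)$ applies within each slice.
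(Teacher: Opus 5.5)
Your proof is correct and follows the paper's argument. It uses the same split on $E=\mathbf 1[t(W)>\tau_*]$ with $H(E)\le1$ and the same bound $\log_2K\le\lambda\tau_*$ on the expensive slice. On the cheap slice, your variational bound with $H(W\mid A)$ dropped is exactly the paper's $I(A;W\mid E=0)\le H(W\mid E=0)$ followed by the Gibbs inequality against the reference $2^{-\lambda t}/Z'$.

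You should delete the aside on non-integer costs: rounding down does enlarge the weights, but the number of words with $\lfloor t(w)\rfloor=\tau$ is not bounded by $n(\tau)$. The point is moot, since every cost here is an integer $T$-count.
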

\begin{proof}
Let $E:=\mathbf1\{t(W)>\tau_*\}$, a function of $W$, and $q:=\Pr[E=1]$. Then $I(A;W)=I(A;W,E)\le H(E)+(1-q)I(A;W\mid E=0)+qI(A;W\mid E=1)$, and $I(A;W\mid E=1)\le H(A\mid E=1)\le\log_2K$ because $A$ takes at most $K$ values. On $E=0$ the variable $W$ is supported on $\{t\le\tau_*\}$, so the Gibbs inequality $H(W)\le\lambda\E[t]+\log_2\sum_w2^{-\lambda t(w)}$ gives
\[\begin{aligned}I(A;W\mid E=0)&\le H(W\mid E=0)\\&\le\lambda\E[t\mid E=0]+\log_2Z_\lambda,\end{aligned}\]
because $\sum_{t(w)\le\tau_*}2^{-\lambda t(w)}\le\sum_{\tau\le\tau_*}n(\tau)2^{-\lambda\tau}=Z_\lambda$ (the shell of cost exactly $\tau$ has at most $n(\tau)$ elements). Finally $\E[t]\ge(1-q)\E[t\mid E=0]+q\tau_*$ and $\log_2K\le\lambda\tau_*$, so $(1-q)\lambda\E[t\mid E=0]+q\log_2K\le\lambda\E[t]$, while $H(E)\le1$.
\end{proof}
For $\lambda=\tfrac12$ the choice $\tau_*=\lceil2\log_2K\rceil$ is the smallest admissible one and the best: $Z$ grows with $\tau_*$. With the profile of Corollary~\ref{cor:profile},
\begin{equation}
Z\le\frac{C_1\varepsilon^22^{\tau_*/2}}{1-2^{-1/2}}+\frac{C_2(\tau_*+1)(\tau_*+2)}2,
\end{equation}
in which the first term is $O(\varepsilon)$ for the tuned family; at $\varepsilon=10^{-10}$, $\tau_*=66$ and $A_2=1+\log_2Z=20.01$.

\begin{lemma}[Cost-weighted entropy]\label{lem:cost}
Let $W$ take values in an alphabet with cost $t:\mathcal A\to\mathbb N$ such that $\#\{W:t(W)\le\tau\}\le c_0+c_1\tau+c2^\tau\varepsilon^\beta$ for all $\tau$. Put $\tau_0:=\lfloor\beta L-\log_2c\rfloor$, $\mathcal A_0:=\{t\le\tau_0\}$, $N_0:=c_0+c_1\tau_0+1$ (so $|\mathcal A_0|\le N_0$) and $q:=\Pr[W\notin\mathcal A_0]$, and assume $\tau_0\ge1$, equivalently $\varepsilon\le(2c)^{-1/\beta}$. Then
\begin{multline}
H(W)\le\E[t(W)]-q(\beta L-\log_2c-1)\\
+h_2(q)+\log_2N_0+3\log_2(1+\E t).
\end{multline}
\end{lemma}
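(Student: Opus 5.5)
We split on the event $E:=\mathbf 1\{W\notin\mathcal A_0\}$, a function of $W$, so that $H(W)=H(E)+H(W\mid E)\le h_2(q)+(1-q)H(W\mid E=0)+q\,H(W\mid E=1)$. On $E=0$ the word lies in $\mathcal A_0$, which has at most $N_0$ elements, so $H(W\mid E=0)\le\log_2N_0$. On $E=1$ we use a Gibbs bound with weight one, the natural rate for the term $c2^\tau\varepsilon^\beta$. With $\tau_1$ the least cost above $\tau_0$, the profile gives, for every $\tau\ge\tau_0+1$,
\[
\#\{t\le\tau\}\le c_0+c_1\tau+c2^\tau\varepsilon^\beta
\le N_0+c_1(\tau-\tau_0)+2^{\tau-\tau_0}\cdot c2^{\tau_0}\varepsilon^\beta .
\]
By the choice of $\tau_0$ we have $c2^{\tau_0}\varepsilon^\beta\le1$. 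So the number of words of cost exactly $\tau$ with $\tau>\tau_0$ is at most $N_0+c_1(\tau-\tau_0)+2^{\tau-\tau_0}$, that is, $2^{\tau-\tau_0}$ up to a polynomial factor.

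We then encode $W$ on the event $E=1$ by a prefix code. The first part is the excess $s:=t(W)-\tau_0\ge1$ in Elias-$\gamma$, which costs at most $2\log_2(s+1)+1$ bits. The second part is the index of $W$ among the words of cost exactly $\tau_0+s$, which costs $\log_2\bigl(N_0+c_1s+2^s\bigr)$ bits. We bound the index term by $s+\log_2N_0+\log_2(1+c_1)$, using $N_0+c_1s+2^s\le 2^s N_0(1+c_1)$ for $s\ge1$. Taking expectations and applying Jensen to the concave logarithm, we get
\[
H(W\mid E=1)\le \E[t\mid E=1]-\tau_0+\log_2N_0+3\log_2\bigl(1+\E[t\mid E=1]\bigr)+O(1).
\]
Multiplying by $q$ gives $q\,\E[t\mid E=1]\le\E t$. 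The fee term becomes $-q\tau_0\le-q(\beta L-\log_2c-1)$, because $\tau_0\ge\beta L-\log_2c-1$; this is exactly where the $-1$ in the statement comes from.

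The delicate step is the bookkeeping of the logarithmic and constant terms, which must combine into the single term $3\log_2(1+\E t)$ and one copy of $\log_2N_0$.

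\emph{Conditioning on $E=1$.} Jensen gives $q\log_2(1+\E[t\mid E=1])\le\log_2(1+\E t)$ when $q\ge(1+\E t)^{-1}$, and more generally via $x\log(1+a/x)$ monotonicity.

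\emph{The constant $\log_2(1+c_1)$ and the Elias overhead.} Both need to be absorbed into the $3\log_2(1+\E t)$ budget. The way we would try first is to avoid coding the index separately from $\log_2N_0$: encode $W$ directly by a single code over all costs. That is, use the Kraft weight $2^{-(t-\tau_0)_+}(t+1)^{-2}$, normalized by $\sum_\tau\#\{t=\tau\}2^{-(\tau-\tau_0)_+}(\tau+1)^{-2}\le N_0+O(1)$, and absorb the remaining $O(1)$ using $\E t\ge q\tau_0\ge q$.

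If the constants do not close exactly, we would lower $\tau_0$ by one unit inside the argument, using the $-1$ slack already present in the stated fee $\beta L-\log_2c-1$.
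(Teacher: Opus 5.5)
The overall structure matches the paper's proof: split on $E$, Elias-$\gamma$ for the excess $s=t-\tau_0$, an index inside the cost shell, and concavity of the logarithm. But the bookkeeping you flag as delicate does not close. The lemma is an exact inequality for arbitrary $c_0,c_1,c$, and it is applied with $c_1'=9c_1$ in Corollary~\ref{cor:mixq}, so this is a genuine gap. Two of your steps throw away exactly what is needed.

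\emph{The factor $\log_2(1+c_1)$.} Bounding the shell by $N_0+c_1s+2^s\le2^sN_0(1+c_1)$ leaves a constant $\log_2(1+c_1)$ that nothing in the statement pays for. The missing observation is $c_1\le N_0/\tau_0$, which holds because $N_0\ge c_1\tau_0$ and $\tau_0\ge1$. Then $c_0+c_1t+1=N_0+c_1s\le N_0(1+s)$, so the polynomial factor costs only $\log_2N_0+\log_2(1+s)$. That $\log_2(1+s)$ is the third logarithm, alongside the two from Elias-$\gamma$.

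\emph{The rounding.} Using $c2^{\tau_0}\varepsilon^\beta\le1$ makes your fee $\tau_0$, which can sit almost a full bit below $\beta L-\log_2c$. You then spend the statement's $-1$ on this rounding, but that $-1$ is needed to pay for the terminal bit of the Elias-$\gamma$ code. The fix is not to round. For $t>\tau_0$ one has $c2^t\varepsilon^\beta\ge1$, so the shell holds at most $(c_0+c_1t+1)\,c2^t\varepsilon^\beta$ words. The index then costs $t-\beta L+\log_2c+\log_2N_0+\log_2(1+s)$, and the whole codeword is at most $t-(\beta L-\log_2c-1)+\log_2N_0+3\log_2(1+s)$. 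To finish, use $q\,\E[t\mid E=1]\le\E t$ and $q\log_2(1+a)\le\log_2(1+qa)$; the latter holds for every $q\in[0,1]$ by concavity, so your condition $q\ge(1+\E t)^{-1}$ is unnecessary.

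To see that both repairs are needed, take $q=1$, $\beta L-\log_2c$ just below $2$ (so $\tau_0=1$), and $W$ of cost $1+s$.
\begin{itemize}
\item At $s=1$ your chain gives $\log_2N_0+4+\log_2(1+c_1)$, while the statement allows $\log_2N_0+1+3\log_2 3$ (up to a vanishing term). This fails once $c_1>2.37$.
\item With the first repair but not the second, you need $1\le3\log_2\frac{2+s}{1+s}$, which fails for $s\ge3$.
\end{itemize}

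Your fallbacks do not help either. Lowering $\tau_0$ only shrinks the fee, and the $-1$ is already spent. A single Kraft code over all costs pays $\log_2$ of a normalization that exceeds $N_0$. That is an additive constant not proportional to $q$, and the right-hand side cannot absorb it when $q$ and $\E t$ are small. It is the split on $E$ that makes the low part cost exactly $(1-q)\log_2N_0$.
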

\begin{proof}
$H(W)\le h_2(q)+(1-q)\log_2N_0+qH(W\mid W\notin\mathcal A_0)$. On the complement code $t-\tau_0$ in Elias-$\gamma$ and an index among the words of cost exactly $t$, of which there are at most $c_0+c_1t+c2^t\varepsilon^\beta\le(c_0+c_1t+1)\,c2^t\varepsilon^\beta$ (for $t>\tau_0$, $c2^t\varepsilon^\beta\ge1$), and $\log_2(c_0+c_1t+1)\le\log_2N_0+\log_2(1+t-\tau_0)$ because $c_1\le N_0/\tau_0$, both steps using $\tau_0\ge1$: length $\le t-\beta L+\log_2c+\log_2N_0+3\log_2(t-\tau_0+1)+1$. Then $q\E[t\mid\notin]\le\E t$ and $q\log_2(1+a)\le\log_2(1+qa)$.
\end{proof}

\begin{proof}[Proof of Theorem~\ref{thm:rate2}]
As in Theorem~\ref{thm:main1}, $A$ is uniform on a product translate given $Z$, so the $A_j$ are conditionally independent given $Z$. Per-coordinate Fano (the decoded $\hat X_j$ is a function of $(\overline W^{(t)}_j,\sigma_t,Z)$ and errs with probability $\le\delta$): $I(A_j;\overline W_j,\sigma_t\mid Z)\ge\ell_\delta$. By the chain rule and conditional independence,
\begin{multline}\label{eq:chain}
m\ell_\delta\le\sum_jI(A_j;\sigma_t\mid Z)+\sum_jI(A_j;\overline W_j\mid\sigma_t,Z)\\
\le S+\sum_jI(A_j;\overline W_j\mid\sigma_t,Z),
\end{multline}
where $\sum_jI(A_j;\sigma_t\mid Z)\le I(A;\sigma_t\mid Z)\le H(\sigma_t)\le S$ because $I(A_j;\sigma\mid Z,A_{<j})\ge I(A_j;\sigma\mid Z)$ under conditional independence.

Fix $j$ and $(s,z)$, write $t:=t^{(t)}_j$, and let $E_j$ be the indicator that coordinate $j$ is correct, $\delta_j:=\Pr[E_j=0\mid s,z]$. Then
\begin{multline}
I(A_j;\overline W_j\mid s,z)\\
\le h_2(\delta_j)+(1-\delta_j)\,I(A_j;\overline W_j\mid E_j=1,s,z)+\delta_j\log_2K.
\end{multline}
On $E_j=1$ the alphabet of $\overline W_j$ obeys the profile of Corollary~\ref{cor:profile}, and $A_j$ takes at most $K$ values, so Lemma~\ref{lem:gibbs} with $\lambda=\tfrac12$ and $\tau_*=\lceil2\log_2K\rceil$ gives
\begin{equation}
I(A_j;\overline W_j\mid E_j=1,s,z)\ \le\ \tfrac12\E[t\mid E_j=1,s,z]+A_2 .
\end{equation}
Using $(1-\delta_j)\E[t\mid E_j=1,s,z]\le\E[t\mid s,z]$,
\begin{multline}
I(A_j;\overline W_j\mid s,z)\\
\le\tfrac12\E[t_j\mid s,z]+A_2+h_2(\delta_j)+\delta_j\log_2K.
\end{multline}
Average over $(s,z)$ (Jensen for $h_2$), sum over $j$, and use $\sum_j\bar\delta_j\le m\delta$: $m\ell_\delta\le S+\tfrac12\bar T_t+mA_2+mh_2(\delta)+\delta m\log_2K$. Rearranging with $\ell_\delta=(1-\delta)\log_2K-h_2(\delta)$ gives the claim.
\end{proof}

\begin{proof}[Proof of Theorem~\ref{thm:main2}]
Start from~\eqref{eq:chain}. Fix a coordinate $j$ and a context $(s,z)$; let $E_j$ be the event that coordinate $j$ is correct, $d_j:=\Pr[E_j=0\mid s,z]$, and $q_j:=\Pr[t_{\min}(\overline W_j)>\tau_0\mid E_j=1,s,z]$. Write $k:=\log_2K$ and $g_\delta:=h_2(\delta)+\delta k$. On $E_j=1$ the word lies, by Conjecture~\ref{conj:H} applied to the coset of Corollary~\ref{cor:profile}, in an alphabet whose cost-$\le\tau_0$ part has at most $N_0$ elements. Splitting first on $E_j$ and then on the high-cost event,
\[
I(A_j;\overline W_j\mid s,z)\le h_2(d_j)+d_jk+(1-d_j)\big[1+\log_2N_0+q_jk\big].
\]
Averaging over contexts, summing over $j$ and inserting into~\eqref{eq:chain} gives $m\ell_\delta-S\le mg_\delta+ma_0+k\bar{\mathcal A}^{>}_t$ with $\bar{\mathcal A}^{>}_t=\sum_j\overline{(1-d_j)q_j}$, which is~\eqref{eq:active} because $m\ell_\delta-mg_\delta-S=D_t$. For the second bound apply Lemma~\ref{lem:cost} to the alphabet conditional on $E_j=1$, its hypothesis $\tau_0\ge1$ being the accuracy restriction $\varepsilon\le(2c)^{-1/2}$ of the theorem:
\begin{multline*}
I(A_j;\overline W_j\mid E_j=1,s,z)\le\E[t\mid E_j=1,s,z]-bq_j\\+1+\log_2N_0+3\log_2\big(1+\E[t\mid E_j=1,s,z]\big).
\end{multline*}
Multiply by $1-d_j$, average over contexts and sum over $j$: $(1-d_j)\E[t\mid E_j=1,s,z]\le\E[t\mid s,z]$, the concavity of $x\mapsto\log_2(1+x)$ bounds the logarithmic terms by $3m\log_2(1+\bar T_t/m)$, and the high-cost terms sum to $b\bar{\mathcal A}^{>}_t$, so $D_t\le\bar T_t-b\bar{\mathcal A}^{>}_t+ma_0+3m\log_2(1+\bar T_t/m)$. Since $b\ge0$ by the hypothesis $\varepsilon\le(2c)^{-1/2}$, substituting~\eqref{eq:active} yields~\eqref{eq:rate3}. Summing over $t<r$ and using concavity of the logarithm gives the statement for $\bar\Tpre$; at $\delta=0$ and $S=0$ the additive terms are $O(m\log L)$ per round.
\end{proof}

\section{Proof of Theorem~\ref{thm:elem}}\label{app:elem}

Throughout, $C$ denotes constants depending on nothing, and \emph{exponents} are logarithms to base $R'$, where $R'\asymp2^{t/4}$ is the radius, defined below, of the sphere carrying the numerators of $T$-count $t$. Every inequality between exponents below is strict with a fixed margin, so constants are absorbed once $R'$ is large; for bounded $R'$ the claim is absorbed into $\eta$.

\paragraph{Numerators.} Let $\mathcal O$ be the maximal order of Sec.~\ref{sec:tube}, embedded in $\R^4\times\R^4$ through the two real embeddings $\sigma_1,\sigma_2$ of $\Q(\sqrt2)$ applied coordinatewise in the basis $1,i,j,k$. Its coordinates lie in $\tfrac12\Z[\sqrt2]$. A unitary of minimal $T$-count $t$ is $\pm w/|w|_{\sigma_1}$ for some $w\in\mathcal O$ with $\mathrm{nrd}(w)=(2+\sqrt2)^tu$, $u$ a totally positive unit. Totally positive units of $\Z[\sqrt2]$ are squares, so after a central rescaling, which does not change the unitary, $\mathrm{nrd}(w)=n_t$, with $n_t=2^{t/2}$ for even $t$ and $n_t=(2+\sqrt2)2^{(t-1)/2}$ for odd $t$. Hence $w$ lies on the sphere of radius $R':=\sqrt{\sigma_1(n_t)}$ in the first factor and of radius $\sqrt{\sigma_2(n_t)}$ in the second, because the algebra is definite at both real places; both radii lie between $2^{t/4}/2$ and $2\cdot2^{t/4}$. For $\mathrm{SU}(2)$ lifts $\tr(V^\dagger W)=2\langle v,w\rangle$, so $\dproj(W,V)=\min_\pm|\hat w\mp\hat v|$ with $\hat w=w/|w|$. Therefore $T_\varepsilon(G_1R_zG_2)$ corresponds exactly to the set $X_t$ of numerators whose $\sigma_1$-image lies within Euclidean distance $\varepsilon|w|$ of $|w|\cdot\mathcal C$, where $\mathcal C=G_1\{R_z(\theta)\}G_2\subset S^3$ is a great circle ($\mathcal C=-\mathcal C$), and the map $w\mapsto W$ is two-to-one. Write $\Pi\subset\R^4$ for the plane of $\mathcal C$, $\mathcal C(s)=\cos s\,e_1+\sin s\,e_2$, and put $a_0:=\tfrac95$. For $t\le\tfrac{20}9L$ we have $\varepsilon=2^{-L}\le2^{-9t/20}\le CR'^{-a_0}$. The count is monotone in $\varepsilon$, so it suffices to bound $\#X_t$ for the tube of radius $CR'^{-a_0}$, which from now on we call $\varepsilon$.

Every point of the tube is $x=R'(\cos\varphi\,\mathcal C(s)+\sin\varphi\,n)$ with $n\in\Pi^\perp$ a unit vector and $0\le\varphi\le2\arcsin(\varepsilon/2)\le2\varepsilon$ (in the first embedding, whose index $\sigma_1$ we suppress). We call $s$ its \emph{core parameter} and $R'\sin\varphi\,n\in\Pi^\perp$ its \emph{transverse offset}.

\paragraph{Step 1: boxes.} Fix exponents $a\ge\max(a_0,b)$ and $b\ge0$, and put $\rho:=R'^{-a}$, $\lambda:=R'^{-b}$. A \emph{box} $B$ is the set of tube points whose core parameter lies in an interval $I$ of length $\lambda$ and whose transverse offset lies in a square of side $\rho R'$. The tube is covered by $\ll R'^{2(a-a_0)+b}$ boxes.

\begin{lemma}\label{lem:E1}
If $2a+3b>8$ and $a_0+a\ge2b$, all points of $X_t\cap B$ lie in one affine hyperplane.
\end{lemma}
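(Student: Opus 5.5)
The plan is the determinant method announced in the sketch after Theorem~\ref{thm:elem}. I will show that any five points $w_0,\dots,w_4\in X_t\cap B$ are affinely dependent. Once that holds, the affine span of $X_t\cap B$ has dimension at most $3$: if it had dimension $4$, it would contain five affinely independent points, which is impossible. So all points of $X_t\cap B$ lie in one affine hyperplane. The quantity to control is
\[
D:=\det\big(w_1-w_0,\dots,w_4-w_0\big),
\]
computed in the basis $1,i,j,k$. We have to show $D=0$.

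\emph{Integrality.} The coordinates of the $w_i$ lie in $\tfrac12\Z[\sqrt2]$, so $D\in\tfrac1{16}\Z[\sqrt2]$. Applying $\sigma_1,\sigma_2$ coordinatewise gives $\sigma_\nu(D)=\det(\sigma_\nu(w_i)-\sigma_\nu(w_0))$ for $\nu=1,2$. If $D\ne0$, then $16D$ is a nonzero element of $\Z[\sqrt2]$, so $|N(16D)|\ge1$, i.e.\ $|\sigma_1(D)\sigma_2(D)|\ge2^{-8}$.

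\emph{Second embedding.} In the second embedding every $\sigma_2(w_i)$ lies on a sphere of radius at most $2\cdot2^{t/4}\ll R'$. Hence each difference has length $\ll R'$, and Hadamard's inequality gives $|\sigma_2(D)|\ll R'^4$. The tube imposes nothing here, so this is the trivial bound and all the saving has to come from $\sigma_1$.

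\emph{First embedding (main step).} I will bound the $4$-volume of the simplex spanned by the box using orthonormal coordinates adapted to the core point $\mathcal C(s_0)$ at the centre of $I$. These coordinates are: the tangent $\mathcal C'(s_0)$, the in-plane radial direction $\mathcal C(s_0)$, and two directions spanning $\Pi^\perp$. Write a box point as $R'(\cos\varphi\,\mathcal C(s)+\sin\varphi\,n)$ with $|s-s_0|\le\lambda$. Its extents in these coordinates are as follows.
\begin{itemize}
\item The tangential extent is $\ll R'\lambda$.
\item The two transverse extents are $\ll\rho R'$, by definition of the box.
\item The radial extent splits into three contributions. The sagitta $R'(1-\cos(s-s_0))$ gives $\ll R'\lambda^2$. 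The term $R'(1-\cos\varphi)$ gives $\ll R'\varepsilon^2=R'^{1-2a_0}$. The coupling between the transverse offset and the rotation of the frame along the core gives at most $\ll R'\varepsilon\lambda$ or $\rho R'\lambda$.
\end{itemize}
The hypotheses $a\ge\max(a_0,b)$ and $a_0+a\ge2b$ are what make the sagitta term $R'\lambda^2$ dominate the other radial contributions. I expect this bookkeeping to be the one delicate point: the second hypothesis should be exactly what keeps the cross terms below $R'^{1-2b}$. Granting it, the box lies in a rectangular box of side lengths $R'\lambda$, $R'\lambda^2$, $\rho R'$, $\rho R'$, so
\[
|\sigma_1(D)|\ll R'^{1-b}\cdot R'^{1-2b}\cdot R'^{2(1-a)}=R'^{4-2a-3b}.
\]

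\emph{Conclusion.} Combining the two embeddings gives $|\sigma_1(D)\sigma_2(D)|\ll R'^{8-2a-3b}$. Since $2a+3b>8$, this is $<2^{-8}$ once $R'$ exceeds an absolute constant, which contradicts the integrality bound unless $D=0$. For bounded $R'$ the count is absorbed into $\eta$, as stipulated at the start of the appendix. Hence every five points of $X_t\cap B$ are affinely dependent, and $X_t\cap B$ lies in an affine hyperplane.
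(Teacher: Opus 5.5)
Your architecture is the paper's: the five-point affine determinant in the basis $1,i,j,k$, integrality in $\tfrac1{16}\Z[\sqrt2]$ giving $|\sigma_1(D)\sigma_2(D)|\ge2^{-8}$ when $D\ne0$, the trivial bound $CR'^4$ in $\sigma_2$, and coordinates adapted to $\mathcal C(s_0)$ in $\sigma_1$ giving $C\lambda^3\rho^2R'^4$. The gap is the step you explicitly leave as ``granting it'': that the radial ($e_1=\mathcal C(s_0)$) extent of the box is $\ll R'\lambda^2$. This is the only place the hypothesis $a_0+a\ge2b$ is used, and your own accounting does not close it under that hypothesis. You bound the $\cos\varphi$ contribution by its range over the whole tube, $R'(1-\cos2\varepsilon)\ll R'\varepsilon^2$. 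That is below $R'\lambda^2$ only if $\varepsilon\le\lambda$, i.e.\ $b\le a_0$, which is not assumed. For example, $a_0=1$, $a=3$, $b=\tfrac{19}{10}$ satisfies $a\ge\max(a_0,b)$, $2a+3b>8$ and $a_0+a\ge2b$, yet $b>a_0$. Your cross term $R'\varepsilon\lambda$ has the same defect. In the paper's actual instances $b\le a_0$ does hold, so the crude bound would happen to suffice there. It does not prove the lemma as stated, and it misidentifies where $a_0+a\ge2b$ enters.

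The missing observation is that only the variation of $\cos\varphi$ inside one box matters. The transverse offset $R'\sin\varphi\,n$ lies in a square of side $\rho R'$, so $\sin\varphi$ varies by at most $\sqrt2\rho$ over $B$. Since $\varphi\le2\varepsilon$, the slope of $\cos\varphi$ as a function of $\sin\varphi$ has size $\tan\varphi\le C\varepsilon$. Hence $R'|\Delta\cos\varphi|\le C\varepsilon\rho R'$, and $\varepsilon\rho\le\lambda^2$ is exactly $a_0+a\ge2b$.

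There is also no frame-rotation coupling term. In the parametrization $x=R'(\cos\varphi\,\mathcal C(s)+\sin\varphi\,n)$, the direction $n$ ranges over the fixed plane $\Pi^\perp$, which is orthogonal to $e_1\in\Pi$. So the $e_1$ coordinate is just $R'\cos\varphi\cos(s-s_0)$, and its variation is at most $R'(1-\cos\tfrac\lambda2)+R'|\Delta\cos\varphi|\le C\lambda^2R'$. With this in place, your side lengths $R'\lambda,R'\lambda^2,\rho R',\rho R'$ and the rest of your argument go through as written.
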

\begin{proof}
Take coordinates $(e_1,e_2)$ rotated so that $e_1=\mathcal C(s_I)$ with $s_I$ the midpoint of $I$, together with an orthonormal basis of $\Pi^\perp$. For $x\in B$, the $\Pi^\perp$ components vary by $\le2\rho R'$. The $\Pi$ component is $R'\cos\varphi\,\mathcal C(s)$, whose $e_2$ coordinate varies by $\le\lambda R'$ and whose $e_1$ coordinate varies by $\le R'(1-\cos\tfrac\lambda2)+R'|\Delta\cos\varphi|\le C\lambda^2R'+C\varepsilon\rho R'\le C\lambda^2R'$. Here $\sin\varphi$ varies by $\le\rho$ in a box, so $|\Delta\cos\varphi|\le C\varepsilon\rho$, and $\varepsilon\rho\le\lambda^2$ is the hypothesis $a_0+a\ge2b$. For five points $w_0,\dots,w_4$ the determinant of $w_1-w_0,\dots,w_4-w_0$ therefore has $|\cdot|_{\sigma_1}\le C\lambda^3\rho^2R'^4$. In the second embedding all differences have length $\le CR'$, so $|\cdot|_{\sigma_2}\le CR'^4$. The determinant lies in $\tfrac1{16}\Z[\sqrt2]$, and a nonzero element $\xi$ of that set has $|\sigma_1(\xi)\sigma_2(\xi)|\ge2^{-8}$. Since $\lambda^3\rho^2R'^8=R'^{8-3b-2a}\to0$, every determinant vanishes, and the affine span of $X_t\cap B$ has dimension $\le3$.
\end{proof}

If the affine span has dimension $\le2$, the points lie on a circle and Lemma~\ref{lem:E5} applies directly. Otherwise they lie on $Y:=S^3_{R'}\cap H$, a round $2$-sphere of some radius $r=R'^{1-g}$ in the hyperplane $H=\{\langle\nu,x\rangle=c\}$, $|\nu|=1$, $c\ge0$.

\begin{lemma}[Patches]\label{lem:E2}
Let $J\subset I$ be the set of core parameters of points of $Y\cap B$. Then $J$ is contained in a union of at most two intervals of total length $\le e_s/R'$, and every point of $Y\cap B$ has transverse offset in a disc of radius $e_u$, where $e_u:=C\min(\rho R',r)$, and
\[
e_s:=\begin{cases}\lambda R'&(r>R'/2),\\ C\min\big(\lambda R',\ \sqrt{\rho R'\,(r+\rho R')},\ r\big)&(r\le R'/2).\end{cases}
\]
\end{lemma}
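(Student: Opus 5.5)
The statement is a piece of elementary Euclidean geometry inside $\R^4$: we must locate the part of the round $2$-sphere $Y=S^3_{R'}\cap H$ that lies in the box $B$. We use the coordinates of the proof of Lemma~\ref{lem:E1}: $e_1=\mathcal C(s_I)$, $e_2$, and an orthonormal basis of $\Pi^\perp$. The transverse bound comes first and is immediate. By definition of $B$ the transverse offsets lie in a square of side $\rho R'$, hence in a disc of radius $C\rho R'$. Any two points of $Y$ are at distance at most $2r$, so the offsets of $Y\cap B$ also lie in a disc of radius $2r$ around the offset of any one of them. Together these give $e_u=C\min(\rho R',r)$.

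For the core parameters we use three bounds. The bound $e_s\le\lambda R'$ holds always, since $J\subset I$ and $|I|=\lambda$; this settles the case $r>R'/2$, where a single interval suffices. For $r\le R'/2$ the bound $e_s\le Cr$ follows because all points of $Y$ lie within $2r$ of each other. Each point of the tube is within $2\varepsilon R'$ of $R'\mathcal C(s)$, so a chord of length $\le2r$ on the core circle of radius $R'$ has parameter spread $\le C r/R'$, up to the $\varepsilon$ error, which is absorbed since $\varepsilon R'\le\rho R'$. The substantive bound is $\sqrt{\rho R'(r+\rho R')}$. Project $Y\cap B$ to $\Pi$. Every point of $B$ lies within $C\rho R'$ (transversally) and $C\lambda^2R'$ (in $e_1$) of the arc $R'\mathcal C(I)$, so $Y\cap B$ is contained in the intersection of $Y$ with the $C\rho R'$-neighbourhood of a core arc.

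The main step is to show that the intersection of a round $2$-sphere of radius $r$ with the $\varrho$-neighbourhood ($\varrho:=C\rho R'$) of a circle of radius $R'\ge2r$ projects onto at most two arcs of the circle, of total length $\le C\sqrt{\varrho(r+\varrho)}$. We would argue this as follows. Let $p(s)=R'\mathcal C(s)$, and let $f(s):=\mathrm{dist}(p(s),Y)$. This equals the distance from $p(s)$ to the sphere, $\big||p(s)-c_Y|-r\big|$, which is not quite right since $Y$ sits inside $H$; so we would use instead $\mathrm{dist}(p(s),Y)^2\asymp \mathrm{dist}(p(s),H)^2+\big(|\pi_H p(s)-c_Y|-r\big)^2$, with $\pi_H$ the orthogonal projection onto $H$. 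A point of the core lies within $\varrho$ of $Y$ only if both terms are $\le\varrho^2$. The function $g(s):=|\pi_Hp(s)-c_Y|^2$ is a trigonometric polynomial of degree $\le2$ in $s$, and $\langle\nu,p(s)\rangle-c$ has degree $1$. The set $\{s:|\sqrt g-r|\le\varrho\}$ is then a sublevel set of $|g-r^2|\le C\varrho(r+\varrho)$, which has at most two components on any arc of length $\lambda\le1$ where $g$ is monotone or unimodal. Since $R'\ge2r$, the curvature of the core is small compared with $Y$. Near a tangency, $g$ behaves like $r^2+R'^2\kappa(s-s_0)^2$ with effective second derivative $\gtrsim R'\cdot R'/R'\ge$ a constant times the needed scale, so the length of the sublevel set is $\le C\sqrt{\varrho(r+\varrho)}/R'$ in parameter, i.e.\ $\le C\sqrt{\varrho(r+\varrho)}$ in arc length. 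Away from tangency the derivative is $\gtrsim r$, which gives the smaller bound $C\varrho(r+\varrho)/r$. Each component of the sublevel set contributes one of the at most two intervals. Finally, transferring from core arc length to the parameter set $J$ uses the same $\varepsilon$-closeness of each tube point to its core point.

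The expected obstacle is controlling the second derivative of $g$ uniformly, including the almost-tangential worst case the paper flags. If the curvature lower bound fails in degenerate configurations, where $H$ contains the core direction, we would fall back on the crude bound $e_s\le Cr$, which is also in the minimum.
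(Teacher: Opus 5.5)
Your reduction is, in substance, the paper's. Your decomposition $\mathrm{dist}(p,Y)^2=\mathrm{dist}(p,H)^2+(|\pi_Hp-c_Y|-r)^2$ is in fact an equality. For any $p$ with $|p|=R'$ one has $g=|p|^2-\langle\nu,p\rangle^2=R'^2-\langle\nu,p\rangle^2$. Since $c\ge\frac{\sqrt3}2R'$, your condition $|g-r^2|\le C\varrho(r+\varrho)$ therefore amounts to the paper's condition $|\langle\nu,p\rangle-c|\le\delta'$ with $\delta'\asymp\varrho(r+\varrho)/R'$.

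The first genuine gap is the comparison curve. You assert that every point of $B$ lies within $C\rho R'$ of the core arc $R'\mathcal C(I)$, and you absorb the remaining error using $\varepsilon R'\le\rho R'$. That inequality is backwards: $a\ge a_0$ gives $\rho=R'^{-a}\lesssim\varepsilon$, strictly so in Theorem~\ref{thm:elem}, where $a=\frac{91}{50}>a_0=\frac95$. A box is a transverse square of side $\rho R'$ that can sit anywhere in a transverse disc of radius about $2\varepsilon R'$. Its points are therefore only within $C\varepsilon R'$ of the core, and your argument yields $\sqrt{\varepsilon R'(r+\varepsilon R')}$ instead of $\sqrt{\rho R'(r+\rho R')}$. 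That is a strictly weaker lemma, and the certificate of Appendix~\ref{app:elem} uses the exponent $\frac12(2-a-\min(g,a))$, with $a$ and not $a_0$.

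The missing device is to compare with the shifted circle $p(s)=R'(\cos\varphi_0\,\mathcal C(s)+\sin\varphi_0\,n_0)$ through the centre $R'\sin\varphi_0\,n_0$ of the box's transverse square. It still lies on $S^3_{R'}$ because $\mathcal C(s)\perp n_0$, and every point of $B$ with core parameter $s$ is within $C\rho R'$ of $p(s)$, since $|\Delta\cos\varphi|\le C\varepsilon\rho$. The same backwards inequality enters your $Cr$ bound, where no error term is needed at all. The core parameter is a $C/R'$-Lipschitz function of the point on the tube, so $\mathrm{diam}\,J\le C\,\mathrm{diam}(Y)/R'\le Cr/R'$.

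The second gap is the curvature estimate. You leave it open and propose to fall back on $e_s\le Cr$ in degenerate configurations. A fallback to one term cannot prove a bound by the minimum of three terms: the square-root bound must hold in every case. It does, for a reason your sketch does not supply.

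Write $\langle\nu,p(s)\rangle=A\cos(s-s_0)+\beta$ with $A=R'\cos\varphi_0|P_\Pi\nu|\le R'$ and $|\beta|\le3\varepsilon R'$, and put $h(s):=\langle\nu,p(s)\rangle-c$. Since $r\le R'/2$ forces $c\ge\frac{\sqrt3}2R'$, the set $\{|h|\le\delta'\}$ is empty unless $A\ge c-|\beta|-\delta'\ge R'/2$, so no degenerate case exists. The set lies in $|s-s_0|\le\pi/3$, and there $h$ is concave with $h''\le-A/2\le-R'/4$. For such $h$ the set $\{|h|\le\delta'\}=\{h\ge-\delta'\}\setminus\{h>\delta'\}$ is at most two intervals, of total length $\le C\sqrt{\delta'/R'}\le C\sqrt{\rho R'(r+\rho R')}/R'$. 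This also settles the unimodality you assume.

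Your side remark that away from tangency the slope is $\gtrsim r$ is not true in general. At a crossing the slope scales like the square root of the crossing depth (compare Lemma~\ref{lem:F4}(i)). The remark is not needed for the lemma.
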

\begin{proof}
The claim for $e_u$ is immediate from the definition of $B$ and $Y\subset$ ball of radius $r$. For $e_s$ the case $r>R'/2$ is trivial ($J\subset I$), so let $r\le R'/2$, whence $c=\sqrt{R'^2-r^2}\ge\tfrac{\sqrt3}2R'$. Let $R'\sin\varphi_0\,n_0$ be the centre of the transverse square of $B$, and put $p(s):=R'(\cos\varphi_0\,\mathcal C(s)+\sin\varphi_0\,n_0)$. Since $\mathcal C(s)\perp n_0$, this is a point of the sphere. If $x\in Y\cap B$ has core parameter $s$, then $p=p(s)$ satisfies $\delta:=\mathrm{dist}(p,Y)\le|x-p|\le C\rho R'$. Let $y\in Y$ be nearest to $p$. Write $\nu=\tfrac c{R'^2}y+\tfrac r{R'}\,\omega$ with $\omega\perp y$ a unit vector, which is possible because $\langle\nu,y\rangle=c$ and $|y|=R'$. Using $\langle y,p-y\rangle=-\tfrac12|p-y|^2$ for $p,y$ on the sphere,
\[
|\langle\nu,p\rangle-c|=|\langle\nu,p-y\rangle|\le\frac{\delta^2}{2R'}+\frac{r\delta}{R'}=:\delta'.
\]
Now $\langle\nu,p(s)\rangle=A\cos(s-s_0)+R'\sin\varphi_0\langle\nu,n_0\rangle$ with $A=R'\cos\varphi_0|P_\Pi\nu|\le R'$, so $J\subset\{s:|h(s)|\le\delta'\}$ with $h(s):=A\cos(s-s_0)-c'$ and $c':=c-R'\sin\varphi_0\langle\nu,n_0\rangle\ge(\tfrac{\sqrt3}2-3\varepsilon)R'\ge0.6R'$, because a transverse square meeting the tube has its centre at offset $\le(2\varepsilon+\rho)R'\le3\varepsilon R'$ and $\varepsilon\to0$. This set is empty unless $A\ge c'-\delta'\ge R'/2$, and it lies in $|s-s_0|\le\pi/3$, where $h$ is concave with $h''\le-A/2\le-R'/4$. For a concave $h$ with $h''\le-\mu$ the set $\{|h|\le\delta'\}$ is $\{h\ge-\delta'\}\setminus\{h>\delta'\}$, a union of at most two intervals. The set $\{|h'|\le\sqrt{\mu\delta'}\}$ is an interval of length $\le2\sqrt{\delta'/\mu}$, and outside it each monotone piece of $\{|h|\le\delta'\}$ has length $\le2\sqrt{\delta'/\mu}$, so the total length is $\le6\sqrt{\delta'/\mu}$. Hence $J$ lies in at most two intervals of total length $\le C\sqrt{\delta'/R'}\le C\sqrt{\rho R'(r+\rho R')}/R'$. Finally $R'\,\mathrm{diam}\,J\le Cr$, because $Y$ has diameter $2r$ and core parameters are Lipschitz on $B$.
\end{proof}

\paragraph{Step 2: cells.} Fix exponents $X_S\le X_L$ and put $x_L:=R'^{X_L}$, $x_S:=R'^{X_S}$. A \emph{cell} of $B$ is the set of points of $B$ whose core parameter lies in an interval of length $x_L/R'$ and whose transverse offset lies in a square of side $x_S$. The cells of $B$ form a grid.

\begin{lemma}[Coplanarity in a cell]\label{lem:E3}
Assume $X_L<1-g$ (i.e.\ $2x_L\le r/C$) and $2X_L-1\le X_S$. If
\begin{equation}\label{eq:E3}
X_L+X_S+\min\big(X_S,\,2X_L-(1-g)\big)+3<0,
\end{equation}
then all points of $X_t\cap Y\cap Q$ in a cell $Q$ lie on one circle.
\end{lemma}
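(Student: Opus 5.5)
The plan is to repeat the determinant argument of Lemma~\ref{lem:E1} one dimension lower, inside the hyperplane $H$. Since all points of $X_t\cap B$ lie on $Y=S^3_{R'}\cap H$, it suffices to show that any four points $w_0,\dots,w_3\in X_t\cap Y\cap Q$ are coplanar. The intersection of that $2$-plane with the round sphere $Y$ is then a circle, and since any four points are coplanar, the whole set lies in one plane and hence on one circle (or there are at most three points, which lie on a circle trivially).

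\emph{Arithmetic quantity.} The hyperplane $H$ is not assumed rational, so I would not use a $4\times4$ determinant involving its normal $\nu$. Instead I would use the Gram determinant
\[
\Gamma:=\det\big(\langle w_i-w_0,w_k-w_0\rangle\big)_{1\le i,k\le3},
\]
which is the squared $3$-volume of the tetrahedron and vanishes exactly when the four points are coplanar. The coordinates of $\mathcal O$ lie in $\tfrac12\Z[\sqrt2]$, so the inner products lie in $\tfrac14\Z[\sqrt2]$ and $\Gamma\in4^{-3}\Z[\sqrt2]$. Gram determinants commute with the embeddings $\sigma_1,\sigma_2$. A nonzero $\xi$ in this set therefore satisfies $|\sigma_1(\xi)\sigma_2(\xi)|\ge2^{-12}$. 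In the second embedding all differences have length $\le CR'$, so $|\sigma_2\Gamma|\le CR'^6$.

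\emph{Geometric bound in the first embedding.} This is the main step. I would bound the volume of any tetrahedron in $Y\cap Q$ by a product of three extents.
\begin{itemize}
\item The extent along the core tangent is $\le x_L$.
\item The extent in the transverse directions is $\le x_S$, up to the bending of the core over a parameter interval of length $x_L/R'$. That bending is the sagitta $x_L^2/R'$, and the hypothesis $2X_L-1\le X_S$ absorbs it into $x_S$.
\item The third extent is the thickness of $Y\cap Q$ in the direction normal to the tangent plane of $Y$ inside $H$. It is bounded on one hand by the transverse side $x_S$. On the other hand, since $X_L<1-g$ makes the cell diameter $\le r/C$, the patch is a small graph over the tangent plane of the sphere $Y$ of radius $r$. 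Its sagitta is therefore $\le Cx_L^2/r=CR'^{2X_L-(1-g)}$.
\end{itemize}
Hence $|\sigma_1\Gamma|^{1/2}\le C\,x_Lx_S\min\big(x_S,x_L^2/r\big)$. The care needed is to verify that these three extents are taken along genuinely near-orthogonal directions within $H$: the tangent plane of $Y$ at a cell point must contain, up to the small errors controlled by the two hypotheses, one core direction and one transverse direction. If that fails, I would fall back to bounding the volume by the diameter in one direction times the area of the projection.

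\emph{Conclusion.} Combining the two embeddings gives
\[
|\sigma_1\Gamma\,\sigma_2\Gamma|\le C R'^{2(X_L+X_S+\min(X_S,2X_L-(1-g)))+6}.
\]
By \eqref{eq:E3} the exponent is negative by a fixed margin, so for large $R'$ the product is below $2^{-12}$, and therefore $\Gamma=0$. Bounded $R'$ is absorbed into constants, as elsewhere in this appendix.
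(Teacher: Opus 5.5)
Your proof is correct and is essentially the paper's. The paper uses the Plücker coordinates of $(w_1-w_0)\wedge(w_2-w_0)\wedge(w_3-w_0)$, which lie in $\tfrac18\Z[\sqrt2]$; your Gram determinant in $\tfrac1{64}\Z[\sqrt2]$ is their sum of squares by Cauchy--Binet, so the vanishing threshold $V_{\sigma_1}V_{\sigma_2}<\tfrac1{64}$ and the comparison of the two embeddings are the same.

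For the volume bound, drop the near-orthogonality: it can fail, since the normal of the tangent plane $T$ of $Y$ may point almost along the core, and in that case the thickness along it need not be $\le x_S$. The paper argues exactly as in your fallback, with two separate bounds:
\begin{itemize}
\item $V\le Cx_Lx_S^2$, because $Q$ lies within $Cx_S$ of a segment of length $x_L$;
\item $V\le Cx_Lx_S\cdot x_L^2/r$, because the points lie within $Cx_L^2/r$ of $T$, and their projections to $T$ lie within $Cx_S$ of a segment of length $x_L$, so every projected triangle has area $\le Cx_Lx_S$.
\end{itemize}
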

\begin{proof}
The $\Pi$ component of a point of $Q$ is $R'\cos\varphi\,\mathcal C(s)$, with $s$ in an interval of length $x_L/R'$. In a cell $\sin\varphi$ varies by $\le x_S/R'$, so $R'\cos\varphi$ varies by $\le Cx_S$. That set therefore lies within $C(x_L^2/R'+x_S)\le Cx_S$ of a segment of length $x_L$ in $\Pi$. The $\Pi^\perp$ component lies in a square of side $x_S$. Hence $Q$ lies within $Cx_S$ of the translate $\sigma$ of that segment by the centre of the square, and $\mathrm{diam}\,Q\le Cx_L$.

Let $w_0,\dots,w_3\in X_t\cap Y\cap Q$ and let $V$ be the $3$-volume of $w_1-w_0,w_2-w_0,w_3-w_0$. Since the differences lie within $Cx_S$ of the direction of $\sigma$, $V\le Cx_Lx_S^2$. Since the four points lie on the sphere $Y$ inside a ball of radius $Cx_L\le r/2$, they are within $Cx_L^2/r$ of the tangent plane $T$ of $Y$ at $w_0$. Their projections to $T$ lie in the projection of $Q$, which is within $Cx_S$ of a segment of length $x_L$, so every triangle they form has area $\le Cx_Lx_S$. Therefore $V\le Cx_Lx_S\min(x_S,x_L^2/r)$.

In the second embedding $V\le CR'^3$. The coordinates of $(w_1-w_0)\wedge(w_2-w_0)\wedge(w_3-w_0)$ lie in $\tfrac18\Z[\sqrt2]$, and~\eqref{eq:E3} says that $V_{\sigma_1}V_{\sigma_2}\to0$. So the wedge vanishes, and any four such points are coplanar. The points of $Y\cap Q$ therefore lie in one affine plane $P$, i.e.\ on the circle $Y\cap P$.
\end{proof}

\begin{lemma}[Cells met]\label{lem:E4}
The number of cells of $B$ meeting $Y$ is $\le C R'^{N}$, where
\[
\begin{gathered}
N:=\max\big(0,\ S_0-X_L,\ U_0-X_S,\\ S_0+U_0-X_L-X_S,\ 2(U_0-X_S)\big),
\end{gathered}
\]
with $S_0:=\log_{R'}e_s$ and $U_0:=\log_{R'}e_u$.
\end{lemma}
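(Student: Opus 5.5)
The plan is a pure covering count. It uses only Lemma~\ref{lem:E2}: inside $B$ the points of $Y$ have core parameters in at most two intervals of total length $\le e_s/R'$, and transverse offsets in a disc of radius $e_u$. Cells of $B$ are products of a core interval of length $x_L/R'$ and a transverse square of side $x_S$, laid out in a grid. Any cell meeting $Y$ therefore has its core index among the grid intervals meeting one of the two intervals of $J$, and its transverse index among the grid squares meeting the disc of radius $e_u$. We never use the arithmetic of $X_t$ here, only geometry.

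\textbf{Step 1: core direction.} I would bound the core direction first. An interval of length $\ell$ meets at most $\ell R'/x_L+2$ grid intervals of length $x_L/R'$. Summing over the at most two intervals of $J$ gives at most $C\max(1,e_s/x_L)$ core indices, that is, $CR'^{\max(0,S_0-X_L)}$.

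\textbf{Step 2: transverse direction.} The disc of radius $e_u$ in the plane $\Pi^\perp$ meets at most $C(1+e_u/x_S)^2$ squares of side $x_S$. Expanding gives $C\max(1,e_u/x_S,(e_u/x_S)^2)$, that is, $CR'^{\max(0,U_0-X_S,2(U_0-X_S))}$. If $e_u<x_S$, the count is $O(1)$.

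\textbf{Step 3: combine.} Since cells form a product grid, I would multiply the two counts. The exponent of the product is $\max(0,S_0-X_L)+\max(0,U_0-X_S,2(U_0-X_S))$. Its six combinations are $0$, $S_0-X_L$, $U_0-X_S$, $2(U_0-X_S)$, $S_0+U_0-X_L-X_S$ and $S_0-X_L+2(U_0-X_S)$. Five of these appear in $N$. The sixth does not, and it is the only real obstacle.

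\textbf{Step 4: the missing term.} To dispose of it, I would refine the transverse count instead of using the full disc. A first approach would note that $e_u\le C\rho R'$ is the box side, so the transverse squares range over a fixed square of side $e_u$; but that square still contains $(e_u/x_S)^2$ cells, so this alone does not remove the term. The better approach is to show the joint projection of $Y\cap B$ is thin. If both $e_s>x_L$ and $e_u>x_S$, then $Y\cap B$ is a piece of a $2$-sphere, hence two-dimensional. Its image in the $3$-dimensional (core, transverse) coordinates is a surface patch of dimensions $e_s\times e_u\times e_u$. Covering a $2$-surface of these dimensions by boxes of size $x_L\times x_S\times x_S$ needs about $(e_s/x_L)(e_u/x_S)+(e_u/x_S)^2$ boxes when the surface is roughly graph-like over a core--transverse plane. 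I would justify this by parametrising $Y\cap B$ as a graph over a $2$-plane, using that $Y$ has radius $r\ge e_u$ by the bound $e_u\le Cr$, so its curvature is at most $1/e_u$ on the patch. I would then slice along the core: each core interval meets a slice of transverse width $e_u$, which is a curve and therefore meets $O(1+e_u/x_S)$ squares. This gives $(1+e_s/x_L)(1+e_u/x_S)$ cells, plus the full-disc term $(e_u/x_S)^2$ when the slice is not transverse. All resulting terms lie in $N$.

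I expect the slicing estimate, that each core slice of $Y\cap B$ meets only $O(1+e_u/x_S)$ transverse squares, to be the delicate point. I would argue it via the curvature bound $1/r$ of $Y$ together with Lemma~\ref{lem:E2}'s concavity argument applied on each slice.
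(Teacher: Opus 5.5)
There is a genuine gap in Step~4, where you located the difficulty. Steps~1--3 are correct, but they only give the product $(1+e_s/x_L)(1+e_u/x_S)^2$. The slicing you propose to remove the extra term $S_0-X_L+2(U_0-X_S)$ does not work as stated.

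Write $E_s:=e_s/x_L$ and $E_u:=e_u/x_S$. The part of $Y$ lying over one core interval of length $x_L/R'$ is a two-dimensional piece of the sphere, not a curve. Nothing forces it to meet only $O(1+E_u)$ transverse squares.

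For example, take $H$ through the core with normal $\mathcal C'(s_I)$ tilted slightly toward $\Pi^\perp$. On $Y\cap B$ the core parameter is then essentially a linear function of one transverse coordinate. The patch is a graph over a core--transverse plane, so this is exactly your ``good'' case. Yet when its scaled core extent $E$ satisfies $1\le E<E_u$, each of the $\approx E$ core slabs contains a strip of $Y$ meeting $\approx E_u^2/E$ squares, far more than $E_u$.

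Your fallback, ``plus the full-disc term $(e_u/x_S)^2$ when the slice is not transverse'', does not rescue this. Charged per slab, it brings back $E_sE_u^2$. Charged once, it is the statement to be proved. The curvature bound $1/r\le C/e_u$ does not supply it: the patch can have diameter comparable to $r$, and its steepness relative to the anisotropic grid is not controlled by curvature.

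The missing idea is to count walls rather than slabs. After a generic translation of the grid, a cell meeting $Y\cap B$ either contains a whole connected component of $Y\cap B$, or contains a point of $Y$ on one of its walls. There are only $O(1)$ components, since $Y\cap B$ is semialgebraic of bounded complexity.

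A wall is two-dimensional and lies in a $3$-space, so $Y$ meets it in $O(1)$ circular arcs. This is the correct form of your ``the slice is a curve''. An arc meets a planar grid of sides $w_1,w_2$ in $O(1+\mathrm{TV}_1/w_1+\mathrm{TV}_2/w_2)$ cells.

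\begin{itemize}
\item Core walls: $O(1+E_s)$ of them meet $Y\cap B$ (your Step~1), and each carries $O(1+E_u)$ crossed faces.
\item Transverse walls: each of the two families has $O(1+E_u)$ walls meeting $Y\cap B$ (your Step~2, one coordinate at a time). Each carries $O(1+E_s+E_u)$ crossed faces, because along an arc the core parameter stays in one interval of Lemma~\ref{lem:E2}.
\end{itemize}

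The total is $O(1+E_s+E_u+E_sE_u+E_u^2)$, a sum rather than a product, and its exponent is exactly $N$. This is the paper's argument.
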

\begin{proof}
$Y\cap B$ is semialgebraic of bounded complexity, so it has $\le C$ connected components. Translating the grid by a generic offset, which does not affect Lemma~\ref{lem:E3}, we may assume that no wall contains an open piece of $Y$. A cell meeting $Y\cap B$ either contains a whole component or contains a point of $Y$ on one of its walls. The walls are of two kinds.
\begin{itemize}\setlength\itemsep{0pt}
\item \emph{Core walls} are the sets of fixed core parameter; each lies in a $3$-space $\mathrm{span}(\mathcal C(s_i))\oplus\Pi^\perp$. By Lemma~\ref{lem:E2} only $\le C(1+e_s/x_L)$ of them meet $Y\cap B$. Each meets $Y$ in an arc of a circle whose transverse offset stays in a disc of radius $e_u$. An arc of a circle meets a planar grid of sides $w_1,w_2$ in $\le C(1+\mathrm{TV}_1/w_1+\mathrm{TV}_2/w_2)$ cells, $\mathrm{TV}_i$ being the total variation of the $i$-th coordinate, which is $\le Ce_u$ here. So it crosses $\le C(1+e_u/x_S)$ faces, which are squares of side $x_S$.
\item \emph{Transverse walls} are the sets where one transverse coordinate is fixed. Only $\le C(1+e_u/x_S)$ of them meet $Y\cap B$. On each, $Y\cap B$ consists of $\le C$ circular arcs. On each arc the core parameter stays in one interval of Lemma~\ref{lem:E2}, so it varies by $\le Ce_s/R'$, and the other transverse coordinate varies by $\le Ce_u$. So it crosses $\le C(1+e_s/x_L+e_u/x_S)$ faces, which are rectangles of sides $x_L$ and $x_S$.
\end{itemize}
Summing the two kinds gives the claim.
\end{proof}

\begin{lemma}[Points on a circle]\label{lem:E5}
A circle $S^3_{R'}\cap P$, with $P$ an affine plane spanned by points of $X_t$, contains $\le2^{O(t/\log t)}$ points of $X_t$.
\end{lemma}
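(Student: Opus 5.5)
The plan is to parametrize the points of $X_t$ on the circle $S^3_{R'}\cap P$ by elements of a two-dimensional $\Q(\sqrt2)$-lattice, and then count solutions of a norm equation in a CM extension with the divisor bound.

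\emph{Step 1: rational affine data.} Since $P$ is spanned by points of $X_t\subset\mathcal O$, its direction space is a rank-two $\Z[\sqrt2]$-submodule $\Lambda:=\mathcal O\cap\vec P$, up to finite index. Fix one point $w_0\in X_t\cap P$. Every other point on the circle is $w=w_0+v$ with $v\in\Lambda':=(\mathcal O-w_0)\cap\vec P$, which is a translate of a lattice of rank two over $\Z[\sqrt2]$.

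\emph{Step 2: a binary Hermitian form.} The condition $\mathrm{nrd}(w)=n_t$ becomes a quadratic equation in $v$. It reads $\mathrm{nrd}(v)+\mathrm{tr}(w_0\bar v)=0$, and this is inhomogeneous. Complete the square by writing $w=p+u$, where $p\in\tfrac1D\mathcal O\otimes\Q$ is the centre of the circle. Here $D$ is the Gram determinant of $\Lambda$, an element of $\Z[\sqrt2]$; because $P$ is rational, both $p$ and the squared radius $\rho^2:=n_t-\mathrm{nrd}(p)$ lie in $\tfrac1{D^2}\Z[\sqrt2]$. The quaternion norm restricted to $\vec P$ is a positive-definite binary quadratic form $q$ over $\Z[\sqrt2]$, definite at both real places. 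So $q(Du)=D^2\rho^2=:N$, with $N\in\Z[\sqrt2]$ totally non-negative.

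A definite binary form over $\Q(\sqrt2)$ of discriminant $-\Delta$ is, after a linear change of variables, a multiple of the relative norm form of the CM field $E=\Q(\sqrt2,\sqrt{-\Delta})$ restricted to a lattice. Each representation of $N$ by $q$ therefore yields an element $\alpha\in\mathcal O_E$ with $N_{E/\Q(\sqrt2)}(\alpha)=c\,N$, where $c$ is a fixed conductor factor. Distinct points give distinct $\alpha$, up to the bounded torsion of $\mathcal O_E^\times$ modulo the units of $\Z[\sqrt2]$. Those units of infinite order are then excluded because $|\sigma_i(\alpha)|$ is pinned at both real places.

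\emph{Step 3: divisor bound.} The number of elements of $\mathcal O_E$ of given relative norm, modulo bounded torsion, is at most the number of ideals of norm $cN$ times $|\mathcal O_E^\times/\mathcal O_{\Q(\sqrt2)}^\times|$, and the latter is bounded. The ideal count is at most $d(cN\mathcal O_{\Q(\sqrt2)})^2$ in the split worst case. The norm $N_{\Q(\sqrt2)/\Q}(cN)$ is polynomially bounded in $R'$, provided $D$ and $\Delta$ are. Indeed $P$ is spanned by points of size $\le2R'$ in both embeddings, so the Gram entries, and hence $D$, $\Delta$ and $c$, are $R'^{O(1)}$. The divisor bound in $\Z[\sqrt2]$ gives $d(\mathfrak a)\le2^{O(\log N\mathfrak a/\log\log N\mathfrak a)}$, and with $\log N\mathfrak a=O(t)$ this is $2^{O(t/\log t)}$.

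\emph{Main obstacle.} The delicate point is Step 2: one must show uniformly that the reduction from the rank-two lattice with an arbitrary rational plane to the norm form of an order in a CM field loses only a factor polynomial in $R'$. The concerns are the index of the order and the non-primitivity of $\Lambda$, which is not necessarily free over the PID $\Z[\sqrt2]$ (it is free, since $\Z[\sqrt2]$ has class number one). I would handle this by passing to the maximal order $\mathcal O_E$ and bounding each representation by an ideal divisor of $cN\mathcal O_E$, absorbing the index into the polynomial factor, which only enters the divisor bound logarithmically.
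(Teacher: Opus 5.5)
Your proposal is correct and follows essentially the same route as the paper. Both arguments parametrize the points of $X_t\cap P$ by $\Q(\sqrt2)$-coordinates on the plane and complete the square to get a relative norm equation $N_{E/\Q(\sqrt2)}(\alpha)=M$ in the CM field $E=\Q(\sqrt2,\sqrt{-D})$. Both then bound the solutions by the number of roots of unity times the number of ideals of relative norm $(M)$, and the divisor bound gives $2^{O(t/\log t)}$.

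The paper avoids your ``main obstacle'' by never saturating the lattice. It writes $w=w_0+xd_1+yd_2$ with the spanning differences $d_1,d_2$ themselves, and clears denominators with one fixed nonzero $2\times2$ minor (Cramer's rule). As a result, every coefficient, including your leading coefficient and discriminant, visibly has height $R'^{O(1)}$.
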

\begin{proof}
Write $P=w_0+\R d_1+\R d_2$ with $w_0\in X_t$ and $d_1,d_2$ differences of points of $X_t$. Their coordinates lie in $\tfrac12\Z[\sqrt2]$ and have height $R'^{O(1)}$. By Cramer's rule every point of $X_t\cap P$ is $w_0+xd_1+yd_2$ with $(X,Y):=(\Delta x,\Delta y)\in\Z[\sqrt2]^2$, for a fixed nonzero $\Delta\in\Z[\sqrt2]$ of height $R'^{O(1)}$ (a $2\times2$ minor of $2d_1,2d_2$). After multiplying by a fixed power of $2$, the condition $\mathrm{nrd}=n_t$ becomes $aX^2+2bXY+cY^2+2eX+2fY=0$, with $a,b,c,e,f\in\Z[\sqrt2]$ of height $R'^{O(1)}$ and $D:=ac-b^2$ totally positive.

Completing the square with $U=aX+bY+e$ and $V=DY+af-be$ gives $DU^2+V^2=M$, where $M:=(af-be)^2+De^2$. If $M=0$ there is at most one point. Otherwise $\alpha=V+U\sqrt{-D}$ lies in the ring of integers $\mathcal O_E$ of the CM field $E=\Q(\sqrt2,\sqrt{-D})$ and has relative norm $M$. Elements of relative norm one in a CM extension are roots of unity, of which a quartic field has at most $12$. So the number of such $\alpha$ is at most $12$ times the number of ideals of relative norm $(M)$, which is $\le d(N_{\Q(\sqrt2)/\Q}M)^2=2^{O(\log R'/\log\log R')}$. The pair $(U,V)$ determines $(X,Y)$.
\end{proof}

\paragraph{Assembly and certificate.} By Lemmas~\ref{lem:E1}--\ref{lem:E5},
\[
\#X_t\ \le\ 2^{O(t/\log t)}\,R'^{\,2(a-a_0)+b+\sup_{g\ge0}\min N},
\]
where for each $g$ the minimum is over cell exponents satisfying the hypotheses of Lemma~\ref{lem:E3}. For $g>2.01$ the whole sphere has $r^3R'^3\to0$, so all its points are coplanar and the cost is $0$. With $S_0,U_0$ from Lemma~\ref{lem:E2},
\begin{itemize}\setlength\itemsep{0pt}
\item for $r>R'/2$: $S_0=1-b$, and then $g<\log_{R'}2$;
\item for $r\le R'/2$: $S_0=\min\big(1-b,\tfrac12(2-a-\min(g,a)),1-g\big)$;
\item $U_0=\min(1-a,1-g)$.
\end{itemize}
Take $a_0=\tfrac95$, $a=\tfrac{91}{50}$ and $b=\tfrac{73}{50}$, so that $2a+3b=8.02$ and $2(a-a_0)+b=\tfrac32$. The constraints of Lemma~\ref{lem:E3} only become more restrictive as $g$ grows, and $N$ is nonincreasing in $g$. The condition $a_0+a\ge2b$ of Lemma~\ref{lem:E1} holds ($3.62\ge2.92$). So it suffices to fix one cell $(X_L,X_S)$ for each interval $[g_i,g_{i+1}]$ of the grid $g_i=i/400$ ($0\le i\le804$), checking the constraints at $g_{i+1}$ and $N$ at $g_i$, and to treat the case $r>R'/2$ separately on $[0,\tfrac1{400}]$, which contains $[0,\log_{R'}2)$ once $R'\ge2^{400}$. An exact rational computation (\texttt{research/cert/certificate.py}) finds such cells with $\sup_g\min N\le0.305001$. Hence $\log_{R'}\#X_t\le1.805001$, and $\#X_t\le2^{\,0.4513\,t+O(t/\log t)}$. Summing over $T$-counts $\le t$ costs a factor $t+1$. This proves~\eqref{eq:elem}.

The worst radii are $g\approx1.1$, i.e.\ spheres of radius $\approx R'^{-0.1}$, at which the core meets $Y$ nearly tangentially and $\lambda R'\approx\sqrt{r\rho R'}$. With $a_0=a=\tfrac{179}{100}$ and $b=\tfrac{737}{500}$, which satisfy $2a+3b=8.002$, $a\ge b$ and $a_0+a\ge2b$, the same computation (\texttt{research/cert/certificate\_223.py}) gives $2(a-a_0)+b+\sup_g\min N\le1.791<\tfrac{400}{223}$. The worst case is now at $g\approx a$. This gives the rate $2.23$ stated after Theorem~\ref{thm:rate115}. A numerical optimisation over $(a,b)$ finds that these lemmas cannot give more than about $2.234$.

\section{Proof of Theorem~\ref{thm:dioph}}\label{app:dioph}

We keep the notation of Appendix~\ref{app:elem}; Lemmas~\ref{lem:E1}--\ref{lem:E5} and their proofs hold for any $a_0\in(0,2)$. Now put $a_0:=\tfrac{28}{17}$, $a:=a_0$, $b:=\tfrac{157}{100}$ and $E_*:=\tfrac{11183}{6800}<a_0$. For $t\le\tfrac{17}7(L+2)$ we have $\varepsilon=2^{-L}\le4\cdot2^{-7t/17}\le CR'^{-a_0}$, so as before it suffices to bound $\#X_t$ for the tube of radius $CR'^{-a_0}$, which we call $\varepsilon$. We show $\#X_t\le R'^{E_*+o(1)}$. Since $R'\le2\cdot2^{t/4}$ and $E_*/4=\kappa_D$, summing over $T$-counts $\le t$ then gives~\eqref{eq:dioph}.

The exponents satisfy $2a+3b=8.0041>8$, $a\ge\max(a_0,b)$, $a_0+a\ge2b$ and $2b\ge a$. By Step~1 of Appendix~\ref{app:elem} the tube is covered by $\ll R'^{b}$ boxes (as $a=a_0$), and by Lemma~\ref{lem:E1} the points of $X_t$ in a box $B$ lie in one affine hyperplane. If they span at most a plane, Lemma~\ref{lem:E5} bounds them by $2^{O(t/\log t)}$, and these boxes contribute $R'^{b+o(1)}\le R'^{E_*+o(1)}$ in total. Otherwise the points of $B$ span a hyperplane $H_B$, and they lie on the sphere section $Y_B:=S^3_{R'}\cap H_B$, of radius $r=R'^{1-g}$ (its \emph{class} $g$).

\paragraph{Heights.} Let $\langle\cdot,\cdot\rangle$ be the standard $K$-bilinear form on $K^4$, $K=\Q(\sqrt2)$, in the basis $1,i,j,k$, and let $\mathcal O^*:=\{x\in K^4:\langle x,\mathcal O\rangle\subset\mathcal O_K\}$. Since $i,j,k\in\mathcal O$, since the form is unimodular on $\mathcal O_K^4$, and since $\langle x,y\rangle=\tfrac12\mathrm{trd}(x\bar y)\in\tfrac12\mathcal O_K$ for $x,y\in\mathcal O$,
\[
2\mathcal O_K^4\subset2\mathcal O\subset\mathcal O^*\subset\mathcal O_K^4\subset\mathcal O .
\]
All of these are $\mathcal O_K$-lattices in $\R^4\times\R^4$ via $(\sigma_1,\sigma_2)$. For $v\in K^4$, and likewise for $v$ in an exterior power $\Lambda^kK^4$ with the Euclidean norm of its Plücker coordinates, put $h(v):=|\sigma_1v|\,|\sigma_2v|$. We use three elementary facts.
\begin{itemize}\setlength\itemsep{0pt}
\item[(H1)] $h(uv)=h(v)$ for units $u$ of $\mathcal O_K$, and every $v$ has a \emph{balanced} unit multiple, with $|\sigma_1v|,|\sigma_2v|\le2\sqrt{h(v)}$ (multiply by powers of $1+\sqrt2$).
\item[(H2)] If $v\ne0$ has coordinates in $\mathcal O_K$, then $h(v)\ge1$: a nonzero coordinate $\xi$ has $|\sigma_1\xi\,\sigma_2\xi|=|N\xi|\ge1$. This applies to $v\in\mathcal O^*$ and to nonzero wedges of vectors of $\mathcal O^*$.
\item[(H3)] For $X,Y>0$ at most $1+CXY$ elements $\xi\in\mathcal O_K$ have $|\sigma_1\xi|\le X$ and $|\sigma_2\xi|\le Y$. If $XY<\tfrac14$, two of them would differ by an element of norm $<1$. Otherwise, balance the box by a unit and count points of a lattice of covolume $2\sqrt2$ and minimum $\ge\sqrt2$ in a rectangle of sides $\le4\sqrt{XY}$.
\end{itemize}
A hyperplane $H$ spanned by points of $\mathcal O$ has a normal in $\mathcal O^*$ that is \emph{primitive}, $Kn\cap\mathcal O^*=\mathcal O_Kn$ ($\mathcal O_K$ is a principal ideal domain). It is unique up to units, and $H=\{\langle n,x\rangle=m_H\}$ with $m_H\in\mathcal O_K$. We call $h(n)$ the height of $H$ and of $Y=S^3_{R'}\cap H$. We write $H_0=n^\perp\subset K^4$ for the direction space of $H$, and $\nu:=\sigma_1n/|\sigma_1n|$, so that $\sigma_1H=\{\langle\nu,x\rangle=c\}$ as in Appendix~\ref{app:elem}.

\begin{lemma}[Covolumes]\label{lem:F1}
Let $n\in\mathcal O^*$ be primitive, $M:=\mathcal O\cap H_0$ and $M^*:=\{\ell\in H_0:\langle\ell,M\rangle\subset\mathcal O_K\}$. Then $\mathrm{covol}(M)=\sqrt2\,h(n)$ and $\mathrm{covol}(M^*)=512/\mathrm{covol}(M)$.
\end{lemma}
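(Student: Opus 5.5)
The plan is to compute both covolumes by treating $\mathcal O$ as a rank-$8$ $\Z$-lattice in $\R^4\times\R^4$ and comparing it with its hyperplane sublattice. The Euclidean inner product on $\R^4\times\R^4$ is $\operatorname{Tr}_{K/\Q}\langle x,y\rangle$, so the $\mathcal O_K$-valued form and the Euclidean geometry are linked through the different of $\mathcal O_K$. The computation has three steps: the covolume of $\mathcal O$ itself, a fibration of $\mathcal O$ over $\mathcal O_K$ via $x\mapsto\langle n,x\rangle$, and a duality rescaling.

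\emph{Step 1: $\mathrm{covol}(\mathcal O)=4$.} The ring $\mathcal O_K=\Z[\sqrt2]$ embedded by $(\sigma_1,\sigma_2)$ has covolume $2\sqrt2$, so $\mathcal O_K^4$ has covolume $(2\sqrt2)^4=64$. The basis~\eqref{eq:order} of $\mathcal O$ is obtained from $1,i,j,k$ by a triangular $\mathcal O_K$-matrix whose diagonal entries have product $1\cdot\tfrac1{\sqrt2}\cdot\tfrac1{\sqrt2}\cdot\tfrac12=\tfrac14$. The $\Z$-index $[\mathcal O:\mathcal O_K^4]$ is therefore $|N_{K/\Q}(4)|=16$, and $\mathrm{covol}(\mathcal O)=64/16=4$.

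\emph{Step 2: fibration.} Consider the $\mathcal O_K$-linear map $\phi:\mathcal O\to\mathcal O_K$, $x\mapsto\langle n,x\rangle$, which takes values in $\mathcal O_K$ because $n\in\mathcal O^*$. Its image is an ideal $\gamma\mathcal O_K$, since $\mathcal O_K$ is a PID. Then $\langle n/\gamma,\mathcal O\rangle\subset\mathcal O_K$, so $n/\gamma\in\mathcal O^*\cap Kn=\mathcal O_Kn$ by primitivity, and hence $\gamma$ is a unit. We obtain the exact sequence
\[0\to M\to\mathcal O\to\mathcal O_K\to0 .\]
So $\mathrm{covol}(\mathcal O)=\mathrm{covol}(M)\cdot\mathrm{covol}(\pi\mathcal O)$, where $\pi$ is orthogonal projection onto the real plane spanned by $(\sigma_1n,0)$ and $(0,\sigma_2n)$. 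In the orthonormal coordinates of that plane, $\pi x=\big(\sigma_1\xi/|\sigma_1n|,\ \sigma_2\xi/|\sigma_2n|\big)$ with $\xi=\phi(x)$ ranging over all of $\mathcal O_K$. This is $\mathcal O_K$ with each coordinate rescaled, so its covolume is $2\sqrt2/h(n)$. Hence $\mathrm{covol}(M)=4h(n)/(2\sqrt2)=\sqrt2\,h(n)$.

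\emph{Step 3: duality.} Here $M$ is a rank-$6$ $\Z$-lattice in $H_0\otimes\R$. Its Euclidean dual $M^\#$ satisfies $\mathrm{covol}(M^\#)=1/\mathrm{covol}(M)$. Since $\operatorname{Tr}(\alpha\mathcal O_K)\subset\Z$ exactly when $\alpha\in\mathfrak d^{-1}=\tfrac1{2\sqrt2}\mathcal O_K$, we get $M^\#=\tfrac1{2\sqrt2}M^*$. Multiplication by $2\sqrt2$ scales each of the two real $3$-dimensional factors by $|\sigma_i(2\sqrt2)|=2\sqrt2$, so it multiplies covolume by $(2\sqrt2)^6=512$. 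Therefore $\mathrm{covol}(M^*)=512\,\mathrm{covol}(M^\#)=512/\mathrm{covol}(M)$.

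The only delicate point is Step 2. One must check that primitivity of $n$ in $\mathcal O^*$, rather than in $\mathcal O_K^4$, is exactly what makes $\phi$ surjective. One must also check that the orthogonal complement of $M\otimes\R$ is precisely the plane spanned by the two embedded copies of $n$, which holds because $H_0\otimes_K\R^2$ is cut out by the two real conditions $\langle\sigma_in,\cdot\rangle=0$.
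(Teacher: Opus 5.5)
Your proposal is correct and follows the paper's proof step for step. The shared steps are: surjectivity of $x\mapsto\langle n,x\rangle$ onto $\mathcal O_K$ from primitivity in $\mathcal O^*$; projection of $\mathcal O$ onto the plane spanned by $(\sigma_1n,0)$ and $(0,\sigma_2n)$, giving covolume $2\sqrt2/h(n)$; $\mathrm{covol}(\mathcal O)=64/16=4$ from the basis~\eqref{eq:order}; and the Euclidean dual $M^\#=\tfrac1{2\sqrt2}M^*$ via the inverse different, where your scaling factor $(2\sqrt2)^6$ is the paper's $|N(2\sqrt2)|^3=8^3=512$.
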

\begin{proof}
The functional $f(x)=\langle n,x\rangle$ maps $\mathcal O$ onto an ideal $(\alpha)$ of $\mathcal O_K$. Then $n/\alpha\in\mathcal O^*\cap Kn=\mathcal O_Kn$, so $\alpha$ is a unit and $f(\mathcal O)=\mathcal O_K$. The kernel of $f$ on $\mathcal O$ is $M$. The orthogonal complement of $H_0\otimes\R$ in $\R^8$ is spanned by $(\sigma_1n,0)$ and $(0,\sigma_2n)$. The projection of $\mathcal O$ onto it is the image of $\mathcal O_K$ under $\xi\mapsto(\sigma_1\xi/|\sigma_1n|,\sigma_2\xi/|\sigma_2n|)$, of covolume $2\sqrt2/h(n)$. Hence $\mathrm{covol}(M)=\mathrm{covol}(\mathcal O)\,h(n)/2\sqrt2$, and $\mathrm{covol}(\mathcal O)=\mathrm{covol}(\mathcal O_K^4)/16=4$ from the basis~\eqref{eq:order}. Under the trace form $\sum_i\langle\sigma_ix,\sigma_iy\rangle$ the $\Z$-dual of $M$ is $\delta^{-1}M^*$, where $\delta=2\sqrt2$ generates the different of $\mathcal O_K$. Therefore $\mathrm{covol}(M^*)=|N\delta|^3/\mathrm{covol}(M)$.
\end{proof}

\begin{lemma}[Fibration of a sphere section]\label{lem:F2}
Let $Y=S^3_{R'}\cap H$ have height $h(n)\ge R'^y$, and let $Z\subset Y$ with $\mathrm{vol}_3(\mathrm{conv}\,Z)\le R'^v$ in the first embedding. Then $X_t\cap Z$ lies on at most $1+CR'^{1+(v-y)/3}$ circles, and has at most $2^{O(t/\log t)}(1+CR'^{1+(v-y)/3})$ points.
\end{lemma}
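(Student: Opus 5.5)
The plan is to fibre $X_t\cap Z$ by the level sets of a single linear functional $x\mapsto\langle\ell,x\rangle$ with $\ell$ a short vector of the dual lattice $M^*$ of Lemma~\ref{lem:F1}, adapted to the shape of $\mathrm{conv}\,Z$. Each level set is an affine plane inside $H$, so it meets $Y$ in a circle, and Lemma~\ref{lem:E5} then leaves $2^{O(t/\log t)}$ points on each circle. What remains is to count how many levels $\langle\ell,x\rangle$ actually occur.

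\emph{Step 1: the values lie in a coset of $\mathcal O_K$.} Fix $x_0\in X_t\cap Z$; if none exists there is nothing to prove. For every $x\in X_t\cap Z$ we have $x-x_0\in\mathcal O\cap H_0=M$. Hence $\langle\ell,x\rangle-\langle\ell,x_0\rangle\in\mathcal O_K$ for every $\ell\in M^*$. Let $w_1$ be the width of $\mathrm{conv}\,Z$ in the direction $\sigma_1\ell$, so that the $\sigma_1$-values of the differences are bounded by $|\sigma_1\ell|\,w_1$. In the second embedding every point lies on a sphere of radius $\le2R'$, so the $\sigma_2$-values are bounded by $4R'|\sigma_2\ell|$. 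By (H3) the number of distinct levels is at most
\[
1+C\,\big(|\sigma_1\ell|\,w_1\big)\big(R'|\sigma_2\ell|\big).
\]

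\emph{Step 2: choosing $\ell$ by Minkowski.} Regard $M^*$ as a rank-$6$ $\Z$-lattice in $(\sigma_1H_0)\times(\sigma_2H_0)\cong\R^3\times\R^3$; by Lemma~\ref{lem:F1} its covolume is $512/(\sqrt2\,h(n))\le CR'^{-y}$. Take the symmetric convex body
\[
\mathcal K:=\{\ell:\ \mathrm{width}_{\mathrm{conv}Z}(\sigma_1\ell)\le T_1,\ |\sigma_2\ell|\le T_2\}.
\]
Its first factor is $T_1$ times the polar of the difference body of $\mathrm{conv}\,Z$ inside $\sigma_1H_0$. By John's theorem, that polar has volume $\ge c/\mathrm{vol}_3(\mathrm{conv}\,Z)\ge cR'^{-v}$; this gives a Mahler-type lower bound up to absolute constants, which is all that is needed. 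So $\mathrm{vol}(\mathcal K)\ge c\,T_1^3T_2^3R'^{-v}$. Minkowski's first theorem produces a nonzero $\ell\in M^*\cap\mathcal K$ as soon as $T_1T_2\ge C\,R'^{(v-y)/3}$. Substituting this into Step~1 gives at most $1+CR'^{1+(v-y)/3}$ levels, hence at most that many circles $Y\cap\{\langle\ell,x\rangle=\mathrm{const}\}$.

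\emph{Step 3: points per circle.} A circle carrying at most two points of $X_t$ contributes at most two. A circle carrying three or more lies in an affine plane spanned by points of $X_t$, and Lemma~\ref{lem:E5} bounds it by $2^{O(t/\log t)}$. Multiplying by the number of circles gives the second assertion.

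The step I expect to need the most care is Step~2 in degenerate configurations. When $\mathrm{conv}\,Z$ is very flat, or even lower-dimensional, the polar body is unbounded. In that case I would replace $\mathrm{conv}\,Z$ by its $R'^{-10}$-neighbourhood inside $\sigma_1H_0$, which changes $v$ only by $o(1)$ in the relevant range. Alternatively, when $\mathrm{conv}\,Z$ has dimension $\le2$, I would observe directly that the points are coplanar and lie on one circle. I would also check that the John-ellipsoid constant and the factor $512/\sqrt2$ are absorbed into $C$, and that (H3) is applied with the balanced form of $\ell$ from (H1), so that the rectangle count really depends only on the product $|\sigma_1\ell|\,|\sigma_2\ell|$.
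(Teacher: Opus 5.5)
Your proposal is correct and is essentially the paper's proof. The paper takes a nonzero $\ell\in M^*$ in $T_1D^\circ\times\{|\ell_2|\le T_2\}$, using John's theorem, Minkowski's first theorem and the covolume of Lemma~\ref{lem:F1}, then applies (H3) to the values $\langle\ell,w\rangle$ and Lemma~\ref{lem:E5} on each circle fibre. The one step you compress is the passage from $\mathrm{vol}_3(\mathrm{conv}\,Z)$ to the volume of the difference body $D=\mathrm{conv}\,Z-\mathrm{conv}\,Z$, whose polar is what John's theorem controls; the paper fills it with Rogers--Shephard, $\mathrm{vol}(D)\le20\,\mathrm{vol}(\mathrm{conv}\,Z)$. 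Your degenerate case needs no separate treatment, because Minkowski's theorem also applies to unbounded symmetric convex sets.
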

\begin{proof}
The difference body $D:=\mathrm{conv}\,Z-\mathrm{conv}\,Z$ lies in the $3$-space $\sigma_1(H_0)\otimes\R$ and has volume $\le20R'^v$ (Rogers--Shephard~\cite{RogersShephard}). By John's theorem its polar body satisfies $\mathrm{vol}(D^\circ)\ge c/\mathrm{vol}(D)$~\cite{Gruber}. For $T_1,T_2>0$ the body $\{(\ell_1,\ell_2):\ell_1\in T_1D^\circ,\ |\ell_2|\le T_2\}\subset H_0\otimes\R$ has volume $\ge c\,T_1^3T_2^3/\mathrm{vol}(D)$. By Minkowski's first theorem and Lemma~\ref{lem:F1}, it contains some $\ell\in M^*\setminus\{0\}$ once $T_1T_2=C(\mathrm{vol}(D)/h(n))^{1/3}$. For $w,w'\in X_t\cap Z$ we have $w-w'\in M$, so $\langle\ell,w-w'\rangle\in\mathcal O_K$, with $|\sigma_1(\cdot)|\le T_1$ and $|\sigma_2(\cdot)|\le CT_2R'$. By (H3), $\langle\ell,w\rangle$ takes at most $1+CT_1T_2R'$ values on $X_t\cap Z$. Since $\ell\in H_0=n^\perp$ is nonzero and $\langle n,n\rangle\ne0$, $\ell$ and $n$ are $K$-independent, so each fibre lies in the affine plane $H\cap\{\langle\ell,x\rangle=\mathrm{const}\}$, i.e.\ on a circle. It has at most two points or spans that plane, and Lemma~\ref{lem:E5} applies.
\end{proof}
For a cell and $y=0$ this is Lemma~\ref{lem:E3}: condition~\eqref{eq:E3} says exactly that $1+v/3<0$.

\begin{lemma}[Volumes]\label{lem:F3}
Let $Y$ be a sphere section of class $g$ meeting the tube.
\begin{itemize}\setlength\itemsep{0pt}
\item[(a)] For a cell $Q$ as in Lemma~\ref{lem:E3}, $\mathrm{vol}_3\,\mathrm{conv}(Y\cap Q)\le Cx_Lx_S\min(x_S,x_L^2/r)$.
\item[(b)] Let $W$ be the set of tube points whose core parameter lies in an interval of length $e/R'$, $e\le R'/4$, and whose transverse offset lies in a disc of radius $e'$. If $r\le R'/2$, or if $e'\le e$ and $e^2/R'\le Ce'$, then $\mathrm{vol}_3\,\mathrm{conv}(Y\cap W)\le Cee'^2$.
\end{itemize}
\end{lemma}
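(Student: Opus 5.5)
Both parts come down to one elementary fact about a convex body lying in a $3$-dimensional region $G$: if $G$ is, within an error $\eta$, the Minkowski sum of a segment of length $\ell_1$ with a planar set of diameters $\ell_2\times\ell_3$, then its volume is at most $C(\ell_1+\eta)(\ell_2+\eta)(\ell_3+\eta)$. To use it we work in the affine $3$-space $\sigma_1(H)$. The set $\mathrm{conv}(Y\cap\cdot)$ lies in $H\cap\mathrm{conv}(\cdot)$, so it is enough to bound the region occupied by the points of $Y$ themselves.

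For (a) we repeat the geometry of the proof of Lemma~\ref{lem:E3}. The cell $Q$ lies within $Cx_S$ of a segment $\sigma$ of length $x_L$. Its intersection with the hyperplane $H$ therefore lies within $Cx_S$ of $\sigma$ as well, so it has width $\le Cx_S$ in every direction orthogonal to $\sigma$. Since $Q$ has diameter $\le Cx_L$, this gives the bound $Cx_Lx_S^2$ at once. For the second term we use that $Y$ is a round $2$-sphere of radius $r$. The part of $Y$ in a ball of radius $Cx_L\le r/2$ lies in the slab of width $Cx_L^2/r$ around the tangent plane at any of its points. Hence $\mathrm{conv}(Y\cap Q)$ has width $\le Cx_L^2/r$ in the normal direction of $Y$ inside $H$, and within the tangent plane its projection is contained in the projection of $Q$, which is within $Cx_S$ of a segment. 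The product gives $Cx_Lx_S\cdot x_L^2/r$. Taking the smaller of the two bounds proves (a), exactly as in the volume estimate for $V$ in Lemma~\ref{lem:E3}.

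For (b) we describe $W$ the same way as in the proof of Lemma~\ref{lem:E2}. A point of $W$ is $R'(\cos\varphi\,\mathcal C(s)+\sin\varphi\,n)$, with $s$ in an interval of length $e/R'$ and transverse offset in a disc of radius $e'$. Its $\Pi$-component lies within $C(e^2/R'+\varepsilon e')$ of a chord of length $\le e$, because $R'(1-\cos(e/2R'))\le Ce^2/R'$ and $R'\cos\varphi$ varies by $\le Ce'\varepsilon$. Its $\Pi^\perp$-component lies in a disc of radius $e'$. So $W$ lies in the $C(e^2/R'+e')$-neighbourhood of (segment of length $e$)$\times$(disc of radius $e'$). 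In the second alternative, $e'\le e$ and $e^2/R'\le Ce'$, this neighbourhood has thickness $\le Ce'$ around a region of dimensions $e\times e'\times e'$ in $\R^4$. Intersecting with $H$ and applying the elementary fact (the section of a $4$-dimensional slab-box of dimensions $e\times e'\times e'\times O(e')$ by a hyperplane has $3$-volume $\le Cee'^2$ once $e'\le e$) gives $\le Cee'^2$.

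The first alternative, $r\le R'/2$ with no relation between $e,e'$ assumed, is the step I expect to be the main obstacle, because the sagitta $e^2/R'$ may now exceed $e'$. Here we would use that $Y$ lies on $S^3_{R'}$, so it cannot follow the curvature of the core in the direction where the tube is thick. Concretely, we take the orthogonal decomposition of $\R^4$ into $\mathcal C(s_0)$, $\mathcal C'(s_0)$ and $\Pi^\perp$. Points of $W$ on the sphere have their radial ($\mathcal C(s_0)$) coordinate determined by the other three up to $O(e'^2/R'+e^2/R')$. This gives the convex hull a bounded projection onto the $3$-space spanned by $\mathcal C'(s_0)$ and $\Pi^\perp$, of volume $\le Cee'^2$. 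The projection from $H$ to that $3$-space is what remains to control: its Jacobian is $|\langle\nu,\mathcal C(s_0)\rangle|$, and we need this bounded below. The condition $r\le R'/2$ gives $c\ge\tfrac{\sqrt3}2R'$, so $|\nu-\hat y|$ is small for the points $y\in Y$. Since $Y$ meets the tube near core parameter $s_0$, the unit normal $\nu$ is within $O(1)$ angle $<\pi/2$ of $\mathcal C(s_0)$, and so the Jacobian is $\ge c>0$. This bounds $\mathrm{vol}_3$ of the convex hull by $C$ times the projected volume $Cee'^2$. If the Jacobian estimate fails near the boundary case, the fallback would be to split $W$ into $O(1)$ pieces in $s$ with $e\le R'/4$, and use that $\nu$ varies little across them.
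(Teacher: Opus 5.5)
Your proof is correct and is essentially the paper's. In (a) you use the same two facts from the proof of Lemma~\ref{lem:E3}: the cell lies within $Cx_S$ of a segment, and $Y$ lies within $Cx_L^2/r$ of a tangent plane. You apply them by containment, where the paper bounds tetrahedra and then inscribes a simplex of proportional volume. In the case $r\le R'/2$ of (b), your projection onto $\mathcal C(s_0)^\perp$ with Jacobian $\langle\nu,\mathcal C(s_0)\rangle$ is the same estimate as the paper's double-cone bound $\mathrm{vol}_3(B_4\cap H)\le4\,\mathrm{vol}_4(B_4)/w_\nu$, in which the radial side of the box cancels.

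The step you flagged does not fail and needs no splitting. For any $x\in Y\cap W$ one has $|x-R'\mathcal C(s_0)|\le e+4\varepsilon R'$ and $e\le R'/4$, so $\langle\nu,\mathcal C(s_0)\rangle\ge\tfrac{\sqrt3}2-\tfrac14-4\varepsilon\ge0.6$ uniformly. Your remark that the radial coordinate is determined by the other three is also not needed.
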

\begin{proof}
(a) The proof of Lemma~\ref{lem:E3} bounds the volume of every tetrahedron with vertices in $Y\cap Q$ by $Cx_Lx_S\min(x_S,x_L^2/r)$; it uses no integrality. A convex body in $\R^3$ contains a simplex of at least a fixed fraction of its volume with vertices among its extreme points, which lie in the closure of $Y\cap Q$.

(b) Let $s_k$ be the midpoint of the interval. As in the proof of Lemma~\ref{lem:E1}, $W$ lies in a box with sides $\delta:=\max(e^2/R'+C\varepsilon e',\,e')$, $e$, $2e'$, $2e'$, and axes $\mathcal C(s_k)$, $\mathcal C'(s_k)=-\sin s_k\,e_1+\cos s_k\,e_2$ and an orthonormal basis of $\Pi^\perp$. So $Y\cap W$ lies in the section of this $4$-box by $H$. For any convex body $B_4$ and hyperplane $H$ with unit normal $\nu$, $B_4$ contains the double cone over $B_4\cap H$ with apexes at its extreme points in the directions $\pm\nu$, so $\mathrm{vol}_3(B_4\cap H)\le4\,\mathrm{vol}_4(B_4)/w_\nu$, where $w_\nu$ is the width of $B_4$ in direction $\nu$.
\begin{itemize}\setlength\itemsep{0pt}
\item If $r\le R'/2$: every $x\in Y\cap W$ has $\langle\nu,x\rangle=c\ge\tfrac{\sqrt3}2R'$ and $|x-R'\mathcal C(s_k)|\le e+4\varepsilon R'$, so $\langle\nu,\mathcal C(s_k)\rangle\ge0.6$. Hence $w_\nu\ge0.6\,\delta$, and $\mathrm{vol}_3\le C\delta ee'^2/\delta$.
\item Otherwise $\delta\le Ce'\le Ce$. A hyperplane section of a box has $3$-volume at most $2$ times the product of its three largest sides (project to the coordinate hyperplane on which $\nu$ has its largest component), and here that product is $\le Cee'^2$.
\end{itemize}
\end{proof}

For a box $B$ and a sphere section $Y$ of class $g$ and height $\ge R'^y$, let $S_0,U_0$ be exponents such that the core parameters of $Y\cap B$ lie in at most two intervals of total length $\le R'^{S_0-1}$ and the transverse offsets of $Y\cap B$ lie in a disc of radius $R'^{U_0}$. Lemma~\ref{lem:E2} provides such $S_0,U_0$, and Lemma~\ref{lem:F4} below may provide a smaller $S_0$. The proof of Lemma~\ref{lem:E4} uses Lemma~\ref{lem:E2} only through this property. So Lemmas~\ref{lem:E4},~\ref{lem:F2} and~\ref{lem:F3} give $\#(X_t\cap Y\cap B)\le R'^{\mathrm{cost}(S_0,U_0,g,y)+o(1)}$ with the \emph{cost}
\begin{equation}\label{eq:cost}
\begin{gathered}
\mathrm{cost}:=\min\Big(\big(1+\tfrac13(S_0+2U_0-y)\big)_+,\\ \min_{X_L,X_S}\big[N+(P)_+\big]\Big),\\
P:=1+\tfrac13\big(X_L+X_S+\min(X_S,2X_L-(1-g))-y\big),
\end{gathered}
\end{equation}
where $N=N(X_L,X_S)$ is as in Lemma~\ref{lem:E4}. The inner minimum is over cells with $X_S\le X_L<1-g$ and $2X_L-1\le X_S$. The first entry is the whole patch (Lemma~\ref{lem:F3}(b) applied to at most two sets $W$; if $r>R'/2$ it uses $a\ge b$ and $2b\ge a$, since then $e=\lambda R'\ge e'=\rho R'$ and $e^2/R'=\lambda^2R'\le\rho R'$). The second entry comes from the cells (Lemma~\ref{lem:F3}(a)).

\begin{lemma}[Crossing depth]\label{lem:F4}
Let $Y=S^3_{R'}\cap H$ with $r\le R'/2$ meet the tube, put $A:=R'|P_\Pi\nu|$ and $\Delta:=A-c$, and let $\delta':=C(\varepsilon^2R'+\varepsilon r)$ and $\delta'_B:=C(\rho^2R'+\rho r)$. Then $-\delta'\le\Delta\le r^2/R'$. Moreover there is an absolute $C'$ such that:
\begin{itemize}\setlength\itemsep{0pt}
\item[(i)] the core parameters of $Y\cap(\text{tube})$ lie in at most two intervals of total length $\le e^T/R'$, where $e^T:=C\min(\sqrt{R'\delta'},r)$ in general and $e^T:=C\min(\delta'\sqrt{R'/\Delta},r)$ if $\Delta\ge C'\delta'$;
\item[(ii)] for every box $B$, the same holds for $Y\cap B$ with $e^B:=C\min(\lambda R',\sqrt{R'\delta'_B},r)$ in general and $e^B:=C\min(\lambda R',\delta'_B\sqrt{R'/\Delta},r)$ if $\Delta\ge C'\delta'$;
\item[(iii)] for a primitive normal $n$ and an interval $I$, at most $1+Ch(n)|I|R'$ values $m_H\in\mathcal O_K$ give a hyperplane $\{\langle n,x\rangle=m_H\}$ that contains a point of $X_t$ and has $\Delta\in I$.
\end{itemize}
\end{lemma}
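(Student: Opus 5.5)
We first derive the range of $\Delta$, then (i) and (ii) by sharpening the concavity argument of Lemma~\ref{lem:E2}, and finally (iii) by counting lattice values.

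\emph{Range of $\Delta$.} Parametrize a tube point as $p=R'(\cos\varphi\,\mathcal C(s)+\sin\varphi\,n)$ with $\varphi\le2\varepsilon$. Then
\[\langle\nu,p\rangle=A\cos\varphi\cos(s-s_0)+R'\sin\varphi\langle\nu,n\rangle.\]
If $Y$ meets the tube at some $x=p$, we get $c=\langle\nu,x\rangle\le A+CR'\varepsilon|P_{\Pi^\perp}\nu|$. The normal satisfies $|P_{\Pi^\perp}\nu|\le C(r/R'+\varepsilon)$, because $x$ is within $\varepsilon R'$ of the core and $\nu$ is within $r/R'$ of $x/R'$ when $r\le R'/2$. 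This gives $\Delta\ge-C(\varepsilon r+\varepsilon^2R')=-\delta'$. For the upper bound, note $A\le R'\sqrt{1-|P_{\Pi^\perp}\nu|^2}\le R'$ and $c=\sqrt{R'^2-r^2}\ge R'-r^2/R'$, so $\Delta\le r^2/R'$.

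\emph{Parts (i) and (ii).} Follow the proof of Lemma~\ref{lem:E2}, with the tube playing the role of the box in (i). As there, $J\subset\{s:|h(s)|\le\delta'\}$ with $h(s)=A\cos(s-s_0)-c'$. Here $c'$ differs from $c$ by at most $\delta'$ for the tube, and by $\delta'_B$ after recentering at the transverse centre of a box. The generic bound $\sqrt{R'\delta'}$ is the concavity estimate already proved. The new point is the deep-crossing case $\Delta\ge C'\delta'$. Then $h$ attains its maximum $\approx\Delta>0$, so the sublevel set $\{|h|\le\delta'\}$ consists of two intervals near the two roots of $h$. At those roots
\[|h'|\approx A|\sin(s-s_0)|\approx\sqrt{2A\Delta}\asymp\sqrt{R'\Delta},\]
since $A\cos(s-s_0)=c'$ gives $1-\cos(s-s_0)\approx\Delta/A$. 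Choosing $C'$ large keeps $|h'|\gtrsim\sqrt{R'\Delta}$ throughout both intervals, so each has length $\le C\delta'/\sqrt{R'\Delta}$. This gives $e^T\le C\delta'\sqrt{R'/\Delta}$ after multiplying by $R'$. The caps by $r$ and by $\lambda R'$ are as in Lemma~\ref{lem:E2}. For (ii), only $\delta'$ changes to $\delta'_B$, while the threshold on $\Delta$ stays $C'\delta'$. The derivative bound near the roots depends only on $\Delta$ and $A$, so the same estimate goes through.

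\emph{Part (iii).} The quantity $\Delta=R'|P_\Pi\sigma_1\nu|-c$ depends on $m_H$ through $c=\sigma_1(m_H)/|\sigma_1n|$, which is affine in $\sigma_1(m_H)$ with slope $1/|\sigma_1n|$. Hence $\Delta\in I$ confines $\sigma_1(m_H)$ to an interval of length $|\sigma_1n||I|$. If $H$ contains a point $w$ of $X_t$, then $|\sigma_2(m_H)|=|\langle\sigma_2n,\sigma_2w\rangle|\le|\sigma_2n|\,CR'$, using that $\sigma_2w$ lies on a sphere of radius $\asymp R'$. Applying (H3) with $X=|\sigma_1n||I|$ and $Y=CR'|\sigma_2n|$ gives at most $1+Ch(n)|I|R'$ values of $m_H$. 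A translate of the interval does not affect this count, because differences of admissible $m_H$ lie in a box of the same shape.

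\emph{Main obstacle.} The hardest step is the uniform lower bound on $|h'|$ over the whole sublevel intervals in the deep-crossing case, together with accounting for the transverse perturbation $R'\sin\varphi\langle\nu,n\rangle$. I would bound that term by $\delta'$ (respectively $\delta'_B$) and absorb it into the threshold $C'\delta'$, so that the roots of $h\pm\delta'$ stay within a region where $|\sin(s-s_0)|\gtrsim\sqrt{\Delta/R'}$.
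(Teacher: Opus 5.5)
Your overall route is the paper's. You use the function $\chi(s)=A\cos(s-s_0)-c$, whose maximum is $\Delta$, together with:
\begin{itemize}
\item the upper bound from $c=\sqrt{R'^2-r^2}$;
\item the concavity bound in the general case;
\item the lower bound $|\chi'|\gtrsim\sqrt{R'\Delta}$ on $\{|\chi|\le\delta'\}$ once $\Delta\ge C'\delta'$;
\item (H3) for (iii).
\end{itemize}
Your derivation of $\Delta\ge-\delta'$ goes directly through the tube parametrization and $|P_{\Pi^\perp}\nu|\le C(\varepsilon+r/R')$. It is a valid shortcut for the paper's appeal to the computation of Lemma~\ref{lem:E2} at the core point. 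The range of $\Delta$, part (i) and part (iii) are fine as you have them.

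The step that fails as written is the deep-crossing bound in (ii). You assert that recentring at the transverse centre of a box moves $c'$ by only $\delta'_B$. This is false: the box centre can sit anywhere in the tube, with $\varphi_0$ up to about $4\varepsilon$. The recentred function $\chi_B(s)=A_B\cos(s-s_0)-c'_B$ therefore has $|c'_B-c|\le R'\varphi_0|P_{\Pi^\perp}\nu|$ and $|A_B-A|\le R'\varphi_0^2$. Its peak $\Delta_B:=A_B-c'_B$ can differ from $\Delta$ by $C(\varepsilon^2R'+\varepsilon r)$, which is of order $\delta'$. When $\rho\ll\varepsilon$ this is far larger than $\delta'_B$. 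What is within $\delta'_B$ is only $|\chi_B|$ on the core parameters of $Y\cap B$, by the computation of Lemma~\ref{lem:E2}.

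So your claim that the derivative bound ``depends only on $\Delta$ and $A$'' is unjustified for $\chi_B$, whose derivative and sublevel set are governed by $A_B$ and $\Delta_B$. The fix is as follows:
\begin{itemize}
\item Show $|\Delta_B-\Delta|\le C\delta'\le\Delta/2$ for $C'$ large, using your own bound on $|P_{\Pi^\perp}\nu|$ and $\varphi_0=O(\varepsilon)$.
\item Note $\delta'_B\le\delta'\le\Delta_B/2$, since $\rho\le\varepsilon$.
\item Then $A_B(1-\cos u)\ge\Delta_B-\delta'_B\ge\Delta/4$ on $\{|\chi_B|\le\delta'_B\}$, and your derivative argument gives intervals of length $\le C\delta'_B/\sqrt{R'\Delta}$.
\end{itemize}
This estimate is exactly why the threshold in (ii) is $C'\delta'$ rather than $C'\delta'_B$, and it is what the paper's proof of (ii) consists of.
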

\begin{proof}
A tube point with core parameter $s$ is within $4\varepsilon R'$ of $p(s):=R'\mathcal C(s)$. The computation in the proof of Lemma~\ref{lem:E2}, with $p(s)$ in place of the box point, gives $|\chi(s)|\le\delta'$ for $\chi(s):=\langle\nu,p(s)\rangle-c=A\cos(s-s_0)-c$. So $\Delta=\max\chi\ge-\delta'$, and $\Delta\le R'-\sqrt{R'^2-r^2}\le r^2/R'$. As there, now with $c\ge\tfrac{\sqrt3}2R'$ and $\delta'=o(R')$, the set $\{|\chi|\le\delta'\}$ lies in $|s-s_0|\le0.6$, where $A\ge c-\delta'\ge0.8R'$. It is a union of at most two intervals of total length $\le C\sqrt{\delta'/R'}$, and core parameters of $Y$ vary by $\le Cr/R'$. This is the first bound in (i).

If $\Delta\ge C'\delta'$ with $C'\ge2$, write $u=s-s_0$. Then $|\chi|\le\delta'$ forces $A(1-\cos u)\ge\Delta/2$, so $|u|\ge\sqrt{\Delta/R'}$, and there $|\chi'|=A|\sin u|\ge c\sqrt{R'\Delta}$. Each of the two intervals therefore has length $\le C\delta'/\sqrt{R'\Delta}$.

For (ii) use the box point $p_B(s)$ of Lemma~\ref{lem:E2}. This gives $\chi_B(s)=A_B\cos(s-s_0)-c'_B$, with $|\chi_B|\le\delta'_B$ on the core parameters of $Y\cap B$. The first bound is Lemma~\ref{lem:E2}. For the second, $|A_B-A|\le R'\varphi_0^2$ and $|c'_B-c|\le R'\varphi_0|P_{\Pi^\perp}\nu|$ with $\varphi_0\le4\varepsilon$. Moreover $c\,|P_{\Pi^\perp}\nu|\le2\varepsilon R'+r$, because a point $x=c\nu+y$ of $Y\cap(\text{tube})$ has $|P_{\Pi^\perp}x|\le2\varepsilon R'$ and $|y|=r$. Hence $\Delta_B:=A_B-c'_B$ satisfies $|\Delta_B-\Delta|\le C(\varepsilon^2R'+\varepsilon r)\le\Delta/2$ for $C'$ large. Since $\rho\le\varepsilon$, also $\delta'_B\le\delta'\le\Delta_B/2$, and the argument of the previous paragraph applies to $\chi_B$.

(iii) Here $A$ depends on $n$ only, and $c=\sigma_1(m_H)/|\sigma_1n|$. So $\sigma_1(m_H)$ lies in an interval of length $|\sigma_1n|\,|I|$. For a point $w\in X_t$ of the hyperplane, $|\sigma_2m_H|=|\langle\sigma_2n,\sigma_2w\rangle|\le C|\sigma_2n|R'$. Apply (H3) to the differences of two such values.
\end{proof}
We call $Y$ \emph{tangent} if $\Delta<C'\delta'$ and \emph{crossing} otherwise.

\begin{lemma}[Normals of low height]\label{lem:F5}
Let $\gamma:=\min(g,a_0)$ with $g>0$, $\psi:=R'^{-\gamma}$ and $h:=R'^y\ge1$. Let $\mathcal N$ be the set of primitive normals, up to units and sign, of the sphere sections of class $\ge g$ that meet the tube, contain a point of $X_t$, and have height $<h$. Then $\#X_t\le R'^{E_*+o(1)}$, or
\[
\begin{gathered}
\#\mathcal N\le CR'^{\mathrm{nor}(y)},\\
\mathrm{nor}(y):=\max\big(y,\,2y-G,\,3y-\gamma,\,4y-2\gamma\big)+o(1),
\end{gathered}
\]
with $G:=\max\big(0,\gamma+\tfrac{E_*}2-2\big)$.
\end{lemma}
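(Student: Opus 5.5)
Each primitive normal $n\in\mathcal N$ is placed in a convex body attached to the core plane $\Pi$, and the lattice points of $\mathcal O^*$ in that body are counted by successive minima. The alternative $\#X_t\le R'^{E_*+o(1)}$ absorbs the single degenerate configuration in which the count fails. By (H1) we may take $n$ balanced, so $|\sigma_1n|,|\sigma_2n|\le2\sqrt h$. Because $Y$ has class $\ge g$, its radius is $r\le R'^{1-g}$, so $c\ge R'-r^2/R'$ is very close to $R'$. Since $Y$ meets the tube, some point of $Y$ lies within $4\varepsilon R'$ of $R'\mathcal C(s)$. Hence $\langle\nu,\mathcal C(s)\rangle\ge1-O(r^2/R'^2+\varepsilon)$, which gives $|P_{\Pi^\perp}\nu|\ll R'^{-\gamma}=\psi$; this uses $\gamma=\min(g,a_0)$. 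Consequently $\sigma_1n$ lies within $C\psi\sqrt h$ of $\Pi$ and has length $\le2\sqrt h$, and $|\sigma_2n|\le2\sqrt h$. The normals therefore lie in the convex body
\[
\mathcal K:=\{(x_1,x_2)\in\R^4\times\R^4:\ |P_\Pi x_1|\le2\sqrt h,\ |P_{\Pi^\perp}x_1|\le C\psi\sqrt h,\ |x_2|\le2\sqrt h\},
\]
of volume $\asymp h^4\psi^2$.

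The lattice $\mathcal O^*$ has rank $8$ over $\Z$, or rank $4$ over $\mathcal O_K$. Let $\lambda_1\le\dots\le\lambda_8$ be its successive minima with respect to $\mathcal K$. By Henk's bound, $\#(\mathcal O^*\cap\mathcal K)\ll\prod_i\max(1,1/\lambda_i)$. Multiplication by $1+\sqrt2$ changes the two embeddings by reciprocal factors, so the minima pair up in $\mathcal O_K$-blocks, and it is natural to count by the number $k\in\{1,2,3,4\}$ of $K$-independent short vectors. A lattice vector of $\mathcal K$ has $h\le Ch$, and a wedge of $k$ independent such vectors has $h\ge1$ by (H2). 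Its $\sigma_1$-part has at most two large directions (those of $\Pi$), with the remaining directions of size $\psi\sqrt h$. This gives $1\le h^k\psi^{2\max(0,k-2)}$ up to constants, so $k=3$ needs $h^3\psi^2\gtrsim1$ and $k=4$ needs $h^4\psi^4\gtrsim1$. Thus the count is $\ll h$ when $k=1$, and $h^3\psi^2=R'^{3y-2\gamma}$ and $h^4\psi^4$ when $k=3,4$. These match the terms $y$, $3y-\gamma$ and $4y-2\gamma$ after quotienting by units, which removes one factor $\sqrt h$ per block; I would verify this bookkeeping carefully, since units and sign are already factored out.

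The case $k=2$ is where the term $2y-G$ and the escape clause enter, and I expect it to be the main obstacle. There the normals lie in a rank-two $K$-plane $\mathcal V$ nearly containing $\Pi$, and the naive count is $h^2$. I would split on the height of $\mathcal V$, i.e.\ the height of the wedge. If that height exceeds $R'^{G}$, the sector count in $\mathcal V$ is reduced by the covolume factor and gives $h^2/R'^G$. Otherwise, every hyperplane $\{\langle n,x\rangle=m_H\}$ with $n\in\mathcal V$ contains $\mathcal V^\perp$-directions. In that case I would project the tube numerators onto $\mathcal V$, in the spirit of Lemma~\ref{lem:F2}, and fibre them over values of two dual functionals. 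Each fibre is a circle, and Lemma~\ref{lem:E5} gives $2^{O(t/\log t)}$ points per circle. The number of fibre values is the number of lattice points of $\mathcal O_K^2$ near the image ellipse of the core, by (H3). The threshold $G=\gamma+E_*/2-2$ should be exactly the point at which this fibre count yields $\#X_t\le R'^{E_*+o(1)}$. The remaining work is to check that the ellipse width $\psi R'$, the height of $\mathcal V$ and $R'\asymp$ radius combine to this exponent.
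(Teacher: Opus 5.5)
Your treatment of the body and of the cases with one, three or four short $K$-minima follows the paper's route, up to the repairs listed in the second paragraph. The case of two short minima, however, has a genuine gap, and it is the only place where $G$ and the escape clause $\#X_t\le R'^{E_*+o(1)}$ enter.

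\textbf{The two-minima case.} You assume that the $K$-plane $\mathcal V$ spanned by the short normals ``nearly contains $\Pi$'', so that the core projects to an ellipse of width $\psi R'$, and you split on the height $H_V$ alone. Nothing forces this. Every normal lies within $C\psi\sqrt h$ of $\Pi$, but the real span $V_1=\sigma_1(\mathcal V)\otimes\R$ of two such vectors can make a large principal angle $\theta_2$ with $\Pi$. For example, two nearly parallel normals whose $\Pi^\perp$ components differ have a difference pointing into $\Pi^\perp$.

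The fibration that works has fibres $(w+\mathcal V)\cap S^3_{R'}$. It records $\langle\mathfrak n_i,w\rangle$ for two $K$-independent $\mathfrak n_i\in\mathcal V^\perp\cap\mathcal O^*$ with $h(\mathfrak n_1)h(\mathfrak n_2)\le CH_V$, which exist by Minkowski's second theorem. Projecting onto $\mathcal V$, as you write, is worse: there the core has semi-axes $R'\cos\theta_i$. In $V_1^\perp$ the core projects to an ellipse of semi-axes $R'\sin\theta_1,R'\sin\theta_2$, so the fibration only gives $\#X_t\le2^{O(t/\log t)}CH_VR'^4(\sin\theta_2+2\varepsilon)^2$. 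For $H_V\le R'^G$ and $\sin\theta_2\asymp1$ this is at least $R'^4\gg R'^{E_*}$. Meanwhile your sector count, reduced only by covolume, gives $h^2/H_V>h^2/R'^G$. So in the regime of low height and large angle neither of your branches closes.

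The missing idea is that the sector count gains from the angle as well as from the height. Write $\sigma_1x=\alpha f_1+uf_2$ in principal coordinates of $V_1$. The constraint $|P_{\Pi^\perp}\sigma_1x|\le C\psi\sqrt h$ then forces $|u|\le C\beta_V\sqrt h$ with $\beta_V=\min(1,\psi/\sin\theta_2)$. Together with $h(w_1\wedge w_2)\ge H_V$ this gives $\#\mathcal N\le C\max(1,h,\beta_Vh^2/H_V)$. The dichotomy must then be run in the two exponents $X=\log_{R'}H_V$ and $\vartheta=\log_{R'}\sin\theta_2$:
\begin{itemize}
\item either $H_VR'^4(\sin\theta_2+2\varepsilon)^2\le R'^{E_*}$, and the fibration gives the escape clause;
\item or $X+\max(0,\vartheta+\gamma)\ge G-o(1)$, and the sector count is $\le CR'^{\max(y,2y-G)}$.
\end{itemize}
The extremal point is $X=0$, $\vartheta=\tfrac{E_*-4}2$. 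There the whole saving $G=\gamma+\tfrac{E_*}2-2$ comes from the angle and none from the height, which is exactly the regime a height-only split cannot reach.

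\textbf{Three smaller repairs.}

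First, the exponent bookkeeping for three and four minima is off, and your proposed repair is not a real mechanism. The smallness $\psi$ is present only in the first embedding. For $v_j\in\mu_j\mathcal K$ one has $|\sigma_1(v_1\wedge\dots\wedge v_k)|\le C\mu_1\cdots\mu_kh^{k/2}\psi^{\max(0,k-2)}$. This holds because $p_1\wedge p_2\wedge p_3=0$ ($\Pi$ is a plane) and every nonzero term of the four-fold wedge contains two $\Pi^\perp$ components. Also $|\sigma_2(\cdot)|\le C\mu_1\cdots\mu_kh^{k/2}$. Then (H2) gives $(\mu_1\cdots\mu_k)^{-2}\le Ch^k\psi^{\max(0,k-2)}$, which is exactly $y$, $2y$, $3y-\gamma$, $4y-2\gamma$. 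No division by units is involved: $\#\mathcal N\le\#(\mathcal K\cap\mathcal O^*)$ because each unit class has a balanced representative in $\mathcal K$. Your exponent $\psi^{2\max(0,k-2)}$ double-counts $\psi$.

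Second, Henk's bound needs the lower bounds $\lambda_{2j-1},\lambda_{2j}\ge\mu_j$. These hold because any $2j-1$ $\Z$-independent vectors span a $K$-space of dimension $\ge j$. The pairing by $1+\sqrt2$ gives upper bounds instead, which is not what the count needs.

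Third, your derivation of $|P_{\Pi^\perp}\nu|\ll\psi$ from $\langle\nu,\mathcal C(s)\rangle\ge1-O(r^2/R'^2+\varepsilon)$ only yields $r/R'+\sqrt\varepsilon$, i.e.\ the exponent $\min(g,a_0/2)$. Argue linearly instead: a point $x$ of $Y$ in the tube has $|x/R'-\nu|\le2r/R'$ and $|P_{\Pi^\perp}x|\le2\varepsilon R'$.
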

\begin{proof}
\emph{The body.} If $Y$ meets the tube at $x$, then $x/R'$ is within $2\varepsilon$ of $\mathcal C$ and within $Cr/R'$ of $\nu$. So $|P_{\Pi^\perp}\nu|\le C\psi$. The balanced representative of a normal in $\mathcal N$ therefore lies in
\[
\begin{aligned}
\mathcal K:=\{x:\ &|P_\Pi\sigma_1x|\le C\sqrt h,\ |P_{\Pi^\perp}\sigma_1x|\le C\psi\sqrt h,\\ &|\sigma_2x|\le C\sqrt h\},
\end{aligned}
\]
and $\#\mathcal N\le\#(\mathcal K\cap\mathcal O^*)$.

\emph{Minima.} For $j\le4$ let $\mu_j$ be the least $\mu$ such that $\mu\mathcal K$ contains $j$ $K$-independent vectors of $\mathcal O^*$, attained by $K$-independent $v_1,\dots,v_4$. Any $2j-1$ $\Z$-independent vectors span a $K$-space of dimension $\ge j$. So the successive minima of the rank-$8$ lattice $\mathcal O^*$ with respect to $\mathcal K$ satisfy $\lambda_{2j-1},\lambda_{2j}\ge\mu_j$. Henk's bound~\cite{Henk}, which for a lattice of rank $8$ reads $\#(\mathcal K\cap\mathcal O^*)\le2^{7}\prod_{i\le8}\lfloor2/\lambda_i+1\rfloor$, then gives
\[
\#(\mathcal K\cap\mathcal O^*)\le C\prod_{j\le4}\max(1,\mu_j^{-2}).
\]

\emph{Wedges.} Write $\sigma_1v_j=p_j+q_j$ with $p_j\in\Pi$ and $q_j\in\Pi^\perp$. Then $|p_j|,|\sigma_2v_j|\le C\mu_j\sqrt h$ and $|q_j|\le C\mu_j\psi\sqrt h$. The wedges $v_1$, $v_1\wedge v_2$, $v_1\wedge v_2\wedge v_3$ and $v_1\wedge\dots\wedge v_4$ are nonzero with coordinates in $\mathcal O_K$, so their heights are $\ge1$ by (H2). In the expansion of $\sigma_1(v_1\wedge v_2\wedge v_3)$ the term $p_1\wedge p_2\wedge p_3$ vanishes, since $\Pi$ is a plane, and every other term contains some $q_j$. In the four-fold wedge every nonzero term contains two of them. Hence
\[
\begin{gathered}
\mu_1\ge\frac c{\sqrt h},\qquad \mu_1\mu_2\ge\frac ch,\\
(\mu_1\mu_2\mu_3)^2\ge\frac c{\psi h^3},\qquad (\mu_1\cdots\mu_4)^2\ge\frac c{\psi^2h^4}.
\end{gathered}
\]
If exactly $d$ of the $\mu_j$ are $\le1$, the count is $\le C(\mu_1\cdots\mu_d)^{-2}$. For $d=0,1,2,3,4$ this is $\le C$, $Ch$, $Ch^2$, $C\psi h^3$ and $C\psi^2h^4$. This proves the claim, except that for $d=2$ it gives $2y$ in place of $\max(y,2y-G)$.

\emph{The case $d=2$.} Now $\mathcal K\cap\mathcal O^*$ lies in the $K$-plane $V:=Kv_1+Kv_2$, since a third vector outside it would give $\mu_3\le1$.
\begin{itemize}\setlength\itemsep{0pt}
\item \emph{The plane $V$.} Let $V_1\subset\R^4$ be the real span of $\sigma_1V$, let $\theta_1\le\theta_2$ be its principal angles with $\Pi$, and let $f_1,f_2$ be orthonormal principal vectors of $V_1$. Thus $|P_{\Pi^\perp}f_i|=\sin\theta_i$ and $P_{\Pi^\perp}f_1\perp P_{\Pi^\perp}f_2$.
\item \emph{Its height $H_V$.} As $\mathcal O_K$ is a principal ideal domain, the saturated module $S:=V\cap\mathcal O_K^4$ is a direct summand of $\mathcal O_K^4$, with a basis $s_1,s_2$. Put $H_V:=h(s_1\wedge s_2)\ge1$.
\item \emph{Its orthogonal complement.} Complete $s_1,s_2$ to a basis $s_1,\dots,s_4$ of $\mathcal O_K^4$. The dual basis $s_i^*$ with respect to the unimodular form $\langle\cdot,\cdot\rangle$ also lies in $\mathcal O_K^4$, and $V^\perp\cap\mathcal O_K^4=\mathcal O_Ks_3^*\oplus\mathcal O_Ks_4^*$. Moreover $s_3^*\wedge s_4^*=\pm\det(s_1,\dots,s_4)^{-1}\star(s_1\wedge s_2)$ with a unit determinant. The Hodge star $\star$ commutes with $\sigma_i$ and is an isometry, so $V^\perp\cap\mathcal O_K^4$ also has height $H_V$.
\item \emph{Covolumes.} Both saturated modules have covolume $8H_V$ in their $4$-dimensional real spans in $\R^4\times\R^4$. The lattices $V\cap\mathcal O^*$ and $V^\perp\cap\mathcal O^*$ lie between these modules and twice them, so they have covolume between $8H_V$ and $128H_V$.
\end{itemize}

(a) \emph{Sector count.} For $x\in\mathcal K\cap V$ write $\sigma_1x=\alpha f_1+uf_2$. Then $|P_{\Pi^\perp}\sigma_1x|^2=\alpha^2\sin^2\theta_1+u^2\sin^2\theta_2$, so $|u|\le C\beta_V\sqrt h$ with $\beta_V:=\min(1,\psi/\sin\theta_2)$, and $|\alpha|\le C\sqrt h$.
\begin{itemize}\setlength\itemsep{0pt}
\item Repeat the minima argument in the rank-$4$ lattice $V\cap\mathcal O^*$ with the body so obtained, with $K$-minima $\mu'_1\le\mu'_2$. As before $\mu'_1\ge ch^{-1/2}$.
\item Two $K$-independent $w_1,w_2\in V\cap\mathcal O^*$ have $w_1\wedge w_2=\xi\,s_1\wedge s_2$ with $0\ne\xi\in\mathcal O_K$, so $h(w_1\wedge w_2)\ge H_V$.
\item On the other hand $|\sigma_1(w_1\wedge w_2)|=|\alpha_1u_2-\alpha_2u_1|\le C\mu'_1\mu'_2\beta_V h$ and $|\sigma_2(w_1\wedge w_2)|\le C\mu'_1\mu'_2h$.
\end{itemize}
Hence $\#(\mathcal K\cap\mathcal O^*)\le C\max(1,\,h,\,\beta_V h^2/H_V)$.

(b) \emph{Global fibration.} Let $\kappa_1\le\dots\le\kappa_4$ be the successive minima of $V^\perp\cap\mathcal O^*$ with respect to $\{|\sigma_1x|\le1,\,|\sigma_2x|\le1\}$. By Minkowski's second theorem $\kappa_1\cdots\kappa_4\le CH_V$.
\begin{itemize}\setlength\itemsep{0pt}
\item Take $K$-independent $\mathfrak n_1,\mathfrak n_2\in V^\perp\cap\mathcal O^*$ with $\mathfrak n_1$ realizing $\kappa_1$ and $\mathfrak n_2$ among the first three minima. Then $h(\mathfrak n_1)h(\mathfrak n_2)\le\kappa_1^2\kappa_3^2\le CH_V$.
\item The principal angles of $V_1^\perp$ and $\Pi^\perp$ are those of $V_1$ and $\Pi$, so every unit vector of $V_1^\perp$ has $|P_\Pi\cdot|\le\sin\theta_2$.
\item For a tube point $x=R'(\cos\varphi\,\mathcal C(s)+\sin\varphi\,n')$, therefore $|\langle\sigma_1\mathfrak n_i,x\rangle|\le|\sigma_1\mathfrak n_i|R'(\sin\theta_2+2\varepsilon)$, while $|\sigma_2\langle\mathfrak n_i,w\rangle|\le C|\sigma_2\mathfrak n_i|R'$.
\item By (H3), applied to differences, $\langle\mathfrak n_i,w\rangle\in\mathcal O_K$ takes at most $1+Ch(\mathfrak n_i)R'^2\varsigma$ values on $X_t$, where $\varsigma:=\sin\theta_2+2\varepsilon$. The two values fix $w$ in a translate of $V$, which meets the sphere in a circle, so Lemma~\ref{lem:E5} applies.
\end{itemize}
Since $h(\mathfrak n_i)\ge1$ and $R'^2\varsigma\ge2\varepsilon R'^2\ge1$,
\[
\#X_t\le2^{O(t/\log t)}\,C\,H_VR'^4\varsigma^2 .
\]

(c) \emph{Dichotomy.} If $H_VR'^4\varsigma^2\le R'^{E_*}$, (b) gives $\#X_t\le R'^{E_*+o(1)}$. Otherwise put $X:=\log_{R'}H_V\ge0$ and $\vartheta:=\log_{R'}\sin\theta_2\le0$; then $X+4+2\max(\vartheta,-a_0)\ge E_*-o(1)$. By (a) the count is $\le CR'^{\max(y,\,2y-X-\max(0,\vartheta+\gamma))}$. On the region allowed by the constraint, $X+\max(0,\vartheta+\gamma)\ge G-o(1)$:
\begin{itemize}\setlength\itemsep{0pt}
\item for $\vartheta\ge\tfrac{E_*-4}2$ this follows from $X\ge0$;
\item for $-a_0\le\vartheta<\tfrac{E_*-4}2$ it follows from $X\ge E_*-4-2\vartheta>0$;
\item for $\vartheta<-a_0$ it follows from $X\ge E_*-4+2a_0>\tfrac{E_*-4}2+a_0$.
\end{itemize}
The minimum $G$ is attained at $X=0$, $\vartheta=\tfrac{E_*-4}2$.
\end{proof}

\paragraph{Assembly.} Put $g_i:=i/400$ for $0\le i\le804$ and $y_j:=j/100$. Let $\ell_\varepsilon(g):=\max(1-2a_0,1-g-a_0)$ and $\ell_\rho(g):=\max(1-2a,1-g-a)$ be the exponents of $\delta'$ and $\delta'_B$. Every box $B$ above has its section $Y_B$ in one of the following cases.
\begin{itemize}\setlength\itemsep{0pt}
\item \emph{Tiny sections, $g>\tfrac{201}{100}$.} Any four points of $Y_B$ span a $3$-volume $\le Cr^3$ in the first embedding and $\le CR'^3$ in the second. As $r^3R'^3\to0$, they are coplanar, so $B$ contributes $R'^{o(1)}$.
\item \emph{Large sections, $g\le g_1$.} This includes all $r>R'/2$ once $R'\ge2^{400}$. Here $B$ contributes $R'^{\mathrm{cost}+o(1)}$, with the cost~\eqref{eq:cost} at $S_0=1-b$, $U_0=1-a$, $y=0$, and $g=g_1$ in $P$ and in the constraint $X_L<1-g$; in total $R'^{b+\mathrm{cost}+o(1)}$.
\item \emph{$g\in[g_i,g_{i+1}]$ with $i\ge1$.} Here $r\le R'/2$. Let $\mathrm{cost}_i(y)$ be the cost~\eqref{eq:cost} with $S_0=\min\big(1-b,\tfrac12(1+\ell_\rho(g_i)),1-g_i\big)$ and $U_0=\min(1-a,1-g_i)$ from Lemma~\ref{lem:E2}, and $g=g_{i+1}$ in $P$ and in the constraint $X_L<1-g$. Since every height is $\ge1$, these boxes contribute at most $R'^{b+\mathrm{cost}_i(0)+o(1)}$ in total. If that is not enough, the certificate supplies a threshold $y^*=y_{j^*}$ with $b+\mathrm{cost}_i(y^*)\le E_*$. It covers the boxes whose hyperplane $H_B$ has height $h(n_B)\ge R'^{y^*}$, $n_B$ its primitive normal (\emph{high}), and each layer $1\le j\le j^*$ of the remaining boxes (\emph{low}), with $h(n_B)\in[R'^{y_{j-1}},R'^{y_j})$, is bounded as follows.
\end{itemize}

\emph{Low layers.} The points of $X_t$ in the low boxes of layer $j$ lie on sphere sections $Y$ of class in $[g_i,g_{i+1}]$ and height in $[R'^{y_{j-1}},R'^{y_j})$, each containing a point of $X_t$. Count them globally, not per box.
\begin{itemize}\setlength\itemsep{0pt}
\item \emph{Normals.} By Lemma~\ref{lem:F5} there are $\le CR'^{\mathrm{nor}(y_j)}$ normals, with $\gamma=\min(g_i,a_0)$.
\item \emph{Offsets.} By Lemma~\ref{lem:F4}(iii), each normal gives $\le1+CR'^{y_j+\ell_\varepsilon(g_i)+1}$ tangent sections. Partition $[\ell_\varepsilon(g_{i+1}),1-2g_i]$ into $12$ intervals $[D_l,D_r]$. Each normal gives $\le1+CR'^{y_j+D_r+1}$ crossing sections with $\Delta\in[R'^{D_l},R'^{D_r}]$. If $1-2g_i<\ell_\varepsilon(g_{i+1})$, which happens for $g_i\ge a_0$, there are no crossing sections, since $\Delta\le r^2/R'<C'\delta'$.
\item \emph{Points per section.} Such a section contains at most $R'^{o(1)}$ times the smaller of two quantities.
\begin{itemize}\setlength\itemsep{0pt}
\item \emph{Tube bound} (Lemmas~\ref{lem:F2},~\ref{lem:F3}(b) and~\ref{lem:F4}(i)): $R'^{(1+(S_T+2U_T-y_{j-1})/3)_+}$, where $U_T=\min(1-a_0,1-g_i)$ and $S_T=\min\big(\tfrac12(1+\ell_\varepsilon),1-g_i\big)$, lowered to $\min(S_T,\ell_\varepsilon+\tfrac12(1-D_l))$ for crossing sections.
\item \emph{Box bound}: the number of boxes meeting $Y\cap(\text{tube})$, which is at most $R'^{(S_T-(1-b))_++2(U_T-(1-a))_+}$, times $R'^{\mathrm{cost}}$, with the cost~\eqref{eq:cost} at height $y_{j-1}$, with $U_0=\min(1-a,1-g_i)$ and with $S_0=\min\big(1-b,\tfrac12(1+\ell_\rho),1-g_i\big)$, lowered to $\min(S_0,\ell_\rho+\tfrac12(1-D_l))$ for crossing sections (Lemma~\ref{lem:F4}(ii)).
\end{itemize}
\end{itemize}
The layer is bounded if the normals, the offsets and the points per section together have exponent $\le E_*$ for the tangent class and for each crossing class.

All the quantities involved are monotone in $g$, $y$ and $\Delta$ on each interval, and each is evaluated at the endpoint that weakens the bound:
\begin{itemize}\setlength\itemsep{0pt}
\item $S_0$, $U_0$, $S_T$, $U_T$, $\ell_\varepsilon$, $\ell_\rho$ and $\gamma$ at $g_i$, except the lower end of the crossing range, where $\ell_\varepsilon$ bounds $\Delta$ from below and is taken at $g_{i+1}$; the sagitta term $\min(X_S,2X_L-(1-g))$ and the constraint $X_L<1-g$ at $g_{i+1}$;
\item the normals and offsets at $y_j$, and the points per section at $y_{j-1}$;
\item the offsets at $D_r$, and the points per section at $D_l$.
\end{itemize}
There are finitely many cases, and Lemma~\ref{lem:F5} is applied to finitely many pairs $(g_i,y_j)$. If its first alternative occurs in any of them, $\#X_t\le R'^{E_*+o(1)}$ directly.

\paragraph{Certificate.} {\raggedright An exact rational computation (\texttt{research/dioph/cert\_dioph.py}, run as \texttt{cert\_dioph.py 28/17 28/17 157/100 11183/6800 400 100 12}; the output is the same with $8$ crossing classes) checks every case: $b=\tfrac{157}{100}\le E_*$, the large sections, and for each $i$ either $b+\mathrm{cost}_i(0)\le E_*$ or a threshold $y^*\le\tfrac{23}{25}$ together with all its low layers. Cells are rational witnesses. The largest exponent the certificate checks is $\tfrac{67097}{40800}=1.64453<E_*=1.64456$, near $g\approx1.5$; it stops at the first admissible threshold. A float re-implementation, pointwise in $g$ and continuous in the layers and crossing depths (\texttt{research/verify4/indep\_model.py}), gives $1.6347$. The margin that matters is $a_0-E_*=\tfrac1{400}$. Hence $\#X_t\le R'^{E_*+o(1)}$. Every factor hidden in the $o(1)$ is an absolute constant, of which there are finitely many, or the bound $2^{O(t/\log t)}$ of Lemma~\ref{lem:E5}, and bounded $t$ are absorbed as in Appendix~\ref{app:elem}. This proves~\eqref{eq:dioph} with $\eta(t)=O(t/\log t)$.\par}

\section{Rational projections}\label{app:proj}

We keep the notation of Appendices~\ref{app:elem} and~\ref{app:dioph}, with $a:=a_0$, $b>\tfrac{8-2a_0}3$ (so that $2a+3b>8$ and Lemma~\ref{lem:E1} applies with one hyperplane per box) and $\delta:=\varepsilon R'$, the tube radius (not an error probability). Three lemmas are added to Appendix~\ref{app:dioph}.

\begin{lemma}[Annular fibration]\label{lem:G1}
Let $V\subset K^4$ be a $K$-plane of height $H_V$, and let $\theta_1\le\theta_2$ be the principal angles between
$V_1=\sigma_1(V)\otimes\R$ and $\Pi$. Then
\begin{multline*}
\#X_t\le2^{O(t/\log t)}\,C\big(1+(\sin\theta_2/\varepsilon)^{1/2}\\
+H_VR'^4\varepsilon(\sin\theta_2+\varepsilon)\big).
\end{multline*}
\end{lemma}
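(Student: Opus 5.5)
The plan is to fibre $X_t$ over the $K$-plane $V^\perp$, as in part~(b) of the proof of Lemma~\ref{lem:F5}, and then count the fibre values sharply instead of coordinate by coordinate.

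\emph{Fibration.} By the construction in Lemma~\ref{lem:F5}, the lattice $L^\perp:=V^\perp\cap\mathcal O^*$ has rank $4$ over $\Z$ and covolume $\asymp H_V$. Take the $K$-linear map $\Phi(w):=(\langle\mathfrak n_1,w\rangle,\langle\mathfrak n_2,w\rangle)$ for an $\mathcal O_K$-basis $\mathfrak n_1,\mathfrak n_2$ of a finite-index submodule of $L^\perp$; equivalently, take the orthogonal projection $P^\perp$ of $\R^4\times\R^4$ onto $(\sigma_1V^\perp\otimes\R)\times(\sigma_2V^\perp\otimes\R)$. Each fibre of $\Phi$ on $X_t$ lies in a translate of $V$. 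Such a translate meets $S^3_{R'}$ in a circle, so Lemma~\ref{lem:E5} leaves $2^{O(t/\log t)}$ points per fibre. It therefore suffices to bound the number of distinct values $P^\perp w$, $w\in X_t$, by $C\big(1+(\sin\theta_2/\varepsilon)^{1/2}+H_VR'^4\varepsilon(\sin\theta_2+\varepsilon)\big)$. These values lie in the lattice $\Lambda:=P^\perp\mathcal O$. By the covolume identity $\mathrm{covol}(\mathcal O)=\mathrm{covol}(\mathcal O\cap V)\,\mathrm{covol}(P^\perp\mathcal O)$ and Lemma~\ref{lem:F1}-type bookkeeping, $\mathrm{covol}(\Lambda)\asymp1/H_V$.

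\emph{Geometry of the image.} In the second embedding, $P^\perp w$ lies in a disc of radius $CR'$. In the first embedding, a tube point $R'(\cos\varphi\,\mathcal C(s)+\sin\varphi\,n')$ projects to within $2\varepsilon R'=2\delta$ of $R'P_{V_1^\perp}\mathcal C(s)$. This is an ellipse $\mathcal E$ in the plane $V_1^\perp$. Its semi-axes are $R'\sin\theta_1\le R'\sin\theta_2$, because the principal angles of $V_1^\perp$ with $\Pi^\perp$ are those of $V_1$ with $\Pi$, and $P_{V_1^\perp}|_\Pi$ has singular values $\sin\theta_i$. The first-embedding image thus lies in the $2\delta$-annulus around $\mathcal E$, of area $\le C\delta R'(\sin\theta_2+\varepsilon)$.

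\emph{Covering and counting.} Cover the annulus by convex pieces $\mathcal R_k$: rectangles of width $C\delta$ along arcs of $\mathcal E$ short enough that the arc stays within $\delta$ of its chord. By the sagitta estimate, an arc with curvature radius $\asymp R'\sin\theta_2$ needs $\ll1+(R'\sin\theta_2/\delta)^{1/2}=1+(\sin\theta_2/\varepsilon)^{1/2}$ pieces; near the flat ends of a very eccentric ellipse, one thin rectangle covers the whole degenerate part. For each piece, the convex body $\mathcal R_k\times D_{CR'}$ in $\R^2\times\R^2$ should contain $\le1+C\,\mathrm{vol}/\mathrm{covol}(\Lambda)$ points of $\Lambda$. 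Summing over $k$, the first term gives $1+(\sin\theta_2/\varepsilon)^{1/2}$. The volume terms add up to $C\,\delta R'(\sin\theta_2+\varepsilon)\,R'^2H_V=CH_VR'^4\varepsilon(\sin\theta_2+\varepsilon)$, which is the claimed bound.

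\emph{Main obstacle.} The hard step is the per-piece estimate $1+C\,\mathrm{vol}/\mathrm{covol}$. For general lattices and thin bodies, the count is $\prod_i(1+1/\lambda_i)$ in the successive minima (Henk's bound), not $1+\mathrm{vol}/\mathrm{covol}$. The intermediate products matter when $\Lambda\cap(\mathcal R_k\times D)$ lies on a lower-rank sublattice. I would handle this as in Lemma~\ref{lem:F5}, using (H2) and (H3).

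First, if only one $K$-direction of $\Lambda$ is short, the points lie on one $K$-line. Then (H3) counts them by $1+C\cdot(\text{length in }\sigma_1)\cdot(\text{length in }\sigma_2)$, which is at most $1+C\,\mathrm{vol}/\mathrm{covol}$ times a bounded factor after balancing by units (H1).

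Second, if two $K$-independent short vectors exist, their wedge has height $\ge c/H_V$, the dual normalisation of (H2) for $\Lambda$. This forces $\mu_1'\mu_2'\gtrsim(\mathrm{vol}\,H_V)^{-1/2}$, and the count becomes $\ll\mathrm{vol}\cdot H_V$.

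The remaining risk is to confirm that the rectangles can be chosen so that their $\sigma_1$ side lengths do not create an extra factor in the rank-one case. If they do, I would instead merge adjacent rectangles along a common $K$-line before counting.
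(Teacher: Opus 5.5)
Your route is the paper's: project onto $V^\perp$ and count the values $P_{V^\perp}w$ in $\Lambda=P_{V^\perp}(\mathcal O)$. (Fibres with at least three points span $w+V$, so Lemma~\ref{lem:E5} applies to them.) Place $\sigma_1$ of these values near the ellipse with semi-axes $R'\sin\theta_1\le R'\sin\theta_2$, cover by $O(1+(\sin\theta_2/\varepsilon)^{1/2})$ rectangles of width $O(\delta)$ and total length $O(R'\sin\theta_2+\delta)$, and count per rectangle by successive minima. Your case of two short $K$-directions is fine. The wedge bound replaces Minkowski's second theorem, because $\Lambda^2_{\mathcal O_K}\Lambda$ is free of rank one with generator of height $\asymp\mathrm{covol}(\Lambda)\asymp1/H_V$. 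One small fix on the covering: an eccentric ellipse does not have curvature radius $\asymp R'\sin\theta_2$ throughout. Cut it instead into arcs of length $\le\ell_0$ and turning $\le\Theta_0$ with $\ell_0\Theta_0=\delta$, which gives the count uniformly.

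The gap is the case of one short $K$-direction. There the points lie on one line $x_0+\mathcal O_K\kappa$, and (H3) bounds them by $1+C\ell_kR'/h(\kappa)$. To reach $1+C\,\mathrm{vol}/\mathrm{covol}\asymp1+CH_V\ell_k\delta R'^2$ you need a lower bound on $h(\kappa)$, and neither unit balancing nor the covolume of $\Lambda$ supplies it. A rank-two $\mathcal O_K$-module of covolume $1/H_V$ can contain vectors of arbitrarily small height, since only the product of its two $K$-minima is controlled; the wedge bound gives $h(x_1)h(x_2)\gtrsim1/H_V$ only for $K$-independent pairs.

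The missing idea is integrality through the dual lattice. $\Lambda$ pairs into $\mathcal O_K$ with $V^\perp\cap\mathcal O^*$, whose nonzero vectors have height $\ge1$ by (H2) and whose covolume is $\asymp H_V$. Minkowski's second theorem then gives $K$-independent $\mathfrak n_1,\mathfrak n_2$ there with $h(\mathfrak n_1)h(\mathfrak n_2)\le CH_V$, as in Lemma~\ref{lem:F5}(b). For nonzero $x\in\Lambda$, some $\langle\mathfrak n_i,x\rangle$ is a nonzero element of $\mathcal O_K$, so $1\le h(\mathfrak n_i)h(x)$ and $h(x)\ge1/(CH_V)$. This bounds the rank-one count by $C\max(1,H_V\ell_kR')$. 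You must still absorb this using $\delta R'=\varepsilon R'^2\ge1$, which holds because the tube radius is $\asymp R'^{-a_0}$ with $a_0<2$; then $H_V\ell_kR'\le H_V\ell_k\delta R'^2$.

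The "remaining risk" you flag is not the issue. The extent of a line inside a rectangle is at most its diameter $O(\ell_k)$, so side lengths never cost an extra factor, and merging rectangles would not produce the missing height bound.
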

\begin{proof}
Put $v(w):=P_{V^\perp}w$. Points with the same $v(w)$ lie on $(w+V)\cap S^3_{R'}$, a circle. At least three of them span $w+V$, because
three distinct points of a circle are affinely independent, so by Lemma~\ref{lem:E5} it suffices to count the values $v(w)$. They lie in the $\mathcal O_K$-lattice
$\Lambda:=P_{V^\perp}(\mathcal O)$ of rank $4$.

\emph{Where $v(w)$ lies.} The map $P_{V_1^\perp}|_\Pi$ has singular values $\sin\theta_1\le\sin\theta_2$ with
orthonormal left singular vectors $u_1,u_2$. Hence $\sigma_1v(w)=P_{V_1^\perp}\sigma_1w$ lies within
$\delta_1:=3\delta$ of the ellipse $E=\{R'(\sin\theta_1\cos s\,u_1+\sin\theta_2\sin s\,u_2)\}$, the image of the
core. Moreover $|\sigma_2v(w)|\le2R'$.

\emph{Covering.} The $\delta_1$-neighbourhood of $E$ is covered by
$N\le C(1+(R'\sin\theta_2/\delta_1)^{1/2})$ rectangles of width $4\delta_1$ and lengths $\ell_k$ with
$\sum_k\ell_k\le C(R'\sin\theta_2+\delta_1)$. To see this, cut $E$ into arcs of length at most
$\ell_0$ and turning at most $\Theta_0$, with $\ell_0\Theta_0=\delta_1$ and $\Theta_0<\pi/2$; each arc is within $\delta_1$
of its chord, and every rectangle has length at least $4\delta_1$.

\emph{One rectangle.} Let $D_k$ be the difference body of $\{\sigma_1x\in P_k,\ |\sigma_2x|\le2R'\}$, and
$\lambda_1\le\dots\le\lambda_4$ the successive minima of $\Lambda$ with respect to it. Multiplication by
$1+\sqrt2$ gives $\lambda_2\le2.42\lambda_1$ and $\lambda_4\le2.42\lambda_3$. For the second, the $\Q$-span of three
independent vectors has odd dimension, so it is not stable under multiplication by $\sqrt2$.
\begin{itemize}\setlength\itemsep{0pt}
\item If $\lambda_3\le1$, Henk's bound and Minkowski's second theorem give
$\#(\Lambda\cap D_k)\le C\,\mathrm{vol}(D_k)/\mathrm{covol}(\Lambda)\le CH_V\ell_k\delta_1R'^2$, because
$\mathrm{covol}(\Lambda)=\mathrm{covol}(\mathcal O)/\mathrm{covol}(\mathcal O\cap V)\ge1/(8H_V)$.
\item Otherwise the count is $\le C\max(1,\lambda_1^{-2})$. A nonzero $x\in\Lambda$ has $h(x)\ge1/(CH_V)$:
one of the $K$-independent $\mathfrak n_1,\mathfrak n_2\in V^\perp\cap\mathcal O^*$ of Lemma~\ref{lem:F5}(b), with $h(\mathfrak n_1)h(\mathfrak n_2)\le CH_V$
has $\langle\mathfrak n_i,x\rangle\in\mathcal O_K\setminus0$. So $\lambda_1^{-2}\le CH_V\ell_kR'$, and
$\ell_kR'\le\ell_k\delta_1R'^2$ since $\delta R'=\varepsilon R'^2\ge1$.
\end{itemize}
Summing over $k$ proves the claim.
\end{proof}

\begin{lemma}[Few normals in rank one]\label{lem:G2}
In the proof of Lemma~\ref{lem:F5}, if at most one of the $\mu_j$ is $\le1$, then $\#\mathcal N\le C$. In the
sector count (a) the bound $C\max(1,h,\beta_Vh^2/H_V)$ may be replaced by $C\max(1,\beta_Vh^2/H_V)$.
\end{lemma}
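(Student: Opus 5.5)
The count in Lemma~\ref{lem:F5} is of $\#(\mathcal K\cap\mathcal O^*)$, whereas $\mathcal N$ consists only of \emph{primitive} normals taken up to units and sign. The plan is to use that difference. The factor $h$ in the cases $d=1$ of Lemma~\ref{lem:F5} and of its sector count (a) comes entirely from counting the $\mathcal O_K$-multiples of a single short vector, and such multiples contribute at most one element of $\mathcal N$. I would not change Henk's bound or the wedge inequalities; I would only change what is counted in the low-rank cases.

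\emph{Case $d=0$.} No $\mu_j$ is $\le1$. Henk's bound as used in Lemma~\ref{lem:F5} gives $\#(\mathcal K\cap\mathcal O^*)\le C\prod_j\max(1,\mu_j^{-2})=C$. Since $\#\mathcal N\le\#(\mathcal K\cap\mathcal O^*)$, we get $\#\mathcal N\le C$.

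\emph{Case $d=1$.} Exactly $\mu_1\le1<\mu_2$. Then every vector of $\mathcal K\cap\mathcal O^*$ lies in the $K$-line $Kv_1$: a vector of $\mathcal K\cap\mathcal O^*$ that is $K$-independent of $v_1$ would give $\mu_2\le1$. Because $\mathcal O_K$ is a principal ideal domain, $Kv_1\cap\mathcal O^*=\mathcal O_Kv_0$ for a single generator $v_0$. A primitive normal $n\in Kv_1$ satisfies $Kn\cap\mathcal O^*=\mathcal O_Kn$, so $\mathcal O_Kn=\mathcal O_Kv_0$ and $n$ is a unit multiple of $v_0$. The balanced representative of every element of $\mathcal N$ lies in $\mathcal K$, hence in this line, so $\mathcal N$ has at most one element up to units and sign. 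This gives $\#\mathcal N\le C$.

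\emph{Sector count (a).} The same argument applies inside the rank-$4$ lattice $V\cap\mathcal O^*$ with $K$-minima $\mu'_1\le\mu'_2$. If $\mu'_1>1$ the count is $\le C$. If $\mu'_1\le1<\mu'_2$, all relevant vectors lie in one $K$-line, so at most one primitive normal occurs up to units; this replaces the term $h$. If $\mu'_2\le1$, the existing wedge estimate $h(w_1\wedge w_2)\ge H_V$ gives the bound $C\beta_Vh^2/H_V$. Together these give $C\max(1,\beta_Vh^2/H_V)$.

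The only point that needs care is to make sure that in Lemma~\ref{lem:F5}, and in the assembly where its bound is used, the quantity being bounded really is the number of normals up to units, not the number of lattice points. Lemma~\ref{lem:F5} defines $\mathcal N$ this way, and the offsets of Lemma~\ref{lem:F4}(iii) are counted per primitive normal, so the replacement is legitimate.
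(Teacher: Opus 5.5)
Your proposal is correct and is essentially the paper's argument. When at most one $K$-minimum is $\le1$ (for $\mathcal K$, or for the body in $V\cap\mathcal O^*$ in the sector count), every lattice vector of the body lies in one $K$-line, and a primitive normal in that line generates $Kv\cap\mathcal O^*$ over the principal ideal domain $\mathcal O_K$, so it is unique up to units. Your separate treatment of $d=0$ and $d=1$, and your remark that the assembly only uses the count of normals up to units, are consistent with the paper's one-line proof.
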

\begin{proof}
In both cases all lattice vectors of the body lie in one $K$-line $Kv$. A primitive normal in $Kv$
generates $Kv\cap\mathcal O^*$ over $\mathcal O_K$, so it is unique up to units.
\end{proof}

For a box $B$ whose points span the hyperplane $H_B=\{\langle n_B,x\rangle=m_B\}$, $n_B$ primitive, let
$e_s,e_u$ be as in Lemma~\ref{lem:E2} and $\bar e:=\max(e_s,e_u)$. By the proof of
Lemma~\ref{lem:F3}(b), $Y_B\cap B$ lies in at most two $4$-boxes with sides $Ce_u,\bar e,2e_u,2e_u$.

\begin{lemma}[Projections]\label{lem:G3}
\begin{itemize}\setlength\itemsep{0pt}
\item[(a)] If $\mathfrak m\in\mathcal O_K^4$ is primitive and $\langle n_B,\mathfrak m\rangle=0$, then
$\#(X_t\cap B)\le2^{O(t/\log t)}C\big(1+\bar ee_uR'^2\,h(\mathfrak m)/h(n_B)\big)$.
\item[(b)] If $V$ is a $K$-plane of height $H_V$ with $n_B\in V$, then
$\#(X_t\cap B)\le2^{O(t/\log t)}C\big(1+\bar eR'\,H_V/h(n_B)\big)$.
\end{itemize}
\end{lemma}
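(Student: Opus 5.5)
The plan is to count the points of $X_t\cap B$ through a rational projection whose fibres are either lines or circles on the sphere. The number of fibres is then bounded by counting points of a projected $\mathcal O_K$-lattice in the image of the box, as in the proofs of Lemmas~\ref{lem:F2} and~\ref{lem:G1}. Throughout, $X_t\cap B\subset H_B$ and differences of its points lie in $M=\mathcal O\cap H_0$, where $H_0=n_B^\perp$; by Lemma~\ref{lem:F1}, $\mathrm{covol}(M)=\sqrt2\,h(n_B)$. By the remark before the lemma, $\sigma_1(X_t\cap B)$ lies in at most two $4$-boxes with sides $Ce_u,\bar e,2e_u,2e_u$. In the second embedding every point has norm $\le CR'$.

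For (a), the hypothesis $\langle n_B,\mathfrak m\rangle=0$ puts $\mathfrak m$ in $H_0$. I would project $H_0$ along $K\mathfrak m$ onto a complementary $K$-line, which gives a map $p$ with values in a rank-two $K$-space. The image $p(M)$ is an $\mathcal O_K$-lattice of $\Z$-rank $4$. Its covolume is $\mathrm{covol}(M)/\mathrm{covol}(M\cap K\mathfrak m)$, and since $\mathfrak m$ is primitive in $\mathcal O_K^4$ (and $\mathcal O_K^4\subset\mathcal O\subset\frac12\mathcal O_K^4$), this is $\gg h(n_B)/h(\mathfrak m)$. A fibre $w+K\mathfrak m$ meets $S^3_{R'}$ in at most two points, so it suffices to count the values $p(w)$. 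In $\sigma_1$, $H_B\cap(\text{box})$ is a $3$-dimensional slab of the $4$-box. Projecting it along $\sigma_1\mathfrak m$ leaves a planar region of area $\le C\bar e e_u$: the worst case keeps the long side $\bar e$ and one short side. In $\sigma_2$ the image lies in a disc of radius $CR'$, of area $CR'^2$. The lattice count then proceeds as in the bullet split of Lemma~\ref{lem:G1}. If the third successive minimum is $\le1$, Henk's bound and Minkowski's second theorem give $C(1+\bar e e_uR'^2 h(\mathfrak m)/h(n_B))$. Otherwise the relevant vectors lie in one $K$-line, since $\lambda_2\le2.42\lambda_1$ via multiplication by $1+\sqrt2$. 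In that case the points lie on a single affine plane meeting the sphere, i.e.\ on a circle, and Lemma~\ref{lem:E5} gives the factor $2^{O(t/\log t)}$.

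For (b), I would project onto $V$ orthogonally with respect to $\langle\cdot,\cdot\rangle$. Since $n_B\in V$, the value $\langle n_B,w\rangle=m_B$ is visible in $P_Vw$, so $P_V(X_t\cap B)$ lies on the affine $K$-line $\ell=\{v\in V:\langle n_B,v\rangle=m_B\}$. Its direction lattice is $P_V(M)\cap(V\cap n_B^\perp)$, of rank one over $\mathcal O_K$. Using the covolume facts for $V\cap\mathcal O^*$ and $V^\perp\cap\mathcal O^*$ from Lemma~\ref{lem:F5}, both between $8H_V$ and $128H_V$, together with $h(n_B)$, I expect a generator $g$ of this direction lattice with $h(g)\gg h(n_B)/H_V$. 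The points of $\ell$ hit by $X_t\cap B$ differ by $\xi g$ with $\xi\in\mathcal O_K$. Here $|\sigma_1\xi|\le C\bar e/|\sigma_1g|$ because the box has diameter $\le C\bar e$, and $|\sigma_2\xi|\le CR'/|\sigma_2g|$. By (H3) there are at most $1+C\bar eR'/h(g)\le1+C\bar eR'H_V/h(n_B)$ values. Each fibre $w+V^\perp$ meets the sphere in a circle, and Lemma~\ref{lem:E5} bounds its points by $2^{O(t/\log t)}$.

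The main obstacle is the covolume bookkeeping in both parts. In (a) one must show that the projected lattice has covolume $\gg h(n_B)/h(\mathfrak m)$ uniformly, including the factors of $2$ between $\mathcal O$, $\mathcal O^*$ and $\mathcal O_K^4$. In (b) one must show that the rank-one direction lattice on $\ell$ has height $\gg h(n_B)/H_V$. For (b) I would first try a wedge identity: write $g\wedge n_B$ as an $\mathcal O_K$-multiple of $s_1\wedge s_2$, then use $h(g\wedge n_B)\le h(g)h(n_B)$ together with $\langle g,n_B\rangle=0$. The degenerate-minima case in (a) is routine given the argument of Lemma~\ref{lem:G1}.
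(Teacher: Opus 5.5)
Your plan is the paper's own. In (a) you project along $\mathfrak m$ and count lattice points via Henk and Minkowski, falling back on a $K$-line and Lemma~\ref{lem:E5}. In (b) you project onto $V$ and count on a line with (H3).

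Part (a) is essentially right. Your lattice $p(M)$ coincides with the paper's $L_B=P_{\mathfrak m^\perp}(\mathcal O)\cap n_B^\perp$, provided $p$ is the orthogonal projection onto the $K$-plane $H_0\cap\mathfrak m^\perp$ (not a $K$-line). Your quotient formula $\mathrm{covol}(M)/\mathrm{covol}(M\cap K\mathfrak m)\gg h(n_B)/h(\mathfrak m)$ is a valid substitute for the paper's observation that $\langle n_B,\cdot\rangle$ maps $P_{\mathfrak m^\perp}(\mathcal O)$ onto $\mathcal O_K$. One constant needs adjusting. Two $K$-independent vectors $x,y$ in the difference body only force $\lambda_3\le2.42$ (via $x,(1+\sqrt2)x,y$), not $\lambda_3\le1$. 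So split at a constant threshold, or, as the paper does, at the second $K$-minimum.

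The genuine gap is the estimate $h(\kappa)\gg h(n_B)/H_V$ in (b). You leave it open, and the wedge route you suggest fails.

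\emph{Why the wedge route fails.} The line carries the lattice $P_V(M)=\mathcal O_K\kappa$. But $P_V(\mathcal O)$ is essentially the dual of $V\cap\mathcal O$ inside $V$, and it has denominators of size about $H_V$. So $\kappa\wedge n_B$ is in general not an $\mathcal O_K$-multiple of $s_1\wedge s_2$.

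Even if it were, orthogonality gives $h(\kappa\wedge n_B)=h(\kappa)h(n_B)$. Then $h(\kappa\wedge n_B)\ge H_V$ would yield $h(\kappa)\ge H_V/h(n_B)$, which is the reciprocal of what you need. The computation below also gives $h(\kappa)\le h(n_B)/H_V$, because $\mathcal O\subset\tfrac12\mathcal O_K^4$ makes $\mathrm{covol}(\mathcal O\cap V^\perp)\ge H_V/2$. So the integrality premise is actually false whenever $h(n_B)<H_V$.

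\emph{The missing step.} Use the same quotient formula you used in (a), with $V^\perp$ in place of $K\mathfrak m$. Since $n_B\in V$, we have $V^\perp\subset n_B^\perp$. Hence the kernel of $P_V$ on $M$ is $\mathcal O\cap V^\perp\supseteq\mathcal O_K^4\cap V^\perp$, whose covolume is at most $8H_V$ (proof of Lemma~\ref{lem:F5}). By Lemma~\ref{lem:F1},
\[
2\sqrt2\,h(\kappa)=\mathrm{covol}(P_VM)=\frac{\mathrm{covol}(M)}{\mathrm{covol}(\mathcal O\cap V^\perp)}\ge\frac{\sqrt2\,h(n_B)}{8H_V}.
\]
This is the estimate $h(\kappa)\asymp h(n_B)/H_V$ that the paper uses. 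With it, your (H3) count completes (b), treating the two $4$-boxes separately.
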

\begin{proof}
(a) Let $W=\mathfrak m^\perp$. For $w\in X_t\cap B$ put $u:=P_Ww$, which lies in the lattice
$\Lambda_W:=P_W(\mathcal O)$. Since $n_B\in W$, $\langle n_B,u\rangle=m_B$, so $u$ lies on a $2$-dimensional affine
$K$-plane $P_B\subset W$. From $w=u+t\mathfrak m$ and
$\mathrm{nrd}(w)=\langle u,u\rangle+t^2\langle\mathfrak m,\mathfrak m\rangle$, each $u$ comes from at most two $w$.

The points $\sigma_1u$ lie in a planar convex set of area $\le C\bar ee_u$, and $|\sigma_2u|\le2R'$. They lie in
a translate of $L_B:=\Lambda_W\cap n_B^\perp$. Its covolume is
$\mathrm{covol}(\Lambda_W)\,h(n_B)/2\sqrt2\asymp h(n_B)/h(\mathfrak m)$, because
$\langle n_B,\cdot\rangle$ maps $\Lambda_W$ onto $\mathcal O_K$ by Lemma~\ref{lem:F1}.

If the second $K$-minimum of $L_B$ with respect to the difference body is $\le1$, the count is at most $C$
times volume over covolume. Otherwise the $u$ lie on an affine $K$-line $\ell$, so the $w$ lie on the circle
$(\ell+K\mathfrak m)\cap S^3_{R'}$, and Lemma~\ref{lem:E5} applies as in Lemma~\ref{lem:G1}. The two
$4$-boxes are treated separately.

(b) Project along $V^\perp$ instead. Then $u=P_Vw$ lies on an affine $K$-line in $V$. The lattice on this
line is $\mathcal O_K\kappa$ with $h(\kappa)\asymp h(n_B)/H_V$, and $\sigma_1u$ lies in an interval of length
$C\bar e$. By (H3) there are at most $1+C\bar eR'H_V/h(n_B)$ values of $u$. Each fibre $(u+V^\perp)\cap S^3_{R'}$
is a circle.
\end{proof}

\paragraph{Assembly.} Assemble as in Appendix~\ref{app:dioph}, with the box shape above, a target exponent $E_*<a_0$, and three changes.
\begin{itemize}\setlength\itemsep{0pt}
\item In Lemma~\ref{lem:F5}(c), Lemma~\ref{lem:G1} replaces (b). This replaces $G$ by
$G_3:=\max(0,\gamma+a_0+E_*-4)$, and by Lemma~\ref{lem:G2} the terms $y$ disappear from $\mathrm{nor}(y)$.
\item In the rank-three case, keep the factor $H_W:=h(\mathfrak m)$ that the argument of
Lemma~\ref{lem:F5} provides: the three-fold wedge is a nonzero multiple of the primitive $3$-vector of
$W=\mathfrak m^\perp$, so $(\mu_1\mu_2\mu_3)^2\psi h^3\ge cH_W$.
\item A low layer $j$ of class interval $i$ is bounded by the maximum, over the number $d$ of $K$-minima
$\le1$, of the following exponents. Here $\mathrm{pm}$ is the offsets-plus-points-per-section exponent of
Appendix~\ref{app:dioph} (offsets at $y_j$, points per section at $y_{j-1}$), maximised over the tangent and crossing classes; $b$ is the exponent of the number $R'^b$ of boxes (as $a=a_0$); $\bar S:=\max(S_0,U_0)=\log_{R'}\bar e$ with $S_0,U_0$ the patch exponents of Lemma~\ref{lem:E2} at $g_i$; and $X\in[0,\infty)$ is the adversary's height exponent, $\log_{R'}H_V$ for $d=2$ and $\log_{R'}H_W$ for $d=3$.
\begin{itemize}\setlength\itemsep{0pt}
\item $d\le1$: $\mathrm{pm}$.
\item $d=2$: $\max_X\min\big((2y_j-\max(X,G_3'))_++\mathrm{pm},\ b+(\bar S+1+X-y_{j-1})_+\big)$, with
$G_3'=E_*-4+a_0+\gamma$. The first entry uses the sector count, the second Lemma~\ref{lem:G3}(b) for every
box of the layer, since then $n_B\in V$.
\item $d=3$: $\max_X\min\big((3y_j-\gamma-X)_++\mathrm{pm},\ b+(\bar S+U_0+2+X-y_{j-1})_+\big)$, the
second entry by Lemma~\ref{lem:G3}(a) with the normal $\mathfrak m$ of $W:=Kv_1+Kv_2+Kv_3=\mathfrak m^\perp$, which contains $\mathcal K\cap\mathcal O^*$ and hence every $n_B$ of the layer.
\item $d=4$: $(4y_j-2\gamma)_++\mathrm{pm}$.
\end{itemize}
\end{itemize}
\paragraph{A finite instance.} For $a_0=a=\tfrac{400}{249}$, $b=\tfrac{8-2a_0}3+\tfrac1{1000}$ and $E_*=a_0-\tfrac1{800}$, the certificate \texttt{research/annulus/cert\_proj.py}, run as \texttt{cert\_proj.py 400/249 400/249 1192747/747000 319751/199200 400 100 12}, checks every case of this assembly exactly. The largest exponent the certificate checks (it stops at the first admissible threshold) is $1.60511<E_*=1.60518<a_0=1.60643$. Hence $\#X_t\le R'^{E_*+o(1)}$ for $t\le\tfrac{249}{100}(L+2)$, and the proof of Theorem~\ref{thm:rate177}, with $\lambda=\tfrac{100}{249}$, gives the rate $\tfrac{249}{100}$. Appendix~\ref{app:rate52} replaces this finite instance by the continuum form of the same assembly.

\section{Proof of Theorem~\ref{thm:rate52}}\label{app:rate52}

In the notation of Appendix~\ref{app:elem}, where $R'$ lies between $2^{t/4}/2$ and $2\cdot2^{t/4}$, the tube bound~\eqref{eq:rate52tube} for $t\le4(L+2)/a_0$ (then $\varepsilon\le2^{a_0+2}R'^{-a_0}$; this range contains $\tau_*=\lceil\alpha\log_2K\rceil$) is the statement $\#X_t\le R'^{T+o(1)}$, with $a_0=\tfrac85+\mu$ and $T=\tfrac85-\tfrac{4\mu}{11}$, for tubes of radius $\varepsilon\le CR'^{-a_0}$. Summing over $T$-counts $\le t$ costs a factor $t+1$.

\paragraph{Continuum reduction.} Throughout, $E_1:=(8-2a_0)/3<T<a_0$ ($T$ is the target $E_*$ of Appendices~\ref{app:dioph} and~\ref{app:proj}), $a=a_0$, $b=E_1$ up to the shift $\varpi$ below, and the side conditions
$a\ge b$, $a_0+a\ge2b$, $2b\ge a$ hold; they all hold for $0<\mu\le\frac1{10}$. All exponents in the assembly of Appendices~\ref{app:dioph}
and~\ref{app:proj} are piecewise linear in $(a_0,a,b,T,g,y,D,X)$ and in the cell exponents, with
absolutely bounded slopes. Fix $\varpi=\varpi(R')\to0$ with $\varpi\log R'\to\infty$, and run the assembly
as follows:
\begin{itemize}\setlength\itemsep{0pt}
\item take $a=a_0$ and $b=E_1+\varpi$, where $E_1:=(8-2a_0)/3$, so that Lemma~\ref{lem:E1} applies with one
hyperplane per box;
\item take class intervals, height layers and crossing classes of width $\varpi$;
\item impose every strict exponent inequality of the lemmas with margin $\varpi$ (each such inequality only has to beat an absolute constant, which $R'^{\varpi}\to\infty$ does; this replaces the fixed margins of Appendix~\ref{app:elem}).
\end{itemize}
Each checked quantity then exceeds its pointwise (continuum) value at the left end $g_i$ by $O(\varpi)$.
The places where the certificate uses $g_{i+1}$ are handled as follows. The sagitta term and the constraint
$X_L<1-g$ are covered by a shift of $2\varpi$. Crossing classes below $\ell_\varepsilon(g_i)$ are dominated by the
tangent class. The no-crossing case is empty. The zero case $r>R'/2$ is checked separately: the whole-box
cell has negative exponent there. A cell that is admissible
at one point with equality is shifted by $O(\varpi)$ to become admissible on its whole interval. There are
$O(\varpi^{-3})$ cases, which costs a factor $R'^{o(1)}$. So $\#X_t\le R'^{T+o(1)}$ as soon as the
following continuum claim holds:
\begin{multline*}
\mathrm C(a_0,T):\ \text{there is no }(g,y)\text{ with }0<g\le\tfrac{201}{100},\ y>0,\\
\mathrm{HIGH}(g,y)>T\text{ and }\mathrm{LOW}(g,y)>T .
\end{multline*}
Here $\mathrm{HIGH}$ and $\mathrm{LOW}$ are the pointwise versions of the two sides of the height split:
offsets and points per section are taken at the same height, the crossing depth is continuous, and the
cell costs are bounded by explicit candidate cells.

Since $\mathrm{HIGH}$ is nonincreasing in the height and tends to $E_1$, the claim $\mathrm C(a_0,T)$ says
exactly this: the least height $\eta^*$ with $\mathrm{HIGH}\le T$ has $\mathrm{LOW}\le T$ on every layer
below it. In the discretised assembly the last low layer $[y_{j^*-1},y_{j^*}]$ contains $\eta^*$, and its value is at most $\mathrm{LOW}(y_{j^*-1})+O(\varpi)\le T+O(\varpi)$. In $\neg\mathrm C$, the adversary's choices (rank, $X$, $D$, class of section) are existential and the
prover's minima are explicit terms. So $\neg\mathrm C$ is a quantifier-free formula of linear real
arithmetic.

\paragraph{Verification.} Put $a_0=\tfrac85+\mu$ and $T=E_1+\tfrac{10}{33}\mu$, with $\mu$ a variable.
The solver z3~\cite{Z3} shows that $\neg\mathrm C$ is unsatisfiable for $0<\mu\le\tfrac1{10}$
(\texttt{research/uniform52/verify\_z3.py 1/10 10/33}). Since $T=\tfrac85-\tfrac{4\mu}{11}<a_0$, the Gibbs lemma
(Lemma~\ref{lem:gibbs}) with $\lambda=1/\alpha$, $\alpha=4/a_0$, gives the rate with $A=O_\alpha(1)$, because the gap $\lambda-T/4$ is fixed. For larger
$a_0$, use monotonicity in $\varepsilon$. The same computation with $T=E_1+c\mu$ is unsatisfiable exactly
for $c\ge\tfrac{10}{33}$. So near the limit the method gives the exponent $\tfrac85-\tfrac4{11}(a_0-\tfrac85)$
and nothing better.

The balance behind $\tfrac{10}{33}$ is explicit, and it comes from two mechanisms.
The value $E_1+\tfrac{10}{33}\mu$ is attained exactly for $\tfrac45-\tfrac{7\mu}{11}\le g\le\tfrac85-\tfrac73\mu$.
On the upper part of this range, $g\gtrsim\tfrac65$, the rank-three balance applies:
\begin{itemize}\setlength\itemsep{0pt}
\item every patch is the whole box;
\item the whole-box cell gives $\mathrm{HIGH}-E_1=\max\big(0,(g-\tfrac25+\mu-\eta)/3\big)$;
\item the deepest crossing class contributes offsets $R'^{\,y+2-2g}$ with one point per section;
\item the rank-three routes cross at $E_1+\tfrac12(3y-3g+\tfrac65+\tfrac\mu3)$;
\item the two sides meet at $\eta^*=g-\tfrac25+\tfrac\mu{11}$.
\end{itemize}
On the lower part the rank-two route binds, through the target $T$ inside the dichotomy constant $G_3'$.
Balancing it against $\mathrm{HIGH}$ gives $c(1+\tfrac1{10})=\tfrac13$, which is the same constant $\tfrac{10}{33}$.

\section{Proof of Theorem~\ref{thm:clifford}}\label{app:clifford}

We use the numerators of Appendix~\ref{app:elem}. A word of minimal $T$-count $\tau$ is $\pm w/|w|$ with $w\in\mathcal O$ and $\mathrm{nrd}(w)=n_\tau$. We realize quaternions as matrices by
\[
a+bi+cj+dk\ \mapsto\ \begin{pmatrix}a+b\,\mathrm i&c+d\,\mathrm i\\-c+d\,\mathrm i&a-b\,\mathrm i\end{pmatrix}.
\]
Here $\mathrm i=\sqrt{-1}$, and $R_z(\theta)$ corresponds to the unit quaternion $\cos\tfrac\theta2-\sin\tfrac\theta2\,i$. Elements of $\mathcal O$ have coordinates in $\tfrac12\Z[\sqrt2]$, so their matrix entries lie in $\tfrac12\Z[\sqrt2,\mathrm i]\subset\tfrac12\Z[\omega]$.

\paragraph{Denominator exponent.} For even $\tau$, $|w|=2^{\tau/4}=\sqrt2^{\,\tau/2}$, so $W=(2w)/\sqrt2^{\,\tau/2+2}$ with $2w$ over $\Z[\omega]$. For odd $\tau$, $|w|=|1+\omega|\,2^{(\tau-1)/4}$, because $|1+\omega|^2=2+\sqrt2$. Multiplying $W$ by the phase $(1+\bar\omega)/|1+\omega|$ and using $2+\sqrt2=\sqrt2(1+\sqrt2)$ gives
\[
e^{\mathrm i\phi}W=\frac{(\sqrt2-1)(1+\bar\omega)\,2w}{\sqrt2^{\,(\tau+5)/2}} .
\]
The Clifford+$T$ matrix of the word differs from $e^{\mathrm i\phi}W$ by a phase in $\Q(\omega)$ whose square is a power of $\omega$ ($\det e^{\mathrm i\phi}W=\bar\omega$), hence by a power of $\omega$, which does not change $k$. So the denominator exponent $k$ of Lemma~\ref{lem:identity} satisfies $k\le\tfrac\tau2+3$. (For $\tau\le13$ the enumeration gives $k\in\{\lceil\tau/2\rceil,\lceil\tau/2\rceil+1\}$.)

\paragraph{The count.} By Clifford invariance of $t_{\min}$, $W\mapsto C_1^{-1}WC_2^{-1}$ maps the tube of $C_1R_zC_2$ onto the tube of the torus $\{R_z(\theta)\}$ and preserves $t_{\min}$. For $G_1,G_2\in\Gamma$, the map $W\mapsto G_1^{-1}WG_2^{-1}$ is injective and raises $t_{\min}$ by at most $h$. So it suffices to count words of minimal $T$-count $\tau'\le\tau$ within $\varepsilon$ of the torus.

The torus is the unit circle of $\mathrm{span}(1,i)$, so the distance from $\hat w$ to it is at least the length of the $(j,k)$-component of $\hat w$. Write $w=a+bi+cj+dk$. Then $|c|_{\sigma_1},|d|_{\sigma_1}\le\varepsilon|w|\le2\varepsilon\,2^{\tau/4}$ and $|c|_{\sigma_2},|d|_{\sigma_2}\le2\cdot2^{\tau/4}$.

Multiplication by a power of $1+\sqrt2$ scales $\sigma_1$ and $\sigma_2$ by reciprocal factors and is a bijection of $\tfrac12\Z[\sqrt2]$. A nonzero $x\in\tfrac12\Z[\sqrt2]$ has $|x|_{\sigma_1}|x|_{\sigma_2}\ge\tfrac14$. Together these show that $\#\{x\in\tfrac12\Z[\sqrt2]:|x|_{\sigma_1}\le X,\ |x|_{\sigma_2}\le Y\}\le C(1+XY)$. Hence the pair $(c,d)$ takes $\le C(1+\varepsilon2^{\tau/2})^2\le C(1+\varepsilon^22^\tau)$ values.

Given $(c,d)$, put $\mu:=n_{\tau'}-c^2-d^2$. It is nonzero, since $|(c,d)|_{\sigma_1}<|w|_{\sigma_1}$, and it is totally positive. The pair $(a,b)$ gives $\alpha=2a+2b\,\mathrm i\in\Z[\sqrt2,\mathrm i]\subset\Z[\omega]$ with $\alpha\bar\alpha=4\mu$. Now $\Z[\omega]$ is a principal ideal domain whose units of relative norm one are the eight powers of $\omega$, and the number of ideals of $\Z[\omega]$ of relative norm $(4\mu)$ is at most the number of ideal divisors of $(4\mu)$ in $\Z[\sqrt2]$. So there are $\le8\,d_{\Z[\sqrt2]}((4\mu))\le2^{O(\tau/\log\tau)}$ such $\alpha$, since $|N_{\Q(\sqrt2)/\Q}(4\mu)|\le C2^{\tau}$.

Summing over $\tau'\le\tau$, and dividing by $2$ for the sign of $w$, gives~\eqref{eq:clifford}. \qed

\section{Proof of Theorem~\ref{thm:oneshot}}
Let $\Omega$ be the success event; on $\Omega$, $u$ is a function of the message (snapshot and unitary tuple) together with $(v,y)$, and the message takes at most $2^{S_{\max}}N^{\le}_m(T_{\max})$ values; Lemma~\ref{lem:MA} gives the inequality. \qed

\section{\texorpdfstring{Proofs for Sec.~\ref{sec:mixing}}{Proofs for the mixing section}}\label{app:mixing}

\begin{proof}[Proof of Lemma~\ref{lem:fid}]
Let $J(\mathcal E)=(\mathcal E\otimes I)(|\Phi\rangle\langle\Phi|)$ and $\Phi_V:=|\Phi_V\rangle\langle\Phi_V|=J(\mathcal U_V)$. Since $|\Phi\rangle\langle\Phi|$ is an admissible input in the definition of the diamond norm, $\|J(\mathcal E)-\Phi_V\|_1\le\|\mathcal E-\mathcal U_V\|_\diamond$. For a state $\rho$ and a pure state $\Phi$, $\|\rho-\Phi\|_1=2\max_{0\le P\le I}\tr P(\rho-\Phi)\ge2\tr(I-\Phi)(\rho-\Phi)=2(1-\langle\Phi|\rho|\Phi\rangle)$. Combining gives $1-F_e\le\tfrac12\|\mathcal E-\mathcal U_V\|_\diamond$. For $\mathcal E=\sum_ip_i\mathcal U_{X_i}$, $F_e=\sum_ip_i|\langle\Phi_V|(X_i\otimes I)|\Phi\rangle|^2=\sum_ip_i|\tr(V^\dagger X_i)/2|^2$.
\end{proof}

\begin{proof}[Proof of Lemma~\ref{lem:conc}]
Both sides of (i) and the hypothesis are invariant under $X_i\mapsto-X_i$, so write $V^\dagger X_i=W_i=\cos\tfrac{\phi_i}2I-i\sin\tfrac{\phi_i}2(\mathbf n_i\!\cdot\!\sigma)$ with $\phi_i\in[0,\pi]$, so that $f_i=\cos^2(\phi_i/2)$ and $s_i:=1-f_i=\sin^2(\phi_i/2)$, whence $\sum_ip_is_i\le\eta^2$.

(i) With $R_z(\psi)^\dagger=\cos\tfrac\psi2I+i\sin\tfrac\psi2Z$ one has $\tfrac12\tr(R_z(\psi)^\dagger W_i)=\cos\tfrac\psi2\cos\tfrac{\phi_i}2+\sin\tfrac\psi2\sin\tfrac{\phi_i}2n_{i,z}$, whose modulus is at most $\cos\tfrac{|\psi|}2\cos\tfrac{\phi_i}2+\sin\tfrac{|\psi|}2\sin\tfrac{\phi_i}2=\cos\tfrac{|\psi|-\phi_i}2$. Now
\[
\cos^2\tfrac{|\psi|-\phi}2-\cos^2\tfrac\psi2=\tfrac12\big[\cos\psi(\cos\phi-1)+|\sin\psi|\sin\phi\big]
\]
which is at most $\sin^2\tfrac\phi2+|\sin\psi|\sin\tfrac\phi2$ for $\phi\in[0,\pi]$. Averaging with $p$ and using $\sum_ip_i\sin\tfrac{\phi_i}2\le(\sum_ip_is_i)^{1/2}\le\eta$ by Cauchy--Schwarz gives~\eqref{eq:window}. For the threshold, $\cos^2\tfrac\psi2+\eta^2+\eta|\sin\psi|<1-\eta^2$ iff $\sin^2\tfrac{|\psi|}2>2\eta^2+\eta|\sin\psi|$; since $|\sin\psi|\le2\sin\tfrac{|\psi|}2$ it suffices that $x^2-2\eta x-2\eta^2>0$ with $x=\sin\tfrac{|\psi|}2$, i.e.\ $x>(1+\sqrt3)\eta$.

(ii) $f_i\ge1-\eta^2$ gives $\sin(\phi_i/2)\le\eta$; by Lemma~\ref{lem:dproj}, $\dproj(X_i,V)=2\sin(\phi_i/4)=\sin(\phi_i/2)/\cos(\phi_i/4)\le\eta/\cos(\pi/4)=\sqrt2\eta$.
\end{proof}

\begin{lemma}[Fano for list decoding]\label{lem:fanolist}
Let $A$ take values in a set of size $M$, let $\mathcal L$ be a random list that is a function of $Y$ with $|\mathcal L|\le\ell$ and $\Pr[A\in\mathcal L]\ge1-\delta$. Then $H(A\mid Y)\le h_2(\delta)+(1-\delta)\log_2\ell+\delta\log_2M$.
\end{lemma}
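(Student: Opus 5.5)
We reduce to ordinary Fano by conditioning on whether $A$ lies in the list. Let $E:=\mathbf 1\{A\in\mathcal L\}$, a function of $(A,Y)$ since $\mathcal L$ is a function of $Y$, and put $\delta':=\Pr[E=0]\le\delta$. The chain rule gives
\[
H(A\mid Y)\le H(A,E\mid Y)=H(E\mid Y)+H(A\mid E,Y)\le H(E)+\Pr[E=1]\,H(A\mid E=1,Y)+\Pr[E=0]\,H(A\mid E=0,Y).
\]

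We then bound the three terms separately.

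\emph{First term.} $H(E)=h_2(\delta')$.

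\emph{Success branch.} Given $Y=y$ and $E=1$, $A$ lies in $\mathcal L(y)$, a set of at most $\ell$ elements. Hence $H(A\mid E=1,Y=y)\le\log_2\ell$ pointwise, and averaging over $y$ conditional on $E=1$ keeps this bound.

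\emph{Failure branch.} Here we use only $|\supp A|\le M$, which gives $H(A\mid E=0,Y)\le\log_2M$.

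Collecting the three bounds,
\[
H(A\mid Y)\le h_2(\delta')+(1-\delta')\log_2\ell+\delta'\log_2M .
\]

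The only step needing care is replacing $\delta'$ by $\delta$. The right side is $f(\delta')$ with $f(x):=h_2(x)+(1-x)\log_2\ell+x\log_2M$, and $f'(x)=\log_2\frac{1-x}{x}+\log_2(M/\ell)$. Since $\ell\le M$ may be assumed (otherwise truncate the list to $\supp A$), $f$ is nondecreasing on $[0,\tfrac12]$. So for $\delta\le\tfrac12$, which is the regime of use ($\delta<\tfrac12$ throughout), $f(\delta')\le f(\delta)$.

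If $\delta>\tfrac12$ is wanted as well, we use the coarser route. Bound $h_2(\delta')\le1$, and note that $(1-x)\log_2\ell+x\log_2M$ is nondecreasing in $x$ because $\ell\le M$. The statement as used only needs the $\delta<\tfrac12$ case, so we would state that restriction if needed rather than fight the endpoint.
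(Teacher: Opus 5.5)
Your argument is the paper's: condition on the indicator of $A\in\mathcal L$, apply the chain rule, and bound the success branch by $\log_2\ell$ and the failure branch by $\log_2M$. The paper's one-line proof writes $h_2(\delta)+(1-\delta)\log_2\ell+\delta\log_2M$ directly, silently treating $\Pr[A\notin\mathcal L]$ as equal to $\delta$. Your monotonicity step, with the reduction to $\ell\le M$ by truncation (valid because the right side increases in $\ell$), is what justifies that replacement, so here you are more careful than the source.

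The one loose end is $\delta>\tfrac12$, which the lemma as stated does not exclude. Your ``coarser route'' gives $1+(1-\delta)\log_2\ell+\delta\log_2M$, not the claimed bound, so as written you prove only the restricted statement.

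No restriction is needed. For $\ell\le M$ the function $f$ is concave, and
\[
f'(x)=\log_2\frac{(1-x)M}{x\ell}\ge0\qquad\text{for } x\le x^*:=\frac{M}{M+\ell},
\]
where $x^*\ge\tfrac12$. There are two cases.
\begin{itemize}
\item If $\delta\le x^*$, your argument gives $f(\delta')\le f(\delta)$.
\item If $\delta>x^*$, then $f$ decreases on $[\delta,1]$. Hence $f(\delta)\ge f(1)=\log_2M\ge H(A)\ge H(A\mid Y)$, with no need for $\delta'$ at all.
\end{itemize}
In both cases $H(A\mid Y)\le f(\delta)$ for every $\delta\in[0,1]$. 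The restriction would in any case be harmless for the paper's applications, which all have $\delta<\tfrac12$.
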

\begin{proof}
With $E:=\mathbf1[A\notin\mathcal L]$, $H(A\mid Y)\le H(E)+H(A\mid Y,E)\le h_2(\delta)+(1-\delta)\log_2\ell+\delta\log_2M$.
\end{proof}

\begin{proof}[Proof of Theorem~\ref{thm:mixing}(a)]
Fix $t<r$ and, as in the proof of Theorem~\ref{thm:main1}, put $A:=a^{(t)}$, $Z:=(\mathrm{Past},\mathrm{Fut},\rho)$ and let $\sigma_t$ be the snapshot after round $t$. Given $\rho$ the process is deterministic, so the programs of rounds $<t$ on every coordinate are functions of $(\mathrm{Past},\rho)$ and those of rounds $>t$ are functions of $(\sigma_t,\mathrm{Fut})$; both are therefore determined by $(\sigma_t,Z)$. Fix a coordinate $j$ and values $(s,z)$, let $c:=\sum_{s'\ne t}a^{(s')}_j$ and $V_a:=R_z(\Delta(a+c))$.

\emph{Representative.} For $U$ in the support of $\pi^{(t)}_j$ put $\mathcal C_U:=\mathcal E^{>t}_j\circ\mathcal U_U\circ\mathcal E^{<t}_j$. Because the samples of different rounds are independent, $\mathcal E_j=\sum_U\pi^{(t)}_j(U)\mathcal C_U$, and $F_e(\cdot,V_a)$ is affine, so on the event that the coordinate is correct Lemma~\ref{lem:fid} gives $\E_U[1-F_e(\mathcal C_U,V_a)]=1-F_e(\mathcal E_j,V_a)\le\varepsilon$ and Markov's inequality gives $\Pr_U[1-F_e(\mathcal C_U,V_a)>\eta^2]\le\varepsilon/\eta^2=\gamma$. Let $W^*_j$ be the cheapest $U$ (ties broken by a fixed order) with $1-F_e(\mathcal C_U,V_a)\le\eta^2$, and $W^*_j:=I$ if the coordinate is incorrect. Then in either case
\begin{equation}\label{eq:rep}
\E_{U\sim\pi^{(t)}_j}t(U)\ \ge\ (1-\gamma)\,t(W^*_j).
\end{equation}

\emph{Decoding.} Given $(W^*_j,s,z)$ the decoder knows $\mathcal C_{W^*_j}$ and outputs the list $\mathcal L_j:=\{a':F_e(\mathcal C_{W^*_j},V_{a'})\ge1-\eta^2\}$, which contains $a$ on the correct event. Moreover $\mathcal C_{W^*_j}$ is a mixture of unitaries $X_i=U^>_iW^*_jU^<_i$ with $\sum_ip_i|\tr(V_a^\dagger X_i)/2|^2\ge1-\eta^2$, and $V_{a'}=V_aR_z(\psi)$ with $\psi=\Delta(a'-a)$ reduced to $[-\pi,\pi]$; by Lemma~\ref{lem:conc}(i) every $a'\in\mathcal L_j$ satisfies $\sin(\pi|a'-a|_Q/Q)\le(1+\sqrt3)\eta$, i.e.\ $|a'-a|_Q\le w$, so $|\mathcal L_j|\le2w+1$.

\emph{Counting.} Let $M^*:=(W^*_1,\dots,W^*_m)$ and $\mathcal L:=\prod_j\mathcal L_j$, of size $\le(2w+1)^m$, a function of $(M^*,\sigma_t,Z)$ containing $A$ with probability $\ge1-\delta$. By item (i) of the proof of Theorem~\ref{thm:main1}, $H(A\mid Z)=m\log_2K$, so Lemma~\ref{lem:fanolist} gives $I(A;M^*,\sigma_t\mid Z)\ge(1-\delta)mk_\gamma-h_2(\delta)$. On the other side $I(A;M^*,\sigma_t\mid Z)\le H(\sigma_t)+H(M^*)\le S+\bar T'+m\log_236+\rho_m(\bar T')$ by Definition~\ref{def:model}(iii) and Lemma~\ref{lem:ent}, with $T':=\sum_jt(W^*_j)\ge\sum_jt_{\min}(W^*_j)$; and $\bar T'\le\bar T_t/(1-\gamma)$ by~\eqref{eq:rep} summed over $j$ and averaged. Since $\rho_m$ is increasing, \eqref{eq:mixone} follows.
\end{proof}

\begin{proof}[Proof of Lemma~\ref{lem:nocheap}]
Choose $SU(2)$ representatives and write $U_j=q_{0j}I+i(q_{xj}X+q_{yj}Y+q_{zj}Z)$ with $\sum_\nu q_{\nu j}^2=1$; in the orthonormal basis $(I,iX,iY,iZ)$ of the normalized Choi vector these are its real coordinates. The diamond bound makes the normalized Choi states differ by at most $\delta$ in trace norm, so measuring the projector onto the orthogonal complement of the \emph{identity} channel's Choi vector, and separately the off-diagonal observable between $I$ and $iZ$, gives
\[
\begin{gathered}
\sum_jp_j(1-q_{0j}^2)\le B_\delta:=\sin^2\theta_\delta+\tfrac\delta2,\\
\sum_jp_jq_{0j}q_{zj}\ge\cos\theta_\delta\sin\theta_\delta-\tfrac\delta2 .
\end{gathered}
\]
Fix $M$. There are finitely many projective Clifford+$T$ words of minimal $T$-count $\le M$, so $\mu_M:=\min(1-q_0^2)>0$ over the non-identity ones. By the first inequality the cheap non-identity words carry total probability at most $B_\delta/\mu_M$, so their contribution to the left side of the second has modulus at most $B_\delta/(2\mu_M)=O(\theta_\delta^2)$, while the identity contributes nothing because $q_z=0$ there. Writing $P_{>M}$ for the probability that $t_{\min}(U_j)>M$, Cauchy--Schwarz with the first inequality bounds the remaining contribution by $\sqrt{P_{>M}B_\delta}$. With $\theta_\delta=a\sqrt\delta$ one has $B_\delta=(a^2+\tfrac12)\delta+O(\delta^2)$ while the right side of the second inequality is $a\sqrt\delta-\tfrac\delta2+O(\delta^{3/2})$, of order $\sqrt\delta$; since the cheap part contributes only $O(\delta)$, the expensive part must supply it, giving $\sqrt{P_{>M}(a^2+\tfrac12)\delta}\ge a\sqrt\delta(1-o(1))$ and hence $\liminf P_{>M}\ge a^2/(a^2+\tfrac12)$. As $t(U)\ge t_{\min}(U)$, the expected recorded $T$-count is at least $(M+1)P_{>M}$, and $M$ was arbitrary.
\end{proof}

\begin{lemma}[Frame clustering]\label{lem:cluster}
Let $p,q$ be distributions on finite sets of unitaries $\{U^<_i\}$, $\{U^>_k\}$, let $W,V\in\SU$, put $F_{ik}:=|\tr(V^\dagger U^>_kWU^<_i)/2|^2$, and suppose $\sum_{i,k}p_iq_k(1-F_{ik})\le\eta^2$ with $\eta\le1/8$. Let $P:=\{i:\sum_kq_k(1-F_{ik})\le4\eta^2\}$ and $K:=\{k:\sum_ip_i(1-F_{ik})\le4\eta^2\}$. Then $p(P)\ge3/4$ and $q(K)\ge3/4$; $\dproj(U^<_i,U^<_{i'})\le8\sqrt2\eta$ for all $i,i'\in P$ and $\dproj(U^>_k,U^>_{k'})\le8\sqrt2\eta$ for all $k,k'\in K$; and some pair $(i,k)\in P\times K$ has $\dproj(U^>_kWU^<_i,V)\le4\sqrt2\eta$.
\end{lemma}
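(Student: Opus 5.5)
The argument turns the fidelity hypothesis into distance statements in three moves: Markov averaging to pick good index sets, a quaternion triangle inequality for pairs inside each set, and an averaging step to find one good pair. Throughout, write $s_{ik}:=1-F_{ik}\ge0$. Lemma~\ref{lem:conc}(ii) says that $s\le\eta'^2$ implies $\dproj\le\sqrt2\,\eta'$. It will be convenient to use the quaternion metric on $SU(2)$ lifts: $\min_\pm|\hat u\mp\hat v|=\dproj$, with $F=\langle\hat u,\hat v\rangle^2$. This metric is bi-invariant, so left and right multiplications are isometries.

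\emph{Step 1 (Markov).} The row average $r_i:=\sum_kq_ks_{ik}$ satisfies $\sum_ip_ir_i\le\eta^2$. Markov's inequality then gives $p(\{r_i>4\eta^2\})\le1/4$, so $p(P)\ge3/4$. The same argument applied to the column averages gives $q(K)\ge3/4$.

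\emph{Step 2 (clustering of prefixes).} Fix $i,i'\in P$. I would like a single $k$ for which both $s_{ik}$ and $s_{i'k}$ are small. Averaging over $k\sim q$ gives $\E_k[s_{ik}+s_{i'k}]\le8\eta^2$. Markov then produces some $k$ with $s_{ik}+s_{i'k}\le8\eta^2$; one could intersect with $K$, but this is not needed. For that $k$, Lemma~\ref{lem:conc}(ii) gives $\dproj(U^>_kWU^<_i,V)\le\sqrt2\sqrt{s_{ik}}$, and similarly for $i'$.

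The triangle inequality in $\dproj$ then gives
\[
\dproj(U^>_kWU^<_i,\,U^>_kWU^<_{i'})\le\sqrt2\big(\sqrt{s_{ik}}+\sqrt{s_{i'k}}\big)\le\sqrt2\cdot\sqrt{2\cdot8\eta^2}=4\sqrt2\,\eta .
\]
Left-cancelling $U^>_kW$ by bi-invariance yields $\dproj(U^<_i,U^<_{i'})\le4\sqrt2\eta\le8\sqrt2\eta$. The constant has slack, which covers any looseness in choosing $k$, for example if one insists on $k\in K$ and must use only $q(K)\ge3/4$, giving $s_{ik}+s_{i'k}\le32\eta^2/3$. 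The suffix case is symmetric, with right cancellation.

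\emph{Step 3 (a good pair).} Consider $\sum_{i\in P,k\in K}p_iq_ks_{ik}\le\eta^2$. The product weight satisfies $p(P)q(K)\ge9/16$, so some $(i,k)\in P\times K$ has $s_{ik}\le16\eta^2/9\le4\eta^2$. Lemma~\ref{lem:conc}(ii) then gives $\dproj\le\sqrt2\cdot2\eta$, which is within the claimed $4\sqrt2\eta$.

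The only delicate point is that Lemma~\ref{lem:conc}(ii) needs its radius below a threshold, so that $\phi\le\pi$ and the $\cos(\phi/4)\ge\cos(\pi/4)$ step is valid. The hypothesis $\eta\le1/8$ keeps every $s$ used above at most $16\eta^2<1/2$, which is safe. I expect this bookkeeping of constants, rather than any conceptual step, to be the main check.
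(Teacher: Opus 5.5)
Your proof is correct and follows the paper's route: Markov's inequality for $P$ and $K$, Lemma~\ref{lem:conc}(ii) to pass from fidelity to $\dproj$, and the triangle inequality with bi-invariance to cancel the common factor $U^>_kW$. The only difference is in how $k$ is chosen. The paper takes $k\in K_i\cap K_{i'}$ with $K_i:=\{k:1-F_{ik}\le16\eta^2\}$, and $k\in K_i\cap K$ for the last claim, which gives exactly the stated constants $8\sqrt2\eta$ and $4\sqrt2\eta$. Your direct averaging of $s_{ik}+s_{i'k}$, and over $P\times K$, gives the slightly sharper $4\sqrt2\eta$ and $\tfrac{4\sqrt2}{3}\eta$.
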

\begin{proof}
We use $1-F\le x\Rightarrow\dproj\le\sqrt{2x}$ (Lemma~\ref{lem:conc}(ii) with $\eta^2$ replaced by $x$, valid for $x\le1/4$), together with bi-invariance and the triangle inequality for $\dproj$. Markov's inequality gives $p(P),q(K)\ge3/4$. For $i\in P$ the set $K_i:=\{k:1-F_{ik}\le16\eta^2\}$ has $q(K_i)\ge3/4$, again by Markov. Given $i,i'\in P$ pick $k\in K_i\cap K_{i'}$, a set of mass $\ge1/2$: then $\dproj(U^>_kWU^<_i,V)\le4\sqrt2\eta$ and likewise for $i'$, so $\dproj(U^>_kWU^<_i,U^>_kWU^<_{i'})\le8\sqrt2\eta$, and bi-invariance gives $\dproj(U^<_i,U^<_{i'})\le8\sqrt2\eta$. The suffix statement is symmetric; for the last claim take any $i\in P$ and $k\in K_i\cap K$.
\end{proof}

\begin{proof}[Proof of Theorem~\ref{thm:mixing}(b)]
Keep the notation of (a). Given $(s,z)$ the prefix program $\pi^<$ and the suffix program $\pi^>$ of coordinate $j$ are fixed distributions---the prefix is a function of $(\mathrm{Past},\rho)$ and the suffix of $(\sigma_t,\mathrm{Fut})$, and both $\mathrm{Past}$ and $\mathrm{Fut}$ are part of $z$---and the executor samples them independently. By Lemma~\ref{lem:fid} the representative $W^*_j$ of (a) satisfies $\sum_{i,k}p_iq_k(1-F_{ik})\le\eta^2$ with $F_{ik}=|\tr(V_a^\dagger U^>_kW^*_jU^<_i)/2|^2$, and~\eqref{eq:rep} holds as before.

\emph{A frame independent of the share.} Let $a_0=a_0(s,z)$ be the lexicographically first value in the conditional support of $A_j$ given $(s,z)$ on which coordinate $j$ is correct; if that event has conditional probability zero, set $X:=Y:=I$, in which case $d_j=1$ and the coordinate contributes nothing beyond the $\delta_j\log_2K$ term already carried by the splitting on $E_j$. The anchor $a_0$ is a function of $(s,z)$ alone, and Lemma~\ref{lem:cluster} applies at $a_0$ because correctness at $a_0$ is exactly the hypothesis $\sum_{i,k}p_iq_k(1-F_{ik})\le\eta^2$ for that value. (Taking $a_0=0$ instead would not do. On $Q=8$, where every grid rotation is a power of $T$, the process that stores the round-$t$ share and commits $T^{A_j+a^{(r)}_j}$ in the last round is exactly correct at every share; but under a conditioning in which $a^{(r)}_j=0$ and the true share is $1$, the suffix is $T$ and evaluating $1-F$ at the value $0$ rather than at the true share gives $\sin^2(\pi/8)=0.146$, far above $4\eta^2$, so the clustering sets of Lemma~\ref{lem:cluster} are empty and the lemma does not apply.) Apply Lemma~\ref{lem:cluster} at $a_0$, with $W^*_j(a_0)$ and $V_{a_0}$, and let $i_1,k_1$ be the lexicographically first elements of the resulting sets $P_{a_0},K_{a_0}$; put $X:=U^<_{i_1}$ and $Y:=U^>_{k_1}$, functions of $(s,z)$ alone. For the actual value $a$, Lemma~\ref{lem:cluster} gives $P_a,K_a$ and a pair $(i,k)\in P_a\times K_a$ with $\dproj(U^>_kW^*_jU^<_i,V_a)\le4\sqrt2\eta$. Since $p(P_a),p(P_{a_0})\ge3/4$, some $i_2$ lies in both, and clustering inside $P_a$ and inside $P_{a_0}$ gives $\dproj(U^<_i,X)\le16\sqrt2\eta$; likewise $\dproj(U^>_k,Y)\le16\sqrt2\eta$. Hence
\[
\begin{aligned}
\dproj\big(YW^*_jX,V_a\big)&\le4\sqrt2\eta+16\sqrt2\eta+16\sqrt2\eta\\
&=36\sqrt2\eta=\varepsilon',
\end{aligned}
\]
so $W^*_j\in T_{\varepsilon'}(Y^\dagger V_aX^\dagger)$, the tube of radius $\varepsilon'$ around the coset $G_1R_zG_2$ with $G_1=Y^\dagger$ and $G_2=X^\dagger$, the grid shift being absorbed into $\theta$ as in Corollary~\ref{cor:profile}. By Theorem~\ref{thm:tube} with $\varepsilon'\le1/8$ the number of such words of cost $\le\tau$ is at most $n_{\varepsilon'}(\tau)$. Nearest-grid decoding of $YW^*_jX$ returns $a$ up to $w'$, a grid rotation within $2\varepsilon'$ of $V_a$ being at most $(2Q/\pi)\arcsin\varepsilon'$ steps away, so $\mathcal L_j$ is a window of half-width $w'$.

\emph{Coarse variable.} Let $h:=2w'+1$ and $n:=\lfloor Q/h\rfloor=K'$, and partition $\Z_Q$ into $n$ consecutive cells of sizes $\lfloor Q/n\rfloor$ or $\lceil Q/n\rceil$. Every cell then holds at least $h$ residues, since $n\le Q/h$ gives $\lfloor Q/n\rfloor\ge h$, and at most $2h$: the hypothesis $\varepsilon'\le1/8$ gives $h\le(4Q/\pi)\arcsin\tfrac18+1\le0.16\,Q+1\le Q/2$ for $Q\ge3$, whence $n\ge Q/(2h)$ and $\lceil Q/n\rceil\le2h$. (The sharper $h+1$ is false: $Q=100$ and $\varepsilon'=0.115$ give $w'=7$, $h=15$, $n=6$ and cells of size $17$. Only the bound $2h$ is used.) Let $\bar A_j$ be the cell of $A_j$, so $|\supp\bar A_j|\le K'$. As $A_j$ is uniform on a translate of $[K]$ given $z$ and each cell holds at most $2h$ of its values, $H(\bar A_j\mid z)\ge\log_2K-\log_2(2w'+1)-1=k'_\gamma$, and the $\bar A_j$ are conditionally independent given $Z$, so~\eqref{eq:chain} holds with $\bar A_j$ in place of $A_j$. A window of $h$ consecutive residues meets at most two cells, every cell having at least $h$ of them, so $\mathcal L_j$ determines $\bar A_j$ up to a list of size $2$ and Lemma~\ref{lem:fanolist} gives $I(\bar A_j;W^*_j,\sigma_t\mid Z)\ge(1-\delta)(k'_\gamma-1)-h_2(\delta)-\delta\log_2K'=:\ell'_\delta$.

Now follow the proof of Theorem~\ref{thm:rate2} from~\eqref{eq:chain} with $\bar A_j$ and $\ell'_\delta$, replacing the per-coordinate step by
\[
I\big(\bar A_j;W^*_j\mid E_j=1,s,z\big)\le\tfrac12\E\big[t(W^*_j)\mid E_j=1,s,z\big]+A_2',
\]
which is Lemma~\ref{lem:gibbs} for the pair $(\bar A_j,W^*_j)$ with $|\supp\bar A_j|\le K'$, $\tau_*=\tau_*'$ and the profile $n_{\varepsilon'}$; no conditioning on the frame is needed, because $(X,Y)$ is a function of $(s,z)$. Then $(1-d_j)\E[t(W^*_j)\mid E_j=1,s,z]\le\E[t(W^*_j)\mid s,z]\le\E[t_j\mid s,z]/(1-\gamma)$ by~\eqref{eq:rep}, and summing over $j$ and averaging as in Theorem~\ref{thm:rate2},
\[
m\ell'_\delta\le S+\frac{\bar T_t}{2(1-\gamma)}+mA_2'+mh_2(\delta)+\delta m\log_2K',
\]
which rearranges to~\eqref{eq:mixtwo}.
\end{proof}

\begin{proof}[Proof of Lemma~\ref{lem:cover}]
Let $W\in T_\varepsilon(C)$, $C=G_1R_zG_2$: there is $\theta$ with $\dproj(W,G_1R_z(\theta)G_2)\le\varepsilon$. The $9Q_1$ angles $2\pi(d+i/9)/Q_1$ are spaced $2\pi/(9Q_1)$ apart, so one of them, $\theta_*$, satisfies $|\theta-\theta_*|\le\pi/(9Q_1)$, and $\dproj(R_z(\theta),R_z(\theta_*))=2\sin(|\theta-\theta_*|/4)\le2\sin(\pi/(36Q_1))\le\pi/(18Q_1)$. By maximality of $Q_1$, $\sin(\pi/(2(Q_1+1)))<4\varepsilon$, and $\sin x\ge2x/\pi$ on $[0,\pi/2]$ gives $1/(Q_1+1)<4\varepsilon$, i.e.\ $Q_1>(1-4\varepsilon)/(4\varepsilon)\ge3/(16\varepsilon)$ for $\varepsilon\le1/16$; hence $\pi/(18Q_1)<8\pi\varepsilon/27<\varepsilon$. Since $\dproj$ is bi-invariant, $\dproj(W,G_1R_z(\theta_*)G_2)\le2\varepsilon$, i.e.\ $W\in\mathcal G_i$ for the $i$ with $\theta_*\in2\pi(\Z+i/9)/Q_1$. Each $\mathcal G_i$ is the set counted by Conjecture~\ref{conj:H} for the pair $(G_1R_z(2\pi i/(9Q_1)),G_2)$, modulus $Q_1$ and accuracy $2\varepsilon\le1/8$, and $Q_1\le Q_{2\varepsilon}$ by definition, with $Q_1\ge3$ since $Q_1>3/(16\varepsilon)-1\ge2$; the union bound over $i$ gives~\eqref{eq:cover}.
\end{proof}

\begin{proof}[Proof of Corollary~\ref{cor:mixq}]
Keep the notation of the proof of Theorem~\ref{thm:mixing}(b): the coarse variable $\bar A_j$ with $|\supp\bar A_j|\le K'$, the representative $W^*_j$ in the tube of radius $\varepsilon'$ around $Y^\dagger V_aX^\dagger$, whose frame $(X,Y)$ is a function of $(s,z)$, and~\eqref{eq:rep}. Conditional on $E_j=1$, the alphabet of $W^*_j$ is that tube, whose cost profile is~\eqref{eq:cover} with $\varepsilon'$ in place of $\varepsilon$: $\#\{t_{\min}\le\tau\}\le c_0'+c_1'\tau+c'2^\tau\varepsilon'^2$ with $c_0'=9c_0$, $c_1'=9c_1$, $c'=36c$. This is the hypothesis of Lemma~\ref{lem:cost} with $L'$, $\tau_0'$, $N_0'$ and $b'$ as defined; its requirement $\tau_0'\ge1$ is the restriction $\varepsilon'\le(72c)^{-1/2}$ of the corollary, which reads $2L'\ge\log_2(36c)+1$, and $c_1'\le N_0'/\tau_0'$ then holds automatically. Now follow the proof of Theorem~\ref{thm:main2} with $A_j\to\bar A_j$, $k\to\bar k$, $\ell_\delta\to\ell'_\delta$, $t\to t(W^*_j)$, the frame contributing nothing since $(X,Y)$ is a function of $(s,z)$. The first splitting gives $I(\bar A_j;W^*_j\mid s,z)\le h_2(d_j)+d_j\bar k+(1-d_j)[1+\log_2N_0'+q_j\bar k]$, whence $m\ell'_\delta-S\le m(h_2(\delta)+\delta\bar k)+ma_0'+\bar k\bar{\mathcal A}'^{>}_t$, which is~\eqref{eq:mixactive}. The second uses Lemma~\ref{lem:cost} conditional on $E_j=1$; multiplying by $1-d_j$, averaging, summing over $j$ and using $(1-d_j)\E[t(W^*_j)\mid E_j=1,\cdot]\le\E[t(W^*_j)\mid\cdot]$, concavity, and $\sum_j\E[t(W^*_j)]\le\bar T_t/(1-\gamma)$ from~\eqref{eq:rep}, we get $D_t'\le\bar T_t/(1-\gamma)-b'\bar{\mathcal A}'^{>}_t+ma_0'+3m\log_2(1+\bar T_t/((1-\gamma)m))$; substituting~\eqref{eq:mixactive} and using $b'\ge0$, which is the hypothesis $\varepsilon'\le(72c)^{-1/2}$, gives~\eqref{eq:mixrate}.
\end{proof}

\begin{proof}[Proof of Proposition~\ref{prop:adaptive}]
\emph{Transcripts.} Fix a coordinate and a round and consider the adaptive circuit as a tree of branches. On a branch with outcome string $o$, push every Clifford to the right end: a $T$ gate conjugated by the Cliffords before it becomes a Pauli rotation $e^{-i\pi P/8}$ with $P$ one of the $2\cdot4^n$ signed $n$-qubit Paulis, and a computational-basis measurement becomes a Pauli measurement with outcome $\pm$, one of $2\cdot4^n$ with sign and outcome merged. The branch is therefore determined by a \emph{transcript}: a sequence of $\tau+\mu$ tokens, each from a set of size $4\cdot4^n=2^{2n+2}$, followed by one Clifford in $\mathcal C_n$. The number of transcripts with $N$ tokens is at most $|\mathcal C_n|2^{(2n+2)N}$, and the number of $m$-tuples with at most $N$ tokens in total is at most $|\mathcal C_n|^m2^{(2n+2)N}\binom{N+m}m$; by the Elias-$\gamma$ argument of Lemma~\ref{lem:ent}, an $m$-tuple $M^*$ of transcripts with $N^*$ tokens in total has $H(M^*)\le(2n+2)\bar N^*+m\log_2|\mathcal C_n|+\rho_m(\bar N^*)$.

\emph{Branch probabilities.} On a branch the operation induced on the coordinate's qubit is $K_o=\sqrt{p_o}U_o$ with $U_o$ unitary; $K_o^\dagger K_o=p_oI$ shows that $p_o$ does not depend on the input, that $\sum_op_o=1$, and that the program implements the mixed-unitary channel $\sum_op_o\mathcal U_{U_o}$ on the coordinate's qubit---the ancillas are returned to $|0\rangle$ or measured, so nothing else is carried. The expected numbers of $T$ gates and measurements are $\sum_op_o\tau(o)$ and $\sum_op_o\mu(o)$.

\emph{Representative and decoding.} Everything in the proof of Theorem~\ref{thm:mixing}(a) now applies with ``word'' replaced by ``transcript'' and $t(U)$ by the token count $\tau(o)+\mu(o)$: the composite channel on the coordinate is a mixture of unitaries $X=U^>_{o'}U_oU^<_{o''}$ over independent branches of the three programs, Lemma~\ref{lem:fid} and Markov give a qualifying set of round-$t$ branches of mass $\ge1-\gamma$, the cheapest qualifying transcript $W^*_j$ satisfies $\E[\tau+\mu]\ge(1-\gamma)(\tau+\mu)(W^*_j)$, and Lemma~\ref{lem:conc}(i) applied to the mixture $\mathcal C_{W^*_j}$ list-decodes $A_j$ to a window of half-width $w$. Lemma~\ref{lem:fanolist} and the transcript count replace Lemma~\ref{lem:ent} in the counting step, giving~\eqref{eq:adaptive}.
\end{proof}

\begin{remark}[Constants]\label{rem:mixconst}
Theorem~\ref{thm:mixing} is asymptotic in the same sense as Theorems~\ref{thm:main1} and~\ref{thm:rate2}, and its additive terms are larger, because $\log_236$ and $A_2'$ are now paid against $\tfrac12L$ bits instead of $L$: at $\varepsilon=10^{-10}$, $\gamma=1/4$ and $S=0$, \eqref{eq:mixone} gives $\bar T_t\ge4.1\,m$ and~\eqref{eq:mixtwo} is vacuous; \eqref{eq:mixtwo} becomes positive at $L\ge56$ and reaches $16.6\,m$ at $L=80$ and $102.1\,m$ at $L=200$, against $\tfrac32L=120$ and $300$ achievable. Optimizing $\gamma$ helps little: $54\,m$ instead of $45\,m$ at $L=120$, at $\gamma=0.05$. Two sources of slack are identifiable. The tube radius $36\sqrt2\eta$ of Lemma~\ref{lem:cluster} costs $\log_236\approx5.2$ coarse bits per coordinate against the radius $\sqrt2\eta$ available when the prefix and suffix programs of every coordinate have supports of size at most $2^{n_f}$: in that case Theorem~\ref{thm:mixing}(b) holds with $\varepsilon'=\sqrt2\eta$ and $A_2'+n_f$ in place of $A_2'$, becomes positive from $L\ge45$, and gives $23.7\,m$ at $L=80$. The rest is inherited from Theorem~\ref{thm:tube}, whose term $C_2(k+1)2^{k/2}$ contributes $2\log_2L+\log_2C_2\approx17$ bits to $A_2'$ at $L=33$ and which~\eqref{eq:mixtwo} applies against half as many bits; any improvement of $C_2$ improves both bounds. Corollary~\ref{cor:mixq}, which uses the Conjecture-\ref{conj:H} profile in place of Theorem~\ref{thm:tube}, inherits the same radius and is vacuous at chemistry scale in both forms, so that~\eqref{eq:mixone} ($4.1\,m$) is the only statement of this subsection that is non-vacuous at $\varepsilon=10^{-10}$; the deterministic analogue, Theorem~\ref{thm:main2}, gives $52.9$ against $105$ at the same accuracy. All the numbers in this subsection are computed by \texttt{scripts/mixing\_bounds.py}.
\end{remark}

\end{document}